\documentclass[letter paper, 11pt]{article}
\usepackage[margin = 1in]{geometry}
\usepackage{amssymb,amsmath,amsthm}
\usepackage[colorlinks,citecolor=blue,linkcolor=blue,urlcolor=red]{hyperref}
\usepackage{lineno}
\usepackage{color}
\usepackage{enumitem}
\usepackage[noend]{algpseudocode}
\usepackage{thm-restate}
\usepackage{algorithm}
\usepackage[T1]{fontenc}
\usepackage{physics}

\newcommand*\patchAmsMathEnvironmentForLineno[1]{%
  \expandafter\let\csname old#1\expandafter\endcsname\csname #1\endcsname
  \expandafter\let\csname oldend#1\expandafter\endcsname\csname end#1\endcsname
  \renewenvironment{#1}%
     {\linenomath\csname old#1\endcsname}%
     {\csname oldend#1\endcsname\endlinenomath}}%
\newcommand*\patchBothAmsMathEnvironmentsForLineno[1]{%
  \patchAmsMathEnvironmentForLineno{#1}%
  \patchAmsMathEnvironmentForLineno{#1*}}%
\AtBeginDocument{%
\patchBothAmsMathEnvironmentsForLineno{equation}%
\patchBothAmsMathEnvironmentsForLineno{align}%
\patchBothAmsMathEnvironmentsForLineno{flalign}%
\patchBothAmsMathEnvironmentsForLineno{alignat}%
\patchBothAmsMathEnvironmentsForLineno{gather}%
\patchBothAmsMathEnvironmentsForLineno{multline}%
}

\newtheorem{theorem}{Theorem}
\newtheorem{definition}[theorem]{Definition}
\newtheorem{corollary}[theorem]{Corollary}
\newtheorem{fact}[theorem]{Fact}
\newtheorem{claim}[theorem]{Claim}
\newtheorem{lemma}[theorem]{Lemma}
\newtheorem{observation}[theorem]{Observation}
\newtheorem*{corollary*}{Corollary}

\newbox\ProofSym
\setbox\ProofSym=\hbox{%
\unitlength=0.18ex%
\begin{picture}(10,10)
\put(0,0){\framebox(9,9){}}
\put(0,3){\framebox(6,6){}}
\end{picture}}

\renewcommand{\vec}{\vb*}
\renewcommand{\det}{\mathrm{det}}

\title{
   Dense Subset Sum in Multi-Dimension
}

\author{
  Lin Chen, Tingwei Hu, Yuchen Mao, Guochuan Zhang\footnote{\{linchen198662,tingweihu,maoyc,zgc\}@zju.edu.cn. Zhejiang University.}
}

\date{\today}

\begin{document}
\maketitle

\begin{abstract}
    We study the additive structure of dense subset sum in multi-dimension, and use the structure to develop efficient algorithms for the dense subset sum problem. More precisely, given a set $A$ of $n$ vectors in the $d$-dimensional hyperrectangle $[N_1]\times [N_2]\times\cdots\times [N_d]$, we study the structure of $\mathcal{S}(A)$, which is the set of all subset sums of $A$, and then utilize the structure to develop efficient algorithms for determining whether $\vec{t}\in \mathcal{S}(A)$ for a given $\vec{t}$. We focus on the dense regime of the problem where $n \gg \sqrt{\Phi}$ and $\Phi = N_1 \times \cdots \times N_d$.

    The problem is well understood in the 1-dimensional case. It is known that if $n \gg \sqrt{\Phi}$, then $\mathcal{S}(A)$ contains an arithmetic progression of length $\Phi$~[S{\'a}rk{\"o}zy '94, Szemer{\'e}di \& Vu '06]. If, in addition, no prime divides most elements of $A$, then $\mathcal{S}(A)$ contains all integers in the range $[o(1)\sigma(A), (1-o(1))\sigma(A)]$, where $\sigma(A) = \sum_{a \in A}a$~[Lev '03]. Using this combinatorics result, it has been shown that under the sole condition that $n \gg \sqrt{\Phi}$, one can determine whether $\vec{t}\in \mathcal{S}(A)$ in $\widetilde{O}(n)$ time for any $t \in [o(1)\sigma(A), (1-o(1))\sigma(A)]$ [Galil \& Margalit '91, Bringmann \& Wellnitz '21].

    For multi-dimension, the problem is far from clear. It is only known that if no nontrivial lattice (lattice other than $\mathbb{Z}^d$) contains most elements of $A$, then $\mathcal{S}(A)$ contains all integer points within some ellipsoid, provided that $n \gg \Phi^{2/3}$ for $d = 2$~[Freiman '96, Plagne '99], and $n \gg \Phi^{\frac{d-1}{d}}$ for $d\geq 3$~[Chaimovich '91]. 

    We improve upon these combinatorics results by showing that for any constant $d\geq 1$, if $n \gg \sqrt{\Phi}$, then $\mathcal{S}(A)$ contains a long generalized progression in multi-dimension. If we further have that no nontrivial lattice can contain the majority of $A$, then $\mathcal{S}(A)$ contains all the integer points in the zonotope $\{x_1\vec{a}_1 + \cdots + x_n\vec{a}_n: o(1)\leq x_j \leq 1-o(1),  x_j \in \mathbb{R}\}$. Compared to the previous results, our result significantly reduces the density threshold and enlarges the region inside which all the integer points belong to $\mathcal{S}(A)$. Indeed, it matches the bound for the 1-dimensional case.

    Using our combinatorics result, we also develop an $\widetilde{O}(n)$-time algorithm for the dense subset sum problem in multi-dimension.
\end{abstract}

\clearpage

\tableofcontents

\clearpage

\section{Introduction}
We study dense subset sum in multi-dimension. Let $[N] :=\{0,1,\cdots,N-1\}$. Let $A \subseteq [N]^d$ be a set of $n$ vectors. Let $\mathcal{S}(A)=\{\sum_{j=1}^nz_j\vec{a}_j : z_j\in\{0,1\}\}$ be the set of all subset sums of $A$.  It is easy to see that when $n$ is very small, $\mathcal{S}(A)$ may consist of only some sporadic points\footnote{Points and vectors are used interchangeably in this paper.}. However, when $n$ is sufficiently large, $\mathcal{S}(A)$ is likely to contain all the integer points within a certain well-structured area. Our goal is to characterize such an area, and exploit this structural information to develop efficient algorithms for the subset sum problem, that is, to determine whether a given $\vec{t}$ belongs to $\mathcal{S}(A)$. 

Our research is motivated by the study of the dense subset sum problem in one dimension.  In additive combinatorics, a classic result states that if $n \gg \sqrt{N}$, then $\mathcal{S}(A)$ admits an arithmetic progression of length at least $N$~\cite{Sar94,Lev03,GM91,SV06b,Fre93}. We say that $A$ is spread if no integer $q>1$ divides most of the integers in $A$ (that is, $A$ does not concentrate on $q\mathbb{Z}$ for some $q > 1$). 
If $n\gg \sqrt{N}$ and $A$ is spread, then Lev~\cite{Lev03} showed that $\mathcal{S}(A)$ contains all the integers within the interval $[o(1)\sigma(A), (1-o(1))\sigma(A)]$, where $\sigma(A) = \sum_{a \in A}a$. 
Such a structural characterization allows dense instances of the subset sum problem to be solved faster~\cite{CFG89,GM91,BW21}.   
Recently, Bringmann and Wellnitz~\cite{BW21} made a comprehensive study of dense subset sum from a computational perspective. They showed that when $n\gg \sqrt{N}$ and $t \in [o(1)\sigma(A), (1-o(1))\sigma(A)]$, even if $A$ is not spread, determining whether $t\in \mathcal{S}(A)$ takes only $\widetilde{O}(n)$ time.\footnote{Throughout this paper, $\widetilde{\Omega}(f)$, $\widetilde{O}(f)$ hide polylogarithmic factors in $f$, $n$, $N$, and $d$.}

When $d\ge 2$, the situation becomes much more complicated and much less is known. Given a set of $d$ linearly independent vectors $B=\{\vec{b}_1,\vec{b}_2,\cdots,\vec{b}_d\}$ in $\mathbb{Z}^d$, we denote by $\mathcal{L}(B)$ the lattice generated by $\vec{b}_1, \ldots, \vec{b}_d$, i.e., 
$\mathcal{L}(B)=\{B\vec{x}: \vec{x}\in\mathbb{Z}^d\}$. Extending the notions from the 1-dimensional case, we say $A$ is spread if $|A\setminus \Gamma|$ is large for any lattice $\Gamma \neq \mathbb{Z}^d$. In 1991, Freiman~\cite{Fre96} showed that for $A\subseteq [N_1]\times [N_2]$, if $n\gg \Phi^{2/3}$, where $\Phi=N_1N_2$, and $A$ is spread, then there exists a moderately large ellipsoid such that all the integer points inside it belong to $\mathcal{S}(A)$. Later, Plagne~\cite{Pla99} made an improvement by enlarging the ellipsoid by some polylogarithmic factor. Chaimovich~\cite{Cha91} further extended the result to the case of $d\ge 3$. He showed that for $A\subseteq [N_1]\times [N_2]\times\cdots\times[N_d]$, if $n\gg \Phi^{\frac{d-1}{d}}$, where $\Phi=\prod_j N_j$, and $A$ is spread, then there also exists a moderately large ellipsoid such that all the integer points inside it belong to $\mathcal{S}(A)$. It is noted explicitly in~\cite{Cha91} that the result is only valid for $d\ge 3$. 
Chaimovich~\cite{Cha91} also presented an algorithm showing that when $n\gg \Phi^{\frac{d-1}{d}}$ and $\vec{t}$ lies in some moderately large ellipsoid, then -- even if $A$ is not spread -- one can decide whether $\vec{t}\in \mathcal{S}(A)$ in $\widetilde{O}(n^{2+\frac{1}{d-1}})$-time.

For $d=1$, there is evidence that the density threshold of $n \gg \sqrt{N}$ is likely the best one can hope for. Szemer{\'e}di and Vu~\cite{SV06b} constructed a spread set $A$ of size $O(\sqrt{N})$ such that $|\mathcal{S}(A)|=\Omega(N^{3/2})$ but the longest arithmetic progression in $\mathcal{S}(A)$ has length $O(N^{3/4})$. This suggests that the density threshold for $d=1$ is sharp.

For $d\geq 2$, however, things are far from clear in two aspects: first, the density threshold is much higher than that for $d=1$, and second, the ellipsoid given by Freiman~\cite{Fre96}, Plagne~\cite{Pla99} and Chaimovich~\cite{Cha91} is a very small area that covers, as we will explain later, only an $o(1)$ fraction of the points in $\mathcal{S}(A)$.  Hence, it is natural to ask the following question.

\begin{quote}
    For $d \geq 2$, is it possible to reduce the density threshold and enlarge the area inside which all the integer points belong to $\mathcal{S}(A)$? In particular, can we obtain a bound that matches the 1-dimensional case?
\end{quote}

We shall provide an affirmative answer to the above question. We remark that our approach is fully combinatorial, while all the prior results for $d \geq 2$ are based on analytic number theory. (For $d =1$, both the combinatorial approach~\cite{Sar94,Lev03,GM91,SV06b} and the analytical approach~\cite{AF88} have been explored.) This is partly because the analytical approach can be extended to higher dimensions in a more systematic way, while new tools are needed to extend the combinatorial approach to even $d=2$. But the analytic approach has certain limitations that prevent it from obtaining a sharp bound even for $d = 1$, as we will explain below.

The analytical approach addresses the decision problem by showing that when $\vec{t}$ lies in some ellipsoid centered at the average of $A$,  the number of feasible solutions to subset sum can be approximated by a Gaussian distribution. One way to interpret such a result is to view the subset sum problem from a different perspective: suppose we are given $n$ independent random variables $X_i$, where $\Pr(X_i=\vec{a}_i)=1/2$ and $\Pr(X_i=0)=1/2$. The number of solutions to $\sum_i\vec{a}_ix_i=\vec{t}$ can be calculated via the probability $\Pr(\sum_i X_i=\vec{t})$. While the  central limit theorem (CLT) is not applicable as $X_i$ depends on $n$, it turns out surprisingly that for large $n$ this probability can still be approximated by a Gaussian distribution. This elegant idea was first introduced by Alon and Freiman~\cite{AF88}, who considered the 1-dimensional dense subset sum, and was later extended to the multidimensional case in subsequent works. However, Gaussian approximation for subset sum also has a fundamental limitation -- $\vec{t}$ has to be sufficiently close to the expectation $\sum_i \vec{a}_i/2$, otherwise the approximation error will dominate. For $d=1$, the analytical approach by Alon and Freiman~\cite{AF88} requires that $n\gg N^{2/3}$ and that $A$ is spread, and only guarantees the existence of a feasible solution within an interval of length $O(\sqrt{\sum_i a_i^2})$. The case where $d\geq 2$ faces essentially the same issue.

\subsection{Our Results}\label{sec:res}

To ease our discussion, we consider the case where $A$ is a set of vectors from $[N]^d$ and present our results informally in this section. Formal theorems for the case where $A$ is a multiset of vectors from $[N_1] \times \cdots \times [N_d]$  can be found in Section~\ref{sec:extension}. We use $\widetilde{O}_d(\cdot)$ to further hide factors depending only on $d$. Recall that $\mathcal{L}(A)$ is the lattice generated by the vectors in $A$. At this stage, one may assume that the lattice $Z: = \mathbb{Z}^d$ and $\det(Z) = 1$ in all the following theorems and corollaries in this section.

\paragraph{Long Progressions in Dense Subset Sums}Our first result is the existence of long progressions in dense subset sums.
\begin{theorem}[Informal Version of Theorem~\ref{thm:ap-standard-basis-extended}]\label{thm:ap-standard-basis-informal}
    Let $A\subseteq [N]^d$ be a set of vectors, where $Z: = \mathcal{L}(A)$ is full-dimensional. Suppose that $N$ is sufficiently large and that the following conditions hold.
    \begin{itemize}
        \item (Density Condition) $|A| \gg N^{d/2}$.

        \item (Spreadness Condition) $|A \setminus \Lambda| \gg \frac{N^d}{|A| \cdot \det(Z)}$ for any sublattice $\Lambda \subsetneq Z$.

        \item (Nondegeneracy Condition) $|A \setminus \mathcal{V}| \gg \frac{N^d}{|A|}$ for any lower-dimensional subspace $\mathcal{V} \subseteq \mathbb{R}^d$.
    \end{itemize}
    Then 
    \[
        \mathcal{S}(A) \supseteq \{ \vec{v_0} + z_1s\vec{e}_1+ \cdots+z_ds\vec{e}_d : 0 \leq z_j \leq N, z_j \in \mathbb{Z}\},
    \]
    where $\vec{v}_0\in \mathbb{Z}^d$, $s := \det(Z)$, and $\{\vec{e}_1,\cdots,\vec{e}_d\}$ is the standard basis.
\end{theorem}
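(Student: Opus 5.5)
We will follow the combinatorial route that works for $d=1$ (Sárközy, Szemerédi--Vu), lifted to $\mathbb{Z}^d$. The proof has two phases: first build a large, well-structured set inside the subset sums of part of $A$, then use the rest of $A$ to fill in to a full box.

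\textbf{Phase 1: large multidimensional progression.} Split $A$ at random into $\ell=\widetilde{O}_d(1)$ parts $A_1,\dots,A_\ell$ of roughly equal size. We will grow sets $S_i\subseteq \mathcal{S}(A_1\cup\dots\cup A_i)$ by repeated sumset steps. For sets $X,Y$ of integer points with $Y$ small, a Plünnecke/Freiman-type dichotomy applies. Either $|X+\mathcal{S}(Y)|$ grows by a factor proportional to $|Y|$, or $X$ has small doubling with respect to $Y$. Each set stays inside a box of volume $O(n^d N^d)$, and $|A|\gg N^{d/2}$ gives $|A|^2\gg N^d$, so the growth alternative cannot happen indefinitely. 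Thus after $O(\log)$ rounds we reach a set $S$ with $|S|\gtrsim |A|^2$ (up to $\widetilde{O}_d$ factors) and small doubling. By Freiman's theorem in $\mathbb{Z}^d$, one more sumset step turns $S$ into a proper generalized arithmetic progression $P$ of bounded rank $r$ and size $\gtrsim |A|^2\gg N^d$. Concretely, we iterate $P\mapsto P+P$, using the standard fact that $kS$ contains a large GAP once $|kS|\le C|S|$ (Szemerédi--Vu's long-AP argument).

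\textbf{Phase 2: make the progression $d$-dimensional with the right lattice.} We want $P$ to be a lattice box $\vec v_0+\{\sum z_j s\vec e_j\}$.

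First, if the generators of $P$ span only a lower-dimensional subspace $\mathcal V$, then because $P\subseteq$ a box of side $nN$ we get $|P|\ll (nN)^{d-1}$. Write $A'$ for a reserved part of $A$. The nondegeneracy condition $|A'\setminus\mathcal V|\gg N^d/|A|$ then supplies many vectors outside $\mathcal V$. Adding their subset sums to $P$ increases the dimension, and we repeat at most $d$ times.

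Second, once $P$ is full-dimensional, let $\Lambda$ be the lattice generated by the differences in $P$. If $\Lambda\subsetneq Z$, the spreadness condition gives $\gg N^d/(|A|\det Z)$ elements outside $\Lambda$. Adding these enlarges the lattice to a superlattice. The index decreases each time, so this terminates after $O(\log)$ steps at $\Lambda=Z$.

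Finally, we convert a dense piece of $Z$ into a box. A full-rank progression whose difference lattice is $Z=\mathcal L(A)$ contains a translate of a fundamental-domain-sized chunk. Because $s=\det Z$, we have $s\mathbb Z^d\subseteq Z$, so each $s\vec e_j\in Z$ and the box $\{z_j s\vec e_j\}$ is a sublattice box of $Z$. Absorbing further elements of $A$ (an Erdős--Heilbronn / Olson-type filling argument) makes the box side $N$.

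\textbf{Main obstacle.} We expect the hardest step to be the quantitative bookkeeping in Phase 2. We must show that adding a few vectors outside $\Lambda$ or $\mathcal V$ genuinely enlarges $P$ into a progression without losing its size. We also need $P$ to be thick (side $\ge N$) in every direction, not merely large in volume. Thin directions come from the GAP having a very elongated shape. We plan to handle this with a Minkowski/John-type reduction that replaces the GAP by a "proper convex progression" comparable to a lattice ellipsoid. Then we use the density condition $|A|^2\gg N^d$ to show that every successive minimum is $\le N$-scaled. Doing this in a way uniform in $d$ is where we expect the real work to lie.
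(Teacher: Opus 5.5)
Your outline has genuine gaps, and they sit in exactly the steps that carry the theorem.

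\textbf{Phase 1.} Freiman's theorem is used backwards. Small doubling places $S$ \emph{inside} a GAP of size $f(K)|S|$. It does not put a large GAP inside $S$ or $S+S$. Getting a GAP inside iterated sumsets, with no differences allowed, would need a $d$-dimensional Szemer\'edi--Vu theorem with only polylogarithmic loss, and you do not supply one. Even then, the rank and shape of the resulting GAP would be uncontrolled. The box-volume bound $O(n^dN^d)$ also says nothing about stopping at size $|A|^2$.

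\textbf{Phase 2.} Adjoining $\vec{a}\notin\Lambda$ to $P$ gives $P\cup(P+\vec{a})$, which meets only two cosets of $\Lambda$. ``The index decreases each time'' controls the lattice generated, not the set covered. What you need is a set $R$ whose subset sums hit \emph{every} class of $Z/\Lambda$ (possibly polynomially many), while $|R|$ stays below the length of $P$ in every direction. You give no mechanism for this. Likewise, adding vectors outside $\mathcal{V}$ raises the dimension but gives no length $N\det(Z)$ in the new direction. The thickness problem you flag at the end is left open.

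The paper avoids inverse theorems altogether.

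\textbf{The basis.} Lemma~\ref{lem:popular-sum} builds it directly. In round $j$ it takes the $\lfloor\delta\tau/(2d)\rfloor$ vectors with the longest projection onto a unit vector $\vec{o}_j\perp\vec{b}_1,\dots,\vec{b}_{j-1}$. It then applies the $d$-dimensional Erd\H{o}s--S\'ark\"ozy lemma (Lemma~\ref{lem:hd-es-petals}) to get $\vec{b}_j$ with $m\ge\delta\tau/(d\log N)^{4d}$ disjoint small representations.

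\textbf{Thickness.} The max-projection choice forces all but $\delta\tau/2$ elements of $A$ into $\{\sum_j x_j\vec{b}_j:|x_j|\le 2^{d-1}\}$. Hence multiplicities modulo $\Gamma:=\mathcal{L}(\vec{b}_1,\dots,\vec{b}_d)$ are at most $(2^d+1)^d$, and $\det(\Gamma)/\det(Z)\ge n/2^{3d^2}$. With Lemma~\ref{lem:inradius}, this makes $\{\sum_j x_j\vec{b}_j:0\le x_j\le m/4\}$ contain a cube of side $\rho\delta N/(d\log N)^{7d^2}$. Under the nondegeneracy hypothesis this side is at least $N\det(Z)$. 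This is the thickness you were missing, and it is where nondegeneracy enters.

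\textbf{Coset filling.} This is Lemma~\ref{lem:gen-rem-dense}. Kneser's theorem plus Szemer\'edi--Vu densification yield $R\subseteq A$ with $\mathcal{S}_\Gamma(R)=Z\bmod\Gamma$ and $|R|=\widetilde{O}(\ell/|A|)<m/4$. This is exactly why the spreadness threshold is $N^d/(|A|\det(Z))$.

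Your final observation that $\det(Z)\vec{e}_j\in Z$ (Lemma~\ref{lem:standard-basis-in-large-area}) is correct and is the one step that matches the paper.
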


We claim that the step size $s = \det(Z) \leq \frac{d^dN^d}{|A|}$ by a standard lattice computation. We also remark that the spreadness condition in the above theorem can actually be removed in the sense that given a dense and nondegenerate set, we can always extract a subset that is dense, spread, and nondegenerate. 

\begin{corollary*}
    Let $A\subseteq [N]^d$ be a set of vectors, where $Z: = \mathcal{L}(A)$ is full-dimensional. Suppose that $N$ is sufficiently large and that the following conditions hold.
    \begin{itemize}
        \item (Density Condition) $|A| \gg N^{d/2}$.

        \item (Nondegeneracy Condition) $|A \setminus \mathcal{V}| \gg \frac{N^d}{|A|}$ for any lower-dimensional subspace $\mathcal{V} \subseteq \mathbb{R}^d$.
    \end{itemize}
    Then 
    \[
        \mathcal{S}(A) \supseteq \{ \vec{v_0} + z_1s\vec{e}_1+ \cdots+z_ds\vec{e}_d : 0 \leq z_j \leq N, z_j \in \mathbb{Z}\},
    \]
    where $\vec{v}_0\in \mathbb{Z}^d$, $s \leq \widetilde{O}_d(\frac{N^d}{|A|})$, and $\{\vec{e}_1,\cdots,\vec{e}_d\}$ is the standard basis.
\end{corollary*}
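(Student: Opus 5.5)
The plan is to reduce the corollary to Theorem~\ref{thm:ap-standard-basis-informal}: extract from $A$ a subset $A'$ that is still dense and nondegenerate, and is moreover spread with respect to its own lattice $Z' := \mathcal{L}(A')$. Since $\mathcal{S}(A') \subseteq \mathcal{S}(A)$, the conclusion for $A'$ transfers to $A$. The step size is $s = \det(Z')$, and the standard lattice bound gives $\det(Z') \le d^d N^d/|A'|$. So it suffices that $|A'| \ge |A|/\widetilde{O}_d(1)$, possibly after halving the dimension of the box in each coordinate.

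The extraction is an iterative refinement. Start with $A_0 = A$ and $Z_0 = \mathcal{L}(A_0)$. If $A_i$ violates the spreadness condition, there is a proper sublattice $\Lambda \subsetneq Z_i$ with $|A_i \setminus \Lambda| \ll N^d/(|A_i|\det(Z_i))$. In that case set $A_{i+1} = A_i \cap \Lambda$ and $Z_{i+1} = \mathcal{L}(A_{i+1}) \subseteq \Lambda$; the latter is full-dimensional because of the nondegeneracy argument below.

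\textbf{Bounding the number of rounds.} Each round at least doubles the determinant, since $[Z_i : Z_{i+1}] \ge 2$. The determinant is always at most $d^dN^d/|A_{i}|$, hence polynomial in $N$. Therefore the process stops after $O(d\log N)$ rounds.

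\textbf{Bounding the loss.} The points lost in round $i$ number at most $c\,N^d/(|A_i|\det(Z_i))$, for a small constant $c$ hidden in $\ll$. As long as $|A_i| \ge |A|/2$, this loss is at most $2cN^d/(|A|\det(Z_i))$. Because $\det(Z_i) \ge 2^i$, these bounds form a geometric series. The total loss is then at most $4cN^d/|A|$, which is $\ll |A|$ by the density condition $|A|\gg N^{d/2}$. So every $|A_i|$ stays at least $|A|/2$, and this justifies the assumption used above.

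\textbf{Preserving nondegeneracy.} Since at most $4cN^d/|A|$ points are removed in total, for any lower-dimensional subspace $\mathcal{V}$ we have $|A_i\setminus\mathcal{V}| \ge |A\setminus\mathcal{V}| - 4cN^d/|A|$. By hypothesis $|A\setminus\mathcal{V}|\gg N^d/|A|$, so this remains $\gg N^d/|A_i|$, as long as the constants in the hypotheses dominate $c$. This also shows that $A_i$ cannot lie in a lower-dimensional subspace, so each $Z_i$ is full-dimensional. The final set $A'$ satisfies all three conditions of Theorem~\ref{thm:ap-standard-basis-informal}, with $|A'|\ge |A|/2$.

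\textbf{Main obstacle: constant bookkeeping.} The spreadness threshold is stated relative to the current lattice determinant, and it must stay compatible with the fixed "$\gg$" constants in the density and nondegeneracy conditions. The geometric-decay argument above is what makes this work.

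If the formal version of the theorem has thresholds with polylogarithmic slack, I would instead use the cruder bound of $O(d\log N)$ rounds, each losing at most $c N^d/|A_i|$ points. This costs the $\widetilde{O}_d$ factor permitted in $s$.

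Finally, one applies Theorem~\ref{thm:ap-standard-basis-informal} to $A'$. This yields the progression with step $s=\det(Z')\le d^dN^d/|A'| \le 2d^dN^d/|A| = \widetilde{O}_d(N^d/|A|)$, contained in $\mathcal{S}(A')\subseteq\mathcal{S}(A)$.
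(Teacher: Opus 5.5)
Your proposal is correct and follows the paper's route. The paper likewise extracts a dense, spread, nondegenerate subset by repeatedly passing to an almost-common proper sublattice and discarding the few vectors outside it (Lemma~\ref{lem:compute-spread-subset}, with $O(d\log N)$ rounds by determinant doubling). It then applies Theorem~\ref{thm:ap-standard-basis-informal} and bounds $s=\det(Z^*)\le d^dN^d/|A^*|$ using $|A^*|\ge |A|/2$.

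The differences are only in bookkeeping. The paper uses a fixed removal threshold relative to the original $\det(Z)$ and absorbs the number of rounds into the polylogarithmic slack, whereas your threshold shrinks with $\det(Z_i)$, so the losses form a geometric series. Also, the constant in the negated spreadness condition is the theorem's large constant, not a small $c$. This is harmless because the corollary's hidden constants can be chosen to dominate it, as your bookkeeping paragraph already allows.
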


Note that when $d=1$, the nondegeneracy condition holds automatically, and the corollary matches (up to polylogarithmic factors) the previous theorem for the 1-dimensional case~\cite{Sar94,Lev03,GM91,SV06b,Fre93}.

\paragraph{A John-Type Theorem for Dense Subset Sum} Note that the vectors in $\mathcal{S}(A)$ must belong to the zonotope
\(
    \{x_1\vec{a}_1 + \cdots + x_n\vec{a}_n : 0 \leq x_i \leq 1\}.
\)
Our second result shows that almost all the integer points inside this zonotope belong to $\mathcal{S}(A)$ when $A$ is dense, spread, and nondegenerate.

\begin{theorem}[Informal version of Theorem~\ref{thm:john-type-extended}]\label{thm:john-type-informal}
     Let $A\subseteq [N]^d$ be a set of vectors, where $Z: = \mathcal{L}(A)$ is full-dimensional. Suppose that $N$ is sufficiently large and that the following conditions hold.
    \begin{itemize}
        \item (Density Condition) $|A| \gg \sqrt{\frac{N^d}{\det(Z)}}$.

        \item (Spreadness Condition) $|A \setminus \Lambda| \gg \frac{N^d}{|A| \cdot \det(Z)}$ for any sublattice $\Lambda \subsetneq Z$.

        \item (Nondegeneracy Condition) for some $n'\gg \frac{N^d}{|A| \cdot \det(Z)}$, it holds that $|A \setminus \mathcal{V}| \geq n'$ for any lower-dimensional subspace $\mathcal{V} \subseteq \mathbb{R}^d$.
    \end{itemize}
    Then 
    \[
        Z \cap \{x_1\vec{a}_1 + \cdots + x_n\vec{a}_n : \varepsilon \leq x_i \leq 1 - \varepsilon\} \subseteq \mathcal{S}(A) ,    
    \]
    where $\varepsilon \leq \widetilde{O}_d(\frac{N^d}{n' \cdot |A| \cdot \det(Z)})$.
\end{theorem}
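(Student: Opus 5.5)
Here is the plan. I would first use Theorem~\ref{thm:ap-standard-basis-informal} to plant a long lattice-aligned box inside the subset sums of a small part of $A$. Then I would reach an arbitrary target in the shrunken zonotope by adding subset sums of the remaining vectors.

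\textbf{Step 1: split $A$.} Partition $A$ at random into two halves $A_1$ and $A_2$. With high probability each half keeps a constant fraction of the density. Each half also keeps spreadness, since $|A_i\setminus\Lambda|\gtrsim |A\setminus\Lambda|/2$ for every sublattice $\Lambda$. Nondegeneracy survives as well, since $|A_i\setminus\mathcal V|\gtrsim n'/2$. Only finitely many relevant sublattices and subspaces matter up to the scale $N$, so a union bound with Chernoff suffices, after first reducing to polynomially many "critical" lattices and subspaces.

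Apply Theorem~\ref{thm:ap-standard-basis-informal} to $A_1$. The statement is phrased relative to $Z$, so I would transform by a basis of $Z$; this normalizes $\det(Z)=1$ and turns the density condition into $|A|\gg\sqrt{N^d/\det Z}$. The theorem then gives a translate $\vec v_0+P$ inside $\mathcal S(A_1)$, where $P$ is a box of the lattice $Z$. The box has side length about $N$ in the normalized coordinates, so it contains a fundamental domain of $Z$ scaled by roughly $N^d/(|A|\det Z)$ in every direction.

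\textbf{Step 2: greedy rounding.} Take a target $\vec t=\sum x_i\vec a_i$ with $\varepsilon\le x_i\le 1-\varepsilon$. Move all the $A_1$-coefficients into $A_2$'s budget by writing $\vec t=\vec v_0+\vec c+\sum_{a\in A_2}y_a\vec a$. Here $\vec c$ is the centre of $P$ and $y_a\in[\varepsilon',1-\varepsilon']$. This works because the surplus $\sum_{A_1}x_i\vec a_i-\vec v_0-\vec c$ can be absorbed into the $A_2$ coefficients. The absorption step needs nondegeneracy: at least $n'$ vectors of $A_2$ lie off every hyperplane, so the zonotope of $A_2$ contains a ball (in the $Z$-norm) of radius about $n'\cdot\varepsilon\cdot N$. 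Choosing $\varepsilon\approx N^d/(n'|A|\det Z)$ up to logs makes the required shift fit. This is exactly where the bound on $\varepsilon$ comes from.

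Next, round the fractional $y$ to a $0/1$ vector $z$. The goal is to keep the error $\sum(y_a-z_a)\vec a$ inside a set of diameter $O_d(N)$ lying in a fixed coset. Use iterated rounding, Beck--Fiala style on the $d$ coordinates: vertices of the polytope $\{y:\sum y_a\vec a=\text{const}\}$ have at most $d$ fractional coordinates, so the error is at most $dN$. Afterwards the residual lies in $Z$ and has norm $O(dN)$. I would then argue that this is much smaller than the side of $P$ in the real scale, which is $N\cdot\det(Z)^{1/d}$-ish. If that comparison is insufficient, make $P$ bigger by giving $A_1$ a larger share of the density.

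\textbf{Main obstacle.} I expect the hardest part to be the scale comparison. Specifically, the box from Step 1 must be large enough in every direction, measured in the geometry of $Z$ rather than the standard basis, to swallow the $O(dN)$ rounding residual. A skewed $Z$ could make $P$ thin in some direction. I would try handling this with a reduced (LLL/Minkowski) basis of $Z$, bounding its successive minima using nondegeneracy (many vectors of length $\le N$ off every subspace). If that fails, I would replace the single application of Theorem~\ref{thm:ap-standard-basis-informal} with several applications to disjoint parts whose boxes are added. Keeping spreadness under splitting, via the union bound over critical lattices, is the other technical point.
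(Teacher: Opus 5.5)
There is a genuine gap, and it is in Step~2. After Step~1, every point you can certify lies in $\vec{v}_0+P+\mathcal{S}(A_2)\subseteq \vec{v}_0+P+\frac{\sigma(A_2)}{2}+\mathcal{P}(A_2)$. The whole contribution of $A_1$ is frozen into the small box $P$.

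Use $\mathcal{P}(A)=\mathcal{P}(A_1)+\mathcal{P}(A_2)$ and compare support functions. Containing $(1-\varepsilon)\mathcal{P}(A)+\frac{\sigma(A)}{2}$ would force, in every direction $\vec{v}$ and up to lattice effects of size $O(dN)$, the width of $P$ to be at least $2(1-\varepsilon)f_{\mathcal{P}(A_1)}(\vec{v})-2\varepsilon f_{\mathcal{P}(A_2)}(\vec{v})$. For a random half this is about $f_{\mathcal{P}(A)}(\vec{v})$. So the planted box would have to be as wide as the whole zonotope, which by Corollary~\ref{coro:large-inradius} contains a cube of half-side $\rho\delta N/(d\log N)^{7d^2}$. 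Your box has width only $O_d(N)$ in your normalization.

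Put concretely: as $\vec{t}$ varies, the surplus $\sum_{A_1}x_i\vec{a}_i-\vec{v}_0-\vec{c}$ ranges over a translate of $(1-2\varepsilon)\mathcal{P}(A_1)$. The $A_2$-coefficients have only $O(\varepsilon)$ slack, so no inradius estimate for $\mathcal{P}(A_2)$ can absorb it. Incidentally, nondegeneracy alone does not give inradius about $n'N$: for $A=\{(x,1):x\in[N]\}$ the inradius is about $|A|$. Your fallback of giving $A_1$ a larger share makes things worse.

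The inequality above shows what is actually needed. The patch must come from a subset $A^*$ with $\mathcal{P}(A^*)\subseteq\varepsilon\mathcal{P}(A)$ whose subset sums contain all of $Z$ in a cube of side $dN$. Given that, apply your rounding step (vertices with at most $d$ fractional coordinates, Lemma~\ref{lem:proximity}) to $A\setminus A^*$ and combine with Lemma~\ref{lem:vol-change}. This yields $Z\cap((1-4\varepsilon)\mathcal{P}(A)+\frac{\sigma(A)}{2})\subseteq\mathcal{S}(A)$, which is Lemma~\ref{lem:frac2int}. So $\varepsilon$ is simply the relative size of the patch, not an absorption radius.

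Building such an $A^*$ with $\varepsilon=\widetilde{O}_d(1/(\rho\delta))=\widetilde{O}_d(N^d/(n'|A|\det(Z)))$ is the real difficulty. You cannot do it by applying Theorem~\ref{thm:ap-standard-basis-informal} to a sparse random subset. A $p$-subset has its density, spreadness and nondegeneracy parameters scaled by $p$. The progression theorem then needs $p^2\rho\delta\gtrsim\mathrm{polylog}$ (and $p^2\rho\gamma\gtrsim\mathrm{polylog}$), which at best gives $\varepsilon\approx(\rho\delta)^{-1/2}$.

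The paper instead spends the density of \emph{all} of $A$ first:
\begin{itemize}
\item Lemma~\ref{lem:popular-sum} produces $m\ge\delta\tau/(d\log N)^{4d}$ disjoint subsets $A_{j,k}$, each of size at most $(d\log N)^{2d}$, with $\sigma(A_{j,k})=\vec{b}_j$.
\item Lemma~\ref{lem:gen-rem-dense} gives a set $R$ of about $\frac{\tau}{\rho}\,\mathrm{polylog}$ vectors with $\mathcal{S}_\Gamma(R)=Z\bmod\Gamma$.
\item Each $A_{j,k}$ is then kept independently with probability $p\approx\mathrm{polylog}/(\rho\delta)$. By Chernoff, about $\frac{\tau}{\rho}\,\mathrm{polylog}$ copies of each $\vec{b}_j$ survive, enough to combine with $R$ into all of $Z$ in a cube of side $dN$.
\item Bernstein's inequality for $f_{\mathcal{P}(G)}(\vec{v})$, plus a union bound over the at most $N^{d^2}$ facet normals of $\mathcal{P}(A)$, gives $\mathcal{P}(G)\subseteq 2p\,\mathcal{P}(A)$.
\item $\mathcal{P}(R)$ is bounded in $\vec{b}_j$-coordinates against $\mathcal{P}(A)\supseteq\{\sum_j x_j\vec{b}_j:|x_j|\le\delta\tau/(2(d\log N)^{4d})\}$.
\end{itemize}
This is Lemma~\ref{lem:small-region-small-set}. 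The scale issue you flagged as the main obstacle is comparatively minor: the cube of side at least $dN+N$ is obtained exactly as in Corollary~\ref{coro:large-inradius}.
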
 

Note that the nondegeneracy condition is weaker than the spreadness condition, in the sense that if a set is highly spread, then it must be highly  nondegenerate. Therefore, we have the following corollary.

\begin{corollary*}
     Let $A\subseteq [N]^d$ be a set of vectors, where $Z: = \mathcal{L}(A)$ is full-dimensional. Suppose that $N$ is sufficiently large and that the following conditions hold.
    \begin{itemize}
        \item (Density Condition) $|A| \gg \sqrt{\frac{N^d}{\det(Z)}}$.

        \item (Spreadness Condition) for some $n'\gg \frac{N^d}{|A| \cdot \det(Z)}$, it holds that $|A \setminus \Lambda| \geq n'$ for any sublattice $\Lambda \subsetneq Z$.
    \end{itemize}
    Then 
    \[
        Z \cap \{x_1\vec{a}_1 + \cdots + x_n\vec{a}_n : \varepsilon \leq x_i \leq 1 - \varepsilon\} \subseteq \mathcal{S}(A) ,    
    \]
    where $\varepsilon \leq \widetilde{O}_d(\frac{N^d}{n' \cdot |A| \cdot \det(Z)})$.
\end{corollary*}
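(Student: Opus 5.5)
The plan is to derive this corollary directly from Theorem~\ref{thm:john-type-informal} by checking that its hypotheses are implied by those of the corollary. The density condition is identical, so only two things need checking: that the spreadness condition with parameter $n'$ implies the spreadness condition of the theorem, and that it also implies the nondegeneracy condition with a comparable parameter. Once both are in hand, the conclusion, including $\varepsilon \leq \widetilde{O}_d\bigl(\frac{N^d}{n'\cdot|A|\cdot\det(Z)}\bigr)$, transfers verbatim.

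Spreadness is immediate. We have $|A\setminus\Lambda|\geq n' \gg \frac{N^d}{|A|\det(Z)}$ for every sublattice $\Lambda\subsetneq Z$, which is exactly the theorem's requirement.

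The key step is nondegeneracy. Fix a lower-dimensional subspace $\mathcal{V}\subseteq\mathbb{R}^d$. I would enlarge $\mathcal{V}$ to a hyperplane $H\supseteq \mathcal{V}$ spanned by $d-1$ linearly independent vectors of $Z$. This is possible because $Z$ is full-dimensional: complete a spanning set of $\mathcal{V}\cap Z$ using lattice vectors. Then $|A\setminus \mathcal{V}|\geq |A\setminus H|$, so it suffices to treat $H$. Next, pick any $\vec{w}\in Z\setminus H$ and set $\Lambda := (Z\cap H) + 2\vec{w}\mathbb{Z}$ (more carefully, $(Z\cap H)+2\mathbb{Z}\vec{u}$, where $\vec{u}$ completes a basis of $Z\cap H$ to a basis of $Z$). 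This $\Lambda$ is a full-rank sublattice of index $2$ in $Z$, so $\Lambda\subsetneq Z$, and it contains $Z\cap H\supseteq A\cap H$. Therefore $|A\setminus H|\geq |A\setminus\Lambda|\geq n'$. So the nondegeneracy condition holds with the same $n'$, and the theorem yields the claimed $\varepsilon$.

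The main obstacle is the lattice bookkeeping in the last step. One has to check that $Z\cap H$ is a primitive rank-$(d-1)$ sublattice, so that a basis completion $\vec{u}$ exists, and that $\Lambda$ is a proper sublattice containing every point of $A\cap H$. This follows from standard facts about primitive sublattices, namely that the intersection of a lattice with a rational subspace is primitive. After that, the corollary follows by substituting directly into Theorem~\ref{thm:john-type-informal} (formally, Theorem~\ref{thm:john-type-extended}).
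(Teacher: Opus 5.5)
Your proposal is correct and follows the paper's route: the paper derives this corollary from Theorem~\ref{thm:john-type-informal} simply by noting that spreadness implies nondegeneracy with the same parameter $n'$, and your index-$2$ sublattice construction is a valid way to justify that implication. It is more work than needed, because the paper allows lower-rank lattices, so $\Lambda := \mathcal{L}(A\cap\mathcal{V})$ is already a proper sublattice of $Z$ with $A\setminus\Lambda = A\setminus\mathcal{V}$. Also, your $H$ generally contains only $\mathcal{V}\cap Z$ rather than all of $\mathcal{V}$ (e.g.\ for irrational $\mathcal{V}$), but that is all your argument actually uses.
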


Theorem~\ref{thm:john-type-informal} implies that for multidimensional dense subset sum, its natural linear programming relaxation essentially captures $\mathcal{S}(A)$, except for the points that are sufficiently close to the boundary of the zonotope. 

When $d = 1$, the region is essentially the same as $[\varepsilon \sigma(A), (1 - \varepsilon)\sigma(A)]$, so Theorem~\ref{thm:john-type-informal} matches prior results for the 1-dimensional case. (Note that in the 1-dimensional case, the nondegenerate condition holds automatically, and one can assume that $n' = |A|$.)

When $d \geq 2$, our result gives a substantial improvement upon prior results in the following aspects. 
\begin{itemize}
    \item Density threshold: 
    Prior results require $n \gg \Phi^{2/3}$ for $d=2$~\cite{Fre96,Pla99}, and $n \gg \Phi^{\frac{d-1}{d}}$ for $d\ge 3$~\cite{Cha91}. Our result unifies the threshold to $n\gg \sqrt{\Phi}$ for every positive constant $d$. This density threshold is sharp (up to some polylogarithmic factors). See Appendix~\ref{apx:lb} for details.

    \item Structured area: our result gives a convex region that captures a $1 - o(1)$ fraction of $\mathcal{S}(A)$, while prior results characterize only a small area of $\mathcal{S}(A)$. To see the latter, consider $A = \{(a_i, b_i)\}_i$. Freiman~\cite{Fre96} showed that when density and spreadness conditions hold, $\mathcal{S}(A)$ contains all integer points $\vec{t}$ within an ellipsoid given by $|(\frac{\sigma(A)}{2}-\vec{t})M^{-1}(\frac{\sigma(A)}{2}-\vec{t})^{\top}|=O(1)$, where 
    \[ 
        M= \frac{1}{4}
        \begin{bmatrix}
            \sum_i a_i^2       & \sum_i a_ib_i \\
            \sum_i a_ib_i       & \sum_i b_i^2
        \end{bmatrix} 
    \]
    Consider the following specific set $A := A_1 \cup A_2$, where
    \begin{align*}
        A_1 &=: \{(a,a+b): \frac{3}{2}N\leq a\leq 2N, 1\leq b\leq \frac{1}{2}N^{1/3}\};\\
        A_2 &=: \{(a,a+b): N\leq a\leq \frac{3}{2}N, \frac{1}{2}N^{1/3}\leq b\leq N^{1/3}\}.
    \end{align*}
    One can verify that the number of integer points within the ellipsoid is $\widetilde{O}(N^{8/3})$. However, $\mathcal{S}(A)$ contains $\Omega(N^4)$ points. So the ellipsoid captures only an $O(N^{-4/3})$ fraction of all the points. In a follow-up paper, Plagne~\cite{Pla99} slightly enlarged the ellipsoid by a logarithmic factor, which hardly changes the situation.
\end{itemize}

\paragraph{A Linear-Time Algorithm for Dense Subset Sum}
\begin{theorem}[Informal version of Theorem~\ref{thm:dense-alg-extended}]\label{thm:dense-alg-informal}
    Let $A\subseteq [N]^d$ be a set of vectors, where $Z: = \mathcal{L}(A)$ is full-dimensional. Suppose that $N$ is sufficiently large and that the following conditions hold.
    \begin{itemize}
        \item (Density Condition) $|A| \gg \sqrt{\frac{N^d}{\det(Z)}}$.

        \item (Nondegeneracy Condition) for some $n' \gg \frac{N^d}{|A| \cdot \det(Z)}$, it holds that $|A \setminus \mathcal{V}| \geq n' $ for any lower-dimensional subspace $\mathcal{V} \subseteq \mathbb{R}^d$.
    \end{itemize}
    Then for any
    \[
        \vec{t}\in \{x_1\vec{a}_1 + \cdots + x_n\vec{a}_n : \varepsilon \leq x_i \leq 1 - \varepsilon\} ,
    \]
    where $\varepsilon \leq \widetilde{O}_d(\frac{N^d}{n' \cdot |A| \cdot \det(Z)})$, we can determine whether $\vec{t} \in \mathcal{S}(A)$ in $\widetilde{O}_d(n)$ time with error probability at most $N^{-\Omega(d)}$.
\end{theorem}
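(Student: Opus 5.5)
We reduce the algorithmic statement to Theorem~\ref{thm:john-type-informal} through a decomposition-and-recombination scheme, in the spirit of Bringmann and Wellnitz's treatment of the 1-dimensional case. The difficulty is that the algorithm must work without the spreadness condition.

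\textbf{Step 1: make the instance spread.} We iteratively peel off lattices. If some sublattice $\Lambda\subsetneq Z$ contains all but $\ll \frac{N^d}{|A|\det(Z)}$ vectors of $A$, we replace $A$ by $A\cap\Lambda$ and $Z$ by $\mathcal{L}(A\cap\Lambda)$. Each step at least doubles $\det$, and $\det(Z)\le d^dN^d/|A|$ always holds, so this stops after $O(d\log N)$ rounds. We collect the discarded vectors in a small set $R$. The thresholds scale with $\det$, so the total size of $R$ stays $\widetilde O_d\big(\frac{N^d}{|A|\det(Z_{\mathrm{final}})}\big)$. The remaining set $A'$ still satisfies the density condition relative to its own lattice. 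Nondegeneracy survives up to constants after we shrink $n'$ slightly. Detecting a violating lattice in $\widetilde O(n)$ time is done with randomization: sample a few elements, compute their lattice by Hermite normal form, and test membership. This sampling is the source of the $N^{-\Omega(d)}$ error probability.

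\textbf{Step 2: handle the small part.} The set $R$ has size $m=\widetilde O_d(N^d/(|A|\det))$, which is sublinear in $n$ under the density condition. We must decide whether some subset sum $\vec r\in\mathcal{S}(R)$ satisfies $\vec t-\vec r\in \mathcal{S}(A')$.

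\textbf{Step 3: recombine.} By Theorem~\ref{thm:john-type-informal} applied to $A'$, the condition $\vec t-\vec r\in\mathcal S(A')$ holds as soon as $\vec t-\vec r\in Z'$ and $\vec t-\vec r$ lies in the $\varepsilon$-shrunk zonotope of $A'$. We show the second condition holds automatically. Since $\vec t$ lies in the $\varepsilon$-shrunk zonotope of $A$ and $R$ is tiny relative to the slack $\varepsilon\sum\vec a$, we can rewrite the fractional representation of $\vec t$ and shift the coefficients of $R$ onto $A'$. The nondegeneracy condition is needed here, because it guarantees that every direction is spanned by many vectors of $A'$.

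The problem therefore reduces to a congruence question: does some subset of $R$ have sum $\equiv \vec t \pmod{Z'}$? The quotient $\mathbb{Z}^d/Z'$ is a finite abelian group of order $\det(Z')$. We answer the question by subset-sum dynamic programming over this group, using the Smith normal form of the quotient. With FFT or sumset techniques this takes $\widetilde O(m+\det(Z'))$ time, which is $\widetilde O(n)$ since $\det(Z')\le \widetilde O_d(N^d/|A|)\le n$.

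\textbf{Main obstacle.} The hardest part is Step 3's geometric claim that removing $R$ keeps $\vec t-\vec r$ inside the shrunk zonotope of $A'$, with only a polylog loss in $\varepsilon$. Our first attempt would be a support-function argument. For each direction $\vec u$, compare $\max_{\vec r}\langle \vec u,\vec r\rangle$, which is at most $mN$, against the slack $\varepsilon\sum_{a\in A'}|\langle\vec u,\vec a\rangle|$. Nondegeneracy bounds this slack below by roughly $\varepsilon\, n'\cdot(\text{typical }|\langle \vec u,\vec a\rangle|)$. Carrying this out with the stated $\varepsilon$ is exactly where the product $n'|A|\det(Z)$ enters. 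If the bound is off for degenerate directions, we would fall back on a John-ellipsoid normalization of $A'$ before comparing.
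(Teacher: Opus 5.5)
Your Step~3 has a genuine gap. You claim that for \emph{every} $\vec r\in\mathcal{S}(R)$ the point $\vec t-\vec r$ lies in the shrunk zonotope $(1-\varepsilon')\mathcal{P}(A')+\sigma(A')/2$ because ``$R$ is tiny relative to the slack''. This is false, and the failure has nothing to do with degenerate directions.

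The peeling threshold cannot be lower than the spreadness that Theorem~\ref{thm:john-type-informal} demands of $A'$. Hence the discarded set can have $|R|=\widetilde{\Theta}_d(N^d/(|A|\det(Z)))$ elements, measured with the \emph{original} $Z$. (The first threshold dominates the sum, so your bound in terms of $Z_{\mathrm{final}}$ is also off.) The slack of $\mathcal{P}(A')$ guaranteed in its thinnest direction is only $\varepsilon$ times the inradius from Corollary~\ref{coro:large-inradius}, i.e.\ about $\widetilde{O}(\ell N)$ with $\ell:=\det(Z')/\det(Z)$. That absorbs $\widetilde{O}(\ell)$ vectors, and $\ell$ may be $2$.

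Concretely, take $d=1$, $A'=\{2,4,\ldots,2n\}$ with $n=\rho\sqrt N$ and $\rho$ polylogarithmic, and let $R$ be about $\sqrt N/\rho$ odd integers in $[N/2,N)$. Then $2\mathbb{Z}$ must be peeled and $\varepsilon\approx\mathrm{polylog}/\rho^2$. The slack $\varepsilon\sigma(A')\approx\varepsilon\rho^2N$ is $\widetilde{O}(N)$, yet $\sigma(R)\approx N^{3/2}/\rho\gg\sigma(A')$. For $\vec t=\sigma(A)/2$ and $\vec r=0$, the point $\vec t-\vec r$ is not even in $[0,\sigma(A')]$. In one dimension no John-ellipsoid renormalization can rescue this.

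The missing idea is the Kneser step of Lemma~\ref{lem:reduce-1}. Find $\Gamma$ with $Z'\subseteq\Gamma\subseteq Z$ and split the discarded vectors into $R^*\cup G^*$ with $|R^*|\le 2\ell$, $G^*\subseteq\Gamma$, and $\mathcal{S}(A)+\Gamma=\mathcal{S}(R^*)+Z'$.
\begin{itemize}
\item Only the $O(\ell)$ vectors of $R^*$ are absorbed into $\varepsilon\mathcal{P}(A')$. This works because the inradius of $\mathcal{P}(A')$ scales like $\rho\delta\ell N$ (Claim~\ref{clm:struct-1}).
\item The bulk $G^*$ is never absorbed. You split $\vec t$ fractionally into an $A'$-part and a $G^*$-part, and round the $G^*$-part to a point of $\mathcal{S}(G^*)$ with offset in $[0,dN]^d$ (Lemma~\ref{lem:proximity}). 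This rounding does not change the class modulo $\Gamma$.
\item You then correct the class modulo $Z'$ using $R^*$ (Claims~\ref{clm:struct-2} and~\ref{clm:struct-3}).
\end{itemize}
So what is needed is a \emph{well-placed} representative in each reachable residue class, not that all of $\mathcal{S}(R)$ be well placed.

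Your final test is nonetheless correct. Since $A'\subseteq Z'$, Lemma~\ref{lem:reduce-1}(ii) yields $\mathcal{S}(A)+\Gamma\subseteq\mathcal{S}(A)+Z'$. Hence $\mathcal{S}(R)+Z'=\mathcal{S}(A)+Z'=\mathcal{S}(A)+\Gamma$, and checking $\vec t\bmod Z'\in\mathcal{S}_{Z'}(R)$ agrees with the paper's check. But showing that this check is decisive requires exactly the argument above.

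Two smaller points:
\begin{itemize}
\item The group to work in is $Z/Z'$, of order $\ell\le 2^d n$. Your bound $\det(Z')\le n$ can fail when $\det(Z)>1$.
\item Plain FFT-based dynamic programming over this group costs about $m\ell$. The $\widetilde{O}(m+\ell)$ bound needs a Kneser-type early termination, as in Lemma~\ref{lem:faster-modular-ss}.
\end{itemize}
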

We also show that when $n \gg \sqrt{N^d/\det(Z)}$, in $\widetilde{O}_d(n)$ time, one can determine whether a set $A$ satisfies the nondegeneracy condition, as well as whether $\vec{t}\in \{x_1\vec{a}_1 + \cdots + x_n\vec{a}_n : \varepsilon \leq x_i \leq 1 - \varepsilon\}$. See Appendix~\ref{app:decide-nondegen} and Appendix~\ref{apx:zonotope} for details.

Compared with the algorithms for the 1-dimensional case~\cite{GM91,BW21}, our theorem additionally requires $A$ to be nondegenerate. But when $d=1$, this condition holds automatically, and one can assume that $n' = |A|$. So our theorem matches the previous algorithms for the 1-dimensional case. In particular, the bound on $\varepsilon$ matches (up to polylogarithmic factors) that in Bringmann and Wellnitz's algorithm, and is sharp for the set case under Strong Exponential Time Hypothesis~\cite{BW21}. 

For $d \geq 2$, this requirement seems to be unavoidable. Basically, if a majority of $A$ degenerates to a lower-dimensional subspace, then it is not dense enough along some direction in the original space, and therefore, the techniques for the dense case cannot be applied. We claim that removing this requirement is as hard as developing an $\widetilde{O}(N^{1 + o(1)})$ algorithm for the 1-dimensional subset sum, which would require techniques beyond those for the dense case. (So far the best-known algorithm has a running time $\widetilde{O}(N^{5/4})$ due to Chen, Lian, Mao, and Zhang~\cite{CLMZ24a}.)

To see the claim, consider an arbitrary set $A \subseteq [N]$ of integers. We shall embed it into a 2-dimensional dense case. Let $A'\subseteq [N^{1+\delta}]\times [N]$ for some $\delta =o(1)$. $A'$ consists of two subsets $A_x$ and $A_y$, with $A_x=\{(x,0):x\in [N^{1+\delta}]\}$ being on the $x$-axis, and $A_y=\{(0,y):y\in A\}$ on the $y$-axis. Let $\sigma_x=\sum_{x:(x,0)\in A_x}x$ and $\sigma_y=\sum_{y:(0,y)\in A_y} y$. It is easy to see that $A'$ is dense. Given any $\vec{t}=(t_x,t_y)\in [o(1)\sigma_x,(1-o(1))\sigma_x]\times [o(1)\sigma_y,(1-o(1))\sigma_y]$, determining whether $\vec{t}\in\mathcal{S}(A')$ is equivalent to determining whether $t_y\in\mathcal{S}(A)$.

\subsection{Other Related Work}
Subset Sum is a fundamental NP-hard problem in theoretical computer science~\cite{Kar72}. There have been extensive studies on pseudo-polynomial time algorithms for Subset Sum~\cite{Bel57,Pis99,Pis03,KX17,KX18,Bri17,JW19,PRW21,Cha99,CFG89,GM91,BW21,CLMZ24a}. 
    
Multidimensional Subset Sum is related to the lattice basis computation problem. Specifically, our algorithm relies on efficient algorithms for computing Hermite normal form of given vectors, which has been studied extensively in~\cite{SL96,KB79,CC82,Ili89,LS22,KR25}. Given $n$ column vectors in $\mathbb{R}^d$, the algorithm of Storjohann and Labahn~\cite{SL96} computes the Hermite normal form of the $d\times n$ matrix $B$ formed by these vectors in $\widetilde{O}(nd^{\omega}\log\|B\|)$ time, where $\omega$ denotes the exponent for matrix multiplication and is currently $\omega\le 2.371552$~\cite{WXZ24}. For our purpose, $d$ is a fixed constant; thus, the running time is near-linear in $n$ and serves our needs.
    
Multidimensional Subset Sum is closely related to integer (linear) programming. 
Consider an arbitrary integer program $\min \{\vec{c} \vec{x}: H\vec{x}=\vec{t}, \vec{\ell}\le \vec{x}\le \vec{u}\}$ where $H\in \mathbb{Z}^{d\times n}$, and let $N$ denote the upper bound on the absolute value of all entries in the constraint matrix $H$.
There are two classes of algorithms for integer programming. The first class has a running time exponential in the number of variables $n$. Lenstra~\cite{Len83} presented a $2^{O(n^3)}\cdot\text{poly}(|I|)$-time algorithm, where $\textnormal{poly}(|I|)$ denotes a polynomial in the input length $|I|$. The running time has been improved in several subsequent works~\cite{Kan87, DPV11,Dad12,RR23}. Very recently, Reis and Rothvoss~\cite{RR23} proved that integer programming can be solved in $(\log n)^{O(n)}\cdot\text{poly}(|I|)$ time. The second class has a running time depending on $d$ and $N$.   Papadimitriou~\cite{Pap81} provided an algorithm running in $n^{O(d)}(m\cdot\max\{N,\|\vec{t}\|_{\infty}\})^{O(d^2)}$ time. The running time was later improved by Eisenbrand and Weismantel~\cite{EW19}, and Jansen and Rohwedder~\cite{JR23}. 
    
In recent years, there has been extensive algorithmic research in subset-sum-related optimization problems using additive combinatorics results, see, e.g.~\cite{PRW21,DJM23,RW24,ABF23,CLMZ25,CLMZ24c,CLMZ24b,CLMZ24a,BFN25,Jin25,Jin26,Bri24,Jin24,BN21}.

In this paper, we study the decision version of Multidimensional Dense Subset Sum, and our algorithmic result extends the dense subset sum algorithm by Bringmann and Wellnitz~\cite{BW21} to higher dimensions. It is worth mentioning that very recently, Chen, Mao and Zhang~\cite{CMZ25} studied the search version of dense Subset Sum and showed that in near-linear time it is also possible to return a feasible solution if the answer is ``yes''. Unfortunately, our algorithm in this paper cannot address the search version of Multidimensional Dense Subset Sum. However, since our method is combinatorial, we are optimistic that a feasible solution can be found in near-linear time.

\section{Preliminaries}\label{sec:pre}
\subsection{Notation}
All logarithms in this paper are base 2.

We use $\mathbb{R}$ to denote the set of all reals and $\mathbb{Z}$ to denote the set of all integers. The set of integers from $0$ to $N-1$ is written as $[N]$. That is,
\(
    [N] := \{0, 1, \ldots, N-1\}.
\)
Throughout the paper, we always assume that $N$ is sufficiently large. By sufficiently large, we mean that there exists a universal constant $C$ such that $N \geq C$.

We use $\vec{e}_1, \vec{e}_2, \ldots, \vec{e}_d$ to denote the standard basis of $\mathbb{R}^d$, $\vec{0}$ to denote the vector $(0, \ldots, 0)$, and $\vec{1}$ to denote $(1, \ldots, 1)$. The Euclidean norm of a vector $\vec{a}$ is represented by $\|\vec{a}\|_2$, and the inner product of two vectors $\vec{a}$ and $\vec{b}$ is represented by $\vec{a} \cdot \vec{b}$.

Let $P \subseteq \mathbb{R}^d$ be a set of vectors. 
For any vector $\vec{v}$, we use $P + \vec{v}$ to denote the translate of $P$ along $\vec{v}$. That is, 
\(
    P +  \vec{v} := \{\vec{p} + \vec{v}: \vec{p} \in P\}. 
\)
Let $Q$ be another set of vectors. The sumset of $P$ and $Q$ is defined by the following formula.
\[
    P + Q := \{\vec{p} + \vec{q} : \vec{p} \in P, \vec{q} \in Q\}
\]
The $k$-fold sumset of $P$ is denoted as $kP$. That is, $kP = P$ if $k = 1$, and $kP = (k-1)P + P$ for $k \geq 2$.
We remark that when $P$ is convex, $kP$ can also be viewed as a set obtained by stretching the vectors in $P$ by a factor of $k$. We slightly abuse the notation. When $P$ is convex, for any real $a \geq 0$, we use $aP$ to denote the set obtained by stretching the vectors in $P$ by a factor of $a$. That is,
\(
    a P := \{a\vec{p} : \vec{p} \in P\}.
\) 

Let $A = \{\vec{a}_1, \ldots, \vec{a}_n\}$ be a finite set of vectors. We define $\sigma(A) := \sum_{i}\vec{a}_i$ to be the sum of $A$. We define two sets $\mathcal{S}(A)$ and $\mathcal{P}(A)$ as follows. 
\begin{align*}
    \mathcal{S}(A) &:= \{x_1 \vec{a}_1 + \cdots + x_n\vec{a}_n : x_j \in \{0,1\}\}\\
    \mathcal{P}(A) &:= \{x_1 \vec{a}_1 + \cdots + x_n\vec{a}_n :  |x_j| \leq 1/2\}
\end{align*}
Basically, $\mathcal{S}(A)$ is the set of all subset sums of $A$, and $\mathcal{P}(A)$ is a zonotope generated by the vectors in $A$. One can make the following two easy observations.
\begin{itemize}
    \item $\mathcal{S}(A) \subseteq \mathcal{P}(A) + \frac{\sigma(A)}{2} \subseteq 2\mathcal{P}(A)$.

    \item For any partition $A_1 \cup A_2$ of $A$, we have
    \(
        \mathcal{S}(A) = \mathcal{S}(A_1) + \mathcal{S}(A_2)
    \)
    and 
    \(
        \mathcal{P}(A) = \mathcal{P}(A_1) + \mathcal{P}(A_2).
    \)

\end{itemize}

\subsection{Progressions and Lattices}\label{subsec:lattice}
A generalized arithmetic progression (or progression for short) in an additive group $G$ is defined to be any set of the form
\[
    \left\{a_0 + z_1a_1 + z_2a_2 + \cdots + z_na_n: 0 \leq z_j \leq m_j, z_j \in \mathbb{Z}\right\}, 
\]
where $a_0, a_1, \ldots, a_n \in G$ and $m_j \in \mathbb{Z}$.  We call $a_1, \ldots, a_n$ the basis of the progression. (See~\cite{TV06} for more discussion of progressions.) Throughout this paper, we consider only the case where $G = \mathbb{Z}^d$. In other words, we consider only the progressions of the form
\[
    \{\vec{a}_0 + z_1\vec{a}_1 + z_2\vec{a}_2 + \cdots + z_n\vec{a}_n: 0 \leq z_j \leq m_j, z_j \in \mathbb{Z}\}, 
\]
where $\vec{a}_0, \vec{a}_1, \ldots, \vec{a}_n \in \mathbb{Z}^d$ and $m_j \in \mathbb{Z}$.

A lattice can be regarded as an infinite version of a progression. Let $A = \{\vec{a}_1, \vec{a}_2, \ldots, \vec{a}_n\}$ be a set of $n$ vectors in $\mathbb{Z}^d$. The lattice generated by $A$ is defined as
\[
      \mathcal{L}(A) := \{z_1\vec{a}_1 + z_2\vec{a}_2 + \cdots + z_n\vec{a}_n : z_j \in \mathbb{Z}\}.
\]
If the vectors in $A$ are linearly independent, then we say that $A$ is a basis of $\mathcal{L}(A)$. If we further have $n = d$, then we say that $\mathcal{L}(A)$ is full-dimensional. Sometimes it is more convenient to view $A$ as a matrix whose column vectors are $\vec{a}_1, \ldots, \vec{a}_n$. For $i \in \{1, \ldots, d\}$ and for $j \in \{1, \ldots, n\}$, we use $a_{ij}$ to denote the entry of $A$ in the $i$-th row and $j$-th column. We shall switch between the set notation and the matrix notation without mentioning it explicitly.

A lattice can be represented by its basis. A lattice may have more than one basis, but all are equivalent in the sense that one can always convert one basis into another by multiplying by a unimodular matrix. Throughout the paper, we always represent a lattice by its basis in Hermite normal form. Therefore, whenever we obtain a lattice,  we actually obtain its basis in Hermite normal form. 

\begin{definition}[Hermite Normal Form]\label{def:hnf}
    Let $B \in \mathbb{Z}^{d \times n}$ be a matrix of rank $r$. We say that $B$ is in Hermite normal form if it satisfies the following properties.
    \begin{enumerate}[label={\normalfont (\roman*)}]
        \item $b_{ij} = 0$ if $j > r$. (Only the first $r$ columns are non-zero.)

        \item There exist $1 \leq i_1 < i_2 < \cdots < i_r \leq d$ such that $b_{ij} > 0$ if $i = i_j$ and $b_{ij} = 0$ if $i < i_j$.  (The heights of non-zero columns decrease strictly)

        \item For $j \leq r$ and $j' < j$, $0 \leq b_{i_jj'} < b_{i_jj}$. (The top non-zero entry of each non-zero column is the greatest in that row, and the entries in that row are non-negative)
    \end{enumerate}
\end{definition}

A classic result from Hermite~\cite{Her1851} states that for any integer matrix $A$, there is a unique matrix $B$ in Hermite normal form such that $\mathcal{L}(B) = \mathcal{L}(A)$. It is easy to see that the non-zero columns of $B$ are linearly independent, and form a basis of $\mathcal{L}(B)$. There are various algorithms to compute the Hermite normal form of a given integer matrix, and the current fastest is due to Storjohann and Labahn~\cite{SL96}, which runs in $\widetilde{O}(d^{\omega}n\log N)$, where $N$ is the maximum absolute entry value of $A$ and $\omega$ denotes the exponent for matrix multiplication. Since the dependence on $d$ does not really matter for our purpose, we use the trivial bound with $\omega = 3$.

\begin{lemma}\label{lem:compute-hnf}
    For any matrix $A \subseteq [N]^{d\times n}$, there is a unique matrix $B$ in Hermite normal form such that $\mathcal{L}(B) = \mathcal{L}(A)$. Moreover, $B$ can be computed in   
    \(
        \widetilde{O}(d^3n\log N)
    \)
    time.
\end{lemma}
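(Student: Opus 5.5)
The lemma has two parts, existence with uniqueness and running time, and I will handle them separately.

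\textbf{Existence.} I will run integer column reduction on the $d\times n$ matrix $A$. The operations are swapping two columns, negating a column, and adding an integer multiple of one column to another. Each is unimodular, so $\mathcal{L}(\cdot)$ never changes.

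Process the rows top to bottom. At the current pivot row, apply the Euclidean algorithm to the entries of the not-yet-fixed columns in that row. This leaves a single column with entry $g>0$, the gcd of those entries, and zeros in the other columns.
\begin{itemize}
\item If all those entries are already zero, the row is not a pivot and we move on.
\item Otherwise the column with entry $g$ becomes the next pivot column, and we continue with the remaining columns and lower rows.
\end{itemize}
After $r$ pivots, the remaining columns are zero, which gives properties (i) and (ii). For (iii), subtract suitable integer multiples of each pivot column from the earlier columns $j'<j$. This reduces each entry $b_{i_j j'}$ into $[0,b_{i_j j})$ and does not disturb rows above $i_j$.

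\textbf{Uniqueness.} Suppose $B$ and $B'$ are both in Hermite normal form and generate the same lattice $L$.
\begin{itemize}
\item The pivot rows are determined by $L$. Indeed, $i_j$ is the first coordinate at which the dimension of $L\cap\{\vec x: x_1=\cdots=x_{i-1}=0\}$ drops.
\item The pivot values are determined by $L$: $b_{i_jj}$ is the positive generator of the projection of $L\cap\{x_1=\cdots=x_{i_j-1}=0\}$ onto coordinate $i_j$.
\item For the rest, assume for contradiction that $B\ne B'$, and take the first column $j$ where they differ. The difference $\vec b_j-\vec b'_j$ lies in $L$ and vanishes in rows up to $i_j$. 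Writing it in the basis $B$ shows it is an integer combination of columns $k>j$.
\item Look at its entry in the row $i_k$ of the smallest such $k$. That entry must be a multiple of $b_{i_kk}$, but by (iii) it has absolute value less than $b_{i_kk}$, so it is zero. Inducting on $k$ shows the difference is zero, a contradiction.
\end{itemize}
This is Hermite's classical argument, and I would cite it~\cite{Her1851} rather than reprove it in full.

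\textbf{Running time.} The main obstacle is that naive column reduction can blow up the entry sizes. The time bound therefore does not come from the argument above; I will invoke Storjohann and Labahn~\cite{SL96}. Their algorithm computes the Hermite normal form of a $d\times n$ integer matrix with entries bounded by $N$ in $\widetilde{O}(d^{\omega}n\log N)$ bit operations. It keeps intermediate numbers small by working modulo a multiple of the determinant of a full-rank $r\times r$ minor, which has bit length at most $O(d\log(dN))$.

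Substituting the trivial bound $\omega\le 3$ gives $\widetilde{O}(d^3n\log N)$ for the lemma as stated. The only remaining care is that rank-deficient inputs are covered by the same algorithm; Storjohann--Labahn handle general rank $r$.
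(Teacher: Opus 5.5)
Your proposal is correct and matches the paper. The paper likewise takes existence and uniqueness from Hermite's classical theorem, and takes the running time from the Storjohann--Labahn $\widetilde{O}(d^{\omega}n\log N)$ algorithm with the trivial bound $\omega\le 3$; your sketches of the column-reduction construction and of the uniqueness argument (pivot rows and pivot values are lattice invariants, and condition (iii) then forces the first differing column to agree) are sound additions beyond what the citations require.
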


Let $\Gamma \subseteq \mathbb{Z}^d$ be a lattice with basis $B$. If $\Gamma$ is full-dimensional, we define $\det(\Gamma)$ to be the absolute value of the determinant of its basis. That is, $\det(\Gamma): = |\det(B)|$. Note that $|\det(B)| \geq 1$ since $B$ is a full-rank integer matrix. (We remark that although lattice bases may not be unique, their determinants all have the same absolute value, so the definition of $\det(\Gamma)$ does not depend on the choice of basis.) If $\Gamma$ is not full-dimensional, we define $\det(\Gamma)$ to be $0$. Intuitively, $\det(\Gamma)$ represents the $d$-dimensional volume of a cell of the lattice, which is a parallelepiped formed by its basis. 

Let $\Gamma$ and $\Gamma'$ be two lattices. If $\Gamma$ is a (proper) subset of $\Gamma'$, then we say that $\Gamma$ is a (proper) sublattice of $\Gamma'$, and $\Gamma'$ is a (proper) suplattice of $\Gamma$.

\begin{fact}\label{fact:suplatt}
    Let $\Gamma \subseteq \mathbb{Z}^d$ be a full-dimensional lattice. For any proper suplattice $\Gamma' \supsetneq \Gamma$, we have $\det(\Gamma')$ is a proper divisor of $\det(\Gamma)$.
\end{fact}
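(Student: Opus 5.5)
The statement concerns a full-dimensional lattice $\Gamma \subseteq \mathbb{Z}^d$ and a proper suplattice $\Gamma' \supsetneq \Gamma$. The plan is to express a basis of $\Gamma$ in terms of a basis of $\Gamma'$ and read off both divisibility and properness from that relation.

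First, $\Gamma'$ is full-dimensional. It contains $\Gamma$, which spans $\mathbb{R}^d$, so any basis $B'$ of $\Gamma'$ consists of $d$ linearly independent integer vectors. Hence $\det(\Gamma') = |\det(B')| \ge 1$.

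Second, fix a basis $B$ of $\Gamma$. Since $\Gamma \subseteq \Gamma'$, each column of $B$ is an integer combination of the columns of $B'$, so $B = B'U$ for some $U \in \mathbb{Z}^{d\times d}$. Taking determinants gives
\[
  |\det(B)| = |\det(B')|\cdot|\det(U)|.
\]
Both $\det(B)$ and $\det(B')$ are nonzero, so $\det(U)$ is a nonzero integer. Therefore $\det(\Gamma')$ divides $\det(\Gamma)$, with quotient $|\det(U)|$.

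Third, the divisor is proper. Suppose $|\det(U)| = 1$. Then $U$ is unimodular and $U^{-1}$ is an integer matrix. This gives $B' = BU^{-1}$, so every column of $B'$ lies in $\Gamma$ and $\Gamma' \subseteq \Gamma$. That contradicts $\Gamma' \supsetneq \Gamma$. Hence $|\det(U)| \ge 2$, and $\det(\Gamma') < \det(\Gamma)$ is a proper divisor.

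The only step needing care is the last one: the fact that an integer matrix with determinant $\pm 1$ has an integer inverse. This follows from the adjugate formula $U^{-1} = \operatorname{adj}(U)/\det(U)$, since $\operatorname{adj}(U)$ has integer entries. Everything else is routine.
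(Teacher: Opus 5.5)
Your argument is correct and complete. Writing $B = B'U$ with $U$ integral gives $\det(\Gamma) = \det(\Gamma')\cdot|\det(U)|$. If $|\det(U)| = 1$, then $U^{-1}$ would be integral and force $\Gamma' \subseteq \Gamma$. So in fact $\det(\Gamma') \le \det(\Gamma)/2$, which is exactly the form in which the paper later uses this fact.

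The paper itself gives no proof: it records the statement as a standard fact. Nearby it also states the index formula $|\Gamma' \bmod \Gamma| = \det(\Gamma)/\det(\Gamma')$ and the Hermite-normal-form refinement in Fact~\ref{fact:suplatt-details} (diagonal entries satisfy $b'_{ii} \mid b_{ii}$, at least one properly). Either would give the statement at once. For instance, by the index formula the ratio $\det(\Gamma)/\det(\Gamma')$ is an integer, and it exceeds $1$ because $\Gamma' \neq \Gamma$ yields at least two cosets. Proving those facts, however, requires essentially the same analysis of the integer change-of-basis matrix $U$ (e.g.\ $|\mathbb{Z}^d / U\mathbb{Z}^d| = |\det(U)|$). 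Your direct determinant argument is therefore the natural self-contained route.

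You use integrality of $\Gamma'$ so that $\det(\Gamma')$ is a positive integer and ``divisor'' is meaningful. This matches the paper's convention that all lattices are generated by vectors of $\mathbb{Z}^d$.
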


\subsection{Remainders Modulo a Lattice}
Let $\Gamma \subseteq \mathbb{Z}^d$ be a full-dimensional lattice. We say two vectors $\vec{u}, \vec{v} \in \mathbb{Z}^d$ are congruent (modulo $\Gamma$) if $\vec{u} - \vec{v} \in \Gamma$, denoted as $\vec{u} \equiv \vec{v} \pmod \Gamma$. When $\Gamma$ is clear from the context, we simply say that $\vec{u}$ and $\vec{v}$ are congruent.

Let $b_{11}, b_{22}, \ldots, b_{dd}$ be the diagonal entries of the basis of $\Gamma$ in Hermite normal form. (Note that $b_{ii} > 0$ for all $i$ since $\Gamma$ is full-dimensional.) We define $\mathbb{Z}^d_\Gamma = [b_{11}] \times [b_{22}] \times \cdots \times [b_{dd}]$. Note that $|\mathbb{Z}^d_{\Gamma}| = \prod_{i=1}^d b_{ii} = \det(\Gamma)$. By the definition of Hermite normal form (Definition~\ref{def:hnf}), it is easy to see that the vectors in $\mathbb{Z}^d_{\Gamma}$ are not congruent to each other and that for every $\vec{v} \in \mathbb{Z}^d$, there is a unique vector in $\mathbb{Z}^d_{\Gamma}$ that is congruent to $\vec{v}$ modulo $\Gamma$. In the language of group theory, $\mathbb{Z}^d_{\Gamma}$ is an additive group isomorphic to the quotient group $\mathbb{Z}^d/\Gamma$. (See Chapter 3 of~\cite{TV06} for more details.) When $d = 1$, $\mathbb{Z}^d_{\Gamma}$ is exactly the cyclic group $\mathbb{Z}_{b_{11}}$ that contains all the remainders modulo $b_{11}$. We generalize the definition of remainders to multi-dimension.

\begin{definition}[Remainders]\label{def:rem}
    Let $\Gamma \subseteq \mathbb{Z}^d$ be a full-dimensional lattice. For every $\vec{v} \in \mathbb{Z}^d$, we define the remainder of $\vec{v}$ modulo $\Gamma$, denoted as $\vec{v} \bmod \Gamma$, to be the unique vector in $\mathbb{Z}^d_{\Gamma}$ that is congruent to $\vec{v}$.
\end{definition}

Let $A\subseteq \mathbb{Z}^d$ be a set of vectors. We define 
\(
    A \bmod \Gamma = \{\vec{a} \bmod \Gamma : \vec{a} \in A\}
\)
to be the set of all remainders (modulo $\Gamma$) of the vectors in $A$. 

Below is a useful fact about the number of remainders.
\begin{fact}
    $|\mathbb{Z}^d \bmod \Gamma| = \det(\Gamma)$. More generally, $|\Gamma' \bmod \Gamma| = \frac{\det(\Gamma)}{\det(\Gamma')}$ for any suplattice $\Gamma' \supseteq \Gamma$.
\end{fact}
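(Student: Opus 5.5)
We need to prove the Fact: $|\mathbb{Z}^d \bmod \Gamma| = \det(\Gamma)$, and more generally $|\Gamma' \bmod \Gamma| = \det(\Gamma)/\det(\Gamma')$ for any suplattice $\Gamma' \supseteq \Gamma$.

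The first equality follows directly from the discussion preceding Definition~\ref{def:rem}. Every $\vec{v}\in\mathbb{Z}^d$ has a unique congruent representative in $\mathbb{Z}^d_\Gamma$. Conversely, each $\vec{r}\in\mathbb{Z}^d_\Gamma$ is its own remainder, since $\vec{r}\equiv\vec{r}$ and the representative is unique. Hence $\mathbb{Z}^d \bmod \Gamma = \mathbb{Z}^d_\Gamma$, and its size is $\prod_i b_{ii}=\det(\Gamma)$. For completeness I would justify the uniqueness and existence claims via the triangular Hermite normal form. In the full-dimensional case the basis is lower triangular with positive diagonal. To reduce $\vec{v}$, process coordinates $i=1,\dots,d$ in order: subtract the appropriate integer multiple of the $i$-th basis column so that coordinate $i$ lands in $[b_{ii}]$. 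This column is zero in the earlier coordinates, so it does not disturb them. For uniqueness, suppose two vectors of $\mathbb{Z}^d_\Gamma$ differ by $B\vec{z}\neq\vec{0}$. Take the first index $i$ with $z_i\ne0$. Then coordinate $i$ of the difference is $b_{ii}z_i$, which has absolute value at least $b_{ii}$, contradicting that both coordinates lie in $[b_{ii}]$.

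For the general statement, I would view everything in group-theoretic terms. The map $\vec{v}\mapsto \vec{v}\bmod\Gamma$ identifies $\mathbb{Z}^d_\Gamma$ with $\mathbb{Z}^d/\Gamma$. Under this identification, $\Gamma'\bmod\Gamma$ corresponds to the subgroup $\Gamma'/\Gamma$, and so $|\Gamma'\bmod\Gamma| = [\Gamma':\Gamma]$. By the tower law for indices, $[\mathbb{Z}^d:\Gamma]=[\mathbb{Z}^d:\Gamma'][\Gamma':\Gamma]$. Applying the first part to $\Gamma$ and to $\Gamma'$ (note $\Gamma'$ is full-dimensional since it contains $\Gamma$) gives $[\mathbb{Z}^d:\Gamma]=\det(\Gamma)$ and $[\mathbb{Z}^d:\Gamma']=\det(\Gamma')$, and the quotient formula follows.

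The only real subtlety is the step $|\Gamma'\bmod\Gamma|=[\Gamma':\Gamma]$. Two elements $\vec{u},\vec{w}\in\Gamma'$ have the same remainder if and only if $\vec{u}-\vec{w}\in\Gamma$, that is, if and only if they lie in the same coset of $\Gamma$ in $\Gamma'$. So the remainders of elements of $\Gamma'$ are in bijection with $\Gamma'/\Gamma$. The cardinality of $\mathbb{Z}^d/\Gamma'$ counts cosets, which is exactly what the tower law needs. Alternatively, one could avoid group theory: partition $\mathbb{Z}^d_\Gamma$ into classes modulo $\Gamma'$. Each class has the same size, because translating by a fixed representative difference and reducing mod $\Gamma$ is a bijection between classes. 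There are $\det(\Gamma')$ classes, which gives the same count.
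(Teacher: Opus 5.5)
Your proof is correct and follows the reasoning the paper relies on. The paper states this Fact without a separate proof. It rests on its earlier remark that, by the triangular structure of the Hermite normal form, $\mathbb{Z}^d_\Gamma=[b_{11}]\times\cdots\times[b_{dd}]$ is a complete system of pairwise non-congruent representatives of $\mathbb{Z}^d/\Gamma$ with $\prod_i b_{ii}=\det(\Gamma)$, and the suplattice case is then the index identity $[\mathbb{Z}^d:\Gamma]=[\mathbb{Z}^d:\Gamma'][\Gamma':\Gamma]$ that you use. Your existence argument and your uniqueness argument via the first nonzero coordinate of $\vec{z}$ fill in the details the paper calls ``easy to see.''
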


When two sets $A$ and $B$ contain the same set of remainders modulo $\Gamma$ (that is, $A \bmod \Gamma = B \bmod \Gamma$), we simply write $A \equiv B \pmod \Gamma$. Recall that $\mathcal{S}(A)$ is the set of all subset sums of $A$. We write $\mathcal{S}(A) \bmod \Gamma$ as $\mathcal{S}_{\Gamma}(A)$. Basically, $\mathcal{S}_{\Gamma}(A)$ is the set of subset sums (modulo $\Gamma$) of $A$. As in the case of $\mathcal{S}(A)$, for any partition $A_1 \cup A_2$ of $A$, we have
\[
    \mathcal{S}_{\Gamma}(A) = (\mathcal{S}_{\Gamma}(A_1) + \mathcal{S}_{\Gamma}(A_2)) \bmod \Gamma.
\]

For a remainder $\vec{r} \in \mathbb{Z}^d_{\Gamma}$, we define its multiplicity in $A$ to be the number of elements in $A$ that are congruent (modulo $\Gamma$) to $\vec{r}$.  We use $\mu(A, \Gamma)$ to denote the largest multiplicity of any remainder (modulo $\Gamma$) in $A$.

The remainders can be computed efficiently. Recall that when we are given a lattice, we are actually given its basis in Hermite normal form.
\begin{restatable}{lemma}{lemcomputesinglerem}\label{lem:compute-rem}
    Let $\Gamma \subseteq \mathbb{Z}^d$ be a full-dimensional lattice.
    \begin{enumerate}[label={\normalfont (\roman*)}]
        \item Given a vector $\vec{v} \in \mathbb{Z}^d$, we can compute $\vec{v} \bmod \Gamma$ in $O(d^2)$ time.

        \item Given a suplattice $\Gamma' \supseteq \Gamma$, we can compute $\Gamma' \bmod \Gamma$ in $O(d^2\ell)$ time, where $\ell := \frac{\det(\Gamma)}{\det(\Gamma')}$.

        \item Given a vector $\vec{h} \in \mathbb{Z}^d$, we can compute $\Gamma' := \Gamma + \mathcal{L}(\vec{h})$ in $O(d^2 \log \ell)$ time, where $\ell := \det(\Gamma)$.
    \end{enumerate}
\end{restatable}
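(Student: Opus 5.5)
The plan is to exploit the triangular structure of the Hermite normal form basis $B$ of $\Gamma$. Since $\Gamma$ is full-dimensional, $r=d$ and $i_j=j$, so $B$ is lower triangular with positive diagonal entries $b_{11},\ldots,b_{dd}$. The condition $0\le b_{jj'}<b_{jj}$ for $j'<j$ says that in row $j$ the diagonal entry dominates the entries to its left. The remainder $\vec{v}\bmod\Gamma$ is the unique vector of $[b_{11}]\times\cdots\times[b_{dd}]$ congruent to $\vec v$, and I would compute it greedily, coordinate by coordinate.

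\textbf{Part (i).} Start with $\vec{w}:=\vec v$. For $j=1,\ldots,d$ in increasing order, set $q_j:=\lfloor w_j/b_{jj}\rfloor$ and replace $\vec w$ by $\vec w-q_j\vec b_j$.
\begin{itemize}
\item Column $\vec b_j$ is zero in rows $1,\ldots,j-1$, so this step does not change the coordinates already fixed.
\item After the step, $w_j\in[b_{jj}]$.
\end{itemize}
At the end $\vec w\in\mathbb{Z}^d_\Gamma$ and $\vec w\equiv\vec v$, so $\vec w=\vec v\bmod\Gamma$ by uniqueness. Each step costs $O(d)$ arithmetic operations, for $O(d^2)$ in total. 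Uniqueness itself is also immediate from triangularity: if two vectors of $\mathbb{Z}^d_\Gamma$ differ by $B\vec x$, look at the first nonzero $x_j$. Then coordinate $j$ of $B\vec x$ is $x_jb_{jj}$, and its absolute value is at least $b_{jj}$, a contradiction.

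\textbf{Part (ii).} Let $B'$ be the HNF basis of $\Gamma'$. The set $\Gamma'\bmod\Gamma$ is a subgroup of $\mathbb{Z}^d_\Gamma$ of size $\ell$. I would enumerate it by a search over the group generated by the residues of the columns of $B'$.
\begin{itemize}
\item Start from $\vec 0$.
\item Repeatedly add a generator $\vec b'_j$ and reduce modulo $\Gamma$ using part (i).
\item Store the visited residues in a hash table or a $d$-dimensional array indexed by mixed radix.
\end{itemize}
Done naively, this costs $O(\ell\cdot d\cdot d^2)$, one factor too many. To reach $O(d^2\ell)$, I would instead use the structure of $B'$ to list the cosets directly. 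Since $\Gamma\subseteq\Gamma'$ and both bases are triangular, $b'_{jj}$ divides $b_{jj}$, and $\ell=\prod_j b_{jj}/b'_{jj}$. The residues of $\Gamma'$ are then exactly the vectors $\sum_j z_j\vec b'_j$ with $0\le z_j<b_{jj}/b'_{jj}$, each reduced modulo $\Gamma$. To keep the cost down, enumerate these in odometer order, so that consecutive elements differ by adding one $\vec b'_j$ (with carries). Each update is then an $O(d)$ vector addition followed by one $O(d^2)$ reduction, which gives $O(d^2\ell)$ overall. The step I expect to be the main obstacle is verifying that this box enumeration hits every coset exactly once. I would prove it by a counting argument (exactly $\ell$ elements) combined with the same triangular-uniqueness argument applied to $\Gamma'/\Gamma$.

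\textbf{Part (iii).} Reduce $\vec h$ modulo $\Gamma$ using part (i). Then compute the HNF of the $d\times(d+1)$ matrix $[B\mid\vec h]$ by eliminating $\vec h$ row by row with extended gcd. At row $j$, combine $\vec b_j$ and the current $\vec h$ with the $2\times2$ unimodular transformation produced by $\gcd(b_{jj},h_j)$. The new diagonal entry is this gcd, and the leftover vector has $j$-th coordinate $0$, so the elimination moves on to row $j+1$.

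For the running time, each column operation costs $O(d)$. The gcd computations cost $O(\log b_{jj})$ steps, which sum to $O(\log\ell)$ because $\prod_j b_{jj}=\ell$. Afterwards, the off-diagonal entries are restored to the normalized range by reducing each column against the later ones, as in part (i), at cost $O(d^2)$. Altogether this gives $O(d^2\log\ell)$, with the dominant $d^2$ factor counted per gcd step.
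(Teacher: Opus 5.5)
Your proposal is correct and follows the paper's proof in all three parts. In (i) you use the same greedy reduction against the lower-triangular HNF basis. In (ii) you use the same enumeration of $\sum_j z_j\vec b'_j$ with $0\le z_j<b_{jj}/b'_{jj}$, each reduced via (i), and your divisibility and triangular-distinctness arguments fill in what the paper leaves as ``one can verify.'' In (iii) you use the same row-by-row extended-gcd elimination of $\vec h$.

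Your extra normalization pass in (iii) is actually needed, and the paper's proof skips it. The elimination alone can leave off-diagonal entries outside $[0,b_{jj})$: for example, $\vec b_1=(3,0)$, $\vec b_2=(0,5)$, $\vec h=(2,1)$ ends with $(1,-1),(0,1)$. Done column by column as in (i), though, that pass costs $O(d^3)$ in total, not $O(d^2)$. This is harmless because the paper treats $d$ as a constant. If you want the literal $O(d^2\log\ell)$ bound, note that at most $\log_2\ell$ diagonal entries exceed $1$, so the normalized columns are sparse.
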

The proof of the above lemma is deferred to Appendix~\ref{apx:comp-latt}.

\subsection{Symmetry Sets and Kneser's Theorem}
The symmetry set is a tool of great importance when we discuss the remainders (modulo some lattice $\Gamma$) contained in a set $S$. Basically, the symmetry set is the set of vectors in $\mathbb{Z}^d_{\Gamma}$ such that the translate of $S$ along these vectors does not change the remainders (modulo $\Gamma$) contained in $S$. We remark that there is a much more concise description of symmetry sets if one considers $+$ as the group operation of the additive group $\mathbb{Z}^d_{\Gamma}$, but this may cause confusion in the future where we have to switch between different groups. To avoid such confusion, we instead use the mod operation to describe symmetry sets.

\begin{definition}\label{def:symset}
    Let $\Gamma \subseteq \mathbb{Z}^d$ be a full-dimensional lattice. Let $S \subseteq \mathbb{Z}^d$. The symmetry set of $S$ (with respect to $\Gamma$) is defined to be
    \[
        \mathrm{Sym}_{\Gamma}(S) := \{\vec{h} \in \mathbb{Z}^d_\Gamma: \vec{h} + S \equiv S \pmod \Gamma\}.
    \]
\end{definition}
An easy but important observation is that if $\vec{h}  + S \equiv S \pmod \Gamma$, then $\vec{h} + S + S' \equiv S + S' \pmod \Gamma$ for any $S'$.  This also implies that if $\vec{h}  + \mathcal{S}(A) \equiv \mathcal{S}(A) \pmod \Gamma$, then $\vec{h}  + \mathcal{S}(A') \equiv \mathcal{S}(A') \pmod \Gamma$ for any superset $A' \supseteq A$.

Suppose that $\mathrm{Sym}_{\Gamma}(S)$ contains a non-zero vector $\vec{h}$. Let $\Gamma' = \Gamma + \mathcal{L}(\vec{h})$. We have $\Gamma' + S \equiv S \pmod \Gamma$, which implies 
\(
    \Gamma' + (S \bmod \Gamma') \equiv S \pmod \Gamma.
\) 
Therefore, instead of discussing $S \bmod \Gamma$, we can discuss $S \bmod \Gamma'$.  Note that $\Gamma'$ is a proper suplattice of $\Gamma$ (since $\vec{h} \in \mathbb{Z}^d_{\Gamma}$ and $\vec{h}$ is non-zero), so by Fact~\ref{fact:suplatt}, $\det(\Gamma') \leq \det(\Gamma)/2$. So we can reduce our discussion to a smaller group $\mathbb{Z}^d_{\Gamma'}$.

The following celebrated theorem by Kneser gives a sufficient condition for $\mathrm{Sym}_{\Gamma}(S)$ to contain a non-zero vector.
\begin{theorem}[{Kneser's Theorem~\cite[Theorem 5.5]{TV06}}]\label{thm:kneser}
    Let $\Gamma \subseteq \mathbb{Z}^d$ be a full-dimensional lattice. Let $S_1, S_2$ be two sets of vectors in $\mathbb{Z}^d$. Let $S := S_1 + S_2$. Then 
    \[
        |\mathrm{Sym}_{\Gamma}(S_1 + S_2)| \geq |S_1 \bmod \Gamma| + |S_2 \bmod \Gamma| - |(S_1 + S_2) \bmod \Gamma|.
    \]
\end{theorem}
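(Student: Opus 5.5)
The statement is Kneser's theorem, transported from an abstract finite abelian group to the concrete quotient $\mathbb{Z}^d_\Gamma \cong \mathbb{Z}^d/\Gamma$. The plan is to make this identification explicit and then invoke the classical result, rather than reprove Kneser from scratch.

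\textbf{Step 1: set up the group.} Let $G := \mathbb{Z}^d/\Gamma$, a finite abelian group of order $\det(\Gamma)$. As noted after Definition~\ref{def:rem}, the map $\vec v \mapsto \vec v \bmod \Gamma$ identifies $G$ with $\mathbb{Z}^d_\Gamma$. Define $\bar S_i := S_i \bmod \Gamma \subseteq G$. Then:
\begin{itemize}
\item $|\bar S_i| = |S_i \bmod \Gamma|$;
\item $(S_1+S_2)\bmod\Gamma = \bar S_1 + \bar S_2$, where the right-hand side is the sumset in $G$, because reduction modulo $\Gamma$ is a group homomorphism.
\end{itemize}

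\textbf{Step 2: match the symmetry sets.} Compare Definition~\ref{def:symset} with the group-theoretic stabilizer
\[
H(X) := \{ g\in G : g+X = X\}.
\]
The condition $\vec h + S \equiv S \pmod\Gamma$ says exactly that $(\vec h \bmod \Gamma) + \bar S = \bar S$ in $G$. Since every element of $\mathbb{Z}^d_\Gamma$ is its own remainder, $\mathrm{Sym}_\Gamma(S)$ corresponds bijectively to $H(\bar S)$. Hence $|\mathrm{Sym}_\Gamma(S_1+S_2)| = |H(\bar S_1+\bar S_2)|$.

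\textbf{Step 3: apply Kneser.} Kneser's theorem in a finite abelian group, as in \cite[Theorem 5.5]{TV06}, gives, with $H := H(\bar S_1+\bar S_2)$,
\[
|\bar S_1+\bar S_2| \ \ge\ |\bar S_1+H| + |\bar S_2+H| - |H| \ \ge\ |\bar S_1|+|\bar S_2|-|H|.
\]
Rearranging and translating back through Steps 1 and 2 gives exactly the stated inequality.

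\textbf{Edge cases.}
\begin{itemize}
\item If some $S_i$ is empty, the sumset is empty. In the exact form of Kneser's inequality that would give $0 \ge |\bar S_{3-i}| + 0 - |G|$, since the stabilizer of the empty set is all of $G$. The statement as written needs the same convention, under which $|\mathrm{Sym}_\Gamma(\emptyset)| = |\mathbb{Z}^d_\Gamma| = \det(\Gamma)$ and the inequality holds trivially.
\item If the $S_i$ are infinite, only the finite sets $\bar S_i$ enter the argument, so nothing changes.
\end{itemize}

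\textbf{Main obstacle.} The only real work is Step 2: checking that congruence of sets modulo $\Gamma$, in the paper's sense $A \bmod \Gamma = B \bmod \Gamma$, is literally set equality in $G$, so that $\mathrm{Sym}_\Gamma$ coincides with the stabilizer. Once that is in place, the rest is a citation.

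If a self-contained argument were wanted instead, I would use the standard induction via the Dyson $e$-transform. One would reduce to the case where the stabilizer is trivial by passing to $G/H$, and then run induction on $|\bar S_2|$ using the transform $(X,Y)\mapsto (X\cup(Y+e),\,Y\cap(X-e))$. That is the genuinely hard part of Kneser's theorem, and I would cite it rather than redo it.
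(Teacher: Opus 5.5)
Your proposal is correct and takes the same approach as the paper, which gives no proof and simply cites Kneser's theorem \cite[Theorem 5.5]{TV06} for the finite abelian group $\mathbb{Z}^d/\Gamma \cong \mathbb{Z}^d_\Gamma$. Your two checks are exactly the translation the paper leaves implicit: reduction modulo $\Gamma$ commutes with sumsets, and $\mathrm{Sym}_\Gamma(S)$ is the stabilizer of $S \bmod \Gamma$. For the empty-set case, $|\mathrm{Sym}_\Gamma(\emptyset)| = \det(\Gamma)$ follows directly from the definition, so no extra convention is needed.
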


\begin{corollary}\label{coro:kneser-1}
    Let $\Gamma \subseteq \mathbb{Z}^d$ be a full-dimensional lattice. Let $S_1, \ldots, S_m$ be $m$ sets of vectors in $\mathbb{Z}^d$. Let $S := S_1 + \cdots + S_m$. If
    \[
        |S_1 \bmod \Gamma| + \cdots + |S_m \bmod \Gamma| \geq |S \bmod \Gamma| + m,
    \]
    then $\mathrm{Sym}_{\Gamma}(S)$ contains a non-zero vector $\vec{h}$.
\end{corollary}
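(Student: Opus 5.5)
Write $T_k := S_1 + \cdots + S_k$, so that $T_1 = S_1$ and $T_m = S$. The plan is to apply Kneser's Theorem (Theorem~\ref{thm:kneser}) once at each step $T_k = T_{k-1} + S_k$ and argue by contradiction. Suppose $\mathrm{Sym}_{\Gamma}(S) = \{\vec{0}\}$. By the observation after Definition~\ref{def:symset}, any $\vec{h} \in \mathrm{Sym}_{\Gamma}(T_k)$ also satisfies $\vec{h} + T_k + S' \equiv T_k + S' \pmod \Gamma$ for every $S'$. Taking $S' = S_{k+1} + \cdots + S_m$ shows that $\vec{h} \in \mathrm{Sym}_{\Gamma}(S)$. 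Hence $\mathrm{Sym}_{\Gamma}(T_k) \subseteq \mathrm{Sym}_{\Gamma}(S) = \{\vec{0}\}$, so $|\mathrm{Sym}_{\Gamma}(T_k)| = 1$ for every $k$. We may assume every $S_k$ is nonempty; otherwise $S$ is empty, its symmetry set is all of $\mathbb{Z}^d_\Gamma$, and the claim is trivial unless $\det(\Gamma) = 1$. That edge case is degenerate and I would either exclude it or note that the hypothesis then fails.

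Apply Theorem~\ref{thm:kneser} to $T_{k-1}$ and $S_k$ for $k = 2, \ldots, m$. Using $|\mathrm{Sym}_{\Gamma}(T_k)| = 1$, this gives
\[
|T_k \bmod \Gamma| \geq |T_{k-1} \bmod \Gamma| + |S_k \bmod \Gamma| - 1.
\]
Telescoping from $k = 2$ to $m$ yields
\[
|S \bmod \Gamma| \geq \sum_{k=1}^m |S_k \bmod \Gamma| - (m-1).
\]
The hypothesis says $\sum_k |S_k \bmod \Gamma| \geq |S \bmod \Gamma| + m$. Combining the two gives $|S \bmod \Gamma| \geq |S \bmod \Gamma| + 1$, a contradiction. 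Therefore $\mathrm{Sym}_{\Gamma}(S)$ contains a nonzero vector.

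The telescoping arithmetic is routine. The one point that needs care is the monotonicity of symmetry sets under adding further summands, since that is what lets the contradiction hypothesis on $S$ propagate back to every partial sum $T_k$. As shown above, it follows directly from the paper's observation.
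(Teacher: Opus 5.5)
Your proof is correct and is essentially the paper's argument. The paper applies Kneser's Theorem at each step $T_{j+1} = T_j + S_{j+1}$, lifts any nonzero element of $\mathrm{Sym}_{\Gamma}(T_{j+1})$ to $\mathrm{Sym}_{\Gamma}(S)$ via the same monotonicity observation, and otherwise telescopes the reverse inequalities to contradict the hypothesis; you have packaged this as a single contradiction, and your separate treatment of empty $S_k$ is a small bit of care the paper omits.
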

\begin{proof}
    If for some $j \in \{1, \ldots, m-1\}$, we have
    \begin{equation}\label{eq:coro-keneser-2}
        |(S_1 + \cdots + S_j) \bmod \Gamma| + |S_{j+1} \bmod \Gamma| \geq |(S_1 + \cdots + S_{j+1}) \bmod \Gamma| + 2,
    \end{equation}
    then by Theorem~\ref{thm:kneser}, $\mathrm{Sym}_{\Gamma}(S_1 + \cdots + S_{j+1})$ contains a non-zero vector $\vec{h}$, and therefore, so does $\mathrm{Sym}_{\Gamma}(S_1 + \cdots + S_{m})$ since $\vec{h}$ also belongs to it.

    Suppose that~\eqref{eq:coro-keneser-2} does not hold for any $j$. Then, by summing the reverse inequality of~\eqref{eq:coro-keneser-2} over all $j \in \{1, \ldots, m-1\}$, we have
    \[
        |(S_1 + \cdots + S_{m}) \bmod \Gamma| \geq \sum_{i=1}^m|S_i \bmod \Gamma| - (m-1).
    \]
    But this contradicts the condition stated in the corollary.
\end{proof}

We will frequently invoke the above corollary with $S_i$ of the form $\mathcal{S}(A_i)$. For convenience, we restate it as the following.
\begin{corollary}\label{coro:kneser-2}
    Let $\Gamma \subseteq \mathbb{Z}^d$ be a full-dimensional lattice. Let $A_1, \ldots, A_m$ be $m$ disjoint sets of vectors in $\mathbb{Z}^d$. Let $A := A_1 \cup \cdots \cup A_m$. If
    \[
        |\mathcal{S}_{\Gamma}(A_1)| + \cdots + |\mathcal{S}_{\Gamma}(A_m)| \geq |\mathcal{S}_{\Gamma}(A)| + m,
    \]
    then there is a vector $\vec{h} \notin \Gamma$ such that $\vec{h} + \mathcal{S}(A) \equiv \mathcal{S}(A) \pmod \Gamma$.
\end{corollary}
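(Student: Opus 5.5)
This is a direct specialization of Corollary~\ref{coro:kneser-1}: set $S_i := \mathcal{S}(A_i)$ for $i = 1, \ldots, m$. Because the $A_i$ are disjoint and $A = A_1 \cup \cdots \cup A_m$, the partition observation from the preliminaries, applied repeatedly, gives
\[
    S_1 + \cdots + S_m = \mathcal{S}(A_1) + \cdots + \mathcal{S}(A_m) = \mathcal{S}(A).
\]
By definition $|S_i \bmod \Gamma| = |\mathcal{S}_{\Gamma}(A_i)|$ and $|S \bmod \Gamma| = |\mathcal{S}_{\Gamma}(A)|$. The hypothesis of the present corollary is therefore exactly the hypothesis of Corollary~\ref{coro:kneser-1}. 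That corollary then gives a non-zero vector $\vec{h} \in \mathrm{Sym}_{\Gamma}(\mathcal{S}(A))$.

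Next I translate this conclusion into the form stated here. By Definition~\ref{def:symset}, $\mathrm{Sym}_{\Gamma}(\mathcal{S}(A)) \subseteq \mathbb{Z}^d_{\Gamma}$, and membership means $\vec{h} + \mathcal{S}(A) \equiv \mathcal{S}(A) \pmod \Gamma$. It remains to show $\vec{h} \notin \Gamma$. The elements of $\mathbb{Z}^d_\Gamma$ are pairwise non-congruent modulo $\Gamma$, and $\vec{0} \in \mathbb{Z}^d_\Gamma$ represents the class $\Gamma$ itself. Hence the only element of $\mathbb{Z}^d_\Gamma$ lying in $\Gamma$ is $\vec{0}$. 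Since $\vec{h} \neq \vec{0}$, we get $\vec{h} \notin \Gamma$.

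There is no genuine obstacle here. The only points needing care are the iterated partition identity for $\mathcal{S}$, which requires disjointness of the $A_i$, and the remark that non-zero elements of $\mathbb{Z}^d_\Gamma$ lie outside $\Gamma$.
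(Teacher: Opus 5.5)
Your proposal is correct and follows the paper's route exactly: the paper introduces this corollary as a restatement of Corollary~\ref{coro:kneser-1} with $S_i = \mathcal{S}(A_i)$ and gives no separate proof. Your two extra remarks, the iterated partition identity $\mathcal{S}(A) = \mathcal{S}(A_1) + \cdots + \mathcal{S}(A_m)$ and the fact that the only element of $\mathbb{Z}^d_\Gamma$ lying in $\Gamma$ is $\vec{0}$, simply make explicit what the paper leaves implicit.
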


\subsection{Convex Analysis Tools}
We also need tools from convex analysis to show that a zonotope is contained in another zonotope.

\begin{definition}[Support Functions]
    Let $P \subseteq \mathbb{R}^d$ be a polytope. The support function of $P$ is defined to be
    \[
        f_P(\vec{v}) := \max\{\vec{p}\cdot \vec{v} : \vec{p} \in P\}.
    \] 
\end{definition}

\begin{lemma}[Implied by the proof of {\cite[Theorem 13.1]{Roc72}}]\label{lem:roc72}
    Let $P \subseteq \mathbb{R}^d$ be a polytope. Let $\vec{u} \in \mathbb{R}^d$ be a vector. Then $\vec{u} \in P$ if and only if $\vec{u}\cdot  \vec{v} \leq f_{P}(\vec{v})$ for all facet normals $\vec{v}$ of $P$.
\end{lemma}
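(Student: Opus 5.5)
The statement is Lemma~\ref{lem:roc72}: for a polytope $P$, $\vec{u}\in P$ if and only if $\vec{u}\cdot\vec{v}\le f_P(\vec{v})$ for every facet normal $\vec{v}$ of $P$. The plan is to prove the two directions separately, and to reduce the nontrivial direction to the facet (H-)description of a polytope.

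The forward direction is immediate. If $\vec{u}\in P$, then for every $\vec{v}$ we have $\vec{u}\cdot\vec{v}\le\max_{\vec{p}\in P}\vec{p}\cdot\vec{v}=f_P(\vec{v})$. In particular this holds for facet normals.

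For the reverse direction I will first assume $P$ is full-dimensional. Then, by the Minkowski--Weyl theorem together with the standard fact that only facet-defining inequalities are needed in an irredundant description, $P=\bigcap_{F}\{\vec{x}:\vec{x}\cdot\vec{v}_F\le \beta_F\}$. Here $F$ ranges over the facets and $\vec{v}_F$ is the outer normal of $F$.

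\begin{itemize}
\item Since $F\subseteq P$ lies on the hyperplane $\vec{x}\cdot\vec{v}_F=\beta_F$ and $P$ lies in the corresponding halfspace, we get $\beta_F=f_P(\vec{v}_F)$.
\item So if $\vec{u}\cdot\vec{v}_F\le f_P(\vec{v}_F)$ for all facets, then $\vec{u}$ satisfies every defining inequality, and hence $\vec{u}\in P$.
\end{itemize}

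Equivalently, one can argue by separation. Suppose $\vec{u}\notin P$, and let $\vec{p}^*$ be the closest point of $P$ to $\vec{u}$. Then $\vec{w}=\vec{u}-\vec{p}^*$ separates $\vec{u}$ from $P$. Consider the face $G$ of $P$ exposed by $\vec{w}$; the vector $\vec{w}$ lies in the normal cone of $G$. That normal cone is generated by the outer normals of the facets containing $G$. Hence $\vec{w}=\sum\lambda_F\vec{v}_F$ with $\lambda_F\ge 0$, and each such facet satisfies $\vec{p}^*\cdot\vec{v}_F=f_P(\vec{v}_F)$. If all facet inequalities held at $\vec{u}$, then
\[
\vec{w}\cdot\vec{u}\le\sum\lambda_F f_P(\vec{v}_F)=\vec{w}\cdot\vec{p}^*.
\]
This contradicts $\vec{w}\cdot(\vec{u}-\vec{p}^*)=\|\vec{w}\|_2^2>0$.

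The main obstacle is the degenerate case where $P$ is not full-dimensional. Two interpretations are possible.

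\begin{itemize}
\item If facets are taken relative to the affine hull, the facet normals do not constrain directions orthogonal to that hull. In this case the lemma requires either restricting $\vec{u}$ to the affine hull, or counting the normals $\pm\vec{n}$ of the hull's defining equations as facet normals. I would adopt the latter convention: with it, the inequalities for $\pm\vec{n}$ force $\vec{u}$ into the affine hull, and the full-dimensional argument then applies within that hull.
\item In the paper's applications $P$ is a full-dimensional zonotope $\mathcal{P}(A)$, since $A$ spans $\mathbb{R}^d$. So the full-dimensional case is what is actually needed, and I would state this assumption explicitly.
\end{itemize}
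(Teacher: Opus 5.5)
Your proof is correct. The paper gives no argument of its own beyond citing the proof of Rockafellar's Theorem~13.1. That theorem characterizes a closed convex set through its support function over \emph{all} directions $\vec{v}\in\mathbb{R}^d$, by separating $\vec{u}$ from $P$.

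Your second argument is exactly that separation step. It then adds the one polytope-specific ingredient the citation leaves implicit: writing the separating direction $\vec{w}$ as a nonnegative combination of facet normals, via the normal cone of the exposed face. Your first argument uses the same ingredient in another form, the facet (irredundant H-) description of a full-dimensional polytope. The normal-cone fact is usually derived from that description via Farkas' lemma, so your two versions rest on the same theorem rather than being independent proofs.

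Your caveat about lower-dimensional $P$ identifies a genuine imprecision in the statement as written. For the segment $P=[0,1]\times\{0\}\subseteq\mathbb{R}^2$, the relative facet normals $\pm\vec{e}_1$ do not exclude $\vec{u}=(1/2,1)$. It is harmless for the paper, though:
\begin{itemize}
\item The only place where the restriction to facet normals is really used is the union bound over the facets of $\mathcal{P}(A)$ in the proof of Lemma~\ref{lem:small-region-small-set}, and there $\mathcal{P}(A)$ is full-dimensional because $A$ is nondegenerate.
\item Lemma~\ref{lem:vol-change} and Appendix~\ref{apx:zonotope} check the inequality for every $\vec{v}\in\mathbb{R}^d$, where the all-directions criterion applies with no dimension assumption.
\end{itemize}
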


The following corollary follows directly from Lemma~\ref{lem:roc72}.
\begin{corollary}\label{coro:roc72}
    Let $P, Q \subseteq \mathbb{R}^d$ be two polytopes. Then $Q \subseteq P$ if and only if $f_Q(\vec{v}) \leq f_{P}(\vec{v})$ for all facet normals $\vec{v}$ of $P$.
\end{corollary}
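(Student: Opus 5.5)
Since $Q$ and $P$ are polytopes, hence compact and convex, I will derive Corollary~\ref{coro:roc72} straight from Lemma~\ref{lem:roc72}, handling the two directions separately. First I note that the support function is monotone with respect to inclusion. If $Q \subseteq P$, then for every $\vec{v}$ the maximum of $\vec{p}\cdot\vec{v}$ taken over $Q$ is at most the maximum taken over $P$, because the first maximum ranges over a subset. So $f_Q(\vec{v}) \le f_P(\vec{v})$ for all $\vec{v}$, and in particular for all facet normals of $P$. That settles the ``only if'' direction.

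For the ``if'' direction, assume $f_Q(\vec{v}) \le f_P(\vec{v})$ for every facet normal $\vec{v}$ of $P$, and take any $\vec{u} \in Q$. By the definition of the support function, $\vec{u}\cdot\vec{v} \le f_Q(\vec{v})$ for every $\vec{v}$. Chaining this with the hypothesis gives $\vec{u}\cdot\vec{v} \le f_P(\vec{v})$ for all facet normals $\vec{v}$ of $P$. Lemma~\ref{lem:roc72} then yields $\vec{u} \in P$. Since $\vec{u}$ was an arbitrary point of $Q$, we conclude $Q \subseteq P$.

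There is no real obstacle here. The one point to watch is that Lemma~\ref{lem:roc72} is applied to $P$ alone, so facet normals only need to be defined for $P$, and $Q$ can be any polytope. The maxima defining $f_P$ and $f_Q$ are attained because polytopes are compact, so both support functions are finite and the inequalities make sense. If $P$ is lower-dimensional, its facet normals have to be understood as in Lemma~\ref{lem:roc72}, which presumably includes the normals to its affine hull in both orientations. Because the argument uses the lemma as a black box, it goes through unchanged in that case.
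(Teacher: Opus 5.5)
Your proposal is correct and is the same derivation the paper intends: the paper states only that the corollary ``follows directly from Lemma~\ref{lem:roc72}.'' Your monotonicity argument for the ``only if'' direction and your pointwise application of Lemma~\ref{lem:roc72} to each $\vec{u}\in Q$ for the ``if'' direction spell out exactly that.
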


The following lemma is a direct consequence of Lemma~\ref{lem:roc72} and Corollary~\ref{coro:roc72}.
\begin{lemma}\label{lem:vol-change}
    Let $P \subseteq \mathbb{R}^d$ be a polytope symmetric about the origin. Let $\varepsilon, \varepsilon' < 1$ be two positive reals. Then
    \begin{enumerate}[label = {\normalfont (\roman*)}]
        \item for any two vectors $\vec{p} \in \varepsilon P$ and $\vec{p}' \in \varepsilon' P$, we have
        \(
            \{\vec{p} + \vec{p}', \vec{p} - \vec{p}'\} \subseteq (\varepsilon + \varepsilon')P.
        \)

        \item for any polytope $Q \subseteq \varepsilon P$ and any vector $\vec{p}' \in \varepsilon' P$, we have
        \(
            Q + \vec{p}' \subseteq (\varepsilon + \varepsilon')P.
        \)

        \item for any polytope $Q$ with $Q + \varepsilon P \supseteq P$ , we have
        \(
            (1 - \varepsilon)P \subseteq Q.
        \)
    \end{enumerate}
\end{lemma}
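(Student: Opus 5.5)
The plan is to translate everything into support-function inequalities and then apply Lemma~\ref{lem:roc72} and Corollary~\ref{coro:roc72}. Since $P$ is a polytope symmetric about the origin, its support function is nonnegative and positively homogeneous: $f_{aP}(\vec{v}) = a f_P(\vec{v})$ for $a \geq 0$. Symmetry also gives $f_P(-\vec{v}) = f_P(\vec{v})$. The support function is subadditive under Minkowski sums, $f_{Q+R} = f_Q + f_R$. I will use these three facts throughout.

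For (i), fix any facet normal $\vec{v}$ of $(\varepsilon+\varepsilon')P$; these are the same as the facet normals of $P$. Since $\vec{p} \in \varepsilon P$ and $\pm\vec{p}' \in \varepsilon' P$ (by symmetry), we get
\[
(\vec{p} \pm \vec{p}')\cdot \vec{v} \leq \varepsilon f_P(\vec{v}) + \varepsilon' f_P(\vec{v}) = f_{(\varepsilon+\varepsilon')P}(\vec{v}).
\]
Lemma~\ref{lem:roc72} then gives membership.

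Part (ii) follows by applying (i) pointwise to each $\vec{q} \in Q \subseteq \varepsilon P$. Alternatively, it follows from Corollary~\ref{coro:roc72}, using $f_{Q+\vec{p}'}(\vec{v}) = f_Q(\vec{v}) + \vec{p}'\cdot\vec{v} \leq (\varepsilon+\varepsilon')f_P(\vec{v})$.

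Part (iii) is the only step needing care. First, $Q$ should be taken convex, since it is a polytope, so that $f_Q$ characterizes $Q$ via Lemma~\ref{lem:roc72}. From $Q + \varepsilon P \supseteq P$ we get, for every direction $\vec{v}$,
\[
f_P(\vec{v}) \leq f_{Q+\varepsilon P}(\vec{v}) = f_Q(\vec{v}) + \varepsilon f_P(\vec{v}),
\]
hence $f_Q(\vec{v}) \geq (1-\varepsilon) f_P(\vec{v}) = f_{(1-\varepsilon)P}(\vec{v})$.

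The main obstacle is that Corollary~\ref{coro:roc72} checks containment against the facet normals of the \emph{outer} polytope, which here is $Q$, and not against those of $P$. Since we have the inequality for \emph{all} directions $\vec{v}$, in particular it holds for all facet normals of $Q$. Thus $f_{(1-\varepsilon)P}(\vec{v}) \leq f_Q(\vec{v})$ for every facet normal $\vec{v}$ of $Q$, and Corollary~\ref{coro:roc72} gives $(1-\varepsilon)P \subseteq Q$.

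If $Q$ is lower-dimensional, $Q + \varepsilon P \supseteq P$ would force it to be nearly full-dimensional anyway. To be safe I would phrase the argument directly via the separating-hyperplane form, valid for any closed convex $Q$. Suppose $\vec{u} \in (1-\varepsilon)P \setminus Q$. Then some $\vec{v}$ has $\vec{u}\cdot\vec{v} > f_Q(\vec{v}) \geq (1-\varepsilon)f_P(\vec{v}) \geq \vec{u}\cdot\vec{v}$, which is a contradiction.
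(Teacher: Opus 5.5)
Your proof is correct and follows the paper's argument. Each part is reduced to support-function inequalities and closed by Lemma~\ref{lem:roc72}, with (ii) deduced from (i). In (iii) you rightly note that $f_Q(\vec{v}) \geq (1-\varepsilon)f_P(\vec{v})$ holds in every direction, hence in particular for the facet normals of $Q$, which is exactly how the paper's final appeal to Lemma~\ref{lem:roc72} must be read. One small wording slip: $f_{Q+R} = f_Q + f_R$ is additivity, not subadditivity.
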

\begin{proof}
    We first prove (i). Since $P$ is symmetric about the origin, we have that for any $\vec{p} \in P$ and any $\vec{v} \in \mathbb{R}^d$,
    \(
        |\vec{p} \cdot \vec{v}| \leq f_P(\vec{v}).
    \)
    Now consider $\vec{p} + \vec{p}'$. For any $\vec{v} \in \mathbb{R}^d$, we have
    \[
                        (\vec{p} + \vec{p}')\cdot \vec{v} \leq |\vec{p} \cdot \vec{v}| + |\vec{p}' \cdot \vec{v}| \leq (\varepsilon + \varepsilon')f_P(\vec{v}).
    \]
    By Lemma~\ref{lem:roc72}, we have $\vec{p} + \vec{p}' \in (\varepsilon + \varepsilon')P$. Similarly, we can show $\vec{p} - \vec{p}' \in (\varepsilon + \varepsilon')P$.

    (ii) follows directly by (i).

    We prove (iii).  Fix an arbitrary $\vec{v} \in \mathbb{R}^d$. Since $Q + \varepsilon P \supseteq P$, by Lemma~\ref{lem:roc72}, we have 
    \[
       f_Q(\vec{v}) + \varepsilon f_P(\vec{v}) \geq f_P(\vec{v}),
    \]
    or in equivalent form,
    \[
         (1 - \varepsilon)f_P(\vec{v}) \leq f_Q(\vec{v}).
    \]
    Again by Lemma~\ref{lem:roc72}, we have $(1 - \varepsilon)P \subseteq Q$.
\end{proof}

\section{Long Progressions in Dense Subset Sums}\label{sec:progression}

To simplify the notation, we make the following definition.
\begin{definition}[Density, Spreadness, Nondegeneracy for Sets]\label{def:dense-and-spread}
    Let $A \subseteq [N_1]\times \cdots \times [N_d]$ be a set of vectors. Let $\Phi := \prod_{j=1}^d N_j$. Let $\tau := \sqrt{\frac{\Phi}{\det(Z)}}$, where $Z := \mathcal{L}(A)$. We say that $A$ is
    \begin{itemize}
        \item $\rho$-dense if $|A| \geq \rho \tau$;

        \item $\gamma$-spread if $|A \setminus \Lambda| \geq \gamma\tau$ for any lattice $\Lambda \subsetneq Z$;

        \item $\delta$-nondegenerate if $|A \setminus \mathcal{V}| \geq \delta\tau$ for any lower-dimensional subspace $\mathcal{V} \subseteq \mathbb{R}^d$.
    \end{itemize} 
\end{definition}
Note that spreadness is stronger than nondegeneracy in the sense that $A$ must be $\gamma$-nondegenerate if $A$ is $\gamma$-spread, but not vice versa.

In this section, we consider only the case where $A$ is a subset of $[N]^d$ for simplicity. (Later, we will discuss how to extend the result to a more general case in Section~\ref{sec:extension}.) We shall prove that if $A$ is $\widetilde{O}(1)$-dense, $\widetilde{O}(1)$-spread, and $\widetilde{O}(1)$-nondegenerate, then $\mathcal{S}(A)$ contains a long progression whose basis is $\det(Z)\vec{e}_1, \ldots, \det(Z)\vec{e}_d$, where $Z: = \mathcal{L}(A)$. To ease the understanding, one may also assume that $Z = \mathbb{Z}^d$. 

The proof consists of three steps. In Subsection~\ref{subsec:ap-ind-basis}, we will show that if $A$ is $\widetilde{O}(1)$-nondegenerate, then $\mathcal{S}(A)$ contains a long progression whose basis consists of $d$ linearly independent vectors $\vec{b}_1, \ldots, \vec{b}_d$.  In Subsection~\ref{subsec:all-latt-pts}, we shall fill the ``gaps'' of the progression by generating all remainders (modulo $\Gamma$) of $Z$, where $\Gamma := \mathcal{L}(\vec{b}_1, \ldots, \vec{b}_d)$. As a result, we will show that $\mathcal{S}(A)$ contains all the points of $Z$ inside a large hypercube. In Subsection~\ref{subsec:ap-with-std-basis}, we will prove the main result of this section.

\subsection{Long Progressions with Independent Basis}\label{subsec:ap-ind-basis}
We need the following technical lemma, which is due to a result of Erd{\H{o}}s and S{\'a}rk{\"o}zy~\cite{ES92}. We defer its proof to Section~\ref{sec:equal-sum}.

\begin{restatable}{lemma}{lemhdespetals}\label{lem:hd-es-petals}
    Let $A$ be a multiset of vectors from $[N]^d$, where $N$ is sufficiently large. Then $A$ has at least $|A|/(d\log N)^{3d}$ disjoint subsets $A_1, \ldots, A_m$ such that $1 \leq |A_1| = \cdots = |A_m| \leq (d\log N)^{2d}$ and $\sigma(A_1) = \cdots = \sigma(A_m)$.
\end{restatable}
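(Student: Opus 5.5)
We plan to prove the lemma by an iterative extraction argument built on a pigeonhole/counting step, in the spirit of Erd\H{o}s and S\'ark\"ozy. The core claim is this. Any multiset $B$ of vectors from $[N]^d$ with $|B| = k := \lceil (d\log N)^{2d}\rceil$ contains two \emph{disjoint} nonempty subsets $B', B''$ with $|B'| = |B''|$ and $\sigma(B') = \sigma(B'')$.

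To prove the claim, we would count subsets of $B$ of a fixed size $\ell \approx k/2$. There are $\binom{k}{\ell} \geq 2^k/(k+1)$ of them. Their sums lie in $[\ell N]^d$, which has at most $(kN)^d$ points. Since $2^k/(k+1) > (kN)^d$ once $k \geq C d\log N$ (and $k=(d\log N)^{2d}$ far exceeds this for large $N$), two distinct $\ell$-subsets $X \neq Y$ share the same sum. Removing the common part gives $B' = X\setminus Y$ and $B'' = Y\setminus X$. These are disjoint, nonempty, of equal size, and have equal sums, and each has size at most $k/2$.

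The lemma, however, asks for many disjoint sets all sharing one common sum and one common size, not merely pairs. We would get this in two stages.

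\textbf{Stage 1 (extracting pairs).} Repeatedly apply the claim to the remaining part of $A$ and remove the pair $(B', B'')$ each time. While at least $k$ elements remain, we can continue, so we obtain about $|A|/k$ disjoint pairs. (If $|A| < k$ then $|A|/(d\log N)^{3d} < 1$; in that case we take a single singleton, $m=1$, and the statement holds.)

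\textbf{Stage 2 (forcing one common size and sum).} Each pair $(B',B'')$ is labelled by its size $s \leq k$ and its sum $\vec{\sigma}\in[kN]^d$. The number of possible labels is up to $k(kN)^d$, which is far too many to pigeonhole directly. So instead of pairs we aim for large families with equal sums. We refine the counting: within a block $B$ of size $k$, take the most popular sum among $\ell$-subsets. It is attained by at least $2^k/((k+1)(kN)^d)$ subsets, which is huge. From many subsets with the same sum we would like to extract many disjoint ones. This is delicate, because same-sum subsets can overlap heavily. A cleaner route is the following. The sets $X_i\setminus \bigcap_j X_j$ are not automatically equal-sum. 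We therefore instead use differences: if $\sigma(X)=\sigma(Y)$, then $\sigma(X\setminus Y)=\sigma(Y\setminus X)$.

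The resolution we would pursue is to pigeonhole over pairs, but reduce the label space by working one coordinate scale at a time. That is where the exponent $3d$ versus $2d$ slack enters. Concretely, we would run a nested (sunflower-like) induction on $d$ and on dyadic scales of $N$. First collect many disjoint pairs whose sums agree. Then use the differences $\sigma(B')$ modulo a coarsened grid of side $N/(d\log N)^{O(1)}$ to pigeonhole into about $(d\log N)^{d}$ classes. Finally glue sets from pairs in the same class to cancel the residual differences, losing a factor $(d\log N)^{d}$ in count and gaining a factor at most $(d\log N)^{d}$ in size.

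The main obstacle is Stage 2: turning many equal-sum pairs into one family of $\Omega(|A|/(d\log N)^{3d})$ disjoint sets with one \emph{common} sum, while keeping the sizes bounded by $(d\log N)^{2d}$. For this we would follow Erd\H{o}s--S\'ark\"ozy's combinatorial amplification and carefully track the logarithmic losses per dimension.
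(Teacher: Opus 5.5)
The gap is Stage~2, which is where the entire content of the lemma lies. Your Stage~1 pigeonhole claim is correct, but it only produces disjoint \emph{pairs} $(B',B'')$ with $|B'|=|B''|$ and $\sigma(B')=\sigma(B'')$, i.e.\ two sets per sum. The lemma needs $|A|/(d\log N)^{3d}$ disjoint sets sharing a single sum and a single size, and the mechanism you sketch for this does not work as stated.
\begin{itemize}
\item Sorting the pairs by $\sigma(B')$ into cells (or residues) of a coarse grid only makes the sums agree approximately, or modulo the grid. It never makes them agree exactly.
\item The step ``glue sets from pairs in the same class to cancel the residual differences'' is not specified. Cancelling an arbitrary residual vector exactly, while keeping the glued sets disjoint and of equal size, would need further disjoint equal-size pieces with prescribed sums. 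That is essentially the kind of statement you are trying to prove.
\item The class count is also unjustified. Sums range over $[kN]^d$ with $k=(d\log N)^{2d}$, so cells of side $N/(d\log N)^{O(1)}$ number $(d\log N)^{\Theta(d^2)}$, and residues modulo such a grid are far more numerous. Neither is $(d\log N)^{d}$.
\item Deferring to ``Erd\H{o}s--S\'ark\"ozy's amplification'' does not close the gap, because their result is one-dimensional and you never say how to lift it to $d$ coordinates.
\end{itemize}

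The missing idea is exactly that lift. The paper does it by induction on the number of coordinates, using the one-dimensional theorem as a black box: it gives at least $|A|/(18\log^2 N)$ disjoint subsets of equal size at most $\log N$ and equal sum. Suppose you have $A_1,\dots,A_g$, with $g\ge |A|/(d\log N)^{3d}$, disjoint subsets of common size at most $h=(d\log N)^{2d}$ whose sums $\vec{b}_i=\sigma(A_i)$ agree in the first $d$ coordinates. Treat each $\vec{b}_i$ as a single element and apply the one-dimensional theorem to the multiset of their $(d+1)$-th coordinates, which lie in $[0,hN]$. This yields $g/(18\log^2(hN))$ disjoint groups $B_j\subseteq\{\vec{b}_1,\dots,\vec{b}_g\}$ of equal cardinality with equal last-coordinate sums. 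All $\vec{b}_i$ share their first $d$ coordinates and the groups have the same cardinality, so the $\sigma(B_j)$ agree in every coordinate. The unions $\bigcup_{\vec{b}_i\in B_j}A_i$ are then disjoint subsets of $A$ of equal size at most $h\log(hN)$ and equal sum, and the polylogarithmic losses fit the exponents $2d$ and $3d$.

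Alternatively, your ``popular sum'' observation can be finished directly with the Erd\H{o}s--Rado sunflower lemma: a sunflower among $\ell$-subsets with a common sum has disjoint, equal-size, equal-sum petals. This requires $\ell\approx d\log N$, which gives about $|A|/O(\ell^2)$ petals. With $\ell\approx k/2$ inside blocks of size $k$, as in your setup, $\ell!$ already exceeds $2^k$ and the sunflower bound is useless.
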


Basically, Lemma~\ref{lem:hd-es-petals} states that there is a small vector $\vec{b}$ such that $A$ has many disjoint subsets whose sum is $\vec{b}$. This implies that $\mathcal{S}(A)$ contains many multiples of $\vec{b}$. We shall iteratively use Lemma~\ref{lem:hd-es-petals} to construct $\vec{b}_1, \ldots, \vec{b}_d$ so that $A$ has many disjoint subsets whose sum is $\vec{b}_j$ for each $\vec{b}_j$. In the $j$-th iteration, we shall apply Lemma~\ref{lem:hd-es-petals} to the vectors in $A$ that do not fall into the subspace generated by $\vec{b}_1, \ldots, \vec{b}_{j-1}$. Since $A$ is $\widetilde{O}(1)$-nondegenerate, there must be many such vectors, and therefore, there must be many disjoint subsets whose sum is $\vec{b}_j$.

\begin{lemma}\label{lem:popular-sum}
    Let $A \subseteq [N]^d$ be a set of $n$ vectors that is $\delta$-nondegenerate. Let $\tau := \sqrt{N^d/\det(Z)}$, where $Z := \mathcal{L}(A)$. Assume that $N$ is sufficiently large and that $\delta\tau \geq 2d$.  Then there exist $d$ linearly independent vectors $\vec{b}_1, \ldots, \vec{b}_d$ such that the following properties hold.
    \begin{enumerate}[label={\normalfont (\roman*)}]
        \item $A$ contains disjoint subsets $\{A_{j,k} \subseteq A: 1 \leq j \leq d, 1\leq k \leq m\}$, where $m \geq \frac{\delta\tau}{(d\log N)^{4d}}$ and $1\leq |A_{j,k}| \leq (d\log N)^{2d}$ and 
        \(
            \sigma(A_{j,k}) = \vec{b}_j
        \)
        for all $j,k$.

        \item $|A \cap \{x_1\vec{b}_1 + \cdots + x_d\vec{b}_d:  |x_j| \leq 2^{d-1}\}| \geq n - \frac{\delta\tau}{2}$.

        \item $\frac{n}{2^{3d^2}} \leq \frac{\det(\Gamma)}{\det(Z)} \leq (d\log N)^{3d^2}\tau^2$, where $\Gamma := \mathcal{L}(\vec{b}_1, \ldots, \vec{b}_d)$.
    \end{enumerate}
\end{lemma}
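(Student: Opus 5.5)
We construct $\vec{b}_1,\ldots,\vec{b}_d$ greedily, one per iteration, by repeatedly applying Lemma~\ref{lem:hd-es-petals} to the vectors that are still "available". The main obstacle is making (ii) compatible with (i); we handle this by choosing $\vec{b}_j$ in a way that also controls the geometry of the leftover vectors.

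\textbf{Iteration.} Suppose $\vec{b}_1,\ldots,\vec{b}_{j-1}$ have been built, and let $\mathcal{V}_{j-1}$ be their span. Let $R_j$ be the set of elements of $A$ not yet used in any previous family $A_{j',k}$ that lie outside $\mathcal{V}_{j-1}$. By $\delta$-nondegeneracy,
\[
|A\setminus\mathcal{V}_{j-1}|\ge\delta\tau,
\]
and the previous rounds consumed at most $(j-1)m(d\log N)^{2d}$ elements. Taking $m:=\lceil \delta\tau/(d\log N)^{4d}\rceil$ (up to a constant), this consumption is at most $\delta\tau/2$, using $d(d\log N)^{2d}\le (d\log N)^{4d}/2$. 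Hence $|R_j|\ge\delta\tau/2$.

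We would not apply Lemma~\ref{lem:hd-es-petals} to all of $R_j$ directly, since its conclusion does not force the common sum out of $\mathcal{V}_{j-1}$. Instead, project $R_j$ onto a coordinate direction transverse to $\mathcal{V}_{j-1}$, or equivalently restrict to a class of vectors with a fixed nonzero component modulo $\mathcal{V}_{j-1}$. Then apply the lemma to that class. This yields at least $|R_j|/(d\log N)^{3d}\ge m$ disjoint subsets, each of size at most $(d\log N)^{2d}$, with a common sum $\vec{b}_j\notin\mathcal{V}_{j-1}$; keep $m$ of them. Independence of $\vec{b}_1,\ldots,\vec{b}_d$ follows by construction, and disjointness across rounds holds because used elements are removed. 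This gives (i).

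\textbf{Property (ii).} Enforcing (ii) is the hard part. I would strengthen the greedy rule: at step $j$, among the available vectors outside $\mathcal{V}_{j-1}$, pick $\vec{b}_j$ to be "short" relative to the remaining mass. Concretely, choose $\vec{b}_j$ so that all but $\delta\tau/(2d)$ of the vectors of $A$ satisfy the following: after projecting away $\mathcal{V}_{j-1}$, they lie within twice $\vec{b}_j$'s projection. This resembles a Gram–Schmidt or successive-minima selection.

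One way to achieve this is to discard the $\delta\tau/(2d)$ vectors with the smallest projection before applying the lemma. The sum $\vec{b}_j$ then has a projection that is at least comparable to the discarded ones, with coefficients accumulating as $2^{j-1}$. Unrolling the triangular decomposition $x_j$ coordinate by coordinate, as in an LLL-style size reduction, gives $|x_j|\le 2^{d-1}$ for all but $\delta\tau/2$ vectors. I expect the most care to be needed here: getting the constant $2^{d-1}$ requires the right ordering, namely choosing $\vec{b}_j$ largest first so that the remaining vectors are dominated.

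\textbf{Property (iii).} For the upper bound, $\Gamma\subseteq Z$, and the volume of the $\vec{b}_j$'s is at most $\prod_j\|\vec{b}_j\|\le (\sqrt d\,N(d\log N)^{2d})^d$. Dividing by $\det(Z)$ and using $N^d/\det(Z)=\tau^2$ gives the bound $(d\log N)^{3d^2}\tau^2$ after absorbing constants.

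For the lower bound, by (ii) at least $n/2$ distinct vectors of $A$ (using $\delta\tau\le n$) lie in $Z\cap 2^{d-1}\cdot\mathrm{par}(\vec{b}_1,\ldots,\vec{b}_d)$. This region is a union of at most $(2^{d}+1)^d$ translates of the fundamental cell of $\Gamma$. Each such cell contains exactly $\det(\Gamma)/\det(Z)$ points of $Z$. Therefore
\[
\frac{n}{2}\le 2^{2d^2}\,\frac{\det(\Gamma)}{\det(Z)},
\]
which yields the claimed lower bound $\det(\Gamma)/\det(Z)\ge n/2^{3d^2}$.
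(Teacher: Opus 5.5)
Your argument for (iii) is the paper's counting argument: the box $\{\sum_j x_j\vec{b}_j : |x_j|\le 2^{d-1}\}$ meets each residue class of $\Gamma$ in at most $(2^d+1)^d$ points, and $|Z \bmod \Gamma| = \det(\Gamma)/\det(Z)$. Part (i) is fine once you fix a single unit direction $\vec{o}_j \perp \vec{b}_1,\ldots,\vec{b}_{j-1}$, apply nondegeneracy to the hyperplane $\vec{o}_j^{\perp}$ (a vector outside $\mathcal{V}_{j-1}$ can still be orthogonal to $\vec{o}_j$), and keep only the candidates sharing the sign of $\vec{a}\cdot\vec{o}_j$.

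The genuine gap is (ii): the mechanism you describe for it does not work.
Discarding the $\delta\tau/(2d)$ vectors with the \emph{smallest} projection and applying Lemma~\ref{lem:hd-es-petals} to the rest gives no control over which vectors form $\vec{b}_j$. Take $d=1$, $A=\{1,\ldots,N-1\}$ and $\delta\tau=2$. After discarding $1$, the pairs $\{k,s-k\}$ with $s\approx N/4$ already satisfy the lemma's conclusion. So nothing prevents $b_1\approx N/4$, and then most of $A$ violates $a\le b_1$.

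Even a correct domination of the form ``at most twice $\vec{b}_j$'s projection'' is too weak. Each later $\vec{b}_{j'}$ is a sum of $n_{j'}$ vectors, so back-substitution only gives $|x_j|\le 2+2\sum_{j'>j}n_{j'}|x_{j'}|$. That yields polylogarithmic coefficients instead of $2^{d-1}$. Also, measuring size by the projection onto all of $\mathcal{V}_{j-1}^{\perp}$ lets the summands of $\vec{b}_j$ cancel. Your closing remark (``largest first'') points the right way, but you do not develop it, and it contradicts the construction you actually describe.

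The missing idea is to build $\vec{b}_j$ from the \emph{longest} vectors and then retire them.
In step $j$, let $R_j^*$ be the $\lfloor\delta\tau/(2d)\rfloor$ vectors of the current pool $R_j$ with the largest $|\vec{a}\cdot\vec{o}_j|$. These are all nonzero, since at most $\delta\tau/2$ vectors have been removed so far. Apply Lemma~\ref{lem:hd-es-petals} to the same-sign half of $R_j^*$ to obtain $\vec{b}_j$, with $n_j:=|A_{j,k}|$. Then delete \emph{all} of $R_j^*$ from the pool, not only the vectors used in the families.

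The $n_j$ summands share a sign and each dominates every vector of $R_{j+1}$. Hence $|\vec{b}_j\cdot\vec{o}_j|\ge n_j|\vec{a}\cdot\vec{o}_j|$ for all $\vec{a}\in R_{j+1}$. Every later $\vec{b}_{j'}$ is built from $R_{j'}\subseteq R_{j+1}$, so this also gives $|\vec{b}_{j'}\cdot\vec{o}_j|\le\frac{n_{j'}}{n_j}|\vec{b}_j\cdot\vec{o}_j|$. Write $\vec{a}=\sum_i x_i\vec{b}_i$ for $\vec{a}\in R_{d+1}$ and use $\vec{o}_j\perp\vec{b}_i$ for $i<j$. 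Backward induction then gives $|x_j|\le 2^{d-j}/n_j$. The factors $n_{j'}$ cancel precisely because of the $1/n_j$ in the domination, which is what your ``twice'' bound lacks.

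As a side benefit, drawing the families from inside $R_j^*$ caps consumption at $\delta\tau/(2d)$ per step. Your accounting $(j-1)m(d\log N)^{2d}\le\delta\tau/2$ fails for small $\delta\tau$: for $\delta\tau=2d$ you have $m=1$, yet a single family may contain $(d\log N)^{2d}$ vectors.
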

\begin{proof}
     We shall iteratively generate $\vec{b}_j$ for $j = 1, \ldots, d$ using Lemma~\ref{lem:hd-es-petals}. Initially, $j := 1$ and $\vec{o}_1 := \vec{e}_1$ and $R_1 := A$. Basically, $\vec{o}_j$ is a unit-length vector that is orthogonal to $\vec{b}_1, \ldots, \vec{b}_{j-1}$, and $R_j$ is a subset of the elements in $A$ that can be used to generate $\vec{b}_j$. In iteration $j$, we select $\lfloor \frac{\delta\tau}{2d}\rfloor$ vectors from $R_j$ that have the longest projection onto $\vec{o}_j$, and denote them by $R^*_j$. (Recall that $\delta\tau \geq 2d$ and that $N$ is sufficiently large.) At least half of the vectors in $R^*_j$ have the same direction when projected onto $\vec{o}_j$, and applying Lemma~\ref{lem:hd-es-petals} to them yields at least 
     \[
         m \geq \frac{\delta\tau}{4d} \cdot \frac{1}{(d\log N)^{3d}} \geq \frac{\delta\tau}{(d\log N)^{4d}}
     \]
     non-empty disjoint subsets $A_{j,1}, \ldots, A_{j, m}$ of cardinality at most $(d\log N)^{2d}$ and $\sigma(A_{j,1}) = \ldots = \sigma(A_{j, m})$. Let $\vec{b}_j := \sigma(A_{j,1})$. If $j = d$, then we stop. Otherwise, let $\vec{o}_{j+1}$ be a unit-length vector that is orthogonal to  $\vec{b}_1, \ldots, \vec{b}_{j}$, let $R_{j+1} := R_j \setminus R^*_j$, and then proceed with the next iteration.

     We first show that $\vec{b}_1, \ldots, \vec{b}_d$ are linearly independent. Consider an arbitrary iteration $j$. 
     Since we remove at most $\frac{\delta\tau}{2d}$ elements of $A$ in each iteration, we have $|R_j| \geq |A| - \frac{\delta\tau}{2}$. Recall that $A$ is $\delta$-nondegenerate. $A$ must have at least $\delta\tau$ elements that are not orthogonal to $\vec{o}_j$, and therefore, $R_j$ has at least $\delta\tau/2$ elements that are not orthogonal to $\vec{o}_j$.  By the construction of $R^*_j$, all vectors in $R^*_j$ must have non-zero projection onto $\vec{o}_j$. This implies that the projections of vectors in $A_{j,1}$ onto $\vec{o}_j$ are either all positive or all negative. Therefore, $\vec{b}_j \cdot \vec{o}_j \neq 0$. On the other hand, we have that $\vec{o}_j \cdot \vec{b}_{j'} = 0$ for $j' \in \{1, \ldots, j-1\}$. It follows that $\vec{b}_j$ cannot be a linear combination of $\vec{b}_1, \ldots, \vec{b}_{j-1}$. This holds for all $j \in \{1, \ldots, d\}$, so $\vec{b}_1, \ldots, \vec{b}_d$ are linearly independent.

     Property (i) follows directly. Next we prove property (ii). Let $R_{d+1} := R_d\setminus R^*_d$. Since $|R^*_j| \leq \frac{\delta\tau}{2d}$ for all $j$, we have $|R_{d+1}| \geq n - \frac{\delta\tau}{2}$. To prove property (ii), it suffices to show that 
     \begin{equation}\label{eq:hd-ap-3}
        R_{d+1} \subseteq \{x_1\vec{b}_1 + \cdots + x_d\vec{b}_d: |x_j| \leq 2^{d-1}\}.
    \end{equation}
    For $j \in \{1, \ldots, d\}$, let $n_j = |A_{j,1}|$. That is, $\vec{b}_j$ is the sum of $n_j$ vectors in $A$.
    \begin{claim}\label{clm:hd-ap-2}
        For $j \in \{1, \ldots, d\}$, for any $\vec{a} \in R_{j+1}$, we have that 
        \(
            |\vec{b}_j \cdot \vec{o}_j| \geq n_j \cdot |\vec{a} \cdot \vec{o}_j|. 
        \)
    \end{claim}
    \begin{proof}
        Recall that $R_{j+1} = R_j\setminus R^*_j$. By the construction of $R^*_j$, for any $\vec{a}^* \in R^*_j$ and for any $\vec{a} \in R_{j+1}$, we have
        \(
             |\vec{a}^* \cdot \vec{o}_j| \geq |\vec{a} \cdot \vec{o}_j|. 
        \) 
        Since $\vec{b}_j$ is the sum of $n_j$ vectors from $R^*_j$, we have that for any $\vec{a} \in R_{j+1}$,
        \[
            |\vec{b}_j \cdot \vec{o}_j| \geq n_j\cdot |\vec{a} \cdot \vec{o}_j| \qedhere
        \]
    \end{proof}
    Now we are ready to prove~\eqref{eq:hd-ap-3}. Let $\vec{a}$ be an arbitrary vector in $R_{d+1}$. Since $\vec{b}_1, \ldots, \vec{b}_d$ are linearly independent, $\vec{a}$ can be represented as a linear combination of them. That is, for some $x_1, \ldots, x_d$,
    \[
        \vec{a} = x_1\vec{b}_1 + \cdots + x_d \vec{b}_d.
    \]
    We shall show that $ |x_j| \leq \frac{2^{(d-j)}}{n_j}$ for all $j \in \{1, \ldots, d\}$, which immediately implies~\eqref{eq:hd-ap-3}. When $j = d$, the inequality holds since
    \[
        \frac{1}{n_d} \cdot |\vec{b}_d \cdot \vec{o}_d| \geq |\vec{a} \cdot \vec{o}_d| = |x_d| |\vec{b}_d \cdot \vec{o}_d|. 
    \]
    The first inequality is due to Claim~\ref{clm:hd-ap-2} and the second is due to our choice of $\vec{o}_j$.
    Now suppose that the inequality holds for $x_{j+1}, \ldots , x_d$. We show that it also holds for $x_j$.  Consider $\vec{a} \cdot \vec{o}_{j}$. By the choice of $\vec{o}_j$, we have
    \begin{equation}\label{eq:hd-ap-4}
        \vec{a} \cdot \vec{o}_{j} = x_j \vec{b}_j \cdot \vec{o}_j + x_{j+1} \vec{b}_{j+1} \cdot \vec{o}_j + \cdots + x_d\vec{b}_d \cdot \vec{o}_j
    \end{equation}
    For $j' > j$, we have $\vec{b}_{j'}$ is the sum of $n_{j'}$ vectors in $R^*_{j'} \subsetneq R_{j'} \subsetneq R_{j+1}$. By Claim~\ref{clm:hd-ap-2},
    \[
        |\vec{b}_{j'} \cdot \vec{o}_j| \leq \frac{n_{j'}}{n_j} \cdot |\vec{b}_j \cdot \vec{o}_j|.
    \]
    By the inductive hypothesis,
    \[
        |x_{j'} \vec{b}_{j'} \cdot \vec{o}_j| \leq \frac{2^{(d-{j'})}}{n_{j'}} \cdot \frac{n_{j'}}{n_j} \cdot |\vec{b}_j \cdot \vec{o}_j| = \frac{2^{(d-{j'})}}{n_j}\cdot |\vec{b}_j \cdot \vec{o}_j|.
    \]
    In view of~\eqref{eq:hd-ap-4}, we have
    \begin{equation}\label{eq:hd-ap-5}
        |\vec{a} \cdot \vec{o}_{j}| \geq \left(|x_j| - \frac{2^{d - j} - 1}{n_j}\right) \cdot  |\vec{b}_j \cdot \vec{o}_j| 
    \end{equation}
    Since $\vec{a} \in R_{d+1} \subseteq R_{j+1}$, Claim~\ref{clm:hd-ap-2} implies that $|\vec{a} \cdot \vec{o}_{j}| \leq \frac{1}{n_j} \cdot |\vec{b}_j \cdot \vec{o}_j|$. In view of~\eqref{eq:hd-ap-5}, we have
    \[
         |x_j| \leq \frac{2^{d-j}}{n_j}. 
    \]

    It remains to prove property (iii). By property (i), we have $\vec{b}_j \in [(d\log N)^{2d}N]^d$ for all $j$, which implies
    \[
        \frac{\det(\Gamma)}{\det(Z)}  \leq \frac{(d(d\log N)^{2d}N)^d}{\det(Z)} \leq (d\log N)^{3d^2}\tau^2.
    \]
    Let 
    \[
        A' = A \cap \{x_1\vec{b}_1 + \cdots + x_d\vec{b}_d: |x_j| \leq 2^{d-1}\}.
    \]
    Property (ii) implies that
    \[
        \mu: = \mu(A', \Gamma) \leq (2^d + 1)^d.
    \]
    Then we have
    \[
        \frac{\det(\Gamma)}{\det(Z)} \geq |A' \bmod \Gamma| \geq \frac{|A'|}{\mu} \geq \frac{n - \delta\tau/2}{(2^d + 1)^d} = \frac{n}{2(2^d + 1)^d} \geq \frac{n}{2^{3d^2}}. \qedhere
    \]
\end{proof}

\subsection{Large Inradius of Zonotopes}
The above lemma also implies that if $A$ is $\widetilde{O}(1)$-dense and $\widetilde{O}(1)$-nondegenerate, then $\mathcal{P}(A)$ has a large inradius. This corollary will be useful in Sections~\ref{sec:john-type} and~\ref{sec:alg}.

\begin{lemma}\label{lem:inradius}
    Let $\vec{b}_1, \ldots, \vec{b}_d \in [N]^d$ be $d$ linearly independent vectors, and let $\Gamma$ be the lattice generated by them. Then 
    \[
        \left\{x_1\vec{e}_1 + \cdots + x_d\vec{e}_d : |x_j| \leq \frac{\det(\Gamma)}{d^{d}N^{d-1}}\right\} \subseteq \left\{x_1\vec{b}_1 + \cdots + x_d\vec{b}_d: |x_j| \leq 1\right\} 
    \]
\end{lemma}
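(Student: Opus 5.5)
We will use Cramer's rule. Let $B$ be the $d\times d$ matrix with columns $\vec{b}_1,\ldots,\vec{b}_d$, so that $\det(\Gamma)=|\det(B)|>0$. The target parallelepiped is $\{B\vec{x}:\|\vec{x}\|_\infty\le 1\}$. The left-hand side is the cube of half-side $r:=\frac{\det(\Gamma)}{d^dN^{d-1}}$, i.e.\ the convex hull of $\{\pm r\vec{e}_1\pm\cdots\pm r\vec{e}_d\}$. It therefore suffices to show that for every $\vec{y}$ with $\|\vec{y}\|_\infty\le r$, the unique coefficient vector $\vec{x}=B^{-1}\vec{y}$ satisfies $\|\vec{x}\|_\infty\le 1$. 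Equivalently, we need to bound the operator norm $\|B^{-1}\|_{\infty\to\infty}$ by $1/r$.

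Write $B^{-1}=\mathrm{adj}(B)/\det(B)$. Each entry of $\mathrm{adj}(B)$ is, up to sign, a $(d-1)\times(d-1)$ minor of $B$. Since all entries of $B$ lie in $[N]$, i.e.\ in $[0,N-1]$, expanding such a minor as a sum of $(d-1)!$ products gives the bound $(d-1)!\,N^{d-1}$. One could instead use Hadamard's inequality to get $(\sqrt{d-1}\,N)^{d-1}$; either choice suffices here. Hence, for each $j$,
\[
|x_j| = \frac{\bigl|\sum_i \mathrm{adj}(B)_{ji}\,y_i\bigr|}{\det(\Gamma)} \le \frac{d\cdot (d-1)!\,N^{d-1}\cdot r}{\det(\Gamma)} = \frac{d!\,N^{d-1}}{d^d N^{d-1}}\le 1,
\]
using $d!\le d^d$. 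This shows $\vec{y}=B\vec{x}$ with $|x_j|\le 1$, which proves the inclusion.

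No step here is a real obstacle. The only care needed is the constant in the minor bound: we must check that $d\cdot(d-1)!\,N^{d-1}\le d^dN^{d-1}$, which holds comfortably. If one prefers the Hadamard bound, the analogous check is $d(d-1)^{(d-1)/2}\le d^d$, which also holds.
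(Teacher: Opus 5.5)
Your proof is correct, but it takes a different route from the paper: an algebraic argument where the paper uses a geometric one.

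The paper argues via volumes. The parallelepiped $P=\{x_1\vec{b}_1+\cdots+x_d\vec{b}_d : |x_j|\le 1\}$ has volume $2^d\det(\Gamma)$, and each facet has $(d-1)$-dimensional volume at most $(2\sqrt{d}N)^{d-1}$. So the distance between opposite facets is at least $2\det(\Gamma)/(d^{(d-1)/2}N^{d-1})$. Hence $P$ contains a Euclidean ball of radius $\sqrt{d}\cdot\frac{\det(\Gamma)}{d^dN^{d-1}}$, and that ball contains the cube.

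The two arguments are really the same estimate taken in different norms. The width of $P$ between the facets $F_k^{\pm}$ equals $2/\|\beta_k\|_2$, where $\beta_k$ is the $k$-th row of $B^{-1}$. The paper's facet-volume bound is just Hadamard's inequality applied to the $k$-th row of $\mathrm{adj}(B)$. So the paper controls the $\ell_2$ norms of the rows of $B^{-1}$ and then loses a factor $\sqrt{d}$ when passing from ball to cube. You control the $\ell_1$ norms of those rows directly, which is exactly $\|B^{-1}\|_{\infty\to\infty}$, the quantity the cube containment needs.

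What each buys:
\begin{itemize}
\item Your version is more self-contained: it needs no prism-volume or inscribed-ball facts. It leaves more slack in the constant ($d!/d^d$).
\item The paper's version yields a slightly stronger, rotation-invariant conclusion: $P$ contains a Euclidean ball.
\end{itemize}
Your bound on the minors uses only absolute values of the entries, so it also carries over unchanged to the signed-coordinate extension in the paper.
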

\begin{proof}
    Let $P := \{x_1\vec{b}_1 + \cdots + x_d\vec{b}_d: |x_j| \leq 1\}$. We have 
    \[
        \mathrm{vol}(P) = 2^d\det(\Gamma).
    \]
    Note that $P$ is a parallelepiped, and its facets are defined by the following formulas. For each $k\in[d]$,  
    \begin{align*}
        F^+_k  &:= \{\text{$x_1\vec{b}_1 + \cdots + x_d\vec{b}_d$ : $x_k = 1$ and $|x_{j}| \leq 1$ for $j \neq k$}\},\\
        F^-_k  &:= \{\text{$x_1\vec{b}_1 + \cdots + x_d\vec{b}_d$ : $x_k = -1$ and $|x_{j}|\leq 1$ for $j \neq k$}\}.
    \end{align*} 
    Note that $F^+_k$ and $F^-_k$ are parallel, and have $(d-1)$-dimensional volume at most $(2\sqrt{d}N)^{d-1}$ (since $\|\vec{b}_i\|_2 \leq \sqrt{d}N$ for all $i$). Let $h_k$ be the distance between them. We have
    \[
        h_k \ge \frac{\mathrm{vol}(P)}{(2\sqrt{d}N)^{d-1}}
        = \frac{2\det(\Gamma)}{d^{(d-1)/2}N^{d-1}} \geq 2\sqrt{d} \cdot \frac{\det(\Gamma)}{d^{d}N^{d-1}}.
    \]
    Therefore, $P$ contains a ball centered at the origin of radius at least $\sqrt{d} \cdot \frac{\det(\Gamma)}{d^{d}N^{d-1}}$, which implies
    \[
        \left\{x_1\vec{e}_1 + \cdots + x_d\vec{e}_d : |x_j| \leq \frac{\det(\Gamma)}{d^{d}N^{d-1}}\right\}\subseteq P. \qedhere
    \]
\end{proof}

\begin{corollary}\label{coro:large-inradius}
    Let $A \subseteq [N]^d$ be a set of $n$ vectors that is $\rho$-dense and $\delta$-nondegenerate. Let $\tau := \sqrt{N^d/\det(Z)}$, where $Z := \mathcal{L}(A)$. Assume that $N$ is sufficiently large and that $\delta\tau \geq 2d$. Then we have that
    \[
        \mathcal{P}(A) \supseteq \left\{x_1\vec{e}_1 + \cdots + x_d \vec{e}_d: |x_j| \leq \frac{\rho\delta N}{(d\log N)^{7d^2}}\right\}.
    \]
\end{corollary}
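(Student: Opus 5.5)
Our plan is to combine Lemma~\ref{lem:popular-sum} with Lemma~\ref{lem:inradius}. First apply Lemma~\ref{lem:popular-sum} to $A$; this gives linearly independent $\vec{b}_1, \ldots, \vec{b}_d$ and disjoint subsets $A_{j,k}$ ($1\le j\le d$, $1\le k\le m$) with $\sigma(A_{j,k}) = \vec{b}_j$ and $m \ge \frac{\delta\tau}{(d\log N)^{4d}}$. Property (iii) also gives $\det(\Gamma)\geq \frac{n\det(Z)}{2^{3d^2}}$ for $\Gamma := \mathcal{L}(\vec{b}_1,\ldots,\vec{b}_d)$.

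Next we show that $\mathcal{P}(A)$ contains a large dilate of the parallelepiped $Q := \{x_1\vec{b}_1+\cdots+x_d\vec{b}_d : |x_j|\le 1\}$. Let $A^* := \bigcup_{j,k} A_{j,k}$. Since $\mathcal{P}(A) = \mathcal{P}(A^*) + \mathcal{P}(A\setminus A^*)$ and $\vec{0} \in \mathcal{P}(A\setminus A^*)$, it suffices to work with $A^*$. Inside $\mathcal{P}(A_{j,k})$, put the same coefficient $x$ with $|x|\le 1/2$ on every element of $A_{j,k}$; this shows $\mathcal{P}(A_{j,k})\supseteq\{x\vec{b}_j : |x|\le 1/2\}$. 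Summing over $k$ gives $\{x\vec{b}_j : |x|\le m/2\}$, and summing over $j$ gives $\mathcal{P}(A)\supseteq \frac{m}{2}Q$.

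Now apply Lemma~\ref{lem:inradius}. Its hypothesis requires $\vec{b}_j\in[N]^d$, but we only know $\vec{b}_j \in [(d\log N)^{2d}N]^d$. So apply it with $N' := (d\log N)^{2d}N$ in place of $N$, which yields
\[
Q\supseteq\Big\{\textstyle\sum_j x_j\vec{e}_j : |x_j|\le \frac{\det(\Gamma)}{d^d N'^{\,d-1}}\Big\}.
\]
Combining, the cube radius inside $\mathcal{P}(A)$ is at least
\[
\frac{m}{2}\cdot\frac{\det(\Gamma)}{d^d (d\log N)^{2d(d-1)}N^{d-1}}\ \ge\ \frac{\delta\tau}{(d\log N)^{4d}}\cdot \frac{n\det(Z)}{2^{3d^2+1} d^d (d\log N)^{2d^2}N^{d-1}}.
\]
Then substitute the definitions. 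Density gives $n\ge\rho\tau$, so $\tau\cdot n\cdot\det(Z)\ge \rho\tau^2\det(Z)=\rho N^d$. The expression becomes $\rho\delta N$ divided by $2^{3d^2+1}d^d(d\log N)^{2d^2+4d}$.

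The remaining step is bookkeeping: check that this denominator is at most $(d\log N)^{7d^2}$ for sufficiently large $N$. Since $d\log N\ge 2$, we have $2^{3d^2+1}d^d\le (d\log N)^{3d^2+d+1}$. The total exponent is then at most $5d^2+5d+1\le 7d^2$ for $d\ge 3$. For $d=1,2$ the slack from $\log N$ being large absorbs the constants, and we will verify those small cases directly.

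The only genuine subtlety we expect is the mismatch with Lemma~\ref{lem:inradius}, whose vectors must lie in $[N]^d$. It is handled by the rescaling to $N'$ above: the lemma's proof only uses $\|\vec{b}_i\|_2\le\sqrt d N'$, so the enlarged bound is harmless.
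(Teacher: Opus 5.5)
Your proposal is correct and follows the paper's proof: apply Lemma~\ref{lem:popular-sum}, show that $\mathcal{P}(A)$ contains $\frac{m}{2}$ times the parallelepiped $\{x_1\vec{b}_1+\cdots+x_d\vec{b}_d : |x_j|\le 1\}$, and invoke Lemma~\ref{lem:inradius} with $N$ enlarged to $(d\log N)^{2d}N$. Then use property (iii) together with $n\ge\rho\tau$ and absorb the constants into $(d\log N)^{7d^2}$. The one small difference is that you get the $\frac{m}{2}$ factor directly from the Minkowski decomposition of $\mathcal{P}(A)$ over the $A_{j,k}$, whereas the paper cites $\mathcal{S}(A)\subseteq 2\mathcal{P}(A)$ and central symmetry; that route, taken literally, only yields $\frac{m}{4}$, which is harmless for the bound, so your justification of this step is the tighter one.
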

\begin{proof}
    Let $\vec{b}_1, \ldots, \vec{b}_d$ be given as in Lemma~\ref{lem:popular-sum}. By Lemma~\ref{lem:popular-sum}(i), we have that
    \[
        \mathcal{S}(A) \supseteq \{z_1\vec{b}_1 + \cdots + z_d \vec{b}_d: 0 \leq z_j \leq m, z_j \in \mathbb{Z}\} \quad\text{and}\quad m \geq \frac{\delta\tau}{(d\log N)^{4d}}.
    \]
    Since $\mathcal{S}(A) \subseteq 2\mathcal{P}(A)$ and $\mathcal{P}(A)$ is symmetric about the origin, we have
    \begin{equation}\label{eq:large-inradius}
        \mathcal{P}(A) \supseteq \{x_1\vec{b}_1 + \cdots + x_d \vec{b}_d: |x_j| \leq \frac{\delta\tau}{2(d\log N)^{4d}}\}.
    \end{equation}
    By Lemma~\ref{lem:popular-sum}(i), $\vec{b}_j \in [(d\log N)^{2d}N]^d$ for all $j$ . By Lemma~\ref{lem:inradius}, we have
    \[
        \mathcal{P}(A) \supseteq \{x_1\vec{e}_1 + \cdots + x_d \vec{e}_d: |x_j| \leq L\},
    \]
    where 
    \begin{align*}
        L &\geq \frac{\delta\tau}{2(d\log N)^{4d}} \cdot \frac{\det(\Gamma)}{d^d (d\log N)^{2d^2} N^{d-1}}\\
            &\geq \frac{\delta\tau}{2(d\log N)^{4d}} \cdot \frac{1}{d^d (d\log N)^{2d^2} N^{d-1}} \cdot \frac{\rho\tau\det(Z)}{2^{3d^2}}\\
            &\geq \frac{\rho\delta N}{(d\log N)^{7d^2}}\cdot \frac{\tau^2 \det(Z)}{N^d}\\
            & = \frac{\rho\delta N}{(d\log N)^{7d^2}}. 
    \end{align*}
    The second inequality is due to Lemma~\ref{lem:popular-sum}(iii).
\end{proof}

\subsection{All Lattice Points in a Large Hypercube}\label{subsec:all-latt-pts}
Let $\vec{b}_1, \ldots, \vec{b}_d$ be the vectors given by Lemma~\ref{lem:popular-sum}. Let $\Gamma := \mathcal{L}(\vec{b}_1, \ldots, \vec{b}_d)$. Clearly, $\Gamma \subseteq Z$, where $Z: = \mathcal{L}(A)$, since each $\vec{b}_j$ is a subset sum of $A$. We already have a progression with basis $\vec{b}_1, \ldots, \vec{b}_d$. We shall fill the progression with remainders (modulo $\Gamma$) of $Z$, and show that $\mathcal{S}(A)$ contains all the lattice points of $Z$ inside a large hypercube.

To generate all these remainders, we need the following technical lemma, which states that we need only $O(\frac{\ell}{n})$ vectors to generate all the remainders of $Z$, where $\ell: = \det(\Gamma)/\det(Z)$. The lemma is due to Kneser's Theorem and a densification technique by Szemer{\'e}di and Vu~\cite{SV06a}, and we defer its proof to Section~\ref{sec:remainder}.

Recall that $\mu(A, \Gamma)$ is the largest multiplicity of any remainder (modulo $\Gamma$) in $A$.

\begin{restatable}{lemma}{lemgenremdense}\label{lem:gen-rem-dense}
    There exists a constant $c$ such that the following holds. Let $A$ be a set of $n$ vectors from $\mathbb{Z}^d$, where $Z : = \mathcal{L}(A)$ is full-dimensional. Let $\Gamma \subseteq Z$ be a full-dimensional lattice, and let $\ell := \det(\Gamma)/\det(Z)$. Let $A'$ be a subset of $A$, and let $\mu := \mu(A', \Gamma)$. Assume that 
    \begin{equation}\label{eq:gen-rem-dense-0}
        32\mu\ell(\log (\mu\ell))^{3d} \leq |A'|^2.
    \end{equation}
    If $|A \setminus \Lambda| \geq \frac{c\mu\ell}{|A'|}$ for any lattice $\Lambda$ with $\Gamma \subseteq \Lambda \subsetneq Z$, then $A$ contains a subset $R$ of at most $\frac{c\mu\ell}{|A'|}(\log (\mu\ell))^{4d}$ vectors such that $\mathcal{S}_{\Gamma}(R) = Z \bmod \Gamma$.
\end{restatable}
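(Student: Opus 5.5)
The plan is to build $R$ greedily, in rounds. We keep a current subset $R\subseteq A$ and watch $|\mathcal{S}_{\Gamma}(R)|$ grow until it equals $|Z \bmod \Gamma| = \ell$. Kneser's theorem, in the form of Corollary~\ref{coro:kneser-2}, will say that the only obstruction to fast growth is a nontrivial symmetry, i.e.\ an intermediate lattice. In that case we pass to the quotient by the larger lattice $\Gamma' = \Gamma + \mathcal{L}(\vec{h})$, which is a proper suplattice with $\det(\Gamma')\le \det(\Gamma)/2$. Since $\ell$ can halve at most $\log \ell$ times, this costs only a logarithmic factor.

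\textbf{Step 1: densify with $A'$.} Let $n' := |A'|$. Because every remainder occurs at most $\mu$ times in $A'$, we have $|A' \bmod \Gamma| \ge n'/\mu$. Following Szemer\'edi and Vu, I would split $A'$ into about $k$ disjoint blocks $A'_1,\dots,A'_k$ of size about $n'/k$. Applying Corollary~\ref{coro:kneser-2} repeatedly (either the sizes add or a symmetry appears) gives $|\mathcal{S}_{\Gamma}(A'_i)| \gtrsim (n'/k)^2/\mu$, up to polylog factors. This uses the Erd\H{o}s--Szemer\'edi/Sárközy type fact that $s$ distinct residues have $\Omega(s^2)$ subset sums unless they concentrate on a subgroup. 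The sets $\mathcal{S}_{\Gamma}(A'_i)$ then add in size until they reach $\ell$. Balancing $k\cdot (n'/k)^2/\mu \ge \ell$ requires $n'^2 \gtrsim \mu\ell\,\mathrm{polylog}$, which is exactly hypothesis~\eqref{eq:gen-rem-dense-0}. So either $\mathcal{S}_{\Gamma}(A')$ is all of $Z \bmod \Gamma$, or at each stage we get a nonzero symmetry $\vec{h}$.

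\textbf{Step 2: handle symmetries via spreadness.} Suppose a symmetry $\vec{h}$ appears, giving $\Lambda := \Gamma + \mathcal{L}(\vec{h})$ with $\Gamma \subsetneq \Lambda$.
\begin{itemize}
\item If $\Lambda = Z$, we are done: the partial sumset is invariant under $\Lambda$, so it already covers all of $Z \bmod \Gamma$.
\item Otherwise $\Gamma\subseteq\Lambda\subsetneq Z$, and the spreadness hypothesis gives at least $c\mu\ell/n'$ elements of $A\setminus\Lambda$. We recurse in $\mathbb{Z}^d_{\Lambda}$. The structure already built covers whole cosets of $\Lambda$, so each element of $A\setminus\Lambda$ adds at least one new coset modulo $\Lambda$, and blocks of such elements can again be densified by Kneser.
\end{itemize}
The accounting I aim for is that each phase, at level $\Lambda$ with $\ell_\Lambda = \det(\Lambda)/\det(Z)$, uses $O\big((\mu\ell/n')(\log \mu\ell)^{O(d)}\big)$ vectors. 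There are at most $\log \ell$ phases, since the determinant halves each time. This yields $|R|\le \frac{c\mu\ell}{n'}(\log(\mu\ell))^{4d}$.

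\textbf{Main obstacle.} The hard part is the first step: getting $R$ to have size $\approx \mu\ell/n'$ rather than about $n'$. Using all of $A'$ is too expensive, so we must select only about $\mu\ell/n'$ elements of $A'$ together with spread elements. I would first take a subset $A''\subseteq A'$ of size about $\sqrt{\mu\ell}\,\mathrm{polylog}$, which by~\eqref{eq:gen-rem-dense-0} is at most $n'$. Then I would show that its subset sums modulo $\Gamma$ already fill a positive fraction of a coset structure. After that, only $\mu\ell/n'$ additional spread elements, each contributing a new coset, should be needed, and Kneser's theorem controls the total. The delicate point is checking that $\sqrt{\mu\ell}$ is bounded by $\mu\ell/n'$ up to polylog factors. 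Where that bound fails, I would instead use Erd\H{o}s--Sárközy equal-sum petals (as in Lemma~\ref{lem:hd-es-petals}) to produce many copies of a small set of residues cheaply.
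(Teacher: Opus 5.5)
There is a genuine gap, and it sits exactly at what you call the main obstacle. Step~1 sums whole blocks of $A'$, which spends all $|A'|$ elements. Your repair needs $\sqrt{\mu\ell}\lesssim\mu\ell/|A'|$, i.e.\ $|A'|\lesssim\sqrt{\mu\ell}$. Hypothesis~\eqref{eq:gen-rem-dense-0} says the opposite: $|A'|\ge\sqrt{32\mu\ell}\,(\log(\mu\ell))^{3d/2}$, and the ratio $|A'|/\sqrt{\mu\ell}$ can be polynomially large (e.g.\ $\mu=1$, $|A'|=\ell$). So your ``delicate'' case is the generic one. The petal fallback is too vague to produce a symmetry.

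The missing idea is the Szemer\'edi--Vu densification (Lemmas~\ref{lem:sz-set}, \ref{lem:sz-lattice}, \ref{lem:densify}):
\begin{itemize}
\item From $s$ pairwise non-congruent elements you can extract a subset of only $O(\log s)$ elements whose subset sums cover $\gtrsim s/(20\log s)^d$ residues mod $\Gamma$.
\item This can be repeated on disjoint tiny subsets.
\item Corollary~\ref{coro:kneser-2} only asks for $\sum_i|\mathcal{S}_\Gamma(A_i)|\ge\ell+m$ over disjoint $A_i$, so about $\ell(\log s)^{O(d)}/s$ elements already force a nonzero symmetry $\vec{h}$.
\end{itemize}
The paper carves $A'$ into up to $\mu$ blocks of $\lceil|A'|/(4\mu)\rceil$ pairwise non-congruent elements. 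This gives $s\approx|A'|/\mu$ and $|R_1|\le\frac{16\mu\ell}{|A'|}(\log(\mu\ell))^{3d}$. Hypothesis~\eqref{eq:gen-rem-dense-0} is what ensures that these $\mu$ blocks, each of capacity about $s^2/(\log s)^{3d}$, together reach $\ell$.

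Step~2 also gives the wrong count. Adding elements of $A\setminus\Lambda$ that each contribute a new coset is the linear-growth argument of Lemma~\ref{lem:gen-rem-std}. That costs about $\ell_\Lambda$ elements, not $\mu\ell/|A'|$. Those elements also cannot be densified, since nothing bounds their multiplicities modulo $\Gamma$.

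The paper instead recurses on $\Gamma'=\Gamma+\mathcal{L}(\vec{h})$ with the \emph{same} $A'$:
\begin{itemize}
\item Every class mod $\Gamma'$ is a union of $\ell/\ell'$ classes mod $\Gamma$, so $\mu(A',\Gamma')\,\ell'\le\mu\ell$.
\item Hence both~\eqref{eq:gen-rem-dense-0} and the spreadness threshold $c\mu\ell/|A'|$ are inherited by $\Gamma'$.
\item Induction on $\ell$, with the bound strengthened to $\frac{c\mu\ell}{|A'|}(\log(\mu\ell))^{3d}\log\ell$, absorbs the at most $\log\ell$ levels.
\item Spreadness is consumed only in the case $c\mu>|A'|$. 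There it gives $|A\setminus\Lambda|\ge c\mu\ell/|A'|>\ell$, so Lemma~\ref{lem:gen-rem-std} applies, and its bound $2\ell$ is affordable because $\ell<c\mu\ell/|A'|$.
\end{itemize}
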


Compared with Lemma~\ref{lem:popular-sum}, Lemma~\ref{lem:all-lattice-pts} additionally requires $A$ to be $\widetilde{O}(1)$-dense and $\widetilde{O}(1)$-spread. This is to ensure that Lemma~\ref{lem:gen-rem-dense} is applicable to $A$. Basically, when $A$ is $\widetilde{O}(1)$-dense, Inequality~\eqref{eq:gen-rem-dense-0} will hold.  When $A$ is $\widetilde{O}(1)$-spread, $A$ will have many vectors not in $\Lambda$ for any $\Lambda \subsetneq Z$.

\begin{lemma}\label{lem:all-lattice-pts}
    Let $A \subseteq [N]^d$ be a set of vectors that is $\rho$-dense, $\gamma$-spread, and $\delta$-nondegenerate. Assume that $N$ is sufficiently large and that
    \begin{equation}\label{eq:all-lattice-pts-0}
        \rho \geq (d\log N)^{4d^2}\quad \text{and} \quad \rho\gamma \geq (d\log N)^{4d^2} \quad \text{and} \quad \rho\delta \geq (d\log N)^{13d^2}.
    \end{equation} 
    Then
    \[
        \mathcal{S}(A) \supseteq Z \cap (\vec{v}_0 +  \{z_1\vec{e}_1 + \cdots + z_d\vec{e}_d : 0 \leq z_j \leq \frac{\rho\delta N}{(d\log N)^{7d^2}}, z_j\in \mathbb{Z}\}),
    \]
    where $Z := \mathcal{L}(A)$ and $\vec{v}_0 \in Z$.
\end{lemma}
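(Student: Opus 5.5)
We will combine the long progression from Lemma~\ref{lem:popular-sum} with a small set of remainder generators from Lemma~\ref{lem:gen-rem-dense}, and then use Lemma~\ref{lem:inradius} to turn the progression in the basis $\vec{b}_1,\ldots,\vec{b}_d$ into a box in the standard basis. The first step is to apply Lemma~\ref{lem:popular-sum} to $A$; this is legitimate because $\delta\tau\ge 2d$ follows from~\eqref{eq:all-lattice-pts-0} and $\rho\tau\le |A|\le N^d$ (together these make $\tau$ large). It yields linearly independent $\vec{b}_1,\ldots,\vec{b}_d$ and disjoint subsets $A_{j,k}$ with $\sigma(A_{j,k})=\vec{b}_j$ and $m\ge \delta\tau/(d\log N)^{4d}$. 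Let $\Gamma=\mathcal{L}(\vec{b}_1,\ldots,\vec{b}_d)$ and $\ell=\det(\Gamma)/\det(Z)$; by (iii), $n/2^{3d^2}\le \ell\le (d\log N)^{3d^2}\tau^2$.

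The second step is to split off the remainder generators. We reserve half of the $m$ subsets for each $j$; write $B$ for their union, so $|B|\le d m (d\log N)^{2d}$. We then apply Lemma~\ref{lem:gen-rem-dense} to $A\setminus B$, with $A'$ taken to be the set from Lemma~\ref{lem:popular-sum}(ii) minus $B$. Since $A'$ lies in the box $\{\sum x_j\vec{b}_j:|x_j|\le 2^{d-1}\}$, we have $\mu(A',\Gamma)\le(2^d+1)^d$ and $|A'|\ge n/2-|B|$.

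Some care is needed here. The sets $A_{j,k}$ may consume far too many vectors: $dm(d\log N)^{2d}$ could be comparable to $n$. To avoid this, we only take $m' := \min(m,\ \delta\tau/(d\log N)^{4d})$ subsets per direction. This bounds $|B|$ by $\delta\tau/(d\log N)^{2d}\ll n$, and for the lattice $Z\cap$(generated points) we also need $|A\setminus B\setminus\Lambda|$ to stay large.

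With these choices, condition~\eqref{eq:gen-rem-dense-0} reads $32\mu\ell(\log\mu\ell)^{3d}\le |A'|^2$. It follows from $\ell\le(d\log N)^{3d^2}\tau^2$ and $|A'|\ge\rho\tau/4$, using $\rho\ge(d\log N)^{4d^2}$. The spreadness hypothesis needed by Lemma~\ref{lem:gen-rem-dense} is $|(A\setminus B)\setminus\Lambda|\ge c\mu\ell/|A'|$, whose right side is at most about $(d\log N)^{3d^2}\tau/\rho$ times constants. The left side is at least $\gamma\tau-|B|$. So we need $\rho\gamma\ge(d\log N)^{4d^2}$, and we also need $|B|$ small relative to $\gamma\tau$. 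The second requirement forces $m'$ to be chosen relative to $\gamma\tau$ as well, and I expect the constants to work with the given exponents. The lemma then gives $R\subseteq A\setminus B$ with $\mathcal{S}_\Gamma(R)=Z\bmod\Gamma$ and $|R|\le \widetilde O(\tau/\rho)$.

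The third step is the geometric assembly. Every point of $Z$ has the form $\vec{r}+\vec{g}$ with $\vec{r}\in\mathcal{S}(R)$ and $\vec{g}\in\Gamma$. Each $\vec{r}\in\mathcal{S}(R)$ has norm at most $|R|\sqrt d N$, and the progression $\{\sum z_j\vec{b}_j: 0\le z_j\le m'\}\subseteq\mathcal{S}(B)$. Taking the center $\vec{c}=\sum (m'/2)\vec{b}_j$, we get that $\mathcal{S}(R\cup B)$ contains every point of $Z$ in $\vec{c}+\{\sum x_j\vec{b}_j:|x_j|\le m'/2-1\}-\vec{w}$, provided the displacement $\vec{w}$ caused by the $\vec{r}$'s fits: we need each $\vec{r}$ to lie in $\{\sum x_j \vec{b}_j:|x_j|\le m'/4\}$. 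Lemma~\ref{lem:inradius} (applied after scaling by $(d\log N)^{2d}$) shows that this box contains a cube of side about $m'\det(\Gamma)/(d^d(d\log N)^{2d^2}N^{d-1})\approx \rho\delta N/(d\log N)^{7d^2}$, by the same computation as in Corollary~\ref{coro:large-inradius}. The condition $|R|\sqrt d N\ll$ this side length becomes $\tau/\rho\ll \rho\delta N/\mathrm{polylog}$, i.e.\ roughly $\rho\delta\ge(d\log N)^{13d^2}$. The cube of side $\rho\delta N/(d\log N)^{7d^2}$ then lies inside the shrunken parallelepiped, and we take $\vec{v}_0\in Z$ to be its corner, noting that $Z$ is invariant under translation by elements of $Z$. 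Finally, the remaining elements of $A$ are all included with $x=0$, which is allowed because $\mathcal{S}(A)\supseteq\mathcal{S}(R\cup B)$.

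The main obstacle is the second step: balancing $|B|$ so that it is large enough to give the required side length of the cube, yet small enough to preserve both the density of $A'$ and the spreadness needed for Lemma~\ref{lem:gen-rem-dense}. The inequalities in~\eqref{eq:all-lattice-pts-0} appear to be tuned exactly for this.
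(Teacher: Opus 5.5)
There is a genuine gap, and it sits exactly where you write ``I expect the constants to work.'' Because you reserve $B$ \emph{before} generating $R$, Lemma~\ref{lem:gen-rem-dense} has to be applied to $A\setminus B$. The only lower bound you have on $|(A\setminus B)\setminus\Lambda|$ is $\gamma\tau-|B|$. But $|B|\ge dm'$, and to certify a cube of side $\rho\delta N/(d\log N)^{7d^2}$ inside a $\vec{b}$-box of size $m'$ you need $m'$ of order $\delta\tau$ up to polylogarithmic factors; your own choice is $m'=\delta\tau/(d\log N)^{4d}$. The hypotheses do not tie $\gamma$ to $\delta$. Take $Z=\mathbb{Z}^d$ and $A$ well spread in every direction, except that all but $\gamma\tau$ of its vectors have even first coordinate. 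Then $\delta$ can be close to $\rho$ while $\gamma=(d\log N)^{4d^2}/\rho$, so $|B|$ can exceed $\gamma\tau$ by a large factor. Lemma~\ref{lem:popular-sum} gives no control over which vectors the petals $A_{j,k}$ use, since they are built from the vectors with the longest projections. So you cannot exclude that $B$ contains all of $A\setminus\Lambda$ for some $\Lambda$ with $\Gamma\subseteq\Lambda\subsetneq Z$. In that case the hypothesis of Lemma~\ref{lem:gen-rem-dense} fails, and $\mathcal{S}(R)+\Gamma\subseteq\Lambda$ misses classes of $Z\bmod\Gamma$. Shrinking $m'$ so that $|B|<\gamma\tau$ does not help either. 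Every $\vec{b}_j$ has length $\widetilde O(N)$, so the cube you can then certify has side only $\widetilde O(\gamma\tau N)$. That is far below $\rho\delta N/(d\log N)^{7d^2}$ when, say, $\rho=\delta=\sqrt{\tau}$.

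The missing idea is to reverse the order. Apply Lemma~\ref{lem:gen-rem-dense} to all of $A$, so that $\gamma$-spreadness is used unchanged, with $A'$ the set of Lemma~\ref{lem:popular-sum}(ii). This gives $R$ with $|R|\le\frac{\tau}{8\rho}(d\log N)^{9d^2}$. Only afterwards resolve the overlap with $G=\bigcup_{j,k}A_{j,k}$. The petals are disjoint, so $R$ meets at most $|R|$ of them. Moreover $m\ge\frac{\delta\tau}{(d\log N)^{4d}}\ge\frac{\tau}{\rho}(d\log N)^{9d^2}\ge 8|R|$ by $\rho\delta\ge(d\log N)^{13d^2}$. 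Hence for each $j$ at least $\frac{3m}{4}$ petals avoid $R$, and $\mathcal{S}(G\setminus R)\supseteq\{\sum_j z_j\vec{b}_j: 0\le z_j\le\frac{3m}{4},\ z_j\in\mathbb{Z}\}$ with $G\setminus R$ disjoint from $R$. The full $m$, and with it the dependence on $\delta$, is retained.

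Your geometric step also needs repair. You bound $\|\vec{r}\|$ via $|R|=\widetilde O(\tau/\rho)$, which uses the upper bound $\ell\le(d\log N)^{3d^2}\tau^2$. But you size the inscribed cube using the lower bound $\ell\ge\rho\tau/2^{3d^2}$. These differ by a factor of about $\tau/\rho$, and the resulting condition is really $\tau\ll\rho^2\delta/\mathrm{polylog}(N)$, which is not implied; your displayed inequality has a stray factor $N$ on the right. Keep $\ell$ symbolic instead. Then $|R|\lesssim\frac{\ell}{\rho\tau}(d\log N)^{6d^2}$, and the factor $N^{d-1}/\det(\Gamma)=N^{d-1}/(\ell\det(Z))$ from Lemma~\ref{lem:inradius} cancels the $\ell$. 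This gives $\mathcal{P}(R)\subseteq\{\sum_j x_j\vec{b}_j:|x_j|\le\frac{\tau}{8\rho}(d\log N)^{9d^2}\}$, as in Claim~\ref{clm:all-latt-pts-1}. So every $\vec{r}\in\mathcal{S}(R)\subseteq 2\mathcal{P}(R)$ has $\vec{b}$-coordinates at most $m/4$ in absolute value.

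Finally, $\delta\tau\ge 2d$ does not follow from $|A|\le N^d$, which bounds $\tau$ from above. Use $|A|\le|Z\cap[N]^d|\le 2^d\tau^2$, i.e.\ $\rho\le 2^d\tau$.
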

\begin{proof}
    Let $\tau: = \sqrt{N^d/\det(Z)}$. The hypercube $[N]^d$ can contain at most $2^d\tau^2$ vectors from $Z$. Therefore,
    \(
        \rho\tau \leq 2^d\tau^2,
    \)
    or equivalently, 
    \begin{equation}\label{eq:all-lattice-pts-rou-tau}
        \rho \leq 2^d\tau.
    \end{equation}
    Since $\rho\delta \geq (d\log N)^{12d^2}$ and $N$ is sufficiently large, we have
    \begin{equation}\label{eq:all-lattice-pts-bound-gamma-tau}
        \delta\tau \geq \frac{\rho\delta}{2^d} \geq \frac{(d\log N)^{12d^2}}{2^d} \geq (d\log N)^{11d^2}.
    \end{equation}
    So Lemma~\ref{lem:popular-sum} is applicable to $A$. Let $\vec{b}_1, \ldots, \vec{b}_d$ be the $d$ linearly independent vectors given by Lemma~\ref{lem:popular-sum}. We have that $\vec{b}_j \in [(d\log N)^{2d}N]^d$ for all $j$.
    Let $\Gamma := \mathcal{L}(\vec{b}_1, \ldots, \vec{b}_d)$. Clearly, $\Gamma \subseteq Z$ since each $\vec{b}_j$ is a subset sum of $A$. Let 
    \begin{equation}\label{eq:all-lattice-pts-4}
        A' = A \cap \{x_1\vec{b}_1 + \cdots + x_d\vec{b}_d: |x_j| \leq 2^{d-1}\}.
    \end{equation}
    Lemma~\ref{lem:popular-sum}(ii) guarantees that 
    \begin{equation}\label{eq:all-lattice-pts-bound-mu}
        |A'| \geq \frac{\rho\tau}{2}\quad\text{and}\quad \mu: = \mu(A', \Gamma) \leq (2^d + 1)^d.
    \end{equation}
    Let $\ell := \det(\Gamma)/\det(Z)$. Lemma~\ref{lem:popular-sum}(iii) guarantees that
    \begin{equation}\label{eq:all-lattice-pts-bound-ell}
        \frac{\rho\tau}{2^{3d^2}} \leq \ell \leq (d\log N)^{3d^2}\tau^2. 
    \end{equation}

    \begin{claim}\label{clm:all-latt-pts-1}
        $A$ contains a subset $R$ of at most $\frac{\tau}{8\rho}(d\log N)^{9d^2}$ vectors such that $\mathcal{S}_{\Gamma}(R) = Z \bmod \Gamma$ and
        \begin{equation}\label{eq:all-lattice-pts-3}
            \mathcal{P}(R) \subseteq \{x_1\vec{b}_1 + \cdots + x_d\vec{b}_d: |x_j| \leq \frac{\tau}{8\rho}(d\log N)^{9d^2}\}.
        \end{equation}
    \end{claim}
    \begin{proof}
        We shall apply Lemma~\ref{lem:gen-rem-dense} to $A$, $A'$ and $\Gamma$. Before that, we need to show that Lemma~\ref{lem:gen-rem-dense} is applicable. Let $c$ be the constant in Lemma~\ref{lem:gen-rem-dense}. Recall that $N$ is sufficiently large. In view of~\eqref{eq:all-lattice-pts-bound-mu} and~\eqref{eq:all-lattice-pts-bound-ell}, one can verify that
        \begin{equation}\label{eq:all-lattice-pts-2}
            c\mu\ell (\log (\mu\ell))^{4d} \leq \ell(d\log N)^{6d^2}.
        \end{equation}
        Then 
        \[
            |A'|^2 \geq  \frac{\rho^2\tau^2}{4} \geq \frac{(d\log N)^{8d^2}\tau^2}{4} \geq c\mu\ell (\log (\mu\ell))^{3d}. 
        \]
        The second inequality is due to~\eqref{eq:all-lattice-pts-0}. Since $A$ is $\gamma$-spread, for any lattice $\Lambda \subsetneq Z$, we have
        \[
            |A \setminus \Lambda| \geq \gamma\tau \geq \frac{(d\log N)^{4d^2}\tau}{\rho} 
             \geq \frac{(d\log N)^{4d^2}\tau^2}{n} \geq \frac{c\mu\ell}{n}.
        \]
        The second inequality is due to~\eqref{eq:all-lattice-pts-0}, and the last is due to~\eqref{eq:all-lattice-pts-bound-ell}.  

        Therefore, Lemma~\ref{lem:gen-rem-dense} is applicable, and $A$ contains a subset $R$ with $\mathcal{S}_{\Gamma}(R) = Z \bmod \Gamma$ and
        \[
            |R| \leq \frac{c\mu\ell (\log \mu\ell)^{4d}}{|A'|} \leq \frac{2\ell}{\rho\tau}(d\log N)^{6d^2} \leq \frac{\tau}{8\rho}(d\log N)^{9d^2}.
        \]
        The second inequality follows from $|A'| \geq \frac{\rho\tau}{2}$ and~\eqref{eq:all-lattice-pts-2}, and the last follows from~\eqref{eq:all-lattice-pts-bound-ell}. Since $\vec{a} \in [N]^d$ for every $\vec{a} \in R$, we have
        \[
            \mathcal{P}(R) \subseteq \{x_1\vec{e}_1 + \cdots + x_d\vec{e}_d: |x_j| \leq \frac{\ell N}{\rho\tau}(d\log N)^{6d^2}\}.
        \]
        Recall that $\vec{b}_j \in [(d\log N)^{2d}N]^d$ for all $j$. By Lemma~\ref{lem:inradius}, we have that
        \[
            \mathcal{P}(R) \subseteq \{x_1\vec{b}_1 + \cdots + x_d\vec{b}_d: |x_j| \leq L\},
        \]
        where 
        \[
            L \leq \frac{\ell N}{\rho\tau}(d\log N)^{6d^2} \cdot \frac{d^d (d\log N)^{2d^2}N^{d-1}}{\det(\Gamma)} = \frac{d^d(d\log N)^{8d^2}}{\rho} \cdot \frac{\ell N^d}{\tau \det(\Gamma)} \leq \frac{ (d\log N)^{9d^2}}{8\rho} \cdot \tau.
        \]
        The last inequality is due to the definition of $\ell$ and $\tau$.
    \end{proof}

    Let $m$ be defined as in Lemma~\ref{lem:popular-sum}(i). We have
    \begin{equation}\label{eq:all-lattice-pts-bound-m}
        m \geq \frac{\delta\tau}{(d\log N)^{4d}} \geq \frac{\tau}{\rho}(d\log N)^{9d^2} \geq 4.
    \end{equation}
    The second inequality is due to~\eqref{eq:all-lattice-pts-0} and the last is due to~\eqref{eq:all-lattice-pts-rou-tau}.
    Lemma~\ref{lem:popular-sum}(i) also gives $md$ disjoint subsets $\{A_{j,k} : 1 \leq j \leq d, 1\leq k \leq m\}$ of $A$. Let $G := \bigcup_{j,k} A_{j,k}$ be the union of these subsets.

    \begin{claim}\label{clm:all-latt-pts-2}
        Let $\vec{b}_0 := \frac{m}{4}(\vec{b}_1 + \cdots + \vec{b}_d)$. We have
        \[
            \mathcal{S}(G \cup R) \supseteq Z \cap (\vec{b}_0 + \{x_1\vec{b}_1 + \cdots + x_d\vec{b}_d: 0\leq x_j \leq \frac{m}{4}\}).
        \]
    \end{claim}
    \begin{proof}
        Take an arbitrary $\vec{v}$ in $Z \cap (\vec{b}_0 + \{x_1\vec{b}_1 + \cdots + x_d\vec{b}_d: 0\leq x_j \leq \frac{m}{4}\})$. We shall show that $\vec{v} \in \mathcal{S}(G \cup R)$. Recall that $\mathcal{S}_{\Gamma}(R) = Z \bmod \Gamma$, where $\Gamma := \mathcal{L}(\vec{b}_1, \ldots, \vec{b}_d)$. So there must be a vector $\vec{u} \in \mathcal{S}(R)$ such that $\vec{v} - \vec{u} \in \Gamma$. To show that $\vec{v} \in \mathcal{S}(G \cup R)$, it suffices to show that $\vec{v} - \vec{u} \in \mathcal{S}(G \setminus R)$.

        Consider $G \setminus R$. Removing $R$ from $G$ affects at most $|R|$ disjoint subsets stated in Lemma~\ref{lem:popular-sum}(i); for each $\vec{b}_j$, there are still $m - |R|$ disjoint subsets of $G\setminus R$ whose sum is $\vec{b}_j$. In view of~\eqref{eq:all-lattice-pts-bound-m}, we have $m > 4|R|$. Therefore, we have that
        \[
            \mathcal{S}(G \setminus R) \supseteq \{z_1\vec{b}_1 + \cdots + z_d\vec{b}_d: 0\leq z_j \leq \frac{3m}{4}, z_j \in \mathbb{Z}\}.
        \]
        Since we already have $\vec{v} - \vec{u} \in \Gamma$, to show that $\vec{v} - \vec{u} \in \mathcal{S}(A \setminus R)$, it suffices to show that
        \begin{equation}\label{eq:all-lattice-pts-5}
            \vec{v} - \vec{u} \in \{x_1\vec{b}_1 + \cdots + x_d\vec{b}_d: 0\leq x_j \leq \frac{3m}{4}\}.
        \end{equation}

        Since $\vec{u} \in \mathcal{S}(R)$ and $\mathcal{S}(R) \subseteq 2\mathcal{P}(R)$, in view of~\eqref{eq:all-lattice-pts-3} and~\eqref{eq:all-lattice-pts-bound-m}, we have that
        \[
            \vec{u} \in \{x_1\vec{b}_1 + \cdots + x_d\vec{b}_d:  |x_j| \leq \frac{m}{4}\}.
        \]
        Note that 
        \[
            \vec{v} \in \{x_1\vec{b}_1 + \cdots + x_d\vec{b}_d:  \frac{m}{4} \leq x_j \leq \frac{m}{2}\}.
        \]
        Therefore,
        \[
            \vec{v} - \vec{u} \in \{x_1\vec{b}_1 + \cdots + x_d\vec{b}_d: 0 \leq x_j \leq \frac{3m}{4}\}. \qedhere
        \]
    \end{proof}

    In view of Claim~\ref{clm:all-latt-pts-2}, to prove the theorem, it suffices to show that 
    \[
        P := \{x_1\vec{b}_1 + \cdots + x_d\vec{b}_d: 0\leq x_j \leq \frac{m}{4}\}
    \]
    contains a hypercube of side length at least $\frac{\rho\delta N}{(d\log N)^{7d^2}} + N$. This can be shown using exactly the same argument as that in Corollary~\ref{coro:large-inradius}. We omit the details.
\end{proof}

\subsection{Long Progressions with Axis-Parallel Basis}\label{subsec:ap-with-std-basis}
We prove the main theorem of this Section. We first show that all the lattice points of $Z$ in a large hypercube must contain a long progression with basis $\det(Z) \vec{e}_1, \ldots, \det(Z) \vec{e}_d$. 

\begin{lemma}\label{lem:standard-basis-in-large-area}
    Let $Z$ be a full-dimensional lattice. Let 
    \[
        P := \{x_1\vec{e}_1 + \cdots + x_d\vec{e}_d : 0 \leq x_j \leq N_j\det(Z)\},
    \]
    where $N_1, \ldots, N_d$ are positive integers.
    Then 
    \[
        P \cap Z \supseteq \{z_1\det(Z)\vec{e}_1 + \cdots + z_d\det(Z)\vec{e}_d : 0 \leq z_j \leq N_j, z_j \in \mathbb{Z}\}.
    \]
\end{lemma}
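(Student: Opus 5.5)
The claim reduces to a single algebraic fact: $\det(Z)\vec{e}_j \in Z$ for every $j$, i.e.\ $\det(Z)\mathbb{Z}^d \subseteq Z$. Granting this, closure of $Z$ under integer combinations gives $z_1\det(Z)\vec{e}_1 + \cdots + z_d\det(Z)\vec{e}_d \in Z$ for all integers $z_j$. The containment in $P$ is then immediate from the coordinates: the $j$-th coordinate of this vector is $z_j\det(Z)$, which lies in $[0, N_j\det(Z)]$ whenever $0 \leq z_j \leq N_j$.

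To prove $\det(Z)\mathbb{Z}^d \subseteq Z$, I would use the group-theoretic description from the section on remainders. Since $Z$ is full-dimensional, $\mathbb{Z}^d_Z$ is a finite additive group isomorphic to $\mathbb{Z}^d/Z$, of order $|\mathbb{Z}^d \bmod Z| = \det(Z)$. By Lagrange's theorem, the order of the class of any $\vec{v}\in\mathbb{Z}^d$ divides $\det(Z)$, so $\det(Z)\vec{v} \equiv \vec{0} \pmod Z$, that is, $\det(Z)\vec{v}\in Z$. Taking $\vec{v} = \vec{e}_j$ finishes the argument.

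As an alternative route that avoids the quotient group, let $B$ be a basis of $Z$. Cramer's rule gives $B\,\mathrm{adj}(B) = \det(B) I$, and $\mathrm{adj}(B)$ is an integer matrix. Hence $\pm\det(Z)\vec{e}_j = B\,(\pm\mathrm{adj}(B)\vec{e}_j) \in \mathcal{L}(B) = Z$.

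None of these steps is a real obstacle. The only points needing care are the sign of $\det(B)$, handled by the symmetry $-Z = Z$, and the observation that the progression is anchored at the origin, so no translation vector is involved.
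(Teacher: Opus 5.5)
Your proof is correct. The adjugate argument you give as an alternative is exactly the paper's proof; the paper fixes the sign of $\det(B)$ by negating a column rather than using $-Z = Z$, which makes no difference.

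Your primary argument is a genuinely different route. You pass to the finite group $\mathbb{Z}^d/Z \cong \mathbb{Z}^d_Z$ of order $\det(Z)$ and apply Lagrange's theorem to get $\det(Z)\vec{v}\in Z$ for every $\vec{v}\in\mathbb{Z}^d$.

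The two routes trade off as follows:
\begin{itemize}
\item The adjugate route is self-contained linear algebra. It needs only that a basis matrix is integral and that $B\,\mathrm{adj}(B)=\det(B)I$.
\item The Lagrange route depends on the index fact $|\mathbb{Z}^d \bmod Z|=\det(Z)$. The paper states this as a Fact, and it rests on the Hermite-normal-form description of a complete residue system.
\item In exchange, the group-theoretic view is more conceptual and proves slightly more. Any multiple of the \emph{exponent} of $\mathbb{Z}^d/Z$ already satisfies $s\,\mathbb{Z}^d\subseteq Z$, so the lemma holds with $\det(Z)$ replaced by that exponent. The exponent divides $\det(Z)$ and can be much smaller; for $Z=m\mathbb{Z}^d$ it is $m$, while $\det(Z)=m^d$.
\end{itemize}

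You also spell out a step the paper leaves implicit. The whole sum $\sum_j z_j\det(Z)\vec{e}_j$, not just each axis multiple, lies in $P\cap Z$: membership in $Z$ follows from closure under integer combinations, and membership in $P$ from the coordinatewise bound.
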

\begin{proof}
    Let $B = (\vec{b}_1, \ldots, \vec{b}_d)$ be a basis of $Z$. Without loss of generality, we assume that $\det(B) > 0$, since otherwise, we can pick an arbitrary column $\vec{b}_j$ and set $\vec{b}_j:= -\vec{b}_j$. Let $B^*$ be the adjugate matrix of $B$; then $B^*$ is also an integer matrix, and $BB^*=\det(B) I$, where $I$ is the identity matrix. Let $\vec{q}_j$ be the $j$-th column of $B^*$; then
    \[
        B\vec{q}_j =\det(B)\vec{e}_j.
    \]
    Note that $B\vec{q}_j = q_{1j}\vec{b}_1 + \cdots + q_{dj}\vec{b}_d$. Therefore, $\det(B)\vec{e}_j \in Z$, which implies $z_j\det(B)\vec{e}_j \in P \cap Z$ for any integer $z_j$ with $0\leq z_j \leq N_j$. Since this holds for all $j$, we have 
    \[
        P \cap Z \supseteq \{z_1\det(Z)\vec{e}_1 + \cdots + z_d\det(Z)\vec{e}_d : 0 \leq z_j \leq N_j\}. \qedhere
    \]
\end{proof}

The main theorem of this section follows directly from Lemmas~\ref{lem:all-lattice-pts} and~\ref{lem:standard-basis-in-large-area}.
\begin{theorem}\label{thm:ap-standard-basis}
    Let $A \subseteq [N]^d$ be a set of vectors that is $\rho$-dense, $\gamma$-spread, and $\delta$-nondegenerate. Assume that $N$ is sufficiently large and that
    \[
        \rho \geq (d\log N)^{4d^2}\quad \text{and} \quad \rho\gamma \geq (d\log N)^{4d^2} \quad \text{and} \quad \rho\delta \geq (d\log N)^{13d^2}\det(Z).
    \] 
    Then
    \[
        \mathcal{S}(A) \supseteq \vec{v}_0 +  \{z_1\cdot \det(Z)\vec{e}_1 + \cdots + z_d\cdot \det(Z)\vec{e}_d : 0 \leq z_j \leq N, z_j\in \mathbb{Z}\},
    \]
    where $Z := \mathcal{L}(A)$ and $\vec{v}_0 \in Z$.
\end{theorem}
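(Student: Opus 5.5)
We combine Lemma~\ref{lem:all-lattice-pts} with Lemma~\ref{lem:standard-basis-in-large-area}. First we check the hypotheses of Lemma~\ref{lem:all-lattice-pts}. The conditions $\rho \geq (d\log N)^{4d^2}$ and $\rho\gamma \geq (d\log N)^{4d^2}$ are assumed verbatim. The third assumption gives $\rho\delta \geq (d\log N)^{13d^2}\det(Z) \geq (d\log N)^{13d^2}$, since $\det(Z)\geq 1$ for a full-dimensional integer lattice. Lemma~\ref{lem:all-lattice-pts} then yields some $\vec{v}_0 \in Z$ with
\[
    \mathcal{S}(A) \supseteq Z \cap \Bigl(\vec{v}_0 + \{z_1\vec{e}_1 + \cdots + z_d\vec{e}_d : 0 \leq z_j \leq L, z_j \in \mathbb{Z}\}\Bigr), \qquad L := \frac{\rho\delta N}{(d\log N)^{7d^2}}.
\]

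Next we compare the side length. Using the stronger third assumption,
\[
    L \geq \frac{(d\log N)^{13d^2}\det(Z) N}{(d\log N)^{7d^2}} = (d\log N)^{6d^2}\det(Z)N \geq N\det(Z).
\]
Therefore the integer cube in the display contains the real hypercube $\vec{v}_0 + P$, where $P := \{x_1\vec{e}_1+\cdots+x_d\vec{e}_d : 0\leq x_j\leq N\det(Z)\}$, intersected with $Z$. This is a statement about integer points: every integer point of $\vec{v}_0+P$ has integer coordinates in $[0, L]$ after subtracting $\vec{v}_0$, because $\vec{v}_0\in\mathbb{Z}^d$. Hence $\mathcal{S}(A) \supseteq Z \cap (\vec{v}_0 + P)$.

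Finally, we apply Lemma~\ref{lem:standard-basis-in-large-area} with $N_1=\cdots=N_d=N$. It gives $P\cap Z \supseteq \{z_1\det(Z)\vec{e}_1+\cdots+z_d\det(Z)\vec{e}_d : 0\le z_j\le N, z_j \in \mathbb{Z}\}$. Since $\vec{v}_0 \in Z$, translating by $\vec{v}_0$ preserves membership in $Z$, so $Z\cap(\vec{v}_0+P) = \vec{v}_0 + (P\cap Z)$. Combining the two inclusions gives the claimed progression in $\mathcal{S}(A)$.

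No step here is a real obstacle. The only point needing care is the bookkeeping that converts the extra factor $\det(Z)$ in the hypothesis on $\rho\delta$ into the extra factor $\det(Z)$ in the side length, together with the translation-invariance step, which relies on $\vec{v}_0 \in Z$.
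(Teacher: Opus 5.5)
Your proposal is correct and follows the paper's route: the paper says only that the theorem follows directly from Lemma~\ref{lem:all-lattice-pts} and Lemma~\ref{lem:standard-basis-in-large-area}, which is exactly the combination you use. Your bookkeeping supplies the details the paper leaves implicit: $\det(Z)\ge 1$ gives the lemma's hypothesis on $\rho\delta$, the extra factor $\det(Z)$ gives side length $L\ge N\det(Z)$, and $\vec{v}_0\in Z$ gives $Z\cap(\vec{v}_0+P)=\vec{v}_0+(P\cap Z)$.
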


\section{A John-Type Theorem for Dense Subset Sum}\label{sec:john-type}
Again, we consider only the case where $A$ is a subset of $[N]^d$, and to ease the understanding, one may also assume that $Z = \mathbb{Z}^d$. 

Recall Definition~\ref{def:dense-and-spread}. We shall show that when $A$ is $\widetilde{O}(1)$-dense, $\widetilde{O}(1)$-spread, and $\widetilde{O}(1)$-nondegenerate, then $\mathcal{S}(A)$ contains almost all the lattice points of $\mathcal{L}(A)$ inside the zonotope. That is, 
\[
    \mathcal{S}(A) \supseteq Z \cap \left((1-o(1))\mathcal{P}(A) + \frac{\sigma(A)}{2}\right).
\]

We first show that for any lattice point in the above zonotope, $\mathcal{S}(A)$ must contain a point that is close to it. This is a direct consequence of integer programming theory.

\begin{lemma}\label{lem:proximity}
    Let $A \subseteq [N]^d$ be a set of vectors. For each $\vec{t} \in \mathcal{P}(A) + \frac{\sigma(A)}{2}$, there exists $\vec{t}' \in \mathcal{S}(A)$ such that
    \[
        \vec{t} - \vec{t}' \in \{x_1\vec{e}_1 + \cdots + x_d\vec{e}_d: 0 \leq x_j \leq dN\}.
    \]
\end{lemma}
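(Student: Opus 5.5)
The plan is the standard LP-rounding (Cook et al.\ style) argument, specialized to the $d$-row system $\sum_j x_j\vec{a}_j=\vec{t}$, $0\le x_j\le 1$. Fix $\vec{t}\in\mathcal{P}(A)+\frac{\sigma(A)}{2}$. Then $\vec{t}=\sum_j x_j\vec{a}_j$ for some $\vec{x}\in[0,1]^n$; this is just a rewriting of the definition of $\mathcal{P}(A)$ after the shift by $\sigma(A)/2$. So the feasible region $\{\vec{x}\in[0,1]^n : \sum_j x_j\vec{a}_j=\vec{t}\}$ is a nonempty polytope.

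First, I take $\vec{x}^*$ to be a vertex of this polytope. A vertex is determined by $n$ linearly independent tight constraints. At most $d$ of these come from the equality rows, so at least $n-d$ of the constraints $0\le x_j\le 1$ are tight. Hence at most $d$ coordinates of $\vec{x}^*$ are fractional. (One can argue this directly instead: if more than $d$ coordinates were fractional, the corresponding vectors $\vec{a}_j$ would be linearly dependent in $\mathbb{R}^d$. Moving along a dependency would keep $\vec{t}$ fixed, which contradicts $\vec{x}^*$ being a vertex.)

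Second, I round down. Set $z_j=\lfloor x^*_j\rfloor\in\{0,1\}$ and $\vec{t}':=\sum_j z_j\vec{a}_j\in\mathcal{S}(A)$. Then
\[
    \vec{t}-\vec{t}'=\sum_{j\in F}(x^*_j-z_j)\vec{a}_j,
\]
where $F$ is the set of fractional indices, so $|F|\le d$. Each coefficient lies in $[0,1)$ and each $\vec{a}_j\in[N]^d$ has entries in $[0,N-1]$. Therefore every coordinate of $\vec{t}-\vec{t}'$ lies in $[0,d(N-1)]\subseteq[0,dN]$, which is exactly the required containment.

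I do not expect a real obstacle. The only point needing care is the vertex/support argument: the polytope must be nonempty and bounded so that a vertex exists. Boundedness comes from the box $[0,1]^n$, and nonemptiness comes from the hypothesis on $\vec{t}$. The nonnegativity of the entries of the $\vec{a}_j$ is what makes rounding down (rather than to the nearest integer) yield a difference in the nonnegative orthant.
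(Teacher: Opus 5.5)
Your proposal is correct and follows essentially the paper's proof: take a vertex of $\{\vec{y}\in[0,1]^n : \sum_j y_j\vec{a}_j=\vec{t}\}$, observe that at most $d$ of its coordinates are fractional, and let $\vec{t}'$ be the sum of the vectors whose coordinate equals $1$. You also spell out two points the paper leaves implicit: why a vertex has at most $d$ fractional coordinates (the paper cites ``classical integer programming theory''), and that nonnegativity of the entries of the $\vec{a}_j$ is what puts $\vec{t}-\vec{t}'$ in the nonnegative box.
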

\begin{proof}
    Let $\vec{a}_1, \ldots, \vec{a}_n$ be the vectors in $A$. Take an arbitrary  $\vec{t} \in  \mathcal{P}(A) + \frac{\sigma(A)}{2}$. Consider the polytope
    \[
        K := \{(y_1, \ldots, y_n) : y_1\vec{a}_1 + \cdots + y_n\vec{a}_n = \vec{t}, 0 \leq y_j \leq 1 \}.
    \]
    $K$ has $n$ variables and $d + 2n$ constraints. Also $K$ is non-empty since $\vec{t} \in \mathcal{P}(A) + \frac{\sigma(A)}{2}$. Take an arbitrary vertex $(y_1, \ldots, y_n)$ of $K$. Classical integer programming theory implies that all but at most $d$ variables of $(y_1, \ldots, y_n)$ must be integral~\cite{Sch86}.  Without loss of generality, assume that only variables $y_1, \ldots, y_{d}$ may be non-integral. Let 
    \[
        \vec{t}' := y_{d+1} \vec{a}_{d+1} + \cdots + y_{n}\vec{a}_n.
    \]
    Since $y_{d+1}, \ldots, y_{n} \in \{0,1\}$, we have $\vec{t}' \in \mathcal{S}(A)$. Note that
    \(
        \vec{t} - \vec{t'} = y_1\vec{a}_1 + \cdots + y_{d}\vec{a}_{d}.  
    \)
    It is easy to verify that 
    \(
        \vec{t} - \vec{t}' \in \{x_1\vec{e}_1 + \cdots + x_d\vec{e}_d: 0 \leq x_i \leq dN\}.
    \)
\end{proof}

The above lemma implies that the ``holes'' between the elements of $\mathcal{S}(A)$ cannot be too large -- they can be covered by a hypercube of side length $dN$. Therefore, if we have a ``patch'' that contains all the lattice points inside a hypercube of side length $dN$, then we can cover the holes using the patch and obtain all the lattice points in the zonotope. 

The following lemma states that if we can use only a small subset $A^*$ of $A$ to generate the patch, then we only lose a small fraction of the  zonotope in which all the lattice points belong to $\mathcal{S}(A)$. By small, we mean $\mathcal{P}(A^*) \subseteq \varepsilon \mathcal{P}(A)$.

\begin{lemma}\label{lem:frac2int}
    Let $A \subseteq [N]^d$ be a set of vectors where $Z:= \mathcal{L}(A)$ is full-dimensional. Suppose that $A$ contains a subset $A^*$ such that $\mathcal{P}(A^*) \subseteq \varepsilon \mathcal{P}(A)$ for some $\varepsilon \leq 1/4$ and that for some vector $\vec{v}_0 \in Z$,
    \begin{equation}\label{eq:frac2int-1}
        \mathcal{S}(A^*) \supseteq Z \cap (\vec{v}_0 + \{z_1\vec{e}_1 + \cdots + z_d\vec{e}_d : z_j \in \mathbb{Z}, 0 \leq z_j \leq dN\}),
    \end{equation}
    then 
    \[
        \mathcal{S}(A)  \supseteq Z \cap \left((1-4\varepsilon)\mathcal{P}(A) + \frac{\sigma(A)}{2}\right).
    \]
\end{lemma}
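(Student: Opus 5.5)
Split $A = A^* \cup B$ with $B := A \setminus A^*$, so that $\mathcal{S}(A) = \mathcal{S}(A^*) + \mathcal{S}(B)$. Fix a target $\vec{t} \in Z \cap \big((1-4\varepsilon)\mathcal{P}(A) + \frac{\sigma(A)}{2}\big)$. We will write $\vec{t} = \vec{s} + \vec{u}$, where $\vec{s} \in \mathcal{S}(B)$ and $\vec{u}$ lies in the patch on the right-hand side of~\eqref{eq:frac2int-1}. Two facts make this possible. First, Lemma~\ref{lem:proximity} applied to $B$ leaves an error that sits in an axis-parallel box of side $dN$. Second, the patch is exactly a box of that side.

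Concretely, set $\vec{t}_B := \vec{t} - \vec{v}_0$. We need $\vec{t}_B \in \mathcal{P}(B) + \frac{\sigma(B)}{2}$. If this holds, Lemma~\ref{lem:proximity} (which applies to any subset of $[N]^d$) gives $\vec{s} \in \mathcal{S}(B)$ with $\vec{t}_B - \vec{s} \in \{x_1\vec{e}_1+\cdots+x_d\vec{e}_d : 0\le x_j\le dN\}$. Then $\vec{u} := \vec{t}-\vec{s}$ lies in $\vec{v}_0 + [0,dN]^d$. Moreover $\vec{u} \in Z$, because $\vec{t}\in Z$ and $\vec{s}\in\mathcal{S}(B)\subseteq Z$. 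Since $\vec{v}_0\in Z\subseteq\mathbb{Z}^d$, the vector $\vec{u}-\vec{v}_0$ is integral with coordinates in $[0,dN]$. Hence $\vec{u}\in\mathcal{S}(A^*)$ by~\eqref{eq:frac2int-1}, and $\vec{t} = \vec{s}+\vec{u}\in\mathcal{S}(A)$.

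It remains to verify the containment, which is the main convex-geometry step. Write $\vec{t} = \frac{\sigma(A)}{2} + \vec{p}$ with $\vec{p}\in(1-4\varepsilon)\mathcal{P}(A)$. Then
\[
\vec{t}_B - \frac{\sigma(B)}{2} = \vec{p} + \frac{\sigma(A^*)}{2} - \vec{v}_0 ,
\]
and the goal is to show this vector lies in $\mathcal{P}(B)$. Since $\mathcal{P}(A) = \mathcal{P}(A^*) + \mathcal{P}(B) \subseteq \varepsilon\mathcal{P}(A) + \mathcal{P}(B)$, Lemma~\ref{lem:vol-change}(iii) gives $(1-\varepsilon)\mathcal{P}(A)\subseteq\mathcal{P}(B)$. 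It therefore suffices to show
\[
\vec{p} + \frac{\sigma(A^*)}{2} - \vec{v}_0 \in (1-\varepsilon)\mathcal{P}(A).
\]
Lemma~\ref{lem:vol-change}(i) handles the sum, provided the small vector $\vec{w} := \frac{\sigma(A^*)}{2}-\vec{v}_0$ lies in $3\varepsilon\mathcal{P}(A)$.

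The obstacle is controlling $\vec{v}_0$, which is otherwise arbitrary. Here we use that the whole patch sits inside $\mathcal{S}(A^*)$. By~\eqref{eq:frac2int-1}, $\vec{v}_0\in\mathcal{S}(A^*)\subseteq\mathcal{P}(A^*)+\frac{\sigma(A^*)}{2}$. Hence $\vec{v}_0-\frac{\sigma(A^*)}{2}\in\mathcal{P}(A^*)\subseteq\varepsilon\mathcal{P}(A)$, and by symmetry $\vec{w}\in\varepsilon\mathcal{P}(A)$. This is better than the $3\varepsilon$ we needed. The extra slack covers the displacement of $\vec{u}$ within the box: $\vec{u}$ is itself in $\mathcal{S}(A^*)$, so its deviation from $\frac{\sigma(A^*)}{2}$ is also in $\varepsilon\mathcal{P}(A)$.

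If the naive accounting is off, the fallback is to run the same argument centering at $\frac{\sigma(A^*)}{2}$ instead of $\vec{v}_0$. One would apply Lemma~\ref{lem:proximity} to $\vec{t}-\frac{\sigma(A^*)}{2}$ shifted by a corner of the box. Every point of the patch lies in $\frac{\sigma(A^*)}{2}+\varepsilon\mathcal{P}(A)$, so the losses add to at most $2\varepsilon$ or $3\varepsilon$. That comfortably gives the $1-4\varepsilon$ factor, and $\varepsilon\le 1/4$ keeps all scalings nonnegative.
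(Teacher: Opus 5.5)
Your proof is correct and follows the paper's argument: peel off $A\setminus A^*$, apply Lemma~\ref{lem:proximity} to $\vec{t}-\vec{v}_0$, let the patch absorb the box error, and verify the needed containment via Lemma~\ref{lem:vol-change}(iii) and (i). Your joint bound $\frac{\sigma(A^*)}{2}-\vec{v}_0\in\varepsilon\mathcal{P}(A)$ is slightly sharper than the paper's $3\varepsilon$, which bounds $\vec{v}_0\in 2\varepsilon\mathcal{P}(A)$ and $\frac{\sigma(A^*)}{2}\in\varepsilon\mathcal{P}(A)$ separately; your closing remarks about slack for the displacement of $\vec{u}$ and the fallback are unnecessary, since the box error is absorbed entirely by the patch.
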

\begin{proof}
    We first show that
    \begin{equation}\label{eq:frac2int-2}
        \mathcal{S}(A) \supseteq Z \cap \left(\mathcal{P}(A\setminus A^*) + \vec{v}_0 - \frac{\sigma(A^*)}{2} + \frac{\sigma(A)}{2}\right).
    \end{equation}
    Take an arbitrary $\vec{t}$ from the set on the right-hand side of~\eqref{eq:frac2int-2}. We have 
    \[
        \vec{t} - \vec{v}_0 \in \mathcal{P}(A\setminus A^*) - \frac{\sigma(A^*)}{2} + \frac{\sigma(A)}{2} 
        = \mathcal{P}(A\setminus A^*) + \frac{\sigma(A \setminus A^*)}{2}.
    \]
    By Lemma~\ref{lem:proximity}, there exists $\vec{t}' \in \mathcal{S}(A\setminus A^*)$ such that
    \[
        \vec{t} - \vec{v}_0 - \vec{t'}\in  \{x_1\vec{e}_1 + \cdots + x_d\vec{e}_d: 0 \leq x_j \leq dN\},
    \]
    or in equivalent form, 
    \[
        \vec{t}  - \vec{t'} \in \vec{v}_0 + \{x_1\vec{e}_1 + \cdots + x_d\vec{e}_d: 0 \leq x_j \leq dN\}.
    \]
    Also note that $\vec{t} - \vec{t'} \in Z$. In view of~\eqref{eq:frac2int-1}, we have $\vec{t} - \vec{t}' \in \mathcal{S}(A^*)$. Therefore, $\vec{t} \in \mathcal{S}(A)$.

    In view of~\eqref{eq:frac2int-2}, to prove the lemma, it suffices to show that 
    \begin{equation}\label{eq:frac2int-3}
        (1-4\varepsilon)\mathcal{P}(A) \subseteq \mathcal{P}(A\setminus A^*) + \vec{v}_0 - \frac{\sigma(A^*)}{2}.
    \end{equation}
    Take an arbitrary $\vec{v} \in (1-4\varepsilon)\mathcal{P}(A)$. Note that $\vec{v}_0 \in \mathcal{S}(A^*) \subseteq 2\mathcal{P}(A^*) \subseteq 2\varepsilon \mathcal{P}(A)$ and that $\frac{\sigma(A^*)}{2} \in \mathcal{P}(A^*) \subseteq \varepsilon \mathcal{P}(A)$. By Lemma~\ref{lem:vol-change}(i),
    \begin{equation}\label{eq:frac2int-4}
        \vec{v} - \vec{v}_0 + \frac{\sigma(A^*)}{2} \in (1 - \varepsilon)\mathcal{P}(A).
    \end{equation}
    Recall that $\mathcal{P}(A^*) \subseteq \varepsilon\mathcal{P}(A)$. Also note that $\mathcal{P}(A \setminus A^*) + \mathcal{P}(A^*) = \mathcal{P}(A)$. By Lemma~\ref{lem:vol-change}(iii), 
    \[
        (1 - \varepsilon)\mathcal{P}(A) \subseteq \mathcal{P}(A \setminus A^*).
    \]
    In view of~\eqref{eq:frac2int-4}, 
    \[
        \vec{v} - \vec{v}_0 + \frac{\sigma(A^*)}{2} \in \mathcal{P}(A \setminus A^*),
    \]
    or, in equivalent form,
    \[
        \vec{v} \in \mathcal{P}(A \setminus A^*) + \vec{v}_0 - \frac{\sigma(A^*)}{2}.
    \]
    Therefore,~\eqref{eq:frac2int-3} holds.
\end{proof}

Now we are very close to the main result of this section. Lemma~\ref{lem:all-lattice-pts} already implies that we can generate a patch using vectors in $A$. It remains to show that only a small subset of $A$ is needed.  

The major part of the proof of Lemma~\ref{lem:small-region-small-set} is the same as that of Lemma~\ref{lem:all-lattice-pts} -- we apply Lemma~\ref{lem:popular-sum} to $A$ and obtain a set $G$ such that $\mathcal{S}(G)$ contains a long progression with basis $\vec{b}_1, \ldots, \vec{b}_d$, and then obtain a subset $R$ of $A$ that can generate all remainders $Z \bmod \Gamma$, where $\Gamma := \mathcal{L}(\vec{b}_1, \ldots, \vec{b}_d)$. Then it follows that $\mathcal{S}(G \cup R)$ contains all the lattice points in $Z$ inside a large hypercube. 

It is actually easy to bound $\mathcal{P}(R)$ in terms of $\mathcal{P}(A)$. The main challenge is $\mathcal{P}(G)$. To bound $\mathcal{P}(G)$, we use a probabilistic argument. Instead of including all the disjoint subsets $\{A_{jk}\}_{j,k}$ returned by Lemma~\ref{lem:popular-sum}, we will sample them with probability $p \approx (d\log N)^{12d^2}/(\rho\tau)$ and include only the sampled subsets in $G$. Then in expectation, $\mathcal{P}(G) \subseteq p\cdot \mathcal{P}(A)$. We will also show that $\mathcal{S}(G \cup R)$ still contains all the lattice points in $Z$ inside a large hypercube.

\begin{lemma}\label{lem:small-region-small-set}
    Let $A \subseteq [N]^d$ be a set of vectors that is $\rho$-dense, $\gamma$-spread, and $\delta$-nondegenerate. Assume that $N$ is sufficiently large and that
    \[
        \rho \geq (d\log N)^{4d^2}\quad \text{and} \quad \rho\gamma \geq (d\log N)^{4d^2} \quad \text{and} \quad \rho\delta \geq 5(d\log N)^{13d^2}.
    \]
    Then $A$ contains a subset $A^*$ such that 
    \[
        \mathcal{S}(A^*) \supseteq Z \cap (\vec{v}_0 +  \{z_1\vec{e}_1 + \cdots + z_d\vec{e}_d : 0 \leq z_j \leq dN, z_j\in \mathbb{Z}\}),
    \]
    where $Z := \mathcal{L}(A)$ and $\vec{v}_0 \in Z$, and that
    \[
        \mathcal{P}(A^*) \subseteq \varepsilon \mathcal{P}(A),
    \]
    where $\varepsilon := \frac{5(d\log N)^{13d^2}}{\rho\delta}$.
\end{lemma}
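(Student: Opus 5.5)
The plan is to rerun the proof of Lemma~\ref{lem:all-lattice-pts}, but to replace the full set $G=\bigcup_{j,k}A_{j,k}$ by a random subfamily that is much sparser. First I apply Lemma~\ref{lem:popular-sum} to $A$. Exactly as in Lemma~\ref{lem:all-lattice-pts}, this gives independent $\vec{b}_1,\ldots,\vec{b}_d$ and disjoint sets $A_{j,k}$ with $\sigma(A_{j,k})=\vec{b}_j$, for $k\le m$ and $m\ge \delta\tau/(d\log N)^{4d}$. It also gives $\Gamma$, $A'$, $\mu\le(2^d+1)^d$, and $\ell$ bounded as in~\eqref{eq:all-lattice-pts-bound-ell}.

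Claim~\ref{clm:all-latt-pts-1} applies verbatim. It yields $R\subseteq A$ with $\mathcal{S}_\Gamma(R)=Z\bmod\Gamma$ and $\mathcal{P}(R)\subseteq\{\sum x_j\vec{b}_j:|x_j|\le L_R\}$, where $L_R:=\frac{\tau}{8\rho}(d\log N)^{9d^2}$.

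Next I pick, for each $j$, only $m':=\lceil 4L_R\rceil+\lceil 4dN\cdot(\text{conversion})\rceil$ of the subsets $A_{j,k}$, chosen so that Claim~\ref{clm:all-latt-pts-2} still works. The count needs to satisfy $m'\ge 4|R|$ and $m'\ge 4L_R$, and the box $\{\sum x_j\vec{b}_j:0\le x_j\le m'/4\}$ must contain an axis-parallel cube of side $dN$. By Lemma~\ref{lem:inradius} together with Lemma~\ref{lem:popular-sum}(iii), that parallelepiped contains a cube of side about $m'\,\det(\Gamma)/(d^d(d\log N)^{2d^2}N^{d-1})\ge m'\rho N/((d\log N)^{3d^2}\tau)$. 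So $m'\approx \frac{\tau}{\rho}(d\log N)^{9d^2}$ suffices for all three requirements, and~\eqref{eq:all-lattice-pts-bound-m} shows $m'\le m$.

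Set $A^*:=R\cup G'$, where $G'$ is the union of the chosen subsets. Claim~\ref{clm:all-latt-pts-2} with $m'$ in place of $m$ shows that $\mathcal{S}(A^*)$ contains all points of $Z$ in a translate of the cube of side $dN$, which is the first conclusion.

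The main obstacle is proving $\mathcal{P}(A^*)\subseteq\varepsilon\mathcal{P}(A)$, and I will handle it deterministically instead of by random sampling. The key inclusion comes from~\eqref{eq:large-inradius} in Corollary~\ref{coro:large-inradius}: $\mathcal{P}(A)\supseteq\{\sum x_j\vec{b}_j:|x_j|\le m/2\}$, and indeed $\mathcal{S}(A)$ contains the progression with $m$ steps in each direction. For $R$, I have $\mathcal{P}(R)\subseteq\frac{2L_R}{m}\mathcal{P}(A)$.

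For $G'$, a sum of $|A_{j,k}|$ vectors is not controlled by $\vec{b}_j$ alone, so I need a different argument. The difficulty is that $\mathcal{P}(A_{j,k})$ is the zonotope of its elements, and the bound must hold for the individual elements. Here I use the construction in Lemma~\ref{lem:popular-sum}: the elements of $A_{j,k}$ lie in $R^*_j$. I therefore compare $\mathcal{P}(G')$ with $\mathcal{P}(A)$ through support functions (Corollary~\ref{coro:roc72}), with $f_{\mathcal{P}(X)}(\vec{v})=\frac12\sum_{\vec{a}\in X}|\vec{a}\cdot\vec{v}|$.

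Choose the $m'$ subsets for each $j$ uniformly at random among the $m$ available. Each $\vec{a}\in G$ is then kept with probability $p=m'/m$, so $\mathbb{E}f_{\mathcal{P}(G')}(\vec{v})\le p\,f_{\mathcal{P}(A)}(\vec{v})$. Concentration also holds, since each subset contributes at most $(d\log N)^{2d}\max_{\vec{a}}|\vec{a}\cdot\vec{v}|\le (d\log N)^{2d}\cdot 2f_{\mathcal{P}(A)}(\vec{v})/(\text{number of large terms})$. A union bound over the at most $\binom{n}{d-1}$ facet normals of $\mathcal{P}(A)$ then gives $f_{\mathcal{P}(G')}\le 2p\,f_{\mathcal{P}(A)}$ on all facet normals.

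Summing the two parts gives $\varepsilon\approx (2L_R+2m')/m\lesssim \frac{\tau(d\log N)^{9d^2}}{\rho}\cdot\frac{(d\log N)^{4d}}{\delta\tau}\le\frac{5(d\log N)^{13d^2}}{\rho\delta}$. The constant $5$ absorbs the factors from $R$, from $G'$ and from the concentration step. The delicate point I expect to fight with is the concentration step for facet normals $\vec{v}$ where a few elements of $G$ dominate $f_{\mathcal{P}(A)}(\vec{v})$. If this step fails, the fallback is to bound $\max_{\vec{a}}|\vec{a}\cdot\vec{v}|$ by $f_{\mathcal{P}(A)}(\vec{v})/m$, using the fact that the $m$ disjoint copies of $\vec{b}_j$ all lie inside $A$.
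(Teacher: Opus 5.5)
Your plan follows the paper's architecture: the petals $A_{j,k}$ from Lemma~\ref{lem:popular-sum}, the set $R$ from Claim~\ref{clm:all-latt-pts-1}, and a random subfamily of roughly $\frac{\tau}{\rho}(d\log N)^{9d^2}$ petals per direction so that Claim~\ref{clm:all-latt-pts-2} still applies. You then get $\mathcal{P}(R)\subseteq\frac{2L_R}{m}\mathcal{P}(A)$ via~\eqref{eq:large-inradius}, and $\mathcal{P}(G')\subseteq 2p\,\mathcal{P}(A)$ via support functions on facet normals, a tail bound and a union bound. (So the argument is probabilistic, not ``deterministic'' as you announce.)

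The gap is the step you yourself flag. A Bernstein-type bound needs each petal's contribution $f_i(\vec{v})=\frac12\sum_{\vec{a}\in A_i}|\vec{a}\cdot\vec{v}|$ to be at most $p\,f_{\mathcal{P}(A)}(\vec{v})$ divided by a large power of $d\log N$. You never establish this. The bound involving the ``number of large terms'' is empty, since nothing bounds that number from below.

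Your fallback $\max_{\vec{a}\in G}|\vec{a}\cdot\vec{v}|\le f_{\mathcal{P}(A)}(\vec{v})/m$ does not follow from the equal sums. The $m$ disjoint petals only give $f_{\mathcal{P}(A)}(\vec{v})\ge\frac{m}{2}\max_j|\vec{b}_j\cdot\vec{v}|$. But $f_i(\vec{v})$ is not controlled by $|\vec{b}_j\cdot\vec{v}|$, because elements of one petal can project with opposite signs. For example, take $d=2$ and $\vec{v}\parallel(1,-1)$. The petal $\{(N-1,0),(0,N-1)\}$ has $\vec{b}_j\cdot\vec{v}=0$ but $f_i(\vec{v})\approx N/\sqrt{2}$, while if most of $A$ lies near the diagonal, $f_{\mathcal{P}(A)}(\vec{v})/m$ is far below $N$.

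The missing idea is to compare both quantities with $\|\vec{v}\|_2$ rather than with each other.
\begin{itemize}
\item Trivially $f_i(\vec{v})\le\frac{\sqrt{d}N}{2}(d\log N)^{2d}\|\vec{v}\|_2$, since a petal has at most $(d\log N)^{2d}$ vectors of norm at most $\sqrt{d}N$.
\item Corollary~\ref{coro:large-inradius} puts a cube of half-side $\frac{\rho\delta N}{(d\log N)^{7d^2}}$ inside $\mathcal{P}(A)$, so $f_{\mathcal{P}(A)}(\vec{v})\ge\frac{\rho\delta N}{(d\log N)^{7d^2}}\|\vec{v}\|_2$. This is where the many copies of the $\vec{b}_j$ genuinely enter, through Lemma~\ref{lem:inradius} and the lower bound on $\det(\Gamma)$ in Lemma~\ref{lem:popular-sum}(iii).
\end{itemize}
Hence $p\,f_{\mathcal{P}(A)}(\vec{v})/\max_i f_i(\vec{v})\ge 2p\rho\delta/(\sqrt{d}(d\log N)^{7d^2+2d})$.

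This in turn needs a lower bound on $p=m'/m$, i.e.\ an upper bound on $m$, which Lemma~\ref{lem:popular-sum} does not state. Discard petals so that $m=\lceil\delta\tau/(d\log N)^{4d}\rceil$. Then $p\ge(d\log N)^{9d^2+4d}/(2\rho\delta)$ and the Bernstein exponent is at least $(d\log N)^{2d^2+2d}/(4\sqrt{d})$. The union bound over at most $N^{d^2}$ facet normals then goes through, and your $\varepsilon$ computation is unaffected.

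Choosing exactly $m'$ petals per direction is fine, because the selection indicators are negatively associated. Alternatively, as the paper does, sample each petal independently with $p=\Theta((d\log N)^{13d^2}/(\rho\delta))$ and use a Chernoff bound to keep $m'$ petals per direction.
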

\begin{proof}
    Let $\tau : = \sqrt{N^d/\det(Z)}$. Using exactly the same argument as in the proof of Lemma~\ref{lem:all-lattice-pts}, one can show that Lemma~\ref{lem:popular-sum} is applicable to $A$. Let $\vec{b}_1, \ldots, \vec{b}_d$ and $\{A_{j,k}: 1\leq j \leq d, 1 \leq k \leq m\}$ be the linearly independent vectors and disjoint subsets of $A$ returned by Lemma~\ref{lem:popular-sum}. We have $|A_{j,k}| \leq (d\log N)^{2d}$ and $m \geq \frac{\delta\tau}{(d\log N)^{4d}}$. Let $\ell := \det(\Gamma)/\det(Z)$. Lemma~\ref{lem:popular-sum}(iii) guarantees that
    \begin{equation}\label{eq:small-set-small-region-3}
        \frac{\rho\tau}{2^{3d^2}} \leq \ell \leq (d\log N)^{3d^2}\tau^2. 
    \end{equation}

    Let $p = \frac{2(d\log N)^{13d^2}}{\rho\delta}$. We randomly sample the subsets $\{A_{j,k}: 1\leq j \leq d, 1 \leq k \leq m\}$ with probability $p$, and let $G$ be the union of the sampled subsets. 

    \begin{claim}\label{clm:small-set-small-region-1}
        With error probability at most $N^{-d}$, $G$ contains disjoint subsets $\{G_{j,k}: 1\leq j \leq d, 1 \leq k \leq m'\}$, where $m' := \lceil \frac{\tau}{\rho}(d\log N)^{9d^2} \rceil$ and $\sigma(G_{j,k}) = \vec{b}_j$ for all $j, k$.
    \end{claim}
    \begin{proof}
        For each $j$, let $X_j$ be the number of sampled subsets from $\{A_{j,k}: 1 \leq k \leq m\}$. To prove the claim, it suffices to bound $\mathbf{Pr}[X_j \leq \frac{\tau}{\rho}(d\log N)^{9d^2}]$.

        It is easy to see that $\mathbf{E}[X_j] = pm \geq \frac{2\tau}{\rho}(d\log N)^{9d^2}$. Note that $X_j$ is a sum of independent Bernoulli trials. By standard Chernoff bound, we have
        \[
            \mathbf{Pr}[X_j \leq \frac{\tau}{\rho}(d\log N)^{9d^2}] \leq \exp(- \frac{\tau}{4\rho}(d\log N)^{9d^2}) \leq \exp(- \frac{(d\log N)^{9d^2}}{2^{d+2}}) \leq N^{-d^2}.
        \]
        The second inequality is due to~\eqref{eq:all-lattice-pts-rou-tau}. The claim follows by the union bound.
    \end{proof}

    Then using exactly the same argument (Claims~\ref{clm:all-latt-pts-1} and~\ref{clm:all-latt-pts-2}) as in Lemma~\ref{lem:all-lattice-pts}, one can show that $A$ contains a subset $R$ of at most $\frac{\tau}{8\rho}(d\log N)^{9d^2}$ vectors such that $\mathcal{S}_{\Gamma}(R) = Z \bmod \Gamma$ and
    \begin{equation}\label{eq:small-set-small-region-1}
        \mathcal{P}(R) \subseteq \{x_1\vec{b}_1 + \cdots + x_d\vec{b}_d: |x_j| \leq \frac{\tau}{8\rho}(d\log N)^{9d^2}\},
    \end{equation}
    and 
    \begin{equation}\label{eq:small-set-small-region-2}
        \mathcal{S}(G \cup R) \supseteq Z \cap (\vec{b}_0 + \{x_1\vec{b}_1 + \cdots + x_d\vec{b}_d: 0\leq x_j \leq \frac{m'}{4}\}),
    \end{equation}
    where $\vec{b}_0 := \frac{m'}{4}(\vec{b}_1 + \cdots + \vec{b}_d)$. 

    We shall show that $A^*:= G \cup R$ satisfies all the stated properties. 

    In view of~\eqref{eq:small-set-small-region-2}, we have
    \[
        \mathcal{S}(G \cup R) \supseteq Z \cap (\vec{v}_0 +  \{z_1\vec{e}_1 + \cdots + z_d\vec{e}_d : 0 \leq z_j \leq dN, z_j\in \mathbb{Z}\}),
    \]
    where $\vec{v}_0 \in Z$, since using exactly the same argument as in Corollary~\ref{coro:large-inradius}, the polytope $\{x_1\vec{b}_1 + \cdots + x_d\vec{b}_d: 0\leq x_j \leq \frac{m'}{4}\}$ contains a hypercube of side length at least $dN + N$. We omit the details.


    It remains to bound $\mathcal{P}(G \cup R)$ in terms of $\mathcal{P}(A)$.  It suffices to bound $\mathcal{P}(R)$ and $\mathcal{P}(G)$ separately.

    We first show that $\mathcal{P}(A)$ is large. Note that $\mathcal{S}(A) \subseteq 2\mathcal{P}(A)$ and $\mathcal{P}(A)$ is symmetric about the origin. Lemma~\ref{lem:popular-sum}(i) implies that
    \begin{equation}\label{eq:small-set-small-region-4}
        \mathcal{P}(A) \supseteq \left\{x_1\vec{b}_1 + \cdots + x_d \vec{b}_d: |x_j| \leq \frac{\delta\tau}{2(d\log N)^{4d}}\right\}.
    \end{equation}
    By Corollary~\ref{coro:large-inradius}, we have
    \begin{equation}\label{eq:small-set-small-region-5}
        \mathcal{P}(A) \supseteq \left\{x_1\vec{e}_1 + \cdots + x_d \vec{e}_d: |x_j| \leq \frac{\rho\delta N}{(d\log N)^{7d^2}}\right\}.
    \end{equation}

    Now we can bound $\mathcal{P}(R)$.  In view of~\eqref{eq:small-set-small-region-1} and~\eqref{eq:small-set-small-region-4}, we have that 
    \[
        \mathcal{P}(R) \subseteq \varepsilon_R\mathcal{P}(A),
    \]
    where 
    \[
        \varepsilon_R \leq \frac{(d\log N)^{13d^2}}{4\rho\delta}.
    \]

    Next we show that $\mathcal{P}(G) \subseteq 2p \mathcal{P}(A)$, which will immediately imply that
    \[
        \mathcal{P}(G \cup R) \subseteq \mathcal{P}(G) + \mathcal{P}(R) \subseteq (\varepsilon_R + 2p) \mathcal{P}(A) \subseteq \frac{5(d\log N)^{13d^2}}{\rho\delta}.
    \] By Corollary~\ref{coro:roc72}, $\mathcal{P}(G) \subseteq 2p \mathcal{P}(A)$ if and only if $f_{P'}(\vec{v}) \leq 2pf_{P}(\vec{v})$, where $P' := \mathcal{P}(G)$ and $P := \mathcal{P}(A)$, for all facet normals $\vec{v}$ of $\mathcal{P}(A)$. Below we show that this happens with error probability at most $N^{-d}$. This will finish the proof, since the total error probability of the construction of $G$ and $R$ is at most $2N^{-d} < 1$, which implies the existence of the target $G$ and $R$.

    Fix an arbitrary vector $\vec{v} \in \mathbb{R}^d$ with $\|\vec{v}\|_2 = 1$. By definition, it is easy to observe that  $f_{P'}(\vec{v}) = \frac{1}{2}\sum_{\vec{a} \in G}|\vec{a} \cdot \vec{v}|$. For simplicity of notation, we relabel the subsets $\{A_{j,k}: 1\leq j \leq d, 1 \leq k \leq m\}$ as $\{A_1, A_2, \ldots, A_{md}\}$. For each $A_{i}$, let $f_i(\vec{v}) = \frac{1}{2}\sum_{\vec{a} \in A_i}|\vec{a} \cdot \vec{v}|$. Then $f_{P'}(\vec{v})$ can be viewed as the sum of values that are sampled with probability $p$ from $\{f_1(\vec{v}), \ldots, f_{md}(\vec{v})\}$, and therefore, we have
    \begin{align*}
            \mathbf{E}[f_{P'}(\vec{v})] &= \sum_{i}p f_{i}(\vec{v}) = \frac{p}{2}\sum_{i}\sum_{\vec{a} \in A_{i}}|\vec{a} \cdot \vec{v}| \leq \frac{p}{2}\sum_{\vec{a} \in A}|\vec{a} \cdot \vec{v}| = p\cdot f_P(\vec{v})\\
            \mathbf{Var}[f_{P'}(\vec{v})] &= p(1-p)\sum_{i}f_{i}(\vec{v})^2 \leq p \cdot \max_{i}f_{i}(\vec{v}) \cdot \sum_{i}f_{i}(\vec{v})  \leq p \cdot \max_{i}f_{i}(\vec{v}) \cdot f_P(\vec{v}).
    \end{align*}
    Let $\eta = p f_P(\vec{v})$. By Bernstein's inequality, we have
        \begin{align*}
            \mathbf{Pr}[f_{P'}(\vec{v}) \geq 2p \cdot f_P(\vec{v})] 
        \leq & \mathrm{exp}\left(-\frac{\eta^2}{2\mathbf{Var}[f_{P'}(\vec{v})] + \frac{2}{3}\eta\max_{i} f_{i}(\vec{v})}\right) \nonumber\\
        \leq & \mathrm{exp}\left(- \min\left\{\frac{\eta^2}{4\mathbf{Var}[f_{P'}(\vec{v})]}, \frac{\eta}{2\max_{i} f_{i}(\vec{v})}\right\}\right)\\
        \leq &     \mathrm{exp}\left(- \min\left\{ \frac{\eta}{4\max_{i} f_{i}(\vec{v})},  \frac{\eta}{2\max_{i} f_{i}(\vec{v})}\right\}\right) \\
        \leq& \mathrm{exp}\left(-\frac{pf_P(\vec{v})}{4\max_{i} f_{i}(\vec{v})}\right).\\
        \end{align*}
    In view of~\eqref{eq:small-set-small-region-5}, we have 
    \[
        f_P(\vec{v}) \geq \frac{\rho\delta N}{(d\log N)^{7d^2}}.
    \]
    Since $|A_i| \leq (d\log N)^{2d}$ and $\vec{a} \in [N]^d$ for all $\vec{a} \in A_i$, we have
    \[
        f_i(\vec{v}) \leq \frac{1}{2} \cdot (d\log N)^{2d} \cdot \sqrt{d}{N} \leq \frac{\sqrt{d}N}{2}(d\log N)^{2d}.
    \]
    Therefore, we have
    \begin{align*}
        &\mathbf{Pr}[f_{P'}(\vec{v}) \geq 2p \cdot f_P(\vec{v})]\\
        \leq &\exp(- \frac{2(d\log N)^{12d^2}}{\rho\delta} \cdot \frac{\rho\delta N}{(d\log N)^{7d^2}} \cdot \frac{2}{\sqrt{d}(d\log N)^{2d} N}) \leq N^{-2d^2}\\
    \end{align*}
    Note that $\mathcal{P}(A)$ can have at most $|A|^d \leq N^{d^2}$ facets. Therefore, by the union bound, we have
    \[
        \mathbf{Pr}[\text{$f_{P'}(\vec{v}) \geq 2p\cdot f_{P}(\vec{v})$ for some facet normal of $\mathcal{P}(A)$} ] \leq N^{-2d^2} \cdot N^{d^2} \leq N^{-d}. \qedhere
    \]
\end{proof}

The main theorem of this section follows from Lemmas~\ref{lem:frac2int} and~\ref{lem:small-region-small-set}.

\begin{theorem}\label{thm:john-type}
    Let $A \subseteq [N]^d$ be a set of vectors that is $\rho$-dense, $\gamma$-spread, and $\delta$-nondegenerate. Assume that $N$ is sufficiently large and that
    \[
        \rho \geq (d\log N)^{4d^2}\quad \text{and} \quad \rho\gamma \geq (d\log N)^{4d^2} \quad \text{and} \quad \rho\delta \geq (d\log N)^{14d^2}.
    \]
    Then 
    \[
         Z \cap \left((1 - \varepsilon) \mathcal{P}(A) + \frac{\sigma(A)}{2}\right) \subseteq \mathcal{S}(A) \subseteq Z \cap  \left(\mathcal{P}(A) + \frac{\sigma(A)}{2}\right),
    \]
    where $Z := \mathcal{L}(A)$ and $\varepsilon := \frac{(d\log N)^{14d^2}}{\rho\delta}$.
\end{theorem}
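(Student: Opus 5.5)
The plan is to combine Lemma~\ref{lem:small-region-small-set} with Lemma~\ref{lem:frac2int}; the remaining work is checking parameters. The right-hand inclusion is immediate. Every subset sum $\sum_j z_j\vec{a}_j$ with $z_j\in\{0,1\}$ is an integer combination of $A$, so it lies in $Z$. Writing it as $\sum_j (z_j-\tfrac12)\vec{a}_j + \frac{\sigma(A)}{2}$ with $|z_j - \tfrac12|\le \tfrac12$ shows it lies in $\mathcal{P}(A)+\frac{\sigma(A)}{2}$.

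For the left-hand inclusion, I first check that Lemma~\ref{lem:small-region-small-set} applies. Its hypotheses on $\rho$ and $\rho\gamma$ are the same as here. Its third hypothesis, $\rho\delta \ge 5(d\log N)^{13d^2}$, follows from $\rho\delta\ge (d\log N)^{14d^2}$, since $(d\log N)^{d^2}\ge 5$ once $N$ is sufficiently large. The lemma then gives a subset $A^*\subseteq A$ and $\vec{v}_0\in Z$ such that $\mathcal{S}(A^*)$ contains every point of $Z$ in the cube $\vec{v}_0+\{z_1\vec{e}_1+\cdots+z_d\vec{e}_d: 0\le z_j\le dN\}$, and $\mathcal{P}(A^*)\subseteq \varepsilon'\mathcal{P}(A)$ with $\varepsilon' = \frac{5(d\log N)^{13d^2}}{\rho\delta}$.

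Next I apply Lemma~\ref{lem:frac2int} with $\varepsilon'$ in place of $\varepsilon$. This needs $\varepsilon'\le 1/4$, which holds because $\rho\delta \ge (d\log N)^{14d^2}\ge 20(d\log N)^{13d^2}$ for large $N$. The conclusion is $\mathcal{S}(A)\supseteq Z\cap\big((1-4\varepsilon')\mathcal{P}(A)+\frac{\sigma(A)}{2}\big)$.

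Finally I compare the two shrinkage factors. We have $4\varepsilon' = \frac{20(d\log N)^{13d^2}}{\rho\delta}\le \frac{(d\log N)^{14d^2}}{\rho\delta}=\varepsilon$, again because $(d\log N)^{d^2}\ge 20$. Since $\mathcal{P}(A)$ is convex and symmetric about the origin, $(1-\varepsilon)\mathcal{P}(A)\subseteq(1-4\varepsilon')\mathcal{P}(A)$, and the inclusion follows. If $\varepsilon\ge 1$, the left-hand side is empty or the single point $\frac{\sigma(A)}{2}$; that case needs only a one-line remark, which I expect to be the most delicate spot in an otherwise routine argument. The real work was already done in Lemma~\ref{lem:small-region-small-set}.
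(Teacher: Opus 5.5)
Your proposal is correct and is the paper's argument: the paper obtains Theorem~\ref{thm:john-type} directly by combining Lemma~\ref{lem:small-region-small-set} with Lemma~\ref{lem:frac2int}, and your parameter checks ($\varepsilon'\le 1/4$ and $4\varepsilon'\le\varepsilon$ for large $N$) are exactly what that combination needs. The case you flag as delicate needs no separate treatment, because the hypothesis $\rho\delta\ge(d\log N)^{14d^2}$ is equivalent to $\varepsilon\le 1$; hence $0\le 1-\varepsilon\le 1-4\varepsilon'$, and your scaling inclusion $(1-\varepsilon)\mathcal{P}(A)\subseteq(1-4\varepsilon')\mathcal{P}(A)$ already covers every case, including $\varepsilon=1$.
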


\section{An Algorithm for Dense Subset Sum}\label{sec:alg}
Recall Definition~\ref{def:dense-and-spread}. We shall show that given a set $A \subseteq [N]^d$ that is $\widetilde{O}(1)$-dense and $\widetilde{O}(1)$-nondegenerate, for any $\vec{t} \in (1 - o(1))\mathcal{P}(A) + \frac{\sigma(A)}{2}$, we can decide whether $\vec{t} \in \mathcal{S}(A)$ in $\widetilde{O}(n \cdot d^d \cdot \mathrm{poly}(d))$ time.

Basically, we will extract a subset $A^*$ of $A$ that is $\widetilde{O}(1)$-dense, $\widetilde{O}(1)$-spread, and $\widetilde{O}(1)$-nondegenerate. Using this subset and Theorem~\ref{thm:john-type}, we will be able to obtain a lattice $\Gamma$ such that $\vec{t} \in \mathcal{S}(A)$ if and only if $\vec{t} \in \mathcal{S}_{\Gamma}(A)$. Then it suffices to solve the modular subset sum problem.

Our algorithm relies on the fact that modular sumsets and symmetry sets can be computed efficiently. The proof of the following lemma is deferred to Section~\ref{sec:modular-sumset}.

\begin{restatable}{lemma}{lemfastermodularss}\label{lem:faster-modular-ss}
    Let $A \subseteq \mathbb{Z}^d$ be a set of vectors in some full-dimensional lattice $Z$. Let $\Gamma \subseteq Z$ be a full-dimensional lattice. We can compute $\mathcal{S}_{\Gamma}(A)$ and $\mathrm{Sym}_{\Gamma}(\mathcal{S}(A))$ in $\widetilde{O}((n + 2^d\ell) \cdot \mathrm{poly}(d))$ time, where $\ell := \det(\Gamma)/\det(Z)$.
\end{restatable}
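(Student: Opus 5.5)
We compute $\mathcal{S}_{\Gamma}(A)$ by working in the quotient group $\mathbb{Z}^d_{\Gamma}\cap (Z \bmod \Gamma)$, which has $\ell$ elements. The idea is to adapt the near-linear modular subset sum algorithms for cyclic groups (e.g.\ the sketch-based/hashing approach of Axiotis et al.\ or the Kneser-driven approach of Bringmann and Wellnitz) to a general finite abelian group presented via the Hermite normal form. First, using Lemma~\ref{lem:compute-rem}(i), reduce every $\vec{a}\in A$ to $\vec{a}\bmod\Gamma$ in $O(d^2)$ time each. Vectors with the same remainder contribute the multiples $0,\vec r,2\vec r,\ldots,k\vec r$ of a single remainder, and we keep them as elements.

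We then insert the elements one at a time, maintaining $S:=\mathcal{S}_{\Gamma}(A_{\le i})$ as a bit array indexed by $\mathbb{Z}^d_{\Gamma}$ (with mixed-radix indexing along $b_{11},\ldots,b_{dd}$). Inserting $\vec{a}$ replaces $S$ by $S\cup(S+\vec a)$, and the new elements are exactly the $\vec{s}\in S$ with $\vec s+\vec a\notin S$. Because $S$ only grows, at most $\ell$ elements are ever added in total. The key step is to enumerate the set $(S+\vec a)\setminus S$ in time $\widetilde O(\mathrm{poly}(d)\cdot(1+|(S+\vec a)\setminus S|))$. As in the one-dimensional case, we do this by maintaining a data structure supporting equality tests of translated windows. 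Specifically, we view $S$ as a string along an ordering of the group in which translation by $\vec a$ acts as a shift. Translation by $\vec a$ in mixed radix is an addition with carries across the $d$ coordinates, so $S+\vec a$ decomposes into at most $2^d$ contiguous blocks of the linear array, each shifted by a fixed offset; this is the source of the $2^d\ell$ term. Using a dynamic string/hash structure (rolling hashes maintained in a balanced tree with polylog updates), we find each mismatch between $S$ shifted by $\vec a$ and $S$ by binary search over the $2^d$ blocks. Each mismatch costs polylog time and yields a new element, or else certifies that $\vec s\in S$ while $\vec s+\vec a$ is not. Mismatches of the latter kind are charged by pairing: the mismatches in $S+\vec a$ versus $S$ come in equal numbers in both directions, because $|S+\vec a|=|S|$. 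Hence the total cost is $\widetilde O((n+2^d\ell)\mathrm{poly}(d))$, with error probability from hashing made polynomially small.

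For $\mathrm{Sym}_{\Gamma}(\mathcal{S}(A))$, we take the final set $S=\mathcal{S}_\Gamma(A)$. The symmetry set is a subgroup $H$, and $S$ is a union of $H$-cosets. To compute it, we test periods: for each $\vec h$, we check $S+\vec h= S$ using the same hash structure, at cost $\widetilde O(2^d\mathrm{poly}(d))$ per test. We do not test all $\ell$ candidates naively with a linear-cost check, since each test is only polylog via hashing, so all $\ell$ tests together fit the bound. Alternatively, we find $H$ by computing generators greedily: we add $\vec h$ to the lattice (Lemma~\ref{lem:compute-rem}(iii)) and enumerate the resulting lattice with Lemma~\ref{lem:compute-rem}(ii).

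I expect the main obstacle to be the charging argument bounding the number of hash-detected mismatches by the number of genuinely new elements. This must be done in a multi-block setting where translation wraps around in all $d$ coordinates, and it must be organized so that the carry structure costs only a $2^d$ factor rather than something larger.
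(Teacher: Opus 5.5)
You have a genuine gap, and it is not the charging argument you flag. That part is immediate: translation is a bijection of the group, so $|(S+\vec{a})\setminus S|=|S\setminus(S+\vec{a})|$.

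\textbf{The failing step.} You claim that, under mixed-radix indexing, translation by $\vec{a}$ cuts the linear array into at most $2^d$ contiguous blocks with fixed offsets. That picture is correct for $\mathbb{Z}_m$ in its natural order, but not for $\mathbb{Z}^d_\Gamma$ in the mixed-radix order.

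Take $Z=\mathbb{Z}^2$ and $\Gamma=p\mathbb{Z}^2$, so $\ell=p^2$ and the index of $(s_1,s_2)$ is $s_1+ps_2$. Translation by $(1,0)$ sends index $k$ to $k+1$ if $s_1<p-1$, and to $k-(p-1)$ if $s_1=p-1$. So there are $2p=2\sqrt{\ell}$ maximal constant-offset runs, not $2^d=4$.

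In general, for a fixed $\vec{a}$ the $2^d$ carry patterns correspond to $2^d$ axis-parallel \emph{boxes}, each moved by a fixed vector. A box is a union of many runs in the mixed-radix order. An ordering in which translation by one particular $\vec{a}$ is a shift does exist: list each coset of the subgroup generated by $\vec{a}$ consecutively. But your dynamic string structure must use a single ordering for all insertions. So the output-sensitive enumeration of $(S+\vec{a})\setminus S$, which is the whole algorithm, is unsupported as stated. The same flaw breaks your $\widetilde O(2^d\mathrm{poly}(d))$ per-candidate test of $S+\vec{h}=S$ for the symmetry set.

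\textbf{What a repair of your route would need.} You would need a genuinely $d$-dimensional dynamic hashing structure that compares a box with its translate, and you would locate mismatches by bisecting boxes. With standard multidimensional range trees this costs roughly $\log^{d}\ell$ per query and is randomized. That suffices for constant $d$, but it does not give the $\mathrm{poly}(d)$ dependence in the statement. Two further fixes are needed:
\begin{enumerate}
\item Map $Z$ to $\mathbb{Z}^d$ by $B^{-1}$ (with $B$ a basis of $Z$) first. An array indexed by $\mathbb{Z}^d_\Gamma$ has $\det(\Gamma)=\ell\det(Z)$ cells, not $\ell$.
\item Skip a residue once inserting it leaves $S$ unchanged. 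This property persists under later insertions, so only $O(\ell)$ insertions pay the $2^d$ overhead and the $n$ term stays free of the $2^d$ factor.
\end{enumerate}

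\textbf{How the paper does it.} The paper avoids dynamic strings entirely. After the same change of basis, it computes $\{\vec{0},\vec{a}_1\}+\cdots+\{\vec{0},\vec{a}_n\}\bmod\Gamma'$ by pairwise merging with sparse convolution. Embedding $[2b_{11}]\times\cdots\times[2b_{dd}]$ into $[2^d\ell]$ is where the $2^d$ comes from.

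The merging stops as soon as the total size exceeds $\ell+m/2$. At that point Corollary~\ref{coro:kneser-1} guarantees a nonzero period $\vec{h}$. This period is computed from the closed form $\mathrm{Sym}_\Gamma(S)\equiv(S-S)\setminus(((S-S)+S)\setminus S-S)$ using a constant number of sumsets.

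The algorithm then recurses modulo $\Gamma'+\mathcal{L}(\vec{h})$, whose determinant is at most half, and lifts the result back via Lemma~\ref{lem:reduce-by-stablizer}. The symmetry set of the final sumset comes from the same closed form.
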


\subsection{Computing Almost Common Lattices}
To extract an $\widetilde{O}(1)$-spread subset of $A$, we need to compute almost common lattices of $A$. By an almost common lattice, we mean a proper sublattice $\Gamma \subsetneq \mathcal{L}(A)$ such that most of the vectors in $A$ lie in $\Gamma$. In the 1-dimensional case, assuming the greatest common divisor of $A$ is $1$, an almost common lattice corresponds to a divisor that divides most of the integers in $A$, and can be computed in $\widetilde{O}(n + \sqrt{N})$ time via an algorithm by Bringmann and Wellnitz~\cite{BW21}. Below we summarize their algorithm.

\begin{theorem}[{\cite[Theorem 3.8]{BW21}}]\label{thm:factorization}
    The prime factorization of any $n$ given integers in $[N]$ can be computed in $\widetilde{O}(n + \sqrt{N})$ time. 
\end{theorem}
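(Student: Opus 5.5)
The plan is to adapt Bernstein's batch factorization with product and remainder trees, which is the approach we expect Bringmann and Wellnitz to use. Let $x_1,\dots,x_n\in[N]$ be the input. The starting observation is that every $x_i\le N$ has at most one prime factor exceeding $\sqrt N$ (counted with multiplicity). It therefore suffices to find, for each $x_i$, its prime factors $p\le\sqrt N$ together with their multiplicities. Whatever cofactor is left after dividing these out is either $1$ or a single prime.

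First, we compute the list $\mathcal{Q}$ of all primes up to $\sqrt N$ with the sieve of Eratosthenes in $\widetilde O(\sqrt N)$ time. Trial-dividing each $x_i$ by every element of $\mathcal{Q}$ would cost $n\sqrt N$, so we batch the work in two phases.

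\emph{Phase 1 (which small primes matter).} Build a product tree over the $x_i$ and let $X:=\prod_i x_i$. This number has $O(n\log N)$ bits, and with fast integer multiplication the tree costs $\widetilde O(n\log N)$ bit operations. Next build a product tree over $\mathcal{Q}$ and run a remainder tree: reduce $X$ modulo the root, then push remainders down to obtain $X \bmod p$ for every $p\in\mathcal{Q}$. The numbers involved have $O(\sqrt N+n\log N)$ bits, so this costs $\widetilde O(\sqrt N+n\log N)$. The primes with $X\bmod p=0$ form a set $\mathcal{Q}'$. Since each $x_i$ has at most $\log N$ distinct prime factors, $|\mathcal{Q}'|\le n\log N$.

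\emph{Phase 2 (assign primes to inputs).} Recurse down the product tree of the $x_i$. At each node $u$ we maintain the set $\mathcal{Q}_u\subseteq\mathcal{Q}'$ of primes dividing the product $X_u$ stored at $u$. At the root this set is $\mathcal{Q}'$. To pass from a node to a child $w$, compute $X_w \bmod p$ for all $p\in\mathcal{Q}_u$, again with a product/remainder tree over $\mathcal{Q}_u$, and keep only the primes that divide $X_w$. The cost at $u$ is $\widetilde O(|\mathcal{Q}_u|\log N+\mathrm{bits}(X_u))$. Summed over one level of the tree, the second term is $O(n\log N)$. For the first term, every prime in $\mathcal{Q}_u$ divides some leaf below $u$, so $\sum_{u \text{ on a level}}|\mathcal{Q}_u|\le n\log N$. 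With $O(\log n)$ levels the total is $\widetilde O(n)$. At the leaves we know the distinct small primes of each $x_i$. Dividing repeatedly gives the multiplicities in $O(\log N)$ divisions per $x_i$, and the leftover cofactor, if larger than $1$, is the unique large prime.

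The main obstacle we expect is the complexity accounting, not correctness. Two points need care. First, the model must permit quasi-linear integer multiplication and division on multiword numbers, so that product and remainder trees on $B$ total bits run in $\widetilde O(B)$ time. Second, Phase 2 must not re-scan all of $\mathcal{Q}'$ at every node; this is exactly why we carry the filtered sets $\mathcal{Q}_u$ downward and bound their total size per level by $n\log N$. If a charging argument is preferred over the per-level bound, we would charge each prime in $\mathcal{Q}_u$ to a pair (leaf, prime factor of that leaf) in the subtree of $u$. Each such pair is charged at most once per level, which gives the same $\widetilde O(n+\sqrt N)$ bound, with polylogarithmic factors in $N$ and $n$ absorbed into $\widetilde O$.
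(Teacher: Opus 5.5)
The paper gives no proof of this statement: it is imported verbatim from Bringmann and Wellnitz, so there is no in-paper argument to compare with, and your proposal has to stand on its own. It does. This is Bernstein-style batch trial division, and your accounting is right.
\begin{itemize}
\item The sieve and the single remainder tree over all primes $p\le\sqrt N$ are the only steps that touch the full list $\mathcal{Q}$, so the $\widetilde O(\sqrt N)$ term is paid once.
\item After that, only $\mathcal{Q}'$ is used, with $|\mathcal{Q}'|\le n\log N$. On each level of the descent, $\sum_u|\mathcal{Q}_u|$ is at most the total number of distinct prime factors of the inputs, hence at most $n\log N$. Also $\sum_u\mathrm{bits}(X_u)=O(n\log N)$, so the $O(\log n)$ levels cost $\widetilde O(n)$.
\item Your closing observation is also correct: the cofactor left after stripping all primes $p\le\sqrt N$ is $1$ or a single prime, because two primes exceeding $\sqrt N$ multiply to more than $N$.
\end{itemize}
Compared with the bare citation, your route shows that nothing beyond quasi-linear integer multiplication and division is needed. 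That is available on a word RAM with $\Theta(\log N)$-bit words.

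One detail you should add concerns zero inputs. With the paper's convention $[N]=\{0,\dots,N-1\}$, an input $x_i=0$ is allowed, and it breaks your argument.
\begin{itemize}
\item It forces $X=0$, hence $\mathcal{Q}'=\mathcal{Q}$.
\item Every ancestor of such a leaf then carries all $\Theta(\sqrt N/\log N)$ small primes. With many zero inputs the descent therefore costs roughly $n\sqrt N$.
\item The repeated-division step at that leaf never terminates.
\end{itemize}
Since $0$ has no prime factorization, discard zeros (and ones) before running the algorithm.

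This matters for the paper, not just for your proof. In Lemma~\ref{lem:almost-common-lattice}, the projected multiset $A_1$ and the determinant multiset $D$ can both contain $0$. So the caller in Lemma~\ref{lem:almostc-common-divisor} has to count zeros as divisible by every prime rather than try to factor them.
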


Although in a slightly different form, the following lemma is essentially the same as Theorem 4.12 in~\cite{BW21}. We provide a proof for completeness.
\begin{lemma}\label{lem:almostc-common-divisor}
    Given a multiset $A$ of $n$ integers in $[N]$, and an integer $k$, we can decide whether $A$ contains $k$ elements that have a common divisor greater than $1$, and compute such a divisor if it exists, in $\widetilde{O}(n + \sqrt{N})$ time.
\end{lemma}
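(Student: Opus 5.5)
We factor all elements of $A$ with Theorem~\ref{thm:factorization} and then count, for each prime, how many elements it divides. The key observation is that if some integer $q>1$ divides $k$ elements of $A$, then any prime factor $p$ of $q$ also divides those $k$ elements. So the question ``does some divisor greater than $1$ divide at least $k$ elements?'' is equivalent to ``does some prime divide at least $k$ elements?''. It therefore suffices to find the prime with the largest count and compare that count with $k$.

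\textbf{Zeros.} The value $0\in[N]$ is divisible by every integer, so we handle it first. Let $z$ be the number of zeros in $A$.
\begin{itemize}
\item If $z\ge k$, we answer yes with divisor $2$.
\item Otherwise, every valid divisor divides all zeros, so it suffices to find a prime dividing at least $k-z$ of the nonzero elements.
\end{itemize}
Elements equal to $1$ have no prime factors and are simply ignored.

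\textbf{Factorization.} Let $A^+$ denote the multiset of nonzero elements, and let $A'$ be the set of its \emph{distinct} values, with the multiplicity $c_x$ of each value $x$ stored in a dictionary (obtained by sorting). We apply Theorem~\ref{thm:factorization} to $A'$. This takes $\widetilde{O}(n+\sqrt{N})$ time and returns the prime factorization of every $x\in A'$. Since $x\le N$, each $x$ has at most $\log N$ distinct prime factors, so the factorizations have total size $O(n\log N)$.

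\textbf{Counting.} For each $x\in A'$ and each distinct prime $p\mid x$, we add $c_x$ to a counter $\mathrm{cnt}[p]$ kept in a hash map or balanced search tree. This takes $O(n\log N)$ updates, each in $\widetilde{O}(1)$ time. Finally we take $p^*=\arg\max_p \mathrm{cnt}[p]$ and compare:
\begin{itemize}
\item If $\mathrm{cnt}[p^*]+z\ge k$, we output $p^*$.
\item Otherwise, we report that no such divisor exists.
\end{itemize}

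\textbf{Correctness.} The yes direction is immediate, because $p^*$ divides $\mathrm{cnt}[p^*]$ nonzero elements and all $z$ zeros. For the no direction, suppose some $q>1$ divides $k$ elements of $A$. Take any prime $p\mid q$. Then $p$ divides the same $k$ elements, so $\mathrm{cnt}[p]+z\ge k$, and the maximum $\mathrm{cnt}[p^*]$ satisfies the inequality as well.

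The total running time is $\widetilde{O}(n+\sqrt{N})$. The only nontrivial ingredient is the fast factorization, which we import from Theorem~\ref{thm:factorization}.
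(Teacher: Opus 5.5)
Your proposal is correct and follows essentially the same route as the paper: reduce to prime divisors, factor all elements with Theorem~\ref{thm:factorization}, count how many elements each prime divides, and compare the largest count with $k$. Your explicit handling of zeros (divisible by every prime) and of repeated values is a careful detail that the paper's short proof leaves implicit, and it does not change the argument.
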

\begin{proof}
    Note that if some integers have a common divisor greater than $1$, then they must have a prime common divisor.  We first use Theorem~\ref{thm:factorization} to compute the prime factorization of all integers in $A$ in $\widetilde{O}(n + \sqrt{N})$ time. Then for each prime $p$, we can determine the number of integers in $A$ that are divisible by $p$. If some $p$ divides at least $k$ integers in $A$, then we return $p$. Otherwise, no $k$ elements of $A$ have a common divisor greater than $1$.
\end{proof}

We shall generalize the above result to multi-dimension as follows. For $j \in \{1, \ldots, d\}$, let $A_j$ be the projection of $A$ onto the subspace spanned by $\vec{e}_1, \ldots, \vec{e}_j$. We will iteratively check whether $A_j$ has an almost common lattice. If some $A_j$ has an almost common lattice, then so does $A$. Suppose that $A_j$ does not have an almost common lattice. Then using Observation~\ref{lem:common-divisor-for-det}, we show that deciding whether $A_{j+1}$ has an almost common lattice can be reduced to deciding whether a certain multiset of determinants has an almost common divisor, which can be solved via Lemma~\ref{lem:almostc-common-divisor}.

\begin{observation}\label{lem:common-divisor-for-det}
    Let $A \subseteq \mathbb{Z}^d$ be a finite set of vectors. If $A$ lies in some full-dimensional lattice $\Gamma$, then $\det(\Gamma)$ is a common divisor of the integers in the following multiset.
    \[
        \left\{\det(\vec{a}_1, \ldots, \vec{a}_d) : \vec{a}_i \in A\right\}
    \]
\end{observation}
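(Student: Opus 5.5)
Since every $\vec{a}_i$ lies in $\Gamma$, we can express each one in a basis of $\Gamma$ and factor the matrix $(\vec{a}_1,\ldots,\vec{a}_d)$. Let $B$ be a basis matrix of $\Gamma$; it is a $d\times d$ integer matrix with $|\det(B)| = \det(\Gamma)$ because $\Gamma$ is full-dimensional. Each $\vec{a}_i \in A \subseteq \Gamma$ can be written as $\vec{a}_i = B\vec{z}_i$ for a unique $\vec{z}_i \in \mathbb{Z}^d$. The columns $\vec{b}_1,\ldots,\vec{b}_d$ of $B$ are linearly independent, so the coefficient vector is unique, and it is integral by the definition of $\mathcal{L}(B)$.

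Now take any $d$ vectors $\vec{a}_1,\ldots,\vec{a}_d$ from $A$; repetitions are allowed, since the statement concerns a multiset. Let $M$ be the matrix whose columns are these vectors and $U$ the matrix whose columns are $\vec{z}_1,\ldots,\vec{z}_d$. Then $M = BU$, and multiplicativity of the determinant gives $\det(M) = \det(B)\det(U)$. Here $U$ is an integer matrix, so $\det(U)\in\mathbb{Z}$. Hence $\det(\vec{a}_1,\ldots,\vec{a}_d)$ is an integer multiple of $\det(B)$, and therefore of $|\det(B)| = \det(\Gamma)$. This covers the case $\det(M)=0$ as well, since $0$ is divisible by anything.

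Since the choice of $d$ vectors was arbitrary, $\det(\Gamma)$ divides every element of the multiset. No step is a real obstacle. The only points to note are that the factorization $M = BU$ needs $\Gamma$ to be full-dimensional, so that $B$ is square and $\det(\Gamma)=|\det B|$, and that the degenerate, zero-determinant tuples are harmless.
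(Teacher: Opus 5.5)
Your proposal is correct and follows the paper's approach: the paper argues that $\mathcal{L}(\vec{a}_1,\ldots,\vec{a}_d)$ is a sublattice of $\Gamma$, so $\det(\Gamma)$ divides its determinant. Your factorization $M = BU$ with $U$ an integer matrix is the explicit justification of that step, and it also handles the linearly dependent (zero-determinant) tuples without separate comment.
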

\begin{proof}
    Since $A \subseteq \Gamma$, the lattice generated by $\vec{a}_1, \ldots, \vec{a}_d$ must be a sublattice of $\Gamma$. Therefore, $\det(\Gamma)$ divides $\det(\vec{a}_1, \ldots, \vec{a}_d)$.
\end{proof}

\begin{lemma}\label{lem:almost-common-lattice}
    Let $A \subseteq [N]^d$ be a finite set of $n$ vectors, where $N$ is sufficiently large and $Z : =\mathcal{L}(A)$ is full-dimensional. Let $\tau := \sqrt{N^d/\det(Z)}$. For any integer $k \geq 1$, in $\widetilde{O}((n + d^d\tau)\cdot \mathrm{poly}(d))$ time, with error probability at most $N^{-\Omega(d)}$, we can 
    \begin{itemize}
        \item either find a proper sublattice $\Gamma \subsetneq Z$ such that
        \[
            |A \setminus \Gamma| \leq k\cdot (d\log N)^{2d};
        \]
        
        \item or assert that $|A \setminus \Gamma| \geq k$ for any proper sublattice $\Gamma \subsetneq Z$.
    \end{itemize}
\end{lemma}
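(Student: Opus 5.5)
We follow the sketch before Observation~\ref{lem:common-divisor-for-det} and work dimension by dimension. For $j \in \{1,\ldots,d\}$ let $\pi_j$ be the projection onto the first $j$ coordinates and $A_j := \pi_j(A)$, as a multiset. After an initial unimodular change of coordinates (a Hermite normal form of $Z$, Lemma~\ref{lem:compute-hnf}) we may assume that $Z$ is lower triangular. Then the projections $Z_j := \pi_j(Z) = \mathcal{L}(A_j)$ are full-dimensional in $\mathbb{Z}^j$, and every proper sublattice $\Gamma \subsetneq Z$ with $\Gamma \supseteq$ (the part of $Z$ with vanishing first $j$ coordinates) corresponds to a proper sublattice of $Z_j$.

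The key structural observation is the following. If $|A \setminus \Gamma| < k$ for some proper $\Gamma \subsetneq Z$, then $\Gamma$ can be enlarged to a maximal proper sublattice, which has prime index $p$. Choose the smallest $j$ such that $\pi_j(\Gamma) \neq Z_j$. Then the set $B := \{\vec a \in A : \pi_j(\vec a) \in \pi_j(\Gamma)\}$ has size at least $n-k$, and $\pi_j(\Gamma)$ is an index-$p$ sublattice of $Z_j$ whose projection onto the first $j-1$ coordinates is all of $Z_{j-1}$.

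By Observation~\ref{lem:common-divisor-for-det}, applied in $\mathbb{Z}^j$ after normalising by $\det(Z_j)$, the prime $p$ divides $\det(\pi_j\vec a_1,\ldots,\pi_j\vec a_j)/\det(Z_j)$ for every $j$-tuple drawn from $B$. Conversely, if for a fixed $(j-1)$-tuple $T$ of vectors whose projections form a full-rank set of "good" vectors, the integers $D_T(\vec a) := \det(\pi_j T, \pi_j \vec a)/\det(Z_j)$ for $\vec a \in A$ have a prime $p$ dividing many of them, then the vectors $\vec a$ with $p \mid D_T(\vec a)$ span, modulo $p$, a space of codimension one. Together with $T$ this yields an index-$p$ lattice $\Gamma$ containing all of them.

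The algorithm therefore proceeds as follows. For each $j$, sample $O(d\log N)$ random $(j-1)$-tuples $T$ from $A$. With probability roughly $(1 - k/n)^{d}$ a sampled tuple lies in $B$, and it is linearly independent with good probability; if $k$ is so large that this fails, the first alternative holds trivially with $\Gamma$ taken to be any index-$2$ sublattice of $Z$, since then $k(d\log N)^{2d} \geq n$. For each sampled $T$, compute the multiset $\{D_T(\vec a)\}_{\vec a}$, whose entries are bounded by $(\sqrt d N)^{d}$, and run Lemma~\ref{lem:almostc-common-divisor} with threshold $n - k(d\log N)^{2d}$. This costs $\widetilde O(n + \sqrt{N^d/\det(Z)}) = \widetilde O(n+\tau)$, because after dividing by $\det(Z_j)$ the values are at most about $N^d/\det(Z)$. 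If a prime $p$ is found, build $\Gamma := \mathcal{L}(T) + p Z + \mathcal{L}(\{\vec a: p \mid D_T(\vec a)\})$ using Lemma~\ref{lem:compute-rem}(iii), and check in $O(n\,\mathrm{poly}(d))$ time that $\Gamma \neq Z$ and that $|A\setminus\Gamma|\leq k(d\log N)^{2d}$. Otherwise, after all $j$ and all samples fail, assert the second alternative.

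Correctness of the assertion argues by contraposition: if some $\Gamma$ misses fewer than $k$ vectors, then with high probability some sampled $T \subseteq B$ of full rank exposes $p$ as dividing at least $n-k$ of the $D_T$ values, and the algorithm would have found a prime. The $d^d$ factor in the running time comes from the number of samples and from the determinant computations over $j \leq d$.

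The main obstacle is the soundness direction. A prime $p$ that divides many $D_T(\vec a)$ certifies only that those vectors lie in a hyperplane modulo $p$ relative to $T$. For the resulting $\Gamma$ to be proper, $p$ must not divide $\det(\pi_j T)/\det(Z_{j-1})$-type factors. The slack $(d\log N)^{2d}$ is meant to absorb the vectors lost when $T$ itself is degenerate modulo $p$, and to handle this I would first try to restrict to tuples $T$ whose projections generate $Z_{j-1}$ modulo every prime found, resampling when that fails.
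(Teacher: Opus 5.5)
Your structural reduction is correct: enlarge $\Gamma$ to a prime-index sublattice, then take the least $j$ with $\pi_j(\Gamma)\neq Z_j$. Soundness is also not the real problem, since you verify every candidate lattice. The gap is \emph{completeness}, and it is a genuine one.

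For the contrapositive you need some sampled $T\subseteq B$ that is nondegenerate modulo the unknown prime $p$, i.e.\ $p\nmid \det(\pi_{j-1}T)/\det(Z_{j-1})$. Otherwise $p$ divides every $D_T(\vec{a})$, and the lattice you build contains $\mathcal{L}(A)=Z$. Linear independence over $\mathbb{Q}$ does not give this. With a single threshold $k(d\log N)^{2d}$ at every level, nothing bounds the probability of a good $T$ from below.

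Concretely, take $d=2$ and $k=2$. Let $A$ consist of $(1-\epsilon)n$ generic vectors of $2\mathbb{Z}^2$, $\epsilon n$ vectors with both coordinates odd, and $(1,0)$, with $k(d\log N)^{2d}\ll\epsilon n$ and $\epsilon\ll 1/(d\log N)$. Then $\Gamma=\{(x,y):x\equiv y\pmod 2\}$ misses only one vector, so the second alternative is false. At level $j=1$ the prime $2$ misses more than $k(d\log N)^{2d}$ first coordinates, so nothing is found. At level $j=2$, every sample $\vec{t}\in 2\mathbb{Z}^2$ makes all $D_T(\vec{a})=t_1a_2-t_2a_1$ even. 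Only an $\epsilon$-fraction of samples is useful, so $O(d\log N)$ samples fail with probability $1-o(1)$ and you wrongly assert the second alternative. Resampling until $T$ is nondegenerate does not rescue this. Here $\epsilon$ can be as small as about $k(d\log N)^{2d}/n$, so you would need nearly $n/k$ samples at cost $\widetilde{O}(n)$ each.

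The paper avoids this with two ingredients you are missing: graded thresholds, and sampling a random \emph{subset} rather than a tuple.

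\emph{Graded thresholds.} The paper sets $k_j=k(d\log N)^{2(d-j)}$ and processes $j=1,\dots,d$. It stops at the first level where some proper sublattice of $Z_j$ misses at most $k_j$ elements of $A_j$; lifting that sublattice gives the first alternative.

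\emph{Subset sampling.} If level $j-1$ passed, every proper sublattice of $Z_{j-1}$ misses at least $k_{j-1}=(d\log N)^2k_j$ elements. So a random subset $S\subseteq A_j$ with rate $1/k_j$ does two things at once:
\begin{itemize}
\item it avoids the at most $k_j$ elements outside $\Gamma_j$ with probability at least $1/4$;
\item it satisfies $\mathcal{L}(\pi_{j-1}S)=Z_{j-1}$ \emph{exactly}, by a union bound over at most $N^{(j-1)^2}$ lattices.
\end{itemize}
Let $\vec{b}_1,\dots,\vec{b}_{j-1}$ be the HNF basis of $\mathcal{L}(S)$. Each $\det(\vec{b}_1,\dots,\vec{b}_{j-1},\vec{a})$ equals $z_j\det(Z_{j-1})$ with $z_j\in\mathbb{Z}$. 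Hence \emph{any} $q>1$ dividing at least $n-k_j$ of the normalized determinants yields a lattice of determinant at least $q\det(Z_j)$ containing those vectors. Degenerate primes cannot occur, which also removes the soundness worry you raised. The $(d\log N)^{2d}$ slack in the statement is consumed by these graded thresholds, not by degenerate tuples.

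Separately, your running-time bound is unjustified. In the original (or HNF) coordinates, $|D_T(\vec{a})|$ can be of order $N^j/\det(Z_j)$. This can far exceed $\tau^2$, e.g.\ when $\det(Z_1)=1$ but $\det(Z)$ is large, and then factoring costs a $\sqrt{N}$ term. The paper first obtains $\widetilde{O}((n+\sqrt{N})\,\mathrm{poly}(d))$. It then maps $A$ by $B^{-1}$, for a Mahler basis $B$ of $Z$, into $[-N',N']^d$ with $N'=(d^d\tau)^2$. That map, not the number of samples, is where the $d^d$ factor comes from.
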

\begin{proof}
    When $k = 1$, the lemma holds trivially since $|A \setminus \Gamma| \geq 1$ for any proper sublattice $\Gamma \subsetneq \mathcal{L}(A)$. Assume that $k \geq 2$.  

    For $j \in \{1, \ldots, d\}$, let $A_j$  be the projection of $A$ onto the $j$-dimensional space spanned by $\vec{e}_1, \ldots, \vec{e}_j$. (Note that $A_j$ can be a multiset.) Let $Z_j := \mathcal{L}(A_j)$. Note that $Z_j$ must be full-dimensional. Let $k_j := k\cdot (d\log N)^{2(d-j)}$. We say that $A_j$ is bad if $|A_j \setminus \Gamma_j| \leq k_j$ for some proper sublattice $\Gamma_j \subsetneq Z_j$, and good otherwise. We shall iteratively determine whether $A_j$ is good or bad for $j \in \{1, \ldots, d\}$. If some $\Gamma_j$ certifies that $A_j$ is bad, then by lifting $\Gamma_j$ to $d$-dimension, we can obtain a proper sublattice $\Gamma \subsetneq Z$ such that
    \[
        |A \setminus \Gamma| \leq k_j \leq k\cdot (d\log N)^{2d};
    \] 
    otherwise, $A_d$ is good. (Note that $A_d = A$ and $k_d = k$.) This implies 
    \(
        |A \setminus \Gamma| \geq k
    \)
    for any $\Gamma \subsetneq Z$.

    Consider the base case where $j = 1$. Note that $A_1$ is a multiset of integers from $[N]$, and all its elements are multiples of $\det(Z_1)$. Let $A'_1$ be obtained by dividing each element of $A_1$ by $\det(Z_1)$. To determine whether $A_1$ is bad or not, it suffices to check whether $n - k_1$ integers in $A'_1$ have a common divisor greater than $1$, and find such a divisor if it exists. This can be done by Lemma~\ref{lem:almostc-common-divisor} in $\widetilde{O}(n + \sqrt{N})$ time. 

    If $A_{j-1}$ is bad, then we are done. Suppose that $A_{j-1}$ is good. We need to check whether $A_j$ is bad or not. We shall show that if $A_j$ is bad, then in $\widetilde{O}((n + \sqrt{N}) \cdot \mathrm{poly}(j))$ time and with error probability at most $\frac{1}{8}$, we can obtain a lattice $\Gamma^*_j \subsetneq Z_j$ certifying that $A_j$ is bad. By repeating the procedure for $O(d\log N)$ times, we can reduce the error probability to $N^{-\Omega(d)}$. The lemma follows directly by simply iterating over $j \in \{1, \ldots, d\}$.

    Suppose that $A_j$ is bad. That is, $|A_j \setminus \Gamma_j| \leq k_j$ for some $\Gamma_j \subsetneq Z_j$. We sample the elements of $A_j$ with probability $\frac{1}{k_j}$, and let $S$ be the set of sampled elements. Let $S'$ be the projection of $S$ onto the $(j-1)$-dimensional space spanned by $\vec{e}_1, \ldots, \vec{e}_{j-1}$. 

    \begin{claim}\label{clm:almost-common-lattice-1}
       With probability at least $\frac{1}{8}$, we have $S \subseteq \Gamma_j$ and  $\mathcal{L}(S') = Z_{j-1}$.
    \end{claim}
    \begin{proof}
        It suffices to show that
        \begin{align}
            \mathbf{Pr}[\text{$S \subseteq \Gamma_j$}] \geq \frac{1}{4},\label{eq:cl-1}\\
            \mathbf{Pr}[\mathcal{L}(S') = Z_{j-1}] \geq \frac{7}{8}\label{eq:cl-2}
        \end{align}
        Then the lemma holds by the union bound.

        Inequality~\eqref{eq:cl-1} is easy to prove. Note that $S \subseteq \Gamma_j$ if none of the elements in $A_j \setminus \Gamma_j$ are sampled. (Recall that $k_j \geq k \geq 2$.) Therefore,
        \[
            \mathbf{Pr}[\text{$S \subseteq \Gamma_j$}] \geq (1- \frac{1}{k_j})^{k_j} \geq \frac{1}{4}.
        \]

        Next we show~\eqref{eq:cl-2}. Since $S' \subseteq Z_{j-1}$, we have $\mathcal{L}(S') \neq Z_{j-1}$ if and only if $S' \subseteq \Gamma$ for some lattice $\Gamma \subsetneq Z_{j-1}$. It suffices to bound the probability of the latter. Note that $S'$ can be regarded as a multiset whose elements are sampled from $A_{j-1}$ with probability $\frac{1}{k_j}$. Fix an arbitrary proper sublattice $\Gamma \subsetneq Z_{j-1}$. Since $A_{j-1}$ is good, we have that 
        \[
            |A_{j-1} \setminus \Gamma| \geq k_{j-1}.
        \]
        Note that $S' \subseteq \Gamma$ if and only if none of the elements in $A_{j-1} \setminus \Gamma$ are sampled. Therefore, we have
        \[
            \mathbf{Pr}[S' \subseteq \Gamma] \leq (1 - \frac{1}{k_j})^{k_{j-1}} \leq \exp(-k_{j-1}/k_j) = \exp(-d^2\log^2 N) \leq N^{-d^2\log N}.
        \]
        We claim that if $S'$ lies in some proper sublattice of $Z_{j-1}$, it must also lie in a proper sublattice of $Z_{j-1}$ generated by vectors in $[N]^{j-1}$. To see the claim, consider $\mathcal{L}(S')$. It is a proper sublattice of $Z_{j-1}$ generated by vectors in $[N]^{j-1}$, and $S'$ lies in it. Since there are at most $N^{(j-1)^2}$ lattices that can be generated by vectors in $[N]^{j-1}$, by the union bound, we have
        \[
            \mathbf{Pr}[\text{$S' \subseteq \Gamma$ for some $\Gamma \subsetneq Z_{j-1}$}] \leq N^{-d^2\log N} \cdot N^{(j-1)^2} \leq N^{-d^2\log N+d^2} \leq \frac{1}{8},
        \]
        which implies~\eqref{eq:cl-2}.
    \end{proof}

    \begin{claim}\label{clm:almost-common-lattice-2}
        Suppose that $S \subseteq \Gamma_j$ and  $\mathcal{L}(S') = Z_{j-1}$. In $\widetilde{O}((n + \sqrt{N})\cdot \mathrm{poly}(j))$ time, we can compute a lattice $\Gamma^*_j \subsetneq Z_j$ such that $|A_j \setminus \Gamma^*_j| \leq k_j$.
    \end{claim}
    \begin{proof}
        Consider the lattice $\mathcal{L}(S)$. Let $\vec{b}_1, \ldots, \vec{b}_{j-1}, \vec{b}_j$ be the basis of $\mathcal{L}(S)$ in Hermite normal form. (Note that $\vec{b}_j$ can be $\vec{0}$ if $\mathcal{L}(S)$ is of rank $j-1$, but $\vec{b}_1, \ldots, \vec{b}_{j-1}$ must be non-zero since $\mathcal{L}(S') = Z_{j-1}$.) This basis can be obtained in $\widetilde{O}(j^3n\log N)$ time via Lemma~\ref{lem:compute-hnf}. Now consider the following multiset of integers.
        \[
            D := \{|\det(\vec{b}_1, \ldots, \vec{b}_{j-1}, \vec{a})| : \vec{a} \in A_j\}
        \]
        Note that $\max(D) \leq \det(Z_j) \cdot \sqrt{j} N$ since $\|\vec{a}\|_{2} \leq \sqrt{j}N$. Moreover, all the integers in $D$ are multiples of $\det(Z_j)$ due to $S \subseteq A_j \subseteq Z_j$.  Since $|A_j \setminus \Gamma_j| \leq k_j$ and $S \subseteq \Gamma_j$, by Observation~\ref{lem:common-divisor-for-det}, $D$ contains at least $n - k_j$ integers that are divisible by $q\cdot \det(Z_j)$ for some $q > 1$.  Such a $q$ can be found in $\widetilde{O}((n + \sqrt{N}) \cdot \mathrm{poly}(j))$ time by first dividing all the elements in $D$ by $\det(Z_j)$ and then applying Lemma~\ref{lem:almostc-common-divisor}. Let 
        \[
            A^*_j := \{\text{$\vec{a} \in A_j$ : $q \cdot \det(Z_j)$ divides $\det(\vec{b}_1, \ldots, \vec{b}_{j-1}, \vec{a})$}\}.
        \]
        Note that $|A^*_j| \geq n - k_j$.  Let $\Gamma^*_j := \mathcal{L}(A^*_j)$. We show that $\Gamma^*_j$ is our target lattice. 

       It is easy to see that $|A_j \setminus \Gamma^*_j| = |A_j| - |A^*_j| \leq k_j$. It remains to show that $\Gamma^*_j \subsetneq Z_j$. Since $A^*_j \subseteq A_j \subseteq Z_j$, we have $\Gamma^*_j \subseteq Z_j$. If $\Gamma^*_j$ is not full-dimensional, then it must be that $\Gamma^*_j \neq Z_j$ since $Z_j$ is full-dimensional; we are done. Suppose that $\Gamma^*_j$ is full-dimensional. To show $\Gamma^*_j \neq Z_j$, it suffices to show that $\det(\Gamma^*_j) > \det(Z_j)$. Recall that $\mathcal{L}(S') = Z_{j-1}$. It implies that the projections of $\vec{b}_1, \ldots, \vec{b}_{j-1}$ on the $(j-1)$-dimensional space form the basis of $Z_{j-1}$. Since $A^*_j \subseteq Z_j$, for any $\vec{a} \in A^*_j$, it can be represented as
        \[
            \vec{a} = z_1\vec{b}_1 + \cdots + z_{j-1}\vec{b}_{j-1} + z_j \vec{e}_j,
        \]
        and therefore, we have $\det(\vec{b}_1, \ldots, \vec{b}_{j-1}, \vec{a}) =z_j \cdot \det(Z_{j-1})$. By definition of $A^*_j$, we have $q\cdot \det(Z_{j})$ divides $z_j \cdot \det(Z_{j-1}) $. It implies that $A^*_j$ is in the following lattice with basis
        \[
            \vec{b}_1, \ldots ,\vec{b}_{j-1}, \frac{q\cdot \det(Z_{j})}{\det(Z_{j-1})}\cdot \vec{e}_j.
        \]
        The determinant of this lattice is at least $q\cdot \det(Z_{j}) > \det(Z_{j})$. Recall that $\Gamma^*_j = \mathcal{L}(A^*_j)$. It follows that $\det(\Gamma^*_j) \geq q\cdot \det(Z_{j}) > \det(Z_{j})$.
    \end{proof}

    One can verify that the total time cost of the construction is $\widetilde{O}((n + \sqrt{N}) \cdot \mathrm{poly}(d))$. We briefly explain how to reduce it to $\widetilde{O}((n + d^d\tau) \cdot \mathrm{poly}(d))$. Basically, using a linear map $B^{-1}$, where $B$ represents a Mahler's basis of $Z$, one can map $A$ to a set $A'$ of integer vectors in $[-N' ,N']^d$, where 
    \[
        N' = d^{2d}\cdot \frac{N^d}{\det(Z)} = (d^d\tau)^2.
    \]
    Then we apply the above procedure to $A'$. The Mahler's basis $B$ can be computed in $O(d^3n\log N)$ time, and applying the above procedure to $A'$ takes time
    \[
        \widetilde{O}((n + \sqrt{N'})\cdot \mathrm{poly}(d)) = \widetilde{O}((n + d^d\tau)\cdot \mathrm{poly}(d)). \qedhere
    \]
\end{proof}

\subsection{Extracting Spread Subsets}
Once we can compute almost common lattices, given an $\widetilde{O}(1)$-dense and $\widetilde{O}(1)$-nondegenerate set $A \subseteq [N]^d$, we are able to extract a subset $A^*$ that is $\widetilde{O}(\sqrt{\ell})$-dense, $\widetilde{O}(\sqrt{\ell})$-spread, and $\widetilde{O}(\sqrt{\ell})$-nondegenerate, where $\ell: = \frac{\det(Z^*)}{\det(Z)}$ and $Z^* := \mathcal{L}(A^*)$ and $Z: = \mathcal{L}(A)$.

The approach is simple. If $A$ is not spread, then there must be a proper sublattice $\Gamma$ of $\mathcal{L}(A)$ such that only a few vectors in $A$ are not in $\Gamma$. Then we remove those vectors not in $\Gamma$. We repeat the procedure until the remaining vectors form a spread subset.

\begin{lemma}\label{lem:compute-spread-subset}
    Let $A \subseteq [N]^d$ be a set of $n$ vectors that is $\rho$-dense and $\delta$-nondegenerate, and let $\gamma$ be a positive real. Let $Z: = \mathcal{L}(A)$ and $\tau := \sqrt{N^d/\det(Z)}$. Assume that $\rho \geq \delta > 2(d\log N)^{4d}\cdot \gamma$. In $\widetilde{O}((n + d^d\tau)\cdot \mathrm{poly}(d))$ time, with error probability at most $N^{-\Omega(d)}$, we can compute a subset $A^* \subseteq A$ that is $\frac{\rho}{2}\sqrt{\ell}$-dense, $\gamma\sqrt{\ell}$-spread, and $\frac{\delta}{2}\sqrt{\ell}$-nondegenerate, where $\ell: = \frac{\det(Z^*)}{\det(Z)}$ and $Z^* := \mathcal{L}(A^*)$.
\end{lemma}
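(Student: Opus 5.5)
We extract $A^*$ by iteratively peeling off the few vectors outside an almost common lattice. Start with $A_0 := A$ and $Z_0 := Z$. In round $i$, let $Z_i := \mathcal{L}(A_i)$ and $\ell_i := \det(Z_i)/\det(Z)$, so that $\tau_i := \sqrt{N^d/\det(Z_i)} = \tau/\sqrt{\ell_i}$. Run Lemma~\ref{lem:almost-common-lattice} on $A_i$ with $k := \lceil \gamma\sqrt{\ell_i}\,\tau_i\rceil = \lceil\gamma\tau\rceil$; note that $k$ does not depend on $i$.
\begin{itemize}
\item If the lemma asserts $|A_i\setminus\Gamma|\ge k$ for every proper sublattice $\Gamma \subsetneq Z_i$, then $A_i$ is $\gamma\sqrt{\ell_i}$-spread with respect to its own $\tau_i$. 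We stop and output $A^* := A_i$.
\item Otherwise the lemma returns $\Gamma\subsetneq Z_i$ with $|A_i\setminus\Gamma|\le k(d\log N)^{2d}$. We set $A_{i+1} := A_i\cap\Gamma$.
\end{itemize}
A technical point arises when $\mathcal{L}(A_{i+1})$ is not full-dimensional. We will argue below, via nondegeneracy, that this cannot happen.

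\textbf{Number of rounds and total removal.} Each round passes to a proper sublattice, so $\det$ at least doubles (Fact~\ref{fact:suplatt}), or the rank drops. Since $\ell_i\le 2^d\tau^2$ (a lattice point count in $[N]^d$), there are at most $O(d\log N)$ rounds, say at most $2d\log N$ rounds. The total number of removed vectors is therefore at most $2d\log N\cdot(\gamma\tau+1)(d\log N)^{2d}$. Using $\delta>2(d\log N)^{4d}\gamma$, this is at most $\delta\tau/2$ for large $N$, after absorbing the $+1$ using $\delta\tau\ge$ something large. If $\gamma\tau<1$ the statement is nearly trivial, so we handle that case separately.

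\textbf{Verifying density and nondegeneracy of $A^*$.} Any lower-dimensional subspace $\mathcal V$ satisfies $|A\setminus \mathcal V|\ge\delta\tau$, so
\[
|A^*\setminus\mathcal V|\ge \delta\tau-\delta\tau/2=\frac{\delta}{2}\tau=\frac{\delta}{2}\sqrt{\ell}\,\tau^*,
\]
where $\tau^* = \tau/\sqrt{\ell}$ with $\ell = \det(Z^*)/\det(Z)$. In particular $A^*$ is not contained in any hyperplane, so $Z^*$ is full-dimensional at every stage. This resolves the rank issue: every intermediate $A_i$ also loses at most $\delta\tau/2$ vectors and hence spans $\mathbb{R}^d$. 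Similarly, since $\rho\ge\delta$,
\[
|A^*|\ge \rho\tau-\delta\tau/2\ge \frac{\rho}{2}\tau=\frac{\rho}{2}\sqrt{\ell}\,\tau^*.
\]
Spreadness holds by the stopping condition, with the same parametrization $\gamma\tau = \gamma\sqrt{\ell}\,\tau^*$.

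\textbf{Running time.} Each invocation of Lemma~\ref{lem:almost-common-lattice} on $A_i$ costs $\widetilde O((n+d^d\tau_i)\mathrm{poly}(d))$. Since $\tau_i\le\tau$, this is at most $\widetilde O((n+d^d\tau)\mathrm{poly}(d))$ per round. Computing $\mathcal{L}(A_{i+1})$ via Lemma~\ref{lem:compute-hnf} costs $\widetilde O(d^3 n\log N)$. Over $O(d\log N)$ rounds, the total stays within the claimed bound. For the error probability, we take a union bound over the $O(d\log N)$ calls, each failing with probability $N^{-\Omega(d)}$.

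The main obstacle is the bookkeeping that makes the spreadness threshold consistent across changing lattices. The key observation is that $\gamma\sqrt{\ell_i}\,\tau_i=\gamma\tau$ is invariant, so the removal budget per round is uniform. With that, the total loss is controlled by the polylogarithmic number of rounds against the assumed gap $\delta>2(d\log N)^{4d}\gamma$.
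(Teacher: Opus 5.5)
This is essentially the paper's proof: you peel off the vectors outside an almost common lattice via Lemma~\ref{lem:almost-common-lattice} with the fixed threshold $k\approx\gamma\tau$ (your observation that $\gamma\sqrt{\ell_i}\,\tau_i=\gamma\tau$ is exactly why the paper can use one $k$ throughout), bound the number of rounds by determinant doubling, and use $\delta>2(d\log N)^{4d}\gamma$ to keep the total loss below $\delta\tau/2$. The paper instead makes loss of full-dimensionality a stopping rule and rules it out afterwards, where you rule it out inductively; these are equivalent, and your ceiling on $k$ is slightly more careful than the paper. One small correction: $\ell_i\le 2^d\tau^2$ does not hold in general for large $d$, so bound the rounds with Hadamard's inequality $\det(Z_i)\le(\sqrt d N)^d$ as the paper does, which still gives $O(d\log(dN))$ rounds and is absorbed by the slack.
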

\begin{proof}
    Let $k := \gamma \cdot \sqrt{N^d/\det(Z)}$.  We shall construct a sequence $A_0 \supsetneq A_1 \supsetneq \cdots \supsetneq A_r$ until $Z_r := \mathcal{L}(A_r)$ is not full-dimensional or $|A_r \setminus \Gamma| \geq k$ for any $\Gamma \subsetneq Z_r$. Initially, $i := 0$ and $A_0 := A$. Given $A_i$, we can check whether $Z_i := \mathcal{L}(A_i)$ is full-dimensional or not in $\widetilde{O}(d^3n\log N)$ time via Lemma~\ref{lem:compute-hnf}. If not, then we stop and set $r := i$. Suppose that $Z_i$ is full-dimensional. Then we invoke Lemma~\ref{lem:almost-common-lattice} on $A_i$ and $k$, which takes $\widetilde{O}((n + d^d\cdot \tau)\cdot \mathrm{poly}(d))$ time. If it asserts that $|A_i \setminus \Gamma| \geq k$ for any $\Gamma \subsetneq Z_i$, then we stop and set $r := i$. Otherwise, it returns a $\Gamma_i \subsetneq Z$ such that 
    \[
        |A_i \setminus \Gamma_i| \leq k\cdot (d\log N)^{2d}.
    \]
    We set $A_{i+1} := A_i \cap \Gamma_i$, and then proceed with $i := i+1$.

    We first show that $r \leq 1 + d\log (dN)$. Suppose not. Note that $Z_{r-1}$ must be full-dimensional. Observing that $Z_{i} \subsetneq Z_{i-1}$ for all $i$, we have 
    \[
        \det(Z_{r-1}) \geq 2^{r-1} \cdot \det(Z_0) > (dN)^{d}.
    \] 
    But this is impossible since $Z_{r-1} = \mathcal{L}(A_{r-1})$ and the vectors in $A_{r-1}$ have $\ell_2$-norm at most $\sqrt{d} N$.

    Note that $|A_{i}| - |A_{i+1}| \leq k\cdot (d\log N)^{2d}$. We have
    \[
        |A_r| \geq |A| - rk\cdot (d\log N)^{2d} \geq |A| - k\cdot (d\log N)^{4d}.
    \]
    $Z_r$ must be full-dimensional since $A$ is $\delta$-nondegenerate and $\delta > 2(d\log N)^{4d}\cdot \gamma$. By the stopping condition, it must be that $|A_r \setminus \Gamma| \geq k$ for any $\Gamma \subsetneq Z_r$. Note that 
    \[
        k = \gamma \cdot \sqrt{N^d/\det(Z)} = \gamma \sqrt{\frac{\det(Z_r)}{\det(Z)}} \cdot \sqrt{\frac{N^d}{\det(Z_r)}}.
    \]
    So $A_r$ is $\gamma\sqrt{\frac{\det(Z_r)}{\det(Z)}}$-spread. Also $A_r$ is $\frac{\rho}{2}\sqrt{\frac{\det(Z_r)}{\det(Z)}}$-dense and $\frac{\delta}{2}\sqrt{\frac{\det(Z_r)}{\det(Z)}}$-nondegenerate, since at most $\frac{\delta}{2} \cdot \sqrt{N^d/ \det(Z)}$ elements are removed from $A$ to obtain $A_r$.

    It is easy to verify that the total running time is $\widetilde{O}((n + d^d\tau)\cdot \mathrm{poly}(d))$ time and the total error probability is at most $N^{-\Omega(d)}$.
\end{proof}

\subsection{Reducing to Modular Subset Sum}
Using the $\frac{\rho}{2}\sqrt{\ell}$-dense, $\gamma\sqrt{\ell}$-spread, and $\frac{\delta}{2}\sqrt{\ell}$-nondegenerate subset $A^*$ obtained via Lemma~\ref{lem:compute-spread-subset}, we show that we can obtain a lattice $\Gamma$ so that deciding whether $\vec{t} \in \mathcal{S}(A)$ is equivalent to deciding whether $\vec{t} \in \mathcal{S}_{\Gamma}(A)$. That is, the subset sum problem can be reduced to the modular subset sum problem, which can be solved efficiently via Lemma~\ref{lem:faster-modular-ss}.

The lattice $\Gamma$ is obtained via the following lemma, which heavily relies on Kneser's theorem.

\begin{lemma}\label{lem:reduce-1}
    Let $A \subseteq [N]^d$ be a set of $n$ vectors, where $Z: = \mathcal{L}(A)$ is full-dimensional. For any full-dimensional lattice $Z^* \subseteq Z$, in $\widetilde{O}((n + 2^d\ell) \cdot \mathrm{poly}(d))$ time, where $\ell := \det(Z^*)/\det(Z)$, we can obtain a lattice $\Gamma$ with $Z^* \subseteq \Gamma \subseteq Z$ such that the following holds. $A$ can be partitioned into $R \cup G$ so that 

    \begin{enumerate}[label={\normalfont (\roman*)}]
        \item $|R| \leq 2\ell$ and $G \subseteq \Gamma$

        \item for any $\vec{t}$ with $\vec{t} \bmod \Gamma \in \mathcal{S}_{\Gamma}(A)$, we have $\vec{t} \bmod Z^* \in \mathcal{S}_{Z^*}(R)$.
    \end{enumerate}
\end{lemma}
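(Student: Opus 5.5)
We build $R$ greedily, working throughout in the quotient $Z \bmod Z^*$, which has $\ell$ elements. Start with $R := \emptyset$ and $\Gamma := Z^*$, so that $\mathcal{S}_{Z^*}(R) = \{\vec{0}\}$. Scan the vectors of $A$ one by one. Each vector is tested against the current $\Gamma$ and the current set $S := \mathcal{S}_{Z^*}(R)$:
\begin{itemize}
    \item If $\vec{a} \in \Gamma$, put $\vec{a}$ into $G$.
    \item If $\vec{a} \notin \Gamma$ and $(S + \vec{a}) \bmod Z^* \neq S$, put $\vec{a}$ into $R$ and update $S := S \cup ((S+\vec{a}) \bmod Z^*)$.
    \item If $\vec{a} \notin \Gamma$ and $\vec{a} + S \equiv S \pmod{Z^*}$, then $\vec{a} \bmod Z^*$ lies in $\mathrm{Sym}_{Z^*}(S)$. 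Enlarge the lattice to $\Gamma := \Gamma + \mathcal{L}(\vec{a})$, computed via Lemma~\ref{lem:compute-rem}(iii), and put $\vec{a}$ into $G$.
\end{itemize}
Since $\vec{0} \in S$, the symmetry condition $\vec{a} + S \equiv S$ forces $\vec{a} \bmod Z^* \in S$.

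\textbf{Key invariant.} We maintain that $S$ is a union of cosets of $\Gamma$ modulo $Z^*$, i.e.\ $\Gamma + S \equiv S \pmod{Z^*}$. This holds initially because $\Gamma = Z^*$. It is preserved when the lattice is enlarged, because $\mathrm{Sym}_{Z^*}(S)$ is a group containing the old $\Gamma \bmod Z^*$ and also $\vec{a}$. It is preserved when $R$ grows, by the observation after Definition~\ref{def:symset}: a symmetry of $S$ stays a symmetry of $S + \{0,\vec{a}\}$.

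\textbf{Bounding $|R|$.} Every addition to $R$ strictly increases $|S|$, and $|S| \le \ell$, so $|R| \le \ell \le 2\ell$. Every enlargement of $\Gamma$ at least halves $\det(\Gamma)$, so there are at most $\log \ell$ enlargements. At the end, every element of $G$ lies in the final $\Gamma$, since lattices only grow.

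\textbf{Property (ii).} Let $S_{\mathrm{fin}} = \mathcal{S}_{Z^*}(R)$ be the final set. We claim
\[
    \mathcal{S}(A) \bmod Z^* = (S_{\mathrm{fin}} + \mathcal{S}(G)) \bmod Z^* .
\]
By the invariant this equals $S_{\mathrm{fin}}$, because $\mathcal{S}(G) \subseteq \Gamma$ and $S_{\mathrm{fin}}$ is $\Gamma$-periodic. Hence $\mathcal{S}_\Gamma(A) = S_{\mathrm{fin}} \bmod \Gamma$. Now suppose $\vec{t} \bmod \Gamma \in \mathcal{S}_\Gamma(A)$. Then $\vec{t} \equiv \vec{s} \pmod{\Gamma}$ for some $\vec{s} \in S_{\mathrm{fin}}$, and $\Gamma$-periodicity of $S_{\mathrm{fin}}$ gives $\vec{t} \bmod Z^* \in S_{\mathrm{fin}}$, as required.

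One subtlety: an element put into $G$ before a later enlargement is still in the final $\Gamma$, so nothing needs to be rechecked.

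\textbf{Main obstacle: running time.} Updating $S$ naively costs $O(\ell)$ per vector, giving $O(n\ell)$ in total, but we need $\widetilde{O}(n + 2^d\ell)$. Two ideas fix this.
\begin{itemize}
    \item \emph{Cheap membership test.} The test ``$\vec{a}+S \equiv S$'' must avoid touching all of $S$. Instead, test whether $\vec{a} \bmod \Gamma \in \Gamma$, using Lemma~\ref{lem:compute-rem}(i); this costs $O(d^2)$ per vector.
    \item \emph{Defer the updates.} Use the approach of Lemma~\ref{lem:faster-modular-ss}: collect the vectors not in $\Gamma$ and compute $\mathcal{S}_{Z^*}(R)$ together with $\mathrm{Sym}_{Z^*}$ in batches. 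Whenever the symmetry set becomes nontrivial, enlarge $\Gamma$ to include it, then reprocess.
\end{itemize}
The crude bound $|R| \le 2\ell$ leaves slack to allow for these batches. Each batch costs $\widetilde{O}((n+2^d\ell)\mathrm{poly}(d))$, and there are $O(\log \ell)$ lattice enlargements, which fits the stated budget.
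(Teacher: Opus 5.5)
Your existence argument is correct, and it differs from the paper's. You scan $A$ one vector at a time modulo $Z^*$ under the invariant $\Gamma + S \equiv S \pmod{Z^*}$. This needs no Kneser-type input, since each vector outside $\Gamma$ either enlarges $S$ or is a symmetry of it, and it even yields $|R| < \ell$. The gap is the running time, which is part of the statement, and neither of your two fixes works as written.

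The ``cheap membership test'' checks the wrong condition. The invariant only gives $\Gamma \bmod Z^* \subseteq \mathrm{Sym}_{Z^*}(S)$, and this inclusion can be strict. For instance, take $d=1$, $Z=\mathbb{Z}$, $Z^*=4\mathbb{Z}$. After processing the element $2$ you have $S=\{0,2\}$, whose symmetry group is $\{0,2\}$, while $\Gamma$ is still $4\mathbb{Z}$. Every later element $\equiv 2 \pmod{4}$ lies outside $\Gamma$ yet does not grow $S$, so your test sends it to $R$ and $|R|$ is no longer bounded in terms of $\ell$. Routing all vectors outside $\Gamma$ to the third branch instead would enlarge $\Gamma$ by non-symmetries and destroy the invariant. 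Classifying a vector correctly needs the current $S$ and $\mathrm{Sym}_{Z^*}(S)$. Updating these naively costs $\Omega(\ell)$ per growth step, and you give no faster mechanism.

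The batching bullet does not repair this either. Once vectors enter $R$ in batches, you lose the certificate that every element of $R$ grew $S$, so the bound $|R| < \ell$ no longer applies. Nothing in your description forces a batch to produce a symmetry outside $\Gamma$, so the $\log \ell$ bound on enlargements does not bound the number of batches.

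The missing ingredient is the paper's batch size. Let $\Gamma_0 := Z^*$ and $\ell_i := \det(\Gamma_i)/\det(Z)$. As long as at least $\ell_i$ vectors of $A$ lie outside $\Gamma_i$, take exactly $\ell_i$ of them as a batch $A_i$ and put them into $R$. Since $\sum_{\vec{a}\in A_i}|\{\vec{0},\vec{a}\}\bmod\Gamma_i| = 2\ell_i \geq |\mathcal{S}_{\Gamma_i}(A_i)| + \ell_i$, Corollary~\ref{coro:kneser-2} guarantees some $\vec{h}_i \notin \Gamma_i$ with $\vec{h}_i + \mathcal{S}(A_i) \equiv \mathcal{S}(A_i) \pmod{\Gamma_i}$. (Equivalently, apply your grow-or-symmetry dichotomy to the partial subset sums of the batch modulo $\Gamma_i$.) Lemma~\ref{lem:faster-modular-ss} finds $\vec{h}_i$ in $\widetilde{O}((n+2^d\ell)\mathrm{poly}(d))$ time, and you set $\Gamma_{i+1} := \Gamma_i + \mathcal{L}(\vec{h}_i)$. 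When fewer than $\ell_i$ vectors remain outside $\Gamma_i$, put them into $R$ and stop with $\Gamma := \Gamma_i$.

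Now every round forces an enlargement, so there are at most $\log \ell$ rounds. Also $|R| \le \sum_i \ell_i \le 2\ell$ because the determinants at least halve each round; this geometric sum is where the $2\ell$ in the statement comes from. Your periodicity argument for (ii) then goes through for the final $R$. Since $A_i \subseteq R$, $\vec{h}_i$ is a symmetry of $\mathcal{S}(R)$ modulo $\Gamma_i$, so $\Gamma_{i+1} + \mathcal{S}(R) = \Gamma_i + \mathcal{S}(R)$, and induction gives $\Gamma + \mathcal{S}(R) = Z^* + \mathcal{S}(R)$.
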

\begin{proof}
    We shall derive a sequence of lattices $\Gamma_0 \subsetneq \Gamma_1 \subsetneq \cdots \subsetneq \Gamma_r \subseteq Z$ as follows. Initially, $i = 0$ and $\Gamma_0 := Z^*$. Given $\Gamma_i$, let $\ell_i := \det(\Gamma_i)/\det(Z)$. If $|A \setminus \Gamma_i| < \ell_i$, then we stop, let $A_i := A \setminus \Gamma_i$, and let $r := i$. Otherwise, $A$ has at least $\ell_i$ elements not in $\Gamma_i$. Let them be $\{\vec{a}_1, \ldots, \vec{a}_{\ell_i}\} =: A_i$. Note that
    \[
        |\{0, \vec{a}_1\}| + \cdots + |\{0, \vec{a}_{\ell_i}\}| \geq 2\ell_i.
    \]
    Also note that $|\mathcal{S}_{\Gamma_i}(A_i)| \leq |Z \bmod \Gamma_i| \leq \det(\Gamma_i)/\det(Z) \leq \ell_i$. So by Corollary~\ref{coro:kneser-2}, there is a vector $\vec{h}_i \notin \Gamma_i$ such that 
    \begin{equation}\label{eq:fill-lattice}
        \vec{h}_i + \mathcal{S}(A_i) \equiv \mathcal{S}(A_i) \pmod {\Gamma_i}.
    \end{equation}
    We set $\Gamma_{i+1} := \Gamma_i + \mathcal{L}(\vec{h}_i)$, and repeat the procedure with $i := i+1$. 

    We first show that $\Gamma_0 \subsetneq \Gamma_1 \subsetneq \cdots \subsetneq \Gamma_r \subseteq Z$. Since $\vec{h}_{i} \notin \Gamma_i$, by the construction of $\Gamma_{i+1}$, it is easy to see that $\Gamma_i \subsetneq \Gamma_{i+1}$. It remains to prove that $\Gamma_i \subseteq Z$ for all $i$. We prove this by induction on $i$. The base case is trivial since $\Gamma_0 = Z^* \subseteq Z$. Assume that $\Gamma_i \subseteq Z$. We show that $\Gamma_{i+1} \subseteq Z$. Since $\vec{0} \in \mathcal{S}(A_i)$, in view of~\eqref{eq:fill-lattice}, $\vec{h}_i \in \mathcal{S}(A_i) + \Gamma_i$. Note that $A_i \subseteq A \subseteq Z$, so $\vec{h}_i\in Z$, which implies $\Gamma_{i+1} \subseteq Z$.

    Since $\det(\Gamma_i) \leq \det(\Gamma_{i-1})/2$ for all $i \geq 1$, we have that 
    \[
        r \leq \log \frac{\det(Z^*)}{\det(Z)} = \log \ell.
    \]

    Let $R := A_0 \cup \cdots \cup A_r$ and let $\Gamma := \Gamma_r$. We show that $R$ and $\Gamma$ satisfy the stated properties in the lemma. By construction, we have $|A_i| \leq \det(\Gamma_i)/\det(Z)$. So
    \[
        |R| \leq (\det(\Gamma_0) + \cdots + \det(\Gamma_r))/\det(Z) \leq 2\det(Z^*)/\det(Z) = 2\ell.
    \]
    Also since $A_r = A \setminus \Gamma_r$, we have $A \setminus R \subseteq \Gamma_r$. Property (i) is satisfied.

    It remains to show property (ii). We claim that
    \begin{equation}\label{eq:reduce-1-key}
        \Gamma + \mathcal{S}(R) = \mathcal{S}(R) + Z^*.
    \end{equation}
    Suppose that the claim holds. Pick an arbitrary $\vec{t}$ with $\vec{t} \bmod \Gamma \in \mathcal{S}_{\Gamma}(A)$. Since $G \subseteq \Gamma$, we have that $\vec{t} \bmod \Gamma \in \mathcal{S}_{\Gamma}(R)$. In other words, $\vec{t} \in \Gamma + \mathcal{S}(R)$. In view of~\eqref{eq:reduce-1-key}, we have $\vec{t} \in \mathcal{S}(R) + Z^*$, or equivalently, $\vec{t} \bmod Z^* \in \mathcal{S}_{Z^*}(R)$. So property (ii) holds.

    Now we prove~\eqref{eq:reduce-1-key}. We shall prove, by induction on $i$, that $\Gamma_i + \mathcal{S}(R) = \mathcal{S}(R) + Z^*$ for all $i = 0, \ldots, r$. The base case holds trivially, since $\Gamma_0 = Z^*$. Suppose the equation holds for some $i$. We show that it also holds for $i+1$. Recall that $\Gamma_{i+1} = \Gamma_i + \mathcal{L}(\vec{h}_i)$, where $\vec{h}_i$ satisfies
    \[
        \vec{h}_i + \mathcal{S}(A_i) \equiv \mathcal{S}(A_i) \pmod {\Gamma_i},
    \]
    or in equivalent form,
    \[
        \vec{h}_i + \Gamma_i + \mathcal{S}(A_i) =  \mathcal{S}(A_i) + \Gamma_i.
    \]
    Therefore, 
    \[
        \Gamma_{i+1} + \mathcal{S}(A_i) = \mathcal{S}(A_i) + \Gamma_i.
    \]
    Adding $\mathcal{S}(R \setminus A_i)$ to both sides gives
    \[
        \Gamma_{i+1} + \mathcal{S}(R) = \mathcal{S}(R) + \Gamma_i.
    \]
    By the inductive hypothesis, we have
    \[
        \Gamma_{i+1} + \mathcal{S}(R) = \mathcal{S}(R) + Z^*. 
    \]

    We bound the time cost of the construction. There are at most $\log \ell$ iterations. In the $i$-th iteration, we need to do the following.
    \begin{itemize}
        \item Select elements of $A$ that do not belong to $\Gamma_i$, which takes $O(d^2n)$ time via Lemma~\ref{lem:compute-rem}(i).

        \item Compute a non-zero $\vec{h}_i \in \mathrm{Sym}_{\Gamma_i}(\mathcal{S}(A_i))$, which takes $\widetilde{O}((n + 2^d\ell) \cdot \mathrm{poly}(d))$ via Lemma~\ref{lem:faster-modular-ss}.

        \item Compute $\Gamma_{i+1} := \Gamma_i + \mathcal{L}(\vec{h}_i)$ via Lemma~\ref{lem:compute-rem}(iii), which takes $O(d^2\log \ell')$ time. Note that $\ell' = \det(\Gamma_i) \leq \ell \det(Z) \leq \ell N^d$.
    \end{itemize}
    Therefore, the total time cost is $\widetilde{O}((n + 2^d\ell) \cdot \mathrm{poly}(d))$.
\end{proof}

Next we show that using $\Gamma$, we can reduce the subset sum problem to the modular subset sum problem.

\begin{lemma}\label{lem:reduce2modular}
    Let $A \subseteq [N]^d$ be a set of vectors, and let $A^*$ be a $\rho\sqrt{\ell}$-dense, $\gamma\sqrt{\ell}$-spread, and $\delta\sqrt{\ell}$-nondegenerate subset of $A$, where $\ell := \det(Z^*)/\det(Z)$, $Z := \mathcal{L}(A)$ and $Z^* := \mathcal{L}(A^*)$. Assume that $N$ is sufficiently large and 
    \[
        \rho \geq (d\log N)^{4d^2} \quad \text{and} \quad \rho\gamma \geq (d\log N)^{4d^2} \quad \text{and} \quad \rho\delta \geq (d\log N)^{15d^2}. 
    \]
    Then, in $\widetilde{O}((n + 2^d \ell) \cdot \mathrm{poly}(d))$ time, we can obtain a lattice $\Gamma$ with $Z^* \subseteq \Gamma \subseteq Z$ such that the following holds. For every
    \[
        \vec{t} \in (1- \varepsilon)\mathcal{P}(A) + \frac{\sigma(A)}{2},
    \]
    we have
    \[
        \vec{t} \in \mathcal{S}(A) \text{ if and only if } \vec{t} \bmod \Gamma \in \mathcal{S}_{\Gamma}(A),
    \]
    where $\varepsilon := \frac{(d\log N)^{15d^2} }{\rho\delta}$.
\end{lemma}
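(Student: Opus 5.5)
We take $\Gamma$ to be the lattice produced by Lemma~\ref{lem:reduce-1} applied to $A$ and $Z^*$, together with the accompanying partition $A = R \cup G$ with $|R| \leq 2\ell$ and $G \subseteq \Gamma$. This already gives the running time $\widetilde{O}((n+2^d\ell)\cdot \mathrm{poly}(d))$. The direction ``$\vec{t}\in\mathcal{S}(A)\Rightarrow \vec{t}\bmod\Gamma\in\mathcal{S}_\Gamma(A)$'' is trivial. For the converse, take $\vec{t}\in(1-\varepsilon)\mathcal{P}(A)+\frac{\sigma(A)}{2}$ with $\vec{t}\bmod\Gamma\in\mathcal{S}_\Gamma(A)$. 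Lemma~\ref{lem:reduce-1}(ii) gives some $\vec{u}\in\mathcal{S}(R)$ with $\vec{t}-\vec{u}\in Z^*$. The plan is to show that $\vec{t}-\vec{u}\in\mathcal{S}(A\setminus R)$, which gives $\vec{t}\in\mathcal{S}(A)$.

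The tool for this is the John-type Theorem~\ref{thm:john-type}, applied to $A' := A^*\setminus R$, viewed inside its own lattice. We have $|R|\leq 2\ell$. Removing $R$ costs at most $2\ell$ elements, whereas $A^*$ has density, spreadness and nondegeneracy thresholds of order $\rho\sqrt{\ell}\,\tau^*$, $\gamma\sqrt{\ell}\,\tau^*$ and $\delta\sqrt{\ell}\,\tau^*$, where $\tau^*=\sqrt{N^d/\det(Z^*)}=\tau/\sqrt{\ell}$. Thus each threshold is of order $\rho\tau$, $\gamma\tau$ or $\delta\tau$. Since $\ell=\det(Z^*)/\det(Z)\leq$ the number of residues, which is at most of order $\tau^2/\ldots$, I expect to use Lemma~\ref{lem:popular-sum}(iii)-type bounds of the form $\ell\lesssim \widetilde O(\tau^2)$. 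The hope is that $2\ell$ is negligible compared to $\delta\tau$; if that is not directly available, I would instead use the weaker guarantee that $|R|\le 2\ell$ is small relative to $|A^*|\ge \rho\sqrt\ell\,\tau^*=\rho\tau$ only when $\ell\lesssim\rho\tau$.

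This is the \emph{main obstacle}. An alternative that avoids removing $R$ from $A^*$ is to build $R$ within $A\setminus A^*$ as much as possible, or to note that the construction in Lemma~\ref{lem:reduce-1} can be run with $R$ chosen among elements outside $\Gamma_i$. Elements of $A^*$ all lie in $Z^*\subseteq\Gamma_i$, so they are never picked, and hence $R\cap A^*=\emptyset$ automatically. This is the route I would take: $A^*\subseteq Z^*\subseteq\Gamma_0$, so every $A_i\subseteq A\setminus\Gamma_i$ is disjoint from $A^*$, and $A^*\subseteq G$. Then $A^*$ is untouched, and Theorem~\ref{thm:john-type} applies to $A^*$ with parameters $(\rho\sqrt\ell,\gamma\sqrt\ell,\delta\sqrt\ell)$ and lattice $Z^*$, after rescaling the conditions: $\rho\sqrt\ell\geq\rho$ and so on. This gives $Z^*\cap\big((1-\varepsilon^*)\mathcal{P}(A^*)+\frac{\sigma(A^*)}{2}\big)\subseteq\mathcal{S}(A^*)$ with $\varepsilon^*\le (d\log N)^{14d^2}/(\rho\delta\ell)$.

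It remains to handle $\vec{t}-\vec{u}$, whose target region comes from $\mathcal{P}(A)$, not $\mathcal{P}(A^*)$. Write $B:=A\setminus(A^*\cup R)$, so $A=A^*\cup B\cup R$. The idea mirrors Lemma~\ref{lem:frac2int}: by Lemma~\ref{lem:proximity} applied to $B$, pick $\vec{t}_B\in\mathcal{S}(B)$ close to the appropriate point of $\mathcal{P}(B)+\frac{\sigma(B)}{2}$, within a cube of side $dN$. Then we must show that $\vec{t}-\vec{u}-\vec{t}_B$ lies in $Z^*$ and inside $(1-\varepsilon^*)\mathcal{P}(A^*)+\frac{\sigma(A^*)}{2}$. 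The lattice condition needs $\vec{t}_B\in Z^*$, which fails in general. To fix this, I would use $B\subseteq G\subseteq\Gamma$ and restate: choose $\vec{t}_B$ via Lemma~\ref{lem:proximity} and then use the residue freedom. Concretely, Lemma~\ref{lem:reduce-1}(ii) applied to $\vec{t}-\vec{t}_B$ gives $\vec{u}\in\mathcal{S}(R)$ with $\vec{t}-\vec{t}_B-\vec{u}\in Z^*$; the condition $\vec{t}-\vec{t}_B\bmod\Gamma\in\mathcal{S}_\Gamma(A)$ holds since $\vec{t}_B\in\Gamma$. So $\vec{t}_B$ is chosen first and $\vec{u}$ second. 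The geometric containment then follows as in Lemma~\ref{lem:frac2int} via Lemma~\ref{lem:vol-change}. The perturbations involved are $\vec{u}\in 2\mathcal{P}(R)$, with $|R|\le2\ell$ vectors of size $N$, the $dN$-cube, and $\sigma(R)/2$. Each of these is bounded by a small multiple of $\mathcal{P}(A^*)$ using the inradius bound of Corollary~\ref{coro:large-inradius} for $A^*$, which is of order $\rho\delta\ell N/(d\log N)^{7d^2}$ in $\ell_\infty$ and dominates $2\ell N$ by the factor $\rho\delta/(d\log N)^{7d^2}$. Finally, $(1-\varepsilon)\mathcal{P}(A)=(1-\varepsilon)\mathcal{P}(B)+(1-\varepsilon)\mathcal{P}(A^*\cup R)$, which splits the target, and the total loss is $O(\varepsilon^*+(d\log N)^{7d^2+1}/(\rho\delta))\le\varepsilon$.
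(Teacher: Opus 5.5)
Your final route is correct and is essentially the paper's proof. You apply Lemma~\ref{lem:reduce-1} to $A$ and $Z^*$ and observe that $A^*$ is disjoint from $R$; the paper instead passes to $R^*:=R\setminus A^*$, which follows directly from the lemma's statement since $A^*\subseteq Z^*$. You then use Lemma~\ref{lem:proximity} on $B=G^*$ first, and then property (ii) on $\vec t-\vec t_B$ to get the $Z^*$-correction $\vec u$. You finish with Theorem~\ref{thm:john-type} on $A^*$, using the inradius bound of Corollary~\ref{coro:large-inradius} (with $\rho\delta\ell$) to absorb the $dN$-cube and $\mathcal{P}(R)$ perturbations. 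The only differences are organizational: you do the zonotope bookkeeping in one shot rather than through Claims~\ref{clm:struct-1}--\ref{clm:struct-3}. Your initial plan of removing $R$ from $A^*$ was unnecessary, as you noticed.
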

\begin{proof}
   We first apply Lemma~\ref{lem:reduce-1} to $A$ and $Z^*$. It gives a lattice $\Gamma$ with $Z^* \subseteq \Gamma \subseteq Z$ and a partition $R \cup G$ of $A$ such that 
    \begin{enumerate}[label={\normalfont (\roman*)}]
        \item $|R| \leq 2\ell$ and $G \subseteq \Gamma$

        \item for any $\vec{t}$ with $\vec{t} \bmod \Gamma \in \mathcal{S}_{\Gamma}(A)$, we have $\vec{t} \bmod Z^* \in \mathcal{S}_{Z^*}(R)$.
    \end{enumerate}

    Note that $\Gamma$ can be obtained in $\widetilde{O}((n + 2^d \ell) \cdot \mathrm{poly}(d))$ time via Lemma~\ref{lem:reduce-1}. We shall show that $\Gamma$ satisfies the stated property.

    Let $R^* := R \setminus A^*$ and $G^* := G \setminus A^*$. Then $A^* \cup R^* \cup G^*$ forms a partition of $A$. Since $A^* \subseteq Z^*$, it is easy to see that $R^*$ and $G^*$ also satisfy properties (i) and (ii) of $R$ and $G$.

    Let $\varepsilon' := \frac{(d\log N)^{14d^2}}{\rho\delta}$.
    \begin{claim}\label{clm:struct-1}
        $2\mathcal{P}(R^*)\subseteq \varepsilon'\mathcal{P}(A^*)$ and $\{x_1\vec{e_1} + \cdots + x_d\vec{e}_d: 0\leq x_j \leq dN\}\subseteq \varepsilon'\mathcal{P}(A^*)$.
    \end{claim}
    \begin{proof}
        Since $A^*$ is $\rho\sqrt{\ell}$-dense and $\delta\sqrt{\ell}$-nondegenerate, by Corollary~\ref{coro:large-inradius}, we have
        \begin{equation}\label{eq:struct-1-1}
            \mathcal{P}(A^*) \supseteq \left\{x_1\vec{e}_1 + \cdots + x_d\vec{e}_d : |x_j| \leq \frac{\rho\delta\ell N}{(d\log N)^{7d^2}}\right\}.
        \end{equation}
        Since $|R^*| \leq 2\det(Z^*)/\det(Z) = 2\ell$, we have
        \[
            \mathcal{P}(R^*) \subseteq \left\{x_1\vec{e}_1 + \cdots + x_d\vec{e}_d : |x_j| \leq 2\ell N\right\}.
        \]
        Then the two inequalities in the claim follow directly.
    \end{proof}

    \begin{claim}\label{clm:struct-2}
        For every 
        \[
            \vec{t} \in (1 - 2\varepsilon')\mathcal{P}(A^*) + \frac{\sigma(A^*)}{2},
        \]
        $\vec{t} \in \mathcal{S}(A^* \cup R^*)$ if and only if $\vec{t}\bmod \Gamma\in \mathcal{S}_{\Gamma}(A^* \cup R^*)$.
    \end{claim}
    \begin{proof}
        The only-if part is trivial. We prove the if part. Fix an arbitrary $\vec{t} \in (1 - 2\varepsilon')\mathcal{P}(A^*) + \frac{\sigma(A^*)}{2}$. Note that $\vec{t}\bmod \Gamma\in \mathcal{S}_{\Gamma}(A^* \cup R^*)$ implies that $\vec{t} \bmod \Gamma \in \mathcal{S}_{\Gamma}(A)$. Then by property (ii), there exists $\vec{r} \in \mathcal{S}(R^*)$ such that $\vec{t} \equiv \vec{r} \pmod {Z^*}$.  Let $\vec{t'} = \vec{t} - \vec{r}$. To show $\vec{t} \in \mathcal{S}(A^* \cup R^*)$, it suffices to show that $\vec{t'} \in \mathcal{S}(A^*)$.

        Since $\vec{t} \equiv \vec{r} \pmod {Z^*}$, we have $\vec{t}' \in Z^*$. In view of Claim~\ref{clm:struct-1}, we have $\vec{r} \in \mathcal{S}(R^*) \subseteq 2\mathcal{P}(R^*) \subseteq \varepsilon' \mathcal{P}(A^*)$. By Lemma~\ref{lem:vol-change}(i), we have
        \[
            \vec{t}' \in \left((1 - \varepsilon')\mathcal{P}(A^*) + \frac{\sigma(A^*)}{2}\right).
        \]
        Recall that $A^*$ is $\rho\sqrt{\ell}$-dense, $\gamma\sqrt{\ell}$-spread, and $\delta\sqrt{\ell}$-nondegenerate. By Theorem~\ref{thm:john-type}, we have $\vec{t}' \in \mathcal{S}(A^*)$. 
    \end{proof}

    \begin{claim}\label{clm:struct-3}
        For any 
        \[
            \vec{t} \in (1 - 3\varepsilon')\mathcal{P}(A^* \cup G^*) + \frac{\sigma(A^* \cup G^*)}{2},
        \]
        $\vec{t} \in \mathcal{S}(A)$ if and only if $\vec{t} \bmod \Gamma \in \mathcal{S}_{\Gamma}(A)$.
    \end{claim}
    \begin{proof}
        The only-if part is trivial. We prove the if part. 

        Take an arbitrary $\vec{t} \in (1 - 3\varepsilon')\mathcal{P}(A^* \cup G^*) + \frac{\sigma(A^* \cup G)}{2}$ with $\vec{t} \bmod \Gamma \in \mathcal{S}_{\Gamma}(A)$. The vector $\vec{t}$ can be written as $\vec{t} = \vec{t}_{A^*} + \vec{t}_{G^*}$ where $\vec{t}_{A^*} \in (1 - 3\varepsilon')\mathcal{P}(A^*) + \frac{\sigma(A^*)}{2}$ and $\vec{t}_{G^*} \in (1 - 3\varepsilon')\mathcal{P}(G^*) + \frac{\sigma(G^*)}{2}$.  By Lemma~\ref{lem:proximity}, there exists $\vec{t}'_{G^*} \in \mathcal{S}(G^*)$ such that
        \begin{equation}\label{eq:struct-3-1}
            \vec{t}_{G^*} - \vec{t}'_{G^*} \in \{x_1\vec{e}_1 + \cdots + x_d\vec{e}_d : 0 \leq x_i \leq dN\}.
        \end{equation}

        Now consider $\vec{t} - \vec{t}'_{G^*} = \vec{t}_{A^*} + \vec{t}_{G^*} - \vec{t}'_{G^*}$. To prove the claim, it suffices to show $\vec{t} - \vec{t}'_{G^*} \in \mathcal{S}(A^* \cup R^*)$.  In view of Claim~\ref{clm:struct-1} and~\eqref{eq:struct-3-1}, we have $\vec{t}_{G^*} - \vec{t}'_{G^*} \in \varepsilon' \mathcal{P}(A^*)$. By Lemma~\ref{lem:vol-change}(i),
        \[
            \vec{t}_{A^*} + \vec{t}_{G^*} - \vec{t}'_{G^*} \in (1 - 2\varepsilon')\mathcal{P}(A^*) + \frac{\sigma(A^*)}{2},
        \]
        or equivalently,
        \begin{equation}\label{eq:struct-3-2}
            \vec{t} - \vec{t}'_{G^*}  \in (1 - 2\varepsilon')\mathcal{P}(A^*) + \frac{\sigma(A^*)}{2}.
        \end{equation}
        Recall that $G^* \subseteq \Gamma$. So 
        \(
            \vec{t} - \vec{t}'_{G^*} \equiv \vec{t} \pmod {\Gamma}.
        \)
        Since $\vec{t} \bmod \Gamma \in \mathcal{S}_{\Gamma}(A)$, we have 
        \[
            (\vec{t} - \vec{t}'_{G^*}) \bmod \Gamma \in \mathcal{S}_{\Gamma}(A) = \mathcal{S}_{\Gamma}(A^* \cup R^*).
        \]
        The last equality follows from $G^* \subseteq \Gamma$. Then by Claim~\ref{clm:struct-2} and~\eqref{eq:struct-3-2}, 
        \[
            \vec{t} - \vec{t}'_{G^*} \in \mathcal{S}(A^* \cup R^*). \qedhere
        \]
    \end{proof}

    Note that $5\varepsilon' \leq \varepsilon$.  In view of Claim~\ref{clm:struct-3}, to prove the theorem, it suffices to show that 
    \[
        (1 - 5\varepsilon')\mathcal{P}(A) + \frac{\sigma(A)}{2} \subseteq (1 - 3\varepsilon')\mathcal{P}(A^* \cup G^*) + \frac{\sigma(A^* \cup G^*)}{2},
    \]
    or equivalently,
    \begin{equation}\label{eq:struct-4-1}
        (1 - 5\varepsilon')\mathcal{P}(A) + \frac{\sigma(R^*)}{2} \subseteq (1 - 3\varepsilon')\mathcal{P}(A^* \cup G^*).
    \end{equation}
   Recall Claim~\ref{clm:struct-1}. Since $\frac{\sigma(R^*)}{2} \in \mathcal{P}(R^*) \subseteq \varepsilon' \mathcal{P}(A^*) \subseteq \varepsilon' \mathcal{P}(A)$, by Lemma~\ref{lem:vol-change}(ii), we have
   \[
       (1 - 5\varepsilon')\mathcal{P}(A) + \frac{\sigma(R^*)}{2} \subseteq  (1 - 4\varepsilon')\mathcal{P}(A).
   \]
   Note that $(1 - 4\varepsilon')\mathcal{P}(A) = (1 - 4\varepsilon')P(A^* \cup G^*) + (1 - 4\varepsilon')\mathcal{P}(R^*)$ and that $\mathcal{P}(R^*) \subseteq \varepsilon' \mathcal{P}(A^*)\subseteq \varepsilon' \mathcal{P}(A^* \cup G^*)$. Again, by Lemma~\ref{lem:vol-change}(ii), we have
   \[
       (1 - 4\varepsilon')\mathcal{P}(A) \subseteq (1 - 3\varepsilon')P(A^* \cup G^*).
   \]
   So Inequality~\eqref{eq:struct-4-1} follows.
\end{proof}

\subsection{Putting Things Together}
Now we are ready to present a near-linear-time algorithm for dense subset sum in multi-dimension.

\begin{theorem}\label{thm:dense-alg}
   Let $A \subseteq [N]^d$ be a set of $n$ vectors that is $\rho$-dense and $\delta$-nondegenerate, where $N$ is sufficiently large, $\rho \geq \delta$, and
   \[
        \rho \geq (d\log N)^{5d^2}\quad \text{and} \quad \rho\delta \geq (d\log N)^{16d^2}. 
    \]
    Then for any 
    \[
        \vec{t} \in \left((1- \varepsilon)\mathcal{P}(A) + \frac{\sigma(A)}{2}\right),
    \]
    where $\varepsilon: = \frac{(d\log N)^{16d^2}}{\rho\delta}$, we can decide whether $\vec{t} \in \mathcal{S}(A)$ or not in $\widetilde{O}(n \cdot d^{d}\cdot \mathrm{poly}(d))$ time, with error probability at most $N^{-\Omega(d)}$. 
\end{theorem}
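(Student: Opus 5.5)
The algorithm chains three earlier results: Lemma~\ref{lem:compute-spread-subset} to extract a spread subset $A^*$, Lemma~\ref{lem:reduce2modular} to obtain a lattice $\Gamma$ that turns the problem into a modular one, and Lemma~\ref{lem:faster-modular-ss} to solve the modular instance.

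\textbf{Step 1: extracting $A^*$.} I choose the spreadness parameter $\gamma := \rho^{-1}(d\log N)^{4d^2}\cdot 4$, so that $\frac{\rho}{2}\gamma \ge (d\log N)^{4d^2}$. I must check the hypothesis $\delta > 2(d\log N)^{4d}\gamma$ of Lemma~\ref{lem:compute-spread-subset}. This is equivalent to $\rho\delta > 8(d\log N)^{4d^2+4d}$, which follows from $\rho\delta \ge (d\log N)^{16d^2}$. The lemma then returns, in $\widetilde{O}((n+d^d\tau)\mathrm{poly}(d))$ time with error probability $N^{-\Omega(d)}$, a subset $A^*$ with parameters $\rho' = \rho/2$, $\gamma' = \gamma$ and $\delta' = \delta/2$, each scaled by $\sqrt{\ell}$, where $\ell = \det(Z^*)/\det(Z)$. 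I then verify the hypotheses of Lemma~\ref{lem:reduce2modular}:
\begin{itemize}
\item $\rho' = \rho/2 \ge (d\log N)^{4d^2}$ holds because $\rho \ge (d\log N)^{5d^2}$;
\item $\rho'\gamma' \ge (d\log N)^{4d^2}$ holds by the choice of $\gamma$;
\item $\rho'\delta' = \rho\delta/4 \ge (d\log N)^{15d^2}$ holds with room to spare.
\end{itemize}
The $\varepsilon$ produced by that lemma is $4(d\log N)^{15d^2}/(\rho\delta)$, which is at most the $\varepsilon$ in the theorem. So for every $\vec t$ in the stated region, $\vec t\in\mathcal S(A)$ if and only if $\vec t \bmod \Gamma \in \mathcal S_\Gamma(A)$.

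\textbf{Step 2: solving the modular problem.} I compute $\mathcal S_\Gamma(A)$ via Lemma~\ref{lem:faster-modular-ss} and test whether $\vec t \bmod \Gamma$ belongs to it using Lemma~\ref{lem:compute-rem}(i). Since $Z^*\subseteq\Gamma\subseteq Z$, we have $\det(\Gamma)/\det(Z)\le\ell$. Hence both Lemma~\ref{lem:reduce2modular} and this step cost $\widetilde O((n+2^d\ell)\mathrm{poly}(d))$.

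\textbf{Main obstacle: bounding $d^d\tau$ and $2^d\ell$ by $\widetilde O(n d^d)$.} For $\tau$, density gives $n\ge\rho\tau\ge\tau$, so $d^d\tau \le d^d n$.

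For $\ell$, I use the $[N]^d$ counting argument from the proof of Lemma~\ref{lem:all-lattice-pts}, applied to $Z^*$. The box $[N]^d$ contains at most $2^d N^d/\det(Z^*)$ points of $Z^*$. Since $|A^*|\ge n/2$, this gives $\ell = \det(Z^*)/\det(Z) \le 2^{d+1}N^d/(n\det(Z)) = 2^{d+1}\tau^2/n \le 2^{d+1} n/\rho^2$. So $2^d\ell \le 4^{d+1}n$. The factor $4^d$ is absorbed into $d^d\mathrm{poly}(d)$ up to constants for $d\ge 2$; for $d = 1$ it is a constant anyway.

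The total running time is therefore $\widetilde O(n d^d\mathrm{poly}(d))$. Failures occur only in Lemma~\ref{lem:compute-spread-subset}, so the overall error probability is $N^{-\Omega(d)}$.
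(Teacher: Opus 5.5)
Your proposal is correct and follows the paper's proof essentially step for step. You extract $A^*$ via Lemma~\ref{lem:compute-spread-subset}, obtain $\Gamma$ via Lemma~\ref{lem:reduce2modular}, and decide the modular instance via Lemma~\ref{lem:faster-modular-ss}. You then bound $\ell$ by counting points of $Z^*$ in $[N]^d$ together with $|A^*|\ge n/2$, as the paper does.

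Your extra factor $4$ in $\gamma$ is a harmless refinement of the same argument. The paper takes $\gamma=(d\log N)^{4d^2}/\rho$, which only yields $\rho'\gamma'=(d\log N)^{4d^2}/2$, so your choice repairs that minor constant slip. Your slightly sharper bound $\ell\le 2^{d+1}\tau^2/n$ is likewise harmless.
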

\begin{proof}
    Let $Z := \mathcal{L}(A)$. Let $\tau := \sqrt{N^d/\det(Z)}$. Since $A$ is $\rho$-dense, we have that $n \geq \rho\tau$. 

    Let $\gamma : = (d\log N)^{4d^2}/\rho$. By Lemma~\ref{lem:compute-spread-subset}, in $\widetilde{O}((n + d^d \tau)\cdot \mathrm{poly}(d))$ time, with error probability at most $N^{-\Omega(d)}$, we can obtain a subset $A^*$ that is $\frac{\rho}{2}\sqrt{\ell}$-dense and $\gamma \sqrt{\ell}$-spread, and $\frac{\delta}{2}\sqrt{\ell}$-nondegenerate, where $\ell :=\frac{\det(Z^*)}{\det(Z)}$ and $Z^* := \mathcal{L}(A^*)$. 

    Then we apply Lemma~\ref{lem:reduce2modular} to obtain a lattice $\Gamma$ such that for every
    \(
        \vec{t} \in (1- \varepsilon)\mathcal{P}(A) + \frac{\sigma(A)}{2},
    \)
    \[
        \vec{t} \in \mathcal{S}(A) \text{ if and only if } \vec{t} \bmod \Gamma \in \mathcal{S}_{\Gamma}(A),
    \]
    where 
    \[
        \varepsilon = \frac{4(d\log N)^{15d^2} \ell}{\rho\delta \ell} \leq \frac{(d\log N)^{16d^2}}{\rho\delta}.
    \]
    The time cost of Lemma~\ref{lem:reduce2modular} is
    \(
        O((n + 2^d \ell) \cdot \mathrm{poly}(d)).
    \)
    Note that $|A^*| \geq |A|/2 \geq \sqrt{N^d/\det(Z)}$. Therefore, 
    \[
        \det(Z^*) \leq \frac{2^d N^d}{|A^*|} \leq 2^d\sqrt{N^d\cdot \det(Z)},
    \]
    which implies $\ell \leq 2^d n$. So the time cost of this step is $\widetilde{O}(n\cdot 2^{2d}\cdot  \mathrm{poly}(d))$.

    Then to decide whether $\vec{t} \in \mathcal{S}(A)$, it suffices to decide whether $\vec{t} \bmod \Gamma \in \mathcal{S}_{\Gamma}(A)$. This can be done in $\widetilde{O}((n + 2^d\ell') \cdot \mathrm{poly}(d))$ time via Lemma~\ref{lem:faster-modular-ss}, where $\ell' := \det(\Gamma)/\det(Z)$. Note that $\det(\Gamma) \leq \det(Z^*)$ since $Z^* \subseteq \Gamma$. So $\ell' \leq \ell \leq 2^d n$. The running time of this step is bounded by $\widetilde{O}(n\cdot d^{d}\cdot  \mathrm{poly}(d))$.
\end{proof}

\section{Extending to More General Cases}\label{sec:extension}
Sections~\ref{sec:progression},~\ref{sec:john-type}, and~\ref{sec:alg} deal only with the case where $A$ is a set and $A \subseteq [N]^d$. We shall briefly explain how to extend Theorems~\ref{thm:ap-standard-basis},~\ref{thm:john-type}, and~\ref{thm:dense-alg} to a more general case where $A$ is a multiset of vectors from $[-N_1, N_1] \times [-N_2, N_2] \times \cdots \times [-N_d, N_d]$.

We first extend to a set $A \subseteq [N_1] \times \cdots \times [N_d]$. Let $\Phi = \prod_{j=1}^d N_j$. Let $D$ be the $d\times d$ diagonal matrix where $d_{jj} = \Phi/N_j$ for $j \in \{1, \ldots, d\}$. In other words, $D$ is a linear map that stretches the $j$-th coordinate by a factor of $\Phi/N_j$ for each $j$. Now consider $A' := DA$. It is easy to see that $A' \subseteq [\Phi]^d$. Recall Definition~\ref{def:dense-and-spread}. One can verify that the linear map $D$ does not change the value of $\tau$, and therefore, preserves the density, spreadness, and nondegeneracy of $A$. To extend Theorems~\ref{thm:john-type} and~\ref{thm:dense-alg}, we can simply apply them to $A'$ and then map the results back to the original space by left-multiplying $D^{-1}$. Extending Theorem~\ref{thm:ap-standard-basis} is a bit different: we will apply Lemma~\ref{lem:all-lattice-pts} to $A'$, map the result back to the original space by left-multiplying $D^{-1}$, and then apply Lemma~\ref{lem:standard-basis-in-large-area}.

We further consider a set $A \subseteq [-N_1, N_1] \times \cdots \times [-N_d, N_d]$. We claim that allowing $A$ to contain vectors with negative coordinates only affects the terms in the original theorems by a factor of $2^d$.

Finally, we consider the case of multisets. We claim that it suffices to extend Definition~\ref{def:dense-and-spread} to multisets. Let $A$ be a multiset of vectors. For a vector in $A$, we define its multiplicity to be the number of its occurrences in $A$, and we define $\mu(A)$ to be the largest multiplicity of any vector in $A$.

\begin{definition}[Density, Spreadness, Nondegeneracy for Multisets]\label{def:dense-spread-multiset}
    Let $A \subseteq [-N_1, N_1]\times \cdots \times [-N_d, N_d]$ be a multiset of vectors. Let $\Phi := \prod_{j=1}^d N_j$. Let $\mu:= \mu(A)$. Let $\tau := \sqrt{\frac{\mu\Phi}{\det(Z)}}$, where $Z := \mathcal{L}(A)$. We say that $A$ is
    \begin{itemize}
        \item $\rho$-dense if $|A| \geq \rho \tau$;

        \item $\gamma$-spread if $|A \setminus \Lambda| \geq \gamma\tau$ for any lattice $\Lambda \subsetneq Z$;

        \item $\delta$-nondegenerate if $|A \setminus \mathcal{V}| \geq \delta\tau$ for any lower-dimensional subspace $\mathcal{V} \subseteq \mathbb{R}^d$.
    \end{itemize} 
\end{definition}

We summarize this section by the extended versions of Theorems~\ref{thm:ap-standard-basis},~\ref{thm:john-type}, and~\ref{thm:dense-alg}.

\begin{theorem}[Formal version of Theorem~\ref{thm:ap-standard-basis-informal}]\label{thm:ap-standard-basis-extended}
    There exists a constant $C_0$ such that the following holds. Let $A$ be a multiset of vectors from  $[-N_1, N_1] \times \cdots \times [-N_d, N_d]$, where $Z := \mathcal{L}(A)$ is full-dimensional. Let $\mu:=\mu(A)$. Assume that $\Phi := \prod_{j=1}^d N_j$ is sufficiently large and that the following conditions hold.
    \begin{itemize}
        \item (Density Condition) $|A| \geq (d\log \Phi)^{C_0d^2} \cdot \sqrt{\mu \Phi}$.

        \item (Spreadness Condition) $|A \setminus \Lambda| \geq (d\log \Phi)^{C_0d^2} \cdot \frac{\mu\Phi}{|A| \cdot \det(Z)}$ for any sublattice $\Lambda \subsetneq Z$.

        \item (Nondegeneracy Condition) $|A \setminus \mathcal{V}| \geq (d\log \Phi)^{C_0d^2} \cdot \frac{\mu \Phi}{|A|}$ for any lower-dimensional subspace $\mathcal{V} \subseteq \mathbb{R}^d$.
    \end{itemize}
    Then
    \[
        \mathcal{S}(A) \supseteq \vec{v}_0 +  \{z_1s\vec{e}_1 + \cdots + z_ds\vec{e}_d : 0 \leq z_j \leq N_j, z_j\in \mathbb{Z}\},
    \]
    where $\vec{v}_0 \in Z$ and $s:= \det(Z)$.
\end{theorem}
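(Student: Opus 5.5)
The plan is to reduce the statement to Lemma~\ref{lem:all-lattice-pts} followed by Lemma~\ref{lem:standard-basis-in-large-area}, using three reductions described in Section~\ref{sec:extension}: sign handling, anisotropic rescaling, and multiplicities.

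\emph{Signs.} Replace each $\vec a\in A$ by $\vec a+(N_1,\dots,N_d)$, which lies in $[2N_1+1]\times\cdots\times[2N_d+1]$. This is not automatically harmless: it changes the lattice $\mathcal{L}(A)$ and shifts $\mathcal{S}(A)$ in a size-dependent way. I would therefore first reflect the negative vectors, replacing $\vec a$ by $-\vec a$. This gives $\mathcal{S}(A')=\mathcal{S}(A)+\sigma(\text{reflected part})$, since choosing $-\vec a$ is the same as not choosing $\vec a$ after a translation. The lattice and the subspace conditions are unchanged. Reflecting coordinatewise per orthant is not linear across the whole set, so instead I split $A$ by orthant into $2^d$ classes and keep the largest, losing a factor $2^d$ in density. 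I then apply a sign-flip diagonal map to that class, which preserves $\det$, lattice structure, and the subspace conditions. The factor $2^d$ is absorbed into $(d\log\Phi)^{C_0d^2}$. Spreadness and nondegeneracy must also survive passing to the subclass. I handle this by keeping all of $A$ but flipping each vector to $\pm\vec a$ according to the chosen sign pattern relative to a generic linear functional. That gives $\mathcal{S}(A)=\mathcal{S}(A')+\vec w$ with $\mathcal{L}(A')=\mathcal{L}(A)$, and every vector lies in a half-space, though not yet in the positive orthant. I expect this sign reduction to be the main obstacle. A cleaner fix is to prove Lemmas~\ref{lem:popular-sum}--\ref{lem:all-lattice-pts} directly for $[-N,N]^d$. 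Those proofs only use norms $\|\vec a\|_2\le\sqrt d N$ and counting lattice points in a box, so they go through with $2^d$ losses. I would take that route.

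\emph{Rescaling.} Apply $D=\mathrm{diag}(\Phi/N_j)$, mapping into $[-\Phi,\Phi]^d$. Then $\det(\mathcal{L}(DA))=\det(D)\det(Z)=\Phi^{d-1}\det(Z)$, so $\tau'=\sqrt{\Phi^d/\det(DZ)}=\sqrt{\Phi/\det Z}$, matching $\tau$. Sublattices and subspaces correspond bijectively under $D$, so density, spreadness and nondegeneracy transfer with the same parameters. Under this correspondence, the three hypotheses of the theorem translate into $\rho\ge(d\log\Phi)^{C_0d^2}$, $\rho\gamma\ge(d\log\Phi)^{C_0d^2}$, and $\rho\delta\ge(d\log\Phi)^{C_0d^2}\det(Z)$. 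For example, spreadness $|A\setminus\Lambda|\ge C\Phi/(|A|\det Z)=C\tau^2/|A|$ gives $\gamma\ge C\tau/|A|=C/\rho$, and nondegeneracy gives $\delta\ge C\det(Z)/\rho$.

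\emph{Multisets.} With multiplicity $\mu$, I would split $A$ into $\mu$ sets, or equivalently perturb copies. More simply, I rerun the arguments with $\tau=\sqrt{\mu\Phi/\det Z}$. The only set-specific step is the bound $\mu(A',\Gamma)\le(2^d+1)^d$ in Lemma~\ref{lem:popular-sum}(iii) and in~\eqref{eq:all-lattice-pts-bound-mu}. That bound becomes $\mu(2^d+1)^d$, which is exactly what the extra $\mu$ in $\tau$ compensates, and the same holds for the step $\rho\le2^d\tau$. Lemma~\ref{lem:hd-es-petals} already allows multisets.

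\emph{Conclusion.} Lemma~\ref{lem:all-lattice-pts} applied to $DA$ gives all points of $DZ$ in a cube of side $\rho\delta\Phi/(d\log\Phi)^{7d^2}\ge \Phi\det(Z)$. Pulling back by $D^{-1}$ yields all points of $Z$ in a box of side lengths $N_j\det(Z)$, up to translation by some $\vec v_0\in Z$. Lemma~\ref{lem:standard-basis-in-large-area}, applied after translating, then supplies the progression with step $s=\det(Z)$ and lengths $N_j$.
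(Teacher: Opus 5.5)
Your proposal is correct and follows essentially the paper's own route: rescale by $D=\mathrm{diag}(\Phi/N_j)$, which leaves $\tau$ and hence $\rho,\gamma,\delta$ unchanged; apply Lemma~\ref{lem:all-lattice-pts} to get all points of $DZ$ in a cube of side at least $\Phi\det(Z)$; pull back by $D^{-1}$ and finish with Lemma~\ref{lem:standard-basis-in-large-area}, handling negative coordinates by rerunning the lemmas at a $2^d$ loss and multisets by putting $\mu$ into $\tau$ to absorb the bound $\mu(A',\Gamma)\le\mu(2^d+1)^d$. One detail you should state explicitly in the sign step is that Lemma~\ref{lem:hd-es-petals} transfers to $[-N,N]^d$ by translating every vector by $(N,\dots,N)$, which is harmless because the subsets it returns have equal cardinality.
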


\begin{theorem}[Formal version of Theorem~\ref{thm:john-type-informal}]\label{thm:john-type-extended}
    There exists a constant $C_0$ such that the following holds. Let $A$ be a multiset of vectors from  $[-N_1, N_1] \times \cdots \times [-N_d, N_d]$, where $Z := \mathcal{L}(A)$ is full-dimensional. Let $\mu:=\mu(A)$. Assume that $\Phi := \prod_{j=1}^d N_j$ is sufficiently large and that the  following conditions hold.
    \begin{itemize}
        \item (Density Condition) $|A| \geq (d\log \Phi)^{C_0d^2} \cdot \sqrt{\frac{\mu \Phi}{\det(Z)}}$.

        \item (Spreadness Condition) $|A \setminus \Lambda| \geq (d\log \Phi)^{C_0d^2} \cdot \frac{\mu\Phi}{|A| \cdot \det(Z)}$ for any sublattice $\Lambda \subsetneq Z$.

        \item (Nondegeneracy Condition) for some $n'\geq (d\log \Phi)^{C_0d^2} \cdot \frac{\mu \Phi}{|A|\cdot \det(Z)}$, it holds that $|A \setminus \mathcal{V}| \geq n' $ for any lower-dimensional subspace $\mathcal{V} \subseteq \mathbb{R}^d$.
    \end{itemize}
    Then 
    \[
         Z \cap \left((1 - \varepsilon) \mathcal{P}(A) + \frac{\sigma(A)}{2}\right) \subseteq \mathcal{S}(A) \subseteq Z \cap  \left(\mathcal{P}(A) + \frac{\sigma(A)}{2}\right),
    \]
    where $\varepsilon \leq \frac{(d\log \Phi)^{C_0d^2}}{n'|A|} \cdot \frac{\mu\Phi}{\det(Z)}$.
\end{theorem}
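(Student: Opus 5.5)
The plan is to reduce Theorem~\ref{thm:john-type-extended} to Theorem~\ref{thm:john-type}, which handles sets $A\subseteq[N]^d$, in three normalizations: rescaling coordinates, handling negative entries, and handling multiplicities. Only the first inclusion needs work; the inclusion $\mathcal{S}(A)\subseteq Z\cap(\mathcal{P}(A)+\sigma(A)/2)$ holds for any multiset by the observations in Section~\ref{sec:pre}.

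Next I translate the hypotheses into parameters. Set $\tau=\sqrt{\mu\Phi/\det(Z)}$ as in Definition~\ref{def:dense-spread-multiset}, and define
\[
\rho:=|A|/\tau,\qquad \gamma:=\min_{\Lambda}|A\setminus\Lambda|/\tau,\qquad \delta:=n'/\tau .
\]
The density condition gives $\rho\ge(d\log\Phi)^{C_0d^2}$. The spreadness condition gives $\rho\gamma\ge(d\log\Phi)^{C_0d^2}\cdot\frac{\mu\Phi}{\det(Z)\tau^2}=(d\log\Phi)^{C_0d^2}$. The nondegeneracy condition gives $\rho\delta\ge(d\log\Phi)^{C_0d^2}$. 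Theorem~\ref{thm:john-type} then yields $\varepsilon=(d\log N)^{14d^2}/(\rho\delta)=(d\log N)^{14d^2}\mu\Phi/(n'|A|\det(Z))$, which is exactly the claimed form once $C_0\ge 14$ (plus slack) absorbs $\log N$ versus $\log\Phi$.

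\textbf{Step 1 (rectangle to cube).} Apply $D=\mathrm{diag}(\Phi/N_j)$, so $A'=DA\subseteq[\Phi]^d$. Then $\det(\mathcal{L}(A'))=\det(D)\det(Z)=\Phi^{d-1}\det(Z)$, hence $\Phi^d/\det(Z')=\Phi/\det(Z)$ and $\tau$ is unchanged. Sublattices and subspaces correspond bijectively under $D$, so $\rho,\gamma,\delta$ are preserved. Moreover $\mathcal{S}$, $\mathcal{P}$ and $\sigma$ commute with the linear map, so the conclusion pulls back via $D^{-1}$ applied to $Z'\cap(\cdots)$.

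\textbf{Step 2 (negative coordinates).} Flipping the sign of a vector $\vec a$ changes $\mathcal{S}(A)$ only by the translate $\vec a$: writing $x\vec a=\vec a-(1-x)\vec a$ shows $\mathcal{S}(A\setminus\{\vec a\}\cup\{-\vec a\})=\mathcal{S}(A)-\vec a$. The same translation holds for $\mathcal{P}+\sigma/2$, and the lattice, sublattice counts and subspace counts are unchanged. However, flipping does not put all vectors into an orthant. Instead I translate coordinates: replace $[-N_j,N_j]$ by embedding. The cleanest route I would try is to flip each vector so that its first nonzero coordinate is positive. This still leaves other coordinates negative, so an alternative is needed: rerun the proofs of Lemmas~\ref{lem:popular-sum}, \ref{lem:inradius}, \ref{lem:proximity} and Corollary~\ref{coro:large-inradius} for the box $[-N,N]^d$. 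Those proofs only use norms $\le\sqrt d N$ and counting in a box of side $2N$, so every bound changes by at most a $2^d$ factor, absorbed into $C_0$.

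\textbf{Step 3 (multisets).} This is the main obstacle, since Lemma~\ref{lem:hd-es-petals} is already stated for multisets, but the lattice-counting bounds and the remainder generation in Lemma~\ref{lem:gen-rem-dense} use set-ness. The fix is to track $\mu$. In Lemma~\ref{lem:popular-sum}(iii) the lower bound becomes $\det(\Gamma)/\det(Z)\ge n/(\mu2^{3d^2})$, because the multiplicity of a remainder in $A'$ is now at most $\mu(2^d+1)^d$. Likewise $\rho\le2^d\tau$ follows from $|A|\le\mu 2^d\Phi/\det(Z)$. Lemma~\ref{lem:gen-rem-dense} already accepts an arbitrary multiplicity $\mu$. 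I would then check that every inequality in the proofs of Lemmas~\ref{lem:all-lattice-pts}, \ref{lem:small-region-small-set} and~\ref{lem:frac2int} is homogeneous in $\tau^2$, so that $\tau^2=\mu\Phi/\det(Z)$ exactly compensates the extra $\mu$. I expect this bookkeeping in Claim~\ref{clm:all-latt-pts-1}, where $\mu\ell$ appears, to be the delicate point: there one must verify that $|A'|^2\gtrsim\mu\ell$ still follows from the density condition with $\tau$ including $\mu$.
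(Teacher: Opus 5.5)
Your proposal follows the paper's own route in Section~\ref{sec:extension}: rescale by $D=\mathrm{diag}(\Phi/N_j)$ (which leaves $\tau$ unchanged), absorb negative coordinates into $2^d$ factors, and handle multisets by building $\mu$ into $\tau$; your translation of the hypotheses into $\rho\ge(d\log\Phi)^{C_0d^2}$, $\rho\gamma,\rho\delta\ge(d\log\Phi)^{C_0d^2}$ and $\varepsilon=(d\log\Phi)^{14d^2}\mu\Phi/(n'|A|\det(Z))$ is exactly what Theorem~\ref{thm:john-type} needs. The delicate point you flag goes through because the upper bound in Lemma~\ref{lem:popular-sum}(iii) comes from $\vec{b}_j\in[(d\log N)^{2d}N]^d$ and is unaffected by multiplicities, reading $\ell\le(d\log N)^{3d^2}N^d/\det(Z)=(d\log N)^{3d^2}\tau^2/\mu$; hence the extra factor $\mu$ in $\mu(A',\Gamma)$ cancels, so $\mu(A',\Gamma)\,\ell$ is a polylogarithmic multiple of $\tau^2\le 4|A'|^2/\rho^2$, and the only further care is to replace $\log(\mu\ell)$ by $\log\ell$ inside Lemma~\ref{lem:gen-rem-dense} (valid since the sets $A_i$ there consist of distinct remainders, so $|A_i|\le\ell$), which keeps all logarithms polylogarithmic in $\Phi$ even when $\mu$ is huge.
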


\begin{theorem}[Formal version of Theorem~\ref{thm:dense-alg-informal}]\label{thm:dense-alg-extended}
   There exists a constant $C_0$ such that the following holds. Let $A$ be a multiset of vectors from  $[-N_1, N_1] \times \cdots \times [-N_d, N_d]$, where $Z := \mathcal{L}(A)$ is full-dimensional. Let $\mu:=\mu(A)$. Assume that $\Phi := \prod_{j=1}^d N_j$ is sufficiently large and that the following conditions hold.
    \begin{itemize}
        \item (Density Condition) $|A| \geq (d\log \Phi)^{C_0d^2} \cdot \sqrt{\frac{\mu \Phi}{\det(Z)}}$.

        \item (Nondegeneracy Condition) for some $n'\geq (d\log \Phi)^{C_0d^2} \cdot \frac{\mu \Phi}{|A|\cdot \det(Z)}$, it holds that $|A \setminus \mathcal{V}| \geq n' $ for any lower-dimensional subspace $\mathcal{V} \subseteq \mathbb{R}^d$.
    \end{itemize}
    Then for any 
    \[
        \vec{t} \in \left((1- \varepsilon)\mathcal{P}(A) + \frac{\sigma(A)}{2}\right)
    \]
    where $\varepsilon \leq \frac{(d\log \Phi)^{C_0d^2}}{n'|A|} \cdot \frac{\mu\Phi}{\det(Z)}$, we can decide whether $\vec{t} \in \mathcal{S}(A)$ or not in $\widetilde{O}(n \cdot d^{d}\cdot \mathrm{poly}(d))$ time, with error probability at most $\Phi^{-\Omega(d)}$. 
\end{theorem}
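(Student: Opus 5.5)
We prove Theorem~\ref{thm:dense-alg-extended} by reducing it to Theorem~\ref{thm:dense-alg} in three steps: stretch the box to a cube, shift to nonnegative coordinates, and absorb multiplicities. Each step is a bookkeeping change of parameters. At every step we check that $\tau$, and hence density and nondegeneracy, are preserved up to factors $2^{O(d)}$ or $\mathrm{polylog}$. These factors are absorbed into $(d\log\Phi)^{C_0 d^2}$ by choosing $C_0$ large, say $C_0=20$.

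\textbf{Stretching.} For $A\subseteq[N_1]\times\cdots\times[N_d]$, apply $D=\mathrm{diag}(\Phi/N_j)$ and set $A'=DA\subseteq[\Phi]^d$. Then $\det(\mathcal{L}(A'))=\det(D)\det(Z)=\Phi^{d-1}\det(Z)$, so $\tau'=\sqrt{\Phi^d/(\Phi^{d-1}\det Z)}=\tau$. The map $D$ preserves lower-dimensional subspaces and commutes with $\mathcal{P},\sigma,\mathcal{S}$, so $\vec t\in\mathcal{S}(A)$ iff $D\vec t\in\mathcal{S}(A')$, and $\vec t\in(1-\varepsilon)\mathcal{P}(A)+\sigma(A)/2$ iff $D\vec t$ lies in the corresponding set for $A'$. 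The hypotheses translate into $\rho=|A|/\tau$ and $\delta=n'/\tau$. Then $\rho\delta=n'|A|\det(Z)/\Phi$, which yields exactly the stated $\varepsilon$ bound. We also need the side condition $\rho\ge\delta$; if it fails, replace $n'$ by $\min(n',|A|)$, which is harmless since $n'\le|A|$ anyway.

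\textbf{Signs.} For $A\subseteq\prod[-N_j,N_j]$, negate each vector of $A$ whose first nonzero coordinate is negative, obtaining $A^\pm$. Negating $\vec a$ maps $\mathcal{S}$ by $\vec t\mapsto\vec t-\vec a$ when the sign of $\vec a$'s variable flips ($x\mapsto1-x$). Hence $\vec t\in\mathcal{S}(A)$ iff $\vec t-\sum_{\text{flipped}}\vec a\in\mathcal{S}(A^\pm)$, and the zonotope condition transforms identically, since $\mathcal{P}$ is symmetric and centers shift accordingly. The lattice and the subspaces are unchanged.

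The flipped vectors still have mixed signs in later coordinates. I instead translate coordinatewise: $A^\pm\subseteq[0,2N_1]\times\prod_{j\ge2}[-2N_j,2N_j]$. The cleaner route is to note that the Section~\ref{sec:alg} proofs only use $\|\vec a\|_\infty\le N$ and $\|\vec a\|_2\le\sqrt d N$, not nonnegativity. The one place where nonnegativity enters is Lemma~\ref{lem:hd-es-petals}, applied to $[N]^d$. There, and in Lemma~\ref{lem:popular-sum}, one works with the box $[-N,N]^d$, or shifts it by $N\vec 1$ and restricts to petals of equal cardinality, which Lemma~\ref{lem:hd-es-petals} already guarantees; a common shift does not change equality of sums. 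This costs $2^d$ factors.

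\textbf{Multisets.} Let $\mu=\mu(A)$, and split $A$ into layers, each a set. The obstacle is that density is now measured against $\sqrt{\mu\Phi/\det Z}$. Replacing the box $[N]^d$ by $[\mu^{1/d}N]^d$ formally costs too much, so we argue directly. The only place set-ness is used in the structural lemmas is the count of lattice points: in \eqref{eq:all-lattice-pts-rou-tau} we get $\rho\le2^d\tau$, and in Lemma~\ref{lem:popular-sum}(iii) we get $|A'\bmod\Gamma|\ge|A'|/\mu(A',\Gamma)$. With multiplicities these become $|A|\le\mu 2^d\tau_0^2$ and $\mu(A',\Gamma)\le\mu(2^d+1)^d$. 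Redefining $\tau=\sqrt{\mu\Phi/\det Z}$ as in Definition~\ref{def:dense-spread-multiset} makes all inequalities go through, with the extra $\mu$ absorbed exactly where $\mu(A',\Gamma)$ appears in Lemma~\ref{lem:gen-rem-dense}.

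Modular subset sum (Lemma~\ref{lem:faster-modular-ss}) and the common-lattice routines (Lemma~\ref{lem:almostc-common-divisor}) handle multisets natively. The running time stays $\widetilde O(n\,d^d\mathrm{poly}(d))$ because $\ell\le2^dn$ still holds from the density bound. The main obstacle is this multiset accounting in Lemma~\ref{lem:popular-sum}(iii) and Lemma~\ref{lem:gen-rem-dense}; I would verify it line by line, tracking $\mu$ through each inequality.
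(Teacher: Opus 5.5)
Your proposal is correct and is essentially the paper's own argument from Section~\ref{sec:extension}: stretch by $D=\mathrm{diag}(\Phi/N_j)$, which leaves $\tau$ unchanged and hence preserves density, spreadness, nondegeneracy and the form of $\varepsilon$; pay $2^{O(d)}$ factors for negative coordinates; and treat multisets by putting $\mu$ into $\tau=\sqrt{\mu\Phi/\det(Z)}$. Your check that $\mu$ cancels in the lattice-point count and in $\mu(A',\Gamma)\le\mu(2^d+1)^d$ is more explicit than the paper's sketch. The only misstep is the aside about translating coordinatewise, which does not preserve subset sums (a common shift $\vec{c}$ adds $|I|\vec{c}$ to the sum over $I$), but you rightly abandon it for the equal-cardinality shift inside Lemma~\ref{lem:hd-es-petals}; likewise, the one-sided box in Lemma~\ref{lem:proximity} simply becomes $[-dN,dN]^d$.
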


\section{Generating Disjoint Subsets with Equal Sum}\label{sec:equal-sum}
We shall prove Lemma~\ref{lem:hd-es-petals}, which can be regarded as a generalization of the following theorem by Erd{\H{o}}s and S{\'a}rk{\"o}zy.

\begin{theorem}[{\cite[in the proof of Theorem~1]{ES92}}]\label{thm:es-petals}
    Let $A$ be a multiset of integers from $[N]$, where $N$ is sufficiently large. Then there exist at least $\frac{|A|}{18(\log N)^2}$ disjoint subsets of $A$ such that $1\leq |A_1| = \cdots = |A_m| \leq \log N$ and $\sigma(A_1) = \cdots = \sigma(A_m)$.
\end{theorem}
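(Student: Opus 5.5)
The plan is to find the required family greedily, with a counting and pigeonhole argument at every round, and then to force a common sum and size using a hypergraph-matching argument. Write $n := |A|$. Two preliminary reductions come first. If some value $a$ has multiplicity at least $n/(18(\log N)^2)$ in $A$, the singletons $\{a\}$ already give the claim. Otherwise $A$ has many distinct values, and I would restrict to a dyadic range $[2^i, 2^{i+1})$ containing at least $n/\log N$ elements. Within such a range, all $k$-subsets have sums in an interval of length at most $k2^{i+1}$. This is what keeps the number of possible sums of $k$-subsets comparable to the number of choices.

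\textbf{Step 1 (many equal-size, equal-sum pairs).} Split the restricted multiset into disjoint blocks $T$ of size $t := \lceil 2\log N\rceil$. In a block $T$, the number of subsets of size $\lfloor t/2\rfloor$ is about $2^t/\sqrt t$, which exceeds $tN$, the number of possible sums. So $T$ contains two distinct subsets $X\neq Y$ of the same size and the same sum. Deleting $X\cap Y$ from both leaves disjoint, nonempty sets $X',Y'$ of equal size $\le \log N$ and equal sum. This yields about $n/(2(\log N)^2)$ disjoint such pairs, in other words $n/(\log N)^{2}$ small sets that pair up by equal sum.

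\textbf{Step 2 (forcing one common sum and size).} This is the main obstacle: pairs with pairwise different sums $s_i$ do not give many sets with a \emph{single} common sum, and naive pigeonholing over $(k,s)$ loses a factor $N\log N$. My first attempt would be a popular-sum argument combined with a maximal matching. Fix the size $k$ (at most $\log N$ choices, so one loses only a $\log N$ factor). Among the $k$-subsets of the current remaining multiset $B$, pick the sum $s$ with the most representations, and take a maximal family $\mathcal M$ of pairwise disjoint $k$-subsets with sum $s$. Every representation of $s$ meets $\bigcup\mathcal M$, so
\[
\#\{\text{representations of } s\}\le k|\mathcal M|\binom{|B|-1}{k-1}.
\]
To make this useful, the number of candidate sums must be small relative to $\binom{|B|}{k}/\binom{|B|-1}{k-1}=|B|/k$. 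Here I would exploit the dyadic localisation from the preliminary reduction, and iterate the construction of Step 1, so that all remaining sums lie in a window of length about $|B|\cdot\mathrm{polylog} N$ rather than $kN$. This step is where I expect the real difficulty, and where the constant $18$ and the factor $(\log N)^2$ will be determined.

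\textbf{Step 3 (bookkeeping).} Run the procedure while at least $n/2$ elements remain. Each round removes at most $\log N$ elements per set it produces, and the multiplicity, dyadic-range, size and sum pigeonholes each cost at most a $\log N$ factor. Collecting these losses should leave at least $n/(18(\log N)^2)$ disjoint subsets of common size $1\le k\le\log N$ and common sum, as the theorem requires.
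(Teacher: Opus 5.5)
\textbf{There is a genuine gap in Step~2.} That step carries the whole content of the statement, and the mechanism you propose fails however the window is localised. By your own inequality, pigeonholing over $W$ candidate sums certifies only $|\mathcal M|\ge |B|/(k^2W)$. So a window of length $|B|\cdot\mathrm{polylog}\,N$ certifies $|\mathcal M|\ge 1/\mathrm{polylog}\,N$, which is vacuous. It also contradicts your own requirement $W\ll |B|/k$.

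More fundamentally, the number of representations is always too small for this inequality. Write $r_s$ for the number of representations of $s$ and $\mu$ for the maximal multiplicity in $B$. Each index $(k-1)$-subset extends to a representation of $s$ in at most $\mu$ ways, so for $|B|\ge 2k$
\[
r_s\le \frac{\mu}{k}\binom{|B|}{k-1}
\quad\Longrightarrow\quad
\frac{r_s}{k\binom{|B|-1}{k-1}}\le \frac{\mu|B|}{k^2(|B|-k+1)}\le \frac{2\mu}{k^2}.
\]
Counting whole $k$-sets against a maximal disjoint family therefore never certifies more than the trivial singleton bound $\mu$. After your multiplicity reduction, $\mu$ is already below the target. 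Step~1 does not help either: its pairs have pair-dependent sums, and nothing in the proposal merges them. The losses you list in Step~3 would also not combine to a single factor $18(\log N)^2$.

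\textbf{The missing idea is the Erd\H{o}s--Rado sunflower lemma.} The equal-sum $k$-sets need not be disjoint; only their petals must be. The paper itself only cites Erd\H{o}s--S\'ark\"ozy for this statement; the following is the standard way to prove it.

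Let $k:=\lfloor\log N\rfloor$ and $p:=\lceil |A|/(18(\log N)^2)\rceil$. Assume $p\ge 2$ (otherwise a singleton suffices), and index the elements of $A$. Pigeonhole the $\binom{|A|}{k}\ge (|A|/k)^k$ index $k$-subsets over the at most $kN$ possible sums. This gives a family $\mathcal F_s$ of $k$-sets with common sum $s$ and
\[
|\mathcal F_s|\ \ge\ \frac{(|A|/k)^k}{kN}\ =\ \frac{18^k}{kN}\Bigl(\frac{|A|}{18k}\Bigr)^{k}\ >\ \frac{18^k}{kN}\,k!\,(p-1)^k\ >\ k!\,(p-1)^k .
\]
The middle inequality holds because $k!\,(p-1)^k< k^k\bigl(|A|/(18k^2)\bigr)^k$. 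The last holds because $18^k\ge N^{\log 18}/18>kN$ for large $N$.

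By Erd\H{o}s--Rado, $\mathcal F_s$ contains a sunflower $S_1,\dots,S_p$ with some core $C$. The petals $S_i\setminus C$ are pairwise disjoint and nonempty, have common size $k-|C|\le\log N$, and have common sum $s-\sigma(C)$. The sunflower threshold $k!(p-1)^k$ is smaller than $\binom{|A|}{k}$ by a factor exceeding $18^k$, which absorbs the crude loss $kN$ from pigeonholing the sum. Neither the dyadic localisation nor Step~1 is needed.
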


To generalize the above theorem to multi-dimension, the rough idea is to iteratively apply Theorem~\ref{thm:es-petals} to each dimension.

\lemhdespetals*
\begin{proof}
    We prove the lemma by induction on $d$. When $d = 1$, the lemma holds by Theorem~\ref{thm:es-petals}. Now assume that the lemma holds for $d$. We show that it also holds for $d+1$. 

    Let $A$ be a multiset of vectors from $[N]^{d+1}$. We ignore the $(d+1)$-th dimension, regard the vectors in $A$ as $d$-dimensional vectors, and apply the inductive hypothesis. This gives at least $|A|/(d\log N)^{3d}$ disjoint subsets $A_1, \ldots, A_{g}$ of $A$ such that $\sigma(A_1), \ldots, \sigma(A_{g})$ have the same $j$-th coordinates for each $j \in \{1, \ldots, d\}$ and that $1 \leq |A_1| = \cdots = |A_g| \leq h$, where $h := (d\log N)^{2d}$. 

    Let $\vec{b}_i = \sigma(A_i)$ for $i \in \{1, \ldots, g\}$. Let $B = \{\vec{b}_1, \ldots, \vec{b}_{g}\}$. Note that the vectors in $B$ differ only in the $(d+1)$-th coordinate, and their $(d+1)$-th coordinates are in $[0, hN]$. Consider only the $(d+1)$-th coordinate of $B$ and apply Theorem~\ref{thm:es-petals}. We can obtain at least $\frac{|B|}{18(\log hN)^2}$ subsets $B_1, \ldots, B_{g'}$ such that the $(d+1)$-th coordinates of $\sigma(B_1), \ldots, \sigma(B_{g'})$ are the same and that $1 \leq |B_1| = \cdots = |B_{g'}| \leq h'$, where $h' := \log(hN)$. 

    Since $|B_1| = \cdots = |B_{g'}|$ and the vectors in $B$ have the same $j$-th coordinate for each $j \in \{1, \ldots, d\}$, we have that $\sigma(B_1), \ldots, \sigma(B_{g'})$ also have the same $j$-th coordinate for each $j \in \{1, \ldots, d\}$. Therefore,
    \[
        \sigma(B_1) = \cdots = \sigma(B_{g'}).
    \]
    It is easy to see that $B_1, \ldots, B_{g'}$ correspond to $g'$ disjoint subsets of $A$ that have the same sum and the same cardinality. Moreover, the cardinality of these subsets of $A$ is at most $hh'$.  It remains to bound $hh'$ and $g'$. Since $N$ is sufficiently large, we have
    \[
        \log(hN) = (2d\log d + 2d\log\log N + \log N) \leq (d\log N)^{1.1}.
    \]
    Then 
    \begin{align*}
        hh' &=  h \log (hN) \leq  (d\log N)^{2d} \cdot (d\log N)^{1.1} \leq ((d+1)\log N)^{2d+2}.
    \end{align*}
    The last inequality holds because $N$ is sufficiently large. We also have that 
    \[
        g' \geq \frac{g}{18(\log hN)^2} \geq \frac{|A|}{(d\log N)^{3d}}\cdot \frac{1}{18(d\log N)^{2.2}}
             \geq \frac{|A|}{((d+1)\log N)^{3d+3}}.
    \]
    This completes the inductive step.
\end{proof}

\section{Generating Remainders}\label{sec:remainder}
We shall prove Lemma~\ref{lem:gen-rem-dense}. The following lemma explains how we will use a non-zero vector $\vec{h}$ in $\mathrm{Sym}_{\Gamma}(\mathcal{S}(A))$. Basically, instead of discussing $\mathcal{S}_{\Gamma}(A)$, we can discuss $\mathcal{S}_{\Gamma'}(A)$, where $\Gamma' := \Gamma + \mathcal{L}(h)$. Note that $\Gamma' \supsetneq \Gamma$, and therefore $\det(\Gamma') \leq \det(\Gamma)/2$. So such a $\vec{h}$ helps us reduce to a lattice with smaller determinant.

\begin{lemma}\label{lem:reduce-by-stablizer}
    Let $A \subseteq \mathbb{Z}^d$ be a set of vectors. Let $\Gamma$ be a full-dimensional lattice. Let $\vec{h}$ be a vector such that $\vec{h} + \mathcal{S}(A) \equiv \mathcal{S}(A) \pmod \Gamma$. Then 
    \[
        \mathcal{S}_{\Gamma}(A) = (\mathcal{S}_{\Gamma'}(A) + \Gamma') \bmod \Gamma,
    \]
    where $\Gamma' := \Gamma + \mathcal{L}(\vec{h})$.
\end{lemma}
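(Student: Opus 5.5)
The plan is to prove the set identity by double inclusion, working inside $\mathbb{Z}^d$ before taking remainders. The key reformulation is that for any lattice $\Lambda$ and set $S$, the set $(S \bmod \Lambda)$ determines, and is determined by, the union of cosets $S + \Lambda$. Since $\Gamma \subseteq \Gamma'$, we have $(\mathcal{S}_{\Gamma'}(A) + \Gamma') = \mathcal{S}(A) + \Gamma'$ as subsets of $\mathbb{Z}^d$, because each element of $\mathcal{S}_{\Gamma'}(A)$ differs from an element of $\mathcal{S}(A)$ by a vector of $\Gamma'$. Thus it suffices to show
\[
    \mathcal{S}(A) + \Gamma = \mathcal{S}(A) + \Gamma',
\]
and then reduce both sides modulo $\Gamma$. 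Reducing the left side gives $\mathcal{S}_{\Gamma}(A)$, and reducing the right side gives $(\mathcal{S}_{\Gamma'}(A) + \Gamma') \bmod \Gamma$.

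The inclusion $\mathcal{S}(A) + \Gamma \subseteq \mathcal{S}(A) + \Gamma'$ is immediate from $\Gamma \subseteq \Gamma'$.

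For the reverse inclusion, note that every element of $\Gamma'$ has the form $z\vec{h} + \vec{g}$ with $z \in \mathbb{Z}$ and $\vec{g} \in \Gamma$. The hypothesis $\vec{h} + \mathcal{S}(A) \equiv \mathcal{S}(A) \pmod \Gamma$ says exactly that $\vec{h} + \mathcal{S}(A) + \Gamma = \mathcal{S}(A) + \Gamma$, since both sides are the unions of cosets determined by the same remainder sets.
\begin{itemize}
    \item For $z \geq 0$, iterating this identity $z$ times gives $z\vec{h} + \mathcal{S}(A) + \Gamma = \mathcal{S}(A) + \Gamma$.
    \item For $z < 0$, subtract $\vec{h}$ from both sides of the identity to get $\mathcal{S}(A) + \Gamma = -\vec{h} + \mathcal{S}(A) + \Gamma$, and iterate in the same way.
\end{itemize}
Hence $z\vec{h} + \vec{g} + \mathcal{S}(A) \subseteq \mathcal{S}(A) + \Gamma$ for all $z$ and $\vec{g}$, which gives $\mathcal{S}(A) + \Gamma' \subseteq \mathcal{S}(A) + \Gamma$.

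The only delicate point is the translation between the remainder-set congruence notation and the coset-union equality, including the case of negative multiples of $\vec{h}$. Both reduce to the finiteness, or more simply the translation invariance, of the cosets modulo $\Gamma$, so I expect no real obstacle.
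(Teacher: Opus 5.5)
Your proposal is correct and follows essentially the same route as the paper. Like the paper, you upgrade invariance under $\vec{h}$ to invariance under $\mathcal{L}(\vec{h})$ and hence under $\Gamma'$, then use $\Gamma' + \mathcal{S}_{\Gamma'}(A) = \Gamma' + \mathcal{S}(A)$ and reduce modulo $\Gamma$; the difference is that you explicitly justify the passage from $\vec{h}$ to $\mathcal{L}(\vec{h})$ (coset unions, positive and negative multiples), which the paper asserts without argument.
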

\begin{proof}
    Since $\vec{h} + \mathcal{S}(A) \equiv \mathcal{S}(A) \pmod \Gamma$, we have
    \[
        \mathcal{L}(\vec{h}) + \mathcal{S}(A) \equiv \mathcal{S}(A) \pmod \Gamma,
    \]
    or in equivalent form,
    \[
        \Gamma + \mathcal{L}(\vec{h}) + \mathcal{S}(A) \equiv \mathcal{S}(A)  \pmod \Gamma.
    \]
    Note that $\Gamma + \mathcal{L}(\vec{h}) = \Gamma'$ and that $\Gamma' + \mathcal{S}_{\Gamma'}(A) = \Gamma' + \mathcal{S}(A)$. Therefore,
    \[
       \Gamma' + \mathcal{S}_{\Gamma'}(A) \equiv \mathcal{S}(A)  \pmod \Gamma. \qedhere
    \]
\end{proof}

The following lemma is a consequence of Kneser's theorem and Lemma~\ref{lem:reduce-by-stablizer}. Assuming $Z = \mathbb{Z}^d$ for simplicity, it states that to generate all the remainders modulo a lattice $\Gamma$, we need roughly $O(\det(\Gamma))$ vectors.

\begin{lemma}\label{lem:gen-rem-std}
    Let $A$ be a set of vectors from $\mathbb{Z}^d$, where $Z:=\mathcal{L}(A)$ is full-dimensional. Let $\Gamma \subseteq Z$ be a full-dimensional lattice, and let $\ell := \det(\Gamma)/\det(Z)$. If $|A \setminus \Lambda| \geq \ell$ for any lattice $\Gamma \subseteq \Lambda \subsetneq Z$, then $A$ contains a subset $R$ of at most $2\ell$ vectors such that $\mathcal{S}_{\Gamma}(R) = Z \bmod \Gamma$.
\end{lemma}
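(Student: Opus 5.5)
We will mirror the iterative Kneser argument already used in the proof of Lemma~\ref{lem:reduce-1}, now with the extra goal of filling out all of $Z \bmod \Gamma$. We build a chain of lattices $\Gamma_0 := \Gamma \subsetneq \Gamma_1 \subsetneq \cdots \subsetneq \Gamma_r = Z$ together with disjoint subsets $A_0, A_1, \ldots, A_{r-1} \subseteq A$. We set $R := A_0 \cup \cdots \cup A_{r-1}$.

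Throughout we maintain the invariant
\[
    \Gamma_i + \mathcal{S}(R_i) = \Gamma_0 + \mathcal{S}(R_i), \qquad R_i := A_0 \cup \cdots \cup A_{i-1},
\]
and we also maintain $\Gamma_i \subseteq Z$. The step from $i$ to $i+1$ runs as follows.

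\begin{enumerate}[label={\normalfont (\arabic*)}]
    \item Suppose $\Gamma_i \subsetneq Z$, and let $\ell_i := \det(\Gamma_i)/\det(Z)$. Since $\Gamma \subseteq \Gamma_i \subsetneq Z$, the hypothesis gives $|A \setminus \Gamma_i| \geq \ell \geq \ell_i$.

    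\item Vectors of $R_i$ are never chosen twice, so we must count fresh elements outside $\Gamma_i$. Here we use $|R_i| \leq \sum_{j<i}\ell_j$, and we will need a slightly more careful count (see the obstacle below).

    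\item Choose $A_i$ to be $\ell_i$ elements of $A \setminus (\Gamma_i \cup R_i)$. Each $\{0,\vec a\}$ with $\vec a \in A_i$ has two residues modulo $\Gamma_i$. Since $\mathcal{S}_{\Gamma_i}(A_i) \subseteq Z \bmod \Gamma_i$ has size at most $\ell_i$, Corollary~\ref{coro:kneser-2} yields $\vec h_i \notin \Gamma_i$ with $\vec h_i + \mathcal{S}(A_i) \equiv \mathcal{S}(A_i) \pmod{\Gamma_i}$.

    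\item Set $\Gamma_{i+1} := \Gamma_i + \mathcal{L}(\vec h_i)$. As in Lemma~\ref{lem:reduce-1}, $\vec h_i \in \mathcal{S}(A_i) + \Gamma_i \subseteq Z$, so $\Gamma_{i+1} \subseteq Z$.

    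\item By Lemma~\ref{lem:reduce-by-stablizer}, and after adding $\mathcal{S}(R_i)$ to both sides, we get $\Gamma_{i+1} + \mathcal{S}(R_{i+1}) = \Gamma_i + \mathcal{S}(R_{i+1})$. Combined with the invariant for $i$, this gives the invariant for $i+1$.
\end{enumerate}

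By Fact~\ref{fact:suplatt}, $\ell_{i+1} \leq \ell_i/2$, so the process terminates after at most $\log \ell$ steps, and it can only stop at $\Gamma_r = Z$. At that point the invariant reads $Z + \mathcal{S}(R) = \Gamma + \mathcal{S}(R)$. Since $\mathcal{S}(R) \subseteq Z$, this gives $\mathcal{S}_\Gamma(R) = Z \bmod \Gamma$. The size bound is $|R| = \sum_i \ell_i \leq \ell(1 + 1/2 + \cdots) \leq 2\ell$.

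\textbf{Main obstacle.} The delicate step is ensuring that enough \emph{fresh} vectors outside $\Gamma_i$ remain at each stage. I would resolve this in one of two ways.

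\begin{itemize}
    \item \emph{Argue the overlap is harmless.} Every $\vec a \in R_i$ lies in $A_j$ for some $j<i$. But $A_j \not\subseteq \Gamma_{j}$ says nothing about membership in $\Gamma_i$, so a direct disjointness argument is unavailable.
    \item \emph{Allow reuse, which is the approach I would try first.} Take $A_i$ to be $\ell_i$ arbitrary elements of $A \setminus \Gamma_i$, which exist by the hypothesis. Then define $R := \bigcup_i A_i$ as a set. The invariant is then argued via $\mathrm{Sym}$ monotonicity: if $\vec h$ stabilizes $\mathcal{S}(A_i)$ modulo $\Gamma_i$, it stabilizes $\mathcal{S}(A')$ for every superset $A' \supseteq A_i$ (the observation after Definition~\ref{def:symset}). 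So $\vec h_i$ stabilizes $\mathcal{S}(R)$ modulo $\Gamma_i$ for the final $R$.
\end{itemize}

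With reuse, the cleanest route is to prove directly, by downward induction on $i$, that $\Gamma_i + \mathcal{S}(R) = \Gamma + \mathcal{S}(R)$ for the final $R$. The induction step is $\Gamma_{i+1} + \mathcal{S}(R) = \Gamma_i + \mathcal{S}(R)$, which follows from Lemma~\ref{lem:reduce-by-stablizer} applied to $R \supseteq A_i$. This sidesteps disjointness entirely, and $|R| \leq \sum_i |A_i| \leq 2\ell$ still holds.
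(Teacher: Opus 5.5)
Your final version (allowing reuse, taking $R := \bigcup_i A_i$ as a set, and chaining $\Gamma_i + \mathcal{S}(R) = \Gamma_{i+1} + \mathcal{S}(R)$ via superset monotonicity of the symmetry set and Lemma~\ref{lem:reduce-by-stablizer}) is correct. It is essentially the paper's proof unrolled into a loop: the paper inducts on $\ell$, picks $\ell$ vectors of $A$ outside $\Gamma$, applies Corollary~\ref{coro:kneser-2} once to pass to $\Gamma' = \Gamma + \mathcal{L}(\vec{h})$ with $\ell' \le \ell/2$, recurses, and sets $R = R_1 \cup R_2$ without requiring disjointness, giving the same bound $\ell + 2\ell' \le 2\ell$. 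Your disjointness ``obstacle'' is therefore a non-issue, and the closing induction is just the upward chain of equalities starting from $\Gamma_0 = \Gamma$, not a downward one.
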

\begin{proof}
    Note that $\ell$ must be a positive integer since $\Gamma \subseteq Z$ and $\Gamma$ is full-dimensional. We prove the lemma by induction on $\ell$.  When $\ell = 1$, we have that $\Gamma = Z$, and the lemma holds trivially. Suppose that the lemma holds for all $\ell$ with $\ell \leq k$ for some $k \geq 1$. We show that it also holds when $\ell = k+1$.

    Since $\ell = k+1 > 1$, it must be that $\Gamma \subsetneq Z$. Therefore, $A$ contains at least $\ell$ vectors not belonging to $\Gamma$. Let $R_1 = \{\vec{a}_1, \ldots, \vec{a}_{\ell}\}$ be a set of $\ell$ such vectors. 
    It is easy to see that
    \[
        |\{0, \vec{a}_1\}| + \cdots + |\{0, \vec{a}_{\ell}\}| = 2\ell \geq |\mathcal{S}_{\Gamma}(R_1)| + \ell.
    \]
    The last inequality is due to that $|\mathcal{S}_{\Gamma}(R_1)| \leq |Z \bmod \Gamma| = \ell$. By Corollary~\ref{coro:kneser-2}, there is a vector $\vec{h} \notin \Gamma$ such that 
    \begin{equation}\label{eq:gen-rem-std-1}
        \vec{h} + \mathcal{S}(R_1) \equiv \mathcal{S}(R_1) \pmod {\Gamma}.
    \end{equation}

    Let $\Gamma' := \Gamma + \mathcal{L}(\vec{h})$ and $\ell' := \det(\Gamma')/\det(Z)$. We show that the inductive hypothesis is applicable to $\Gamma'$. Since $\vec{0} \in \mathcal{S}(R_1)$, Equation~\eqref{eq:gen-rem-std-1} implies that $\vec{h} \in \mathcal{S}(R_1) + \Gamma$. Since $R_1 \subseteq Z$ and $\Gamma \subseteq Z$, we have $\vec{h} \in Z$, and therefore, $\Gamma' \subseteq Z$. Since $\vec{h} \notin \Gamma$, we have $\Gamma \subsetneq \Gamma'$. So $\ell' \leq \ell/2 \leq k$. Also, for any $\Lambda$ with $\Gamma' \subseteq \Lambda \subsetneq Z$, we have $\Gamma \subseteq \Lambda$, and therefore,
    \[
        |A \setminus \Lambda| \geq \ell > \ell'.
    \]
    By the inductive hypothesis,
    $A$ contains a subset $R_2$ of at most $2\ell' \leq \ell$ vectors such that 
    \begin{equation}\label{eq:gen-rem-std-2}
        \mathcal{S}_{\Gamma'}(R_2) = Z \bmod \Gamma'.  
    \end{equation} 

    Let $R = R_1 \cup R_2$. Clearly, $|R| \leq |R_1| + |R_2| \leq 2\ell$. It remains to show that $\mathcal{S}_{\Gamma}(R) = Z \bmod \Gamma$. Equation~\eqref{eq:gen-rem-std-1} implies that 
    \(
        \vec{h} + \mathcal{S}(R) \equiv \mathcal{S}(R) \pmod {\Gamma},
    \)
    and Equation~\eqref{eq:gen-rem-std-2} implies that $\mathcal{S}_{\Gamma'}(R) = Z \bmod \Gamma'$. By Lemma~\ref{lem:reduce-by-stablizer}, we have $\mathcal{S}_{\Gamma}(R) = Z \bmod \Gamma$.
\end{proof}

The above lemma is the best one can do in general. But in the dense case, where $A$ contains many distinct remainders modulo $\Gamma$, one can do much better. We will show that in the dense case, we only need $O(\frac{\det(\Gamma)}{|A|})$ to generate all the remainders modulo $\Gamma$. To do that, we need the following densification technique by Szemer{\'e}di and Vu~\cite{SV06a}. Using this technique, we can create a set of cardinality $|A|$ using roughly $\log |A|$ vectors from $A$. Repeating the process, we can obtain roughly $|A|/\log |A|$ sets of cardinality $|A|$, and therefore, increase the total cardinality from $|A|$ to roughly $|A|^2$. As a consequence, Kneser's theorem (more precisely, Corollary~\ref{coro:kneser-2}) can be applied to these sets, even when $|A|$ is not large enough.

\begin{lemma}[{Implied by Lemma 7.9 in~\cite{SV06a}}]\label{lem:sz-set}
    Let $A \subseteq \mathbb{Z}^d$ be a finite set of vectors where $|A|$ is sufficiently large. Then $A$ contains a subset $A'$ of at most $20 \log |A|$ vectors such that $|\mathcal{S}(A')| > |A|$.
\end{lemma}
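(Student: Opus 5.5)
Our plan is a greedy doubling construction. We pick elements of $A$ one at a time, each time choosing one that increases $|\mathcal{S}(A')|$ by a large factor. Set $A'_0 := \emptyset$, so $\mathcal{S}(A'_0) = \{\vec{0}\}$. Given $A'_i$ with $S_i := \mathcal{S}(A'_i)$, we note that for any $\vec{a} \in A \setminus A'_i$ we have $\mathcal{S}(A'_i \cup \{\vec{a}\}) = S_i \cup (S_i + \vec{a})$. Its size is $2|S_i| - |S_i \cap (S_i + \vec{a})|$. We want an $\vec{a}$ whose overlap $|S_i \cap (S_i+\vec{a})|$ is small.

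The key step is an averaging (double counting) argument. The number of pairs $(\vec{s},\vec{s}') \in S_i \times S_i$ with $\vec{s}' - \vec{s} = \vec{a}$, summed over $\vec{a} \in A \setminus A'_i$, is at most $|S_i|^2$, because each ordered pair determines its difference uniquely. Hence
\[
    \sum_{\vec{a} \in A\setminus A'_i} |S_i \cap (S_i + \vec{a})| \le |S_i|^2 ,
\]
so some $\vec{a}$ has overlap at most $|S_i|^2/|A \setminus A'_i|$. Call this bound $(\ast)$. As long as $|S_i| \le |A|$ and $|A'_i| \le |A|/2$, the overlap is at most $2|S_i|^2/|A|$. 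The new sumset then has size at least $2|S_i| - 2|S_i|^2/|A|$. Two phases follow.

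\begin{itemize}
\item[] \textbf{Phase 1} ($|S_i| \le |A|/4$). The overlap is at most $|S_i|/2$, so $|S_{i+1}| \ge \tfrac{3}{2}|S_i|$. After $O(\log |A|)$ steps, specifically at most $\log_{3/2}|A| \approx 1.71\log|A|$, we reach $|S_i| > |A|/4$.

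\item[] \textbf{Phase 2} ($|A|/4 < |S_i| \le |A|$). Write $|S_i| = |A| - g$, where $g$ is the gap to the goal. We need more care here, because $2x - 2x^2/|A|$ saturates at $|A|/2$. The bound $(\ast)$ gives overlap at most $|S_i|^2/|A|$ (up to the factor from removing used elements). Then $|S_{i+1}| \ge |S_i|(2 - |S_i|/|A|)$. In terms of $g$, this reads $|A| - g_{i+1} \ge (|A|-g)(1+g/|A|)$, i.e.\ $g_{i+1} \le g^2/|A|$. Hence the relative gap $g/|A|$ squares at every step. Starting from relative gap at most $3/4$, after $O(\log\log|A|)$ steps the gap drops below $1$, meaning $|S_i| > |A| - 1$. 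One or two further steps make it strictly exceed $|A|$. (Any step with nonzero added element and $|S_i| \le |A|$ still strictly increases the size, since the overlap is less than $|S_i|$.)
\end{itemize}

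The main obstacle is bookkeeping the factor $|A\setminus A'_i|$ versus $|A|$. We use at most $O(\log|A|)$ elements, so $|A \setminus A'_i| \ge |A|(1 - O(\log|A|/|A|))$. We will handle this by replacing $|A|$ with $|A|(1-\eta)$ for $\eta = O(\log|A|/|A|)$ in the Phase 2 recursion. The recursion then drives $|S_i|$ above $(1-\eta)|A| - 1$. A final short stretch of steps, each adding at least one new element, uses at most $O(\eta|A|) = O(\log|A|)$ more elements, and a slightly sharper count shows each of these steps gains more than one element. The total number of elements used is $O(\log|A|)$. We will verify that the constants stay within $20\log|A|$ for $|A|$ sufficiently large, and that the set $A' := A'_i$ at termination satisfies $|\mathcal{S}(A')| > |A|$.
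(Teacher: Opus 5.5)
Your proof is correct. The paper gives no argument of its own for this statement; it simply imports it as Lemma~7.9 of Szemer\'edi--Vu. Your greedy argument is therefore a self-contained replacement for the citation, and it is essentially the standard argument for statements of this type. Its steps are: double-count differences to find an unused $\vec a$ with $|S_i\cap(S_i+\vec a)|\le |S_i|^2/|A\setminus A'_i|$, grow geometrically until $|S_i|>|A|/4$, then let the gap square. The citation buys the paper brevity. Your version makes explicit that the only inputs are that $A$ is a set (so distinct $\vec a$ account for disjoint families of pairs) and that $\mathbb{Z}^d$ is torsion-free.

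One step needs a different justification when you write it up. In the final stretch the averaging bound is vacuous: once $|S_i|\ge |A\setminus A'_i|$, the bound $|S_i|^2/|A\setminus A'_i|$ is at least $|S_i|$. What you actually need is that every nonzero $\vec a$ gives $|S_i\cup(S_i+\vec a)|\ge |S_i|+1$. This holds because a finite nonempty subset of $\mathbb{Z}^d$ is not invariant under a nonzero translation: if $\vec s\in S_i$ maximizes $\vec a\cdot\vec s$, then $\vec s+\vec a\notin S_i$. You should also explicitly skip $\vec 0$ if it lies in $A$.

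The bookkeeping is then clean. Fix $K:=\lceil 2\log|A|\rceil$ and run both phases with the fixed denominator $M:=|A|-K$, which is at most $|A\setminus A'_i|$ while $i\le K$. Since $x\mapsto 2x-x^2/M$ is increasing on $[0,M]$, the gap $g_i:=M-|S_i|$ obeys $g_{i+1}\le g_i^2/M$. So within $K$ steps you reach $|S_i|\ge |A|-K$. At most $K+1$ further nonzero elements then push $|\mathcal{S}(A'_i)|$ above $|A|$, for a total of at most $2K+1\le 20\log|A|$.

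Your remark that each final step gains more than one element is not needed, and it is not what your averaging gives in that regime, so drop it.
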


We generalize their technique to the finite group $\mathbb{Z}^d_{\Gamma}$.

\begin{lemma}\label{lem:sz-lattice}
    Let $\Gamma \subseteq \mathbb{Z}^d$ be a full-dimensional lattice. Let $A \subseteq \mathbb{Z}^d_\Gamma$ be a set of vectors where $|A|$ is sufficiently large. Then $A$ contains a subset $A'$ of at most $20 \log |A|$ vectors such that 
    \[
        |\mathcal{S}_{\Gamma}(A')| > \frac{|A|}{20^{d}(\log |A|)^d}.
    \]
\end{lemma}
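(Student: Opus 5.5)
The plan is to reduce to Lemma~\ref{lem:sz-set} by lifting $A$ to an integer set in which distinct subset sums stay distinct modulo $\Gamma$ up to a bounded multiplicity. Let $b_{11},\ldots,b_{dd}$ be the diagonal entries of the Hermite normal form basis of $\Gamma$, so $A\subseteq \mathbb{Z}^d_\Gamma=[b_{11}]\times\cdots\times[b_{dd}]$. View the elements of $A$ as honest vectors of $\mathbb{Z}^d$ with coordinates in these boxes. Apply Lemma~\ref{lem:sz-set} to $A$ (as a subset of $\mathbb{Z}^d$, $|A|$ sufficiently large) to obtain $A'\subseteq A$ with $|A'|\le 20\log|A|$ and $|\mathcal{S}(A')|>|A|$.

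The key step is to compare $|\mathcal{S}(A')|$ with $|\mathcal{S}_\Gamma(A')|$. Every element of $\mathcal{S}(A')$ is a sum of at most $k:=|A'|\le 20\log|A|$ vectors from the box $\mathbb{Z}^d_\Gamma$, hence lies in the box $[k b_{11}]\times\cdots\times[k b_{dd}]$ (coordinates nonnegative, $i$-th coordinate less than $k b_{ii}$). I then claim that any box of the form $\vec{c}+[k b_{11}]\times\cdots\times[kb_{dd}]$ contains at most $k^d$ vectors from a single residue class modulo $\Gamma$. To see this, tile the large box by $k^d$ translates of $[b_{11}]\times\cdots\times[b_{dd}]$, namely $\vec{u}+\mathbb{Z}^d_\Gamma$ with $u_i\in b_{ii}\mathbb{Z}$; it suffices that each translate $\vec{u}+\mathbb{Z}^d_\Gamma$ is a complete system of distinct residues mod $\Gamma$. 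That holds because translation by any $\vec{u}\in\mathbb{Z}^d$ preserves distinctness of residues, and $\mathbb{Z}^d_\Gamma$ itself consists of pairwise incongruent vectors (as noted after Definition~\ref{def:hnf}). So each residue class meets each tile at most once, giving the multiplicity bound $k^d$.

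Consequently the map $\mathcal{S}(A')\to\mathcal{S}_\Gamma(A')$, $\vec{v}\mapsto \vec{v}\bmod\Gamma$, has fibers of size at most $k^d\le 20^d(\log|A|)^d$, so
\[
|\mathcal{S}_\Gamma(A')|\ \ge\ \frac{|\mathcal{S}(A')|}{20^d(\log|A|)^d}\ >\ \frac{|A|}{20^d(\log|A|)^d}.
\]
The main point to get right is the fiber bound; the subtlety is only that the boxes use the HNF diagonal lengths $b_{ii}$ (not $\det\Gamma$ in every coordinate), and the tiling argument above handles that cleanly. Everything else is a direct invocation of Lemma~\ref{lem:sz-set}.
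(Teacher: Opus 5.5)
Your proof is correct and follows the paper's argument. You invoke Lemma~\ref{lem:sz-set} to obtain $A'$ with $|\mathcal{S}(A')|>|A|$, note that $\mathcal{S}(A')$ lies in the $k$-fold sumset of $\mathbb{Z}^d_\Gamma$ (so each residue class modulo $\Gamma$ contains at most $k^d$ of its elements), and divide; your tiling by translates $\vec{u}+\mathbb{Z}^d_\Gamma$ just spells out the multiplicity bound that the paper states without justification.
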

\begin{proof}
    By Lemma~\ref{lem:sz-set}, $A$ contains a subset $A'$ of at most $20 \log |A|$ vectors such that $|\mathcal{S}(A')| > |A|$. To prove the lemma, it suffices to show that 
    \[
        |\mathcal{S}_{\Gamma}(A')| \geq \frac{|\mathcal{S}(A')|}{20^{d}(\log |A|)^d}.
    \]
    Let $k := \lfloor 20\log |A| \rfloor$. Since $A' \subseteq A \subseteq \mathbb{Z}^d_\Gamma$ and $|A'| \leq  k$, we have that 
    \(
        \mathcal{S}(A') \subseteq k\mathbb{Z}^d_{\Gamma}.
    \)
    Therefore, each congruence class (modulo $\Gamma$) of $\mathcal{S}(A')$ contains at most $k^d$ vectors from $\mathcal{S}(A')$. This implies that 
    \[
        |\mathcal{S}_{\Gamma}(A')| \geq \frac{|\mathcal{S}(A')|}{k^d} \geq \frac{|\mathcal{S}(A')|}{20^{d}(\log |A|)^d}. \qedhere
    \]
\end{proof}

\begin{lemma}\label{lem:densify}
    Let $\Gamma \subseteq \mathbb{Z}^d$ be a full-dimensional lattice. Let $A \subseteq \mathbb{Z}^d_\Gamma$ be a set of vectors where $|A|$ is sufficiently large. For any $\ell$ with $|A| \leq \ell \leq |A|^2/(\log |A|)^{3d}$, $A$ contains disjoint subsets $A_1, \ldots, A_m$ of at most $\frac{\ell}{2|A|}(\log |A|)^{3d}$ vectors in total such that 
    \begin{equation}\label{eq:lemdensify-0}
            \sum_{i=1}^m |\mathcal{S}_{\Gamma}(A_i)| \geq \ell + m.
    \end{equation}
\end{lemma}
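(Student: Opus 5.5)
We will build the subsets greedily by applying Lemma~\ref{lem:sz-lattice} repeatedly to what remains of $A$. Set $k := \lfloor 20\log|A|\rfloor$ and $m := \left\lceil \frac{\ell\cdot 20^d(\log|A|)^d}{|A|/2}\right\rceil$, or a similar value chosen so that the count works out. We will extract $A_1, A_2, \ldots, A_m$ one at a time. Before step $i$, let $A^{(i)} := A \setminus (A_1\cup\cdots\cup A_{i-1})$.

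As long as we have removed at most $|A|/2$ vectors, $A^{(i)}$ has at least $|A|/2$ elements, which is still sufficiently large. Lemma~\ref{lem:sz-lattice} then gives a subset $A_i\subseteq A^{(i)}$ with $|A_i|\le 20\log|A|$ and
\[
|\mathcal{S}_\Gamma(A_i)| > \frac{|A|/2}{20^d(\log |A|)^d}.
\]
Here we use $\log|A^{(i)}|\le\log|A|$. Since $\log|A^{(i)}|$ could be smaller than $\log|A|$, the size bound must be written with $\log|A|$ rather than $\log|A^{(i)}|$, and that direction is fine.

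We keep going until $\sum_i|\mathcal{S}_\Gamma(A_i)|\ge \ell+m$. Each step adds at least $q := \frac{|A|}{2\cdot 20^d(\log|A|)^d}$, and we may assume $q\ge 2$ because $|A|$ is large. So each step increases $\sum_i(|\mathcal{S}_\Gamma(A_i)|-1)$ by at least $q-1\ge q/2$. Hence the process stops after
\[
m\le \left\lceil \frac{2\ell}{q}\right\rceil \le \frac{4\cdot 20^d(\log|A|)^d\,\ell}{|A|}+1 \quad\text{steps.}
\]
The total number of vectors used is then at most
\[
m\cdot 20\log|A|\le \frac{80\cdot 20^d(\log|A|)^{d+1}\ell}{|A|}+20\log|A|.
\]
Because $\ell\ge|A|$, the additive term is absorbed, and for $d\ge 1$ and $|A|$ sufficiently large the whole bound is at most $\frac{\ell}{2|A|}(\log|A|)^{3d}$. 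This uses $(\log|A|)^{2d-1}\ge 2\cdot 160\cdot 20^d$, which holds for large $|A|$.

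The main thing to check is that the process never runs out of elements. For this we need the total number of vectors used to stay at most $|A|/2$. We use the hypothesis $\ell\le |A|^2/(\log|A|)^{3d}$: it gives $\frac{\ell}{2|A|}(\log|A|)^{3d}\le |A|/2$. So during the whole process at least $|A|/2$ elements remain, and Lemma~\ref{lem:sz-lattice} applies at every step.

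The subsets $A_i$ are disjoint by construction, and the loop stops only once~\eqref{eq:lemdensify-0} holds. The only delicate points are bookkeeping: the extra $+m$ on the right-hand side of~\eqref{eq:lemdensify-0}, which is handled by the $q-1\ge q/2$ trick, and ensuring that the logarithmic factors fit within the exponent $3d$.
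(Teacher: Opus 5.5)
Your proposal is correct and follows the paper's proof: greedily extract subsets via Lemma~\ref{lem:sz-lattice} until~\eqref{eq:lemdensify-0} holds, charge each step a gain of at least $q-1$ to absorb the $+m$, bound the number of steps and hence the total size, and use $\ell\le |A|^2/(\log|A|)^{3d}$ to guarantee that at least $|A|/2$ elements always remain. The only differences are cosmetic: you apply the lemma to all remaining elements rather than to exactly $|A|/2$ of them, and your upfront definition of $m$ is superfluous, since $m$ is determined by the stopping rule.
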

\begin{proof}
    We iteratively extract subsets $A_1, \ldots, A_m$ from $A$ as follows. Initially, $m := 0$ and every vector in $A$ is unmarked. If $A$ contains fewer than $|A|/2$ unmarked vectors or~\eqref{eq:lemdensify-0} is satisfied, then we stop. Otherwise, we pick any $|A|/2$ unmarked vectors, apply Lemma~\ref{lem:sz-lattice} to them, and let $A_{m+1}$ be the resulting subset. Then we mark all the vectors in $A_{m+1}$ and repeat with $m := m+1$.

    It is easy to see that $A_1, \ldots, A_m$ are disjoint.  It remains to bound their total size and show that~\eqref{eq:lemdensify-0} holds. By Lemma~\ref{lem:sz-lattice}, we have 
    \begin{equation}\label{eq:lemdensify-1}
        |A_i| \leq 20 \log (|A|/2) \leq 20 \log |A|
    \end{equation}
    and
    \begin{equation}\label{eq:lemdensify-2}
        |\mathcal{S}(A_i)| \geq \frac{|A|}{2 \cdot 20^d (\log |A|)^d} \geq \frac{80|A|}{(\log |A|)^{2d}} + 1.
    \end{equation}
    The last inequality is due to that $|A|$ is sufficiently large. In view of~\eqref{eq:lemdensify-2}, when the construction of $A_1, \ldots, A_m$ terminates, we have 
    \begin{equation}\label{eq:lemdensify-3}
        m \leq \left\lceil \frac{\ell}{80|A|}(\log |A|)^{2d} \right\rceil \leq \frac{\ell}{40|A|}(\log |A|)^{2d}.
    \end{equation}
    In view of~\eqref{eq:lemdensify-1} and~\eqref{eq:lemdensify-3}, we have
    \begin{equation}\label{eq:lemdensify-4}
        \sum_{i=1}^m |A_i| \leq 20 \log |A| \cdot \frac{\ell}{40|A|}(\log |A|)^{2d} = \frac{\ell}{2|A|}(\log |A|)^{3d}.
    \end{equation}
    Recall that $\ell \leq |A|^2/(\log |A|)^{3d}$. Inequality~\eqref{eq:lemdensify-4} implies that $A$ always has at least $|A|/2$ unmarked elements during the construction of $A_1, \ldots, A_m$. Therefore,~\eqref{eq:lemdensify-0} must hold when the construction terminates.
\end{proof}

Now we are ready to prove Lemma~\ref{lem:gen-rem-dense}.

\lemgenremdense*
\begin{proof} 
    Let $c$ be a sufficiently large constant. Note that $\ell$ must be a positive integer since $\Gamma \subseteq Z$ and $\Gamma$ is full-dimensional. We prove, by induction on $\ell$, a stronger lemma in which the bound on $|R|$ is replaced by $\frac{c\mu\ell}{|A'|}(\log (\mu\ell))^{3d}\log \ell$. 

    When $\ell = 1$, we have that $\Gamma = Z$, and the lemma holds trivially. Suppose that the lemma holds for all $\ell$ with $\ell \leq k$ for some $k \geq 1$. We show that it also holds when $\ell = k+1$.

    If $c \mu > |A'|$, the lemma follows directly from Lemma~\ref{lem:gen-rem-std}.  Assume that $c\mu \leq |A'|$. 

    \begin{claim}
        $A'$ contains a subset $R_1$ of at most $\frac{16\mu\ell}{|A'|}(\log (\mu\ell))^{3d}$ elements such that for some $\vec{h} \notin \Gamma$,
        \begin{equation}\label{eq:gen-rem-dense-1}
            \vec{h} + \mathcal{S}(R_1) \equiv \mathcal{S}(R_1) \pmod \Gamma.
        \end{equation}
    \end{claim}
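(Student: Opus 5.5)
We need to prove the claim: $A'$ contains a subset $R_1$ of at most $\frac{16\mu\ell}{|A'|}(\log(\mu\ell))^{3d}$ elements such that some $\vec h\notin\Gamma$ stabilizes $\mathcal{S}(R_1)$ modulo $\Gamma$.

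The plan is to pass to distinct remainders and then densify. Since every remainder modulo $\Gamma$ occurs at most $\mu$ times in $A'$, the set $A'\bmod\Gamma$ has at least $|A'|/\mu$ distinct elements. We choose a subset $A''\subseteq A'$ of size $s:=\lceil |A'|/\mu\rceil$ whose elements are pairwise incongruent modulo $\Gamma$, and identify $A''$ with $A''\bmod\Gamma\subseteq\mathbb{Z}^d_\Gamma$. Taking remainders does not change $\mathcal{S}_\Gamma$ of any subset, so we may work with $A''\bmod \Gamma$. The assumption $c\mu\le|A'|$ with $c$ large makes $s$ as large as needed, so the "sufficiently large" hypotheses of Lemma~\ref{lem:sz-lattice} and Lemma~\ref{lem:densify} hold.

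Next we apply Lemma~\ref{lem:densify} to $A''$ with parameter $\ell$, if $\ell\ge s$; otherwise we use a parameter of size $\max\{\ell,s\}$, which only helps. We must check that $\ell\le s^2/(\log s)^{3d}$. Because $s\ge|A'|/\mu$, this reduces to
\[
\mu^2\ell(\log s)^{3d}\le|A'|^2 .
\]
Hypothesis \eqref{eq:gen-rem-dense-0} gives $32\mu\ell(\log(\mu\ell))^{3d}\le|A'|^2$, which carries only one factor of $\mu$. This bookkeeping is the delicate point. In our setting $\mu\le(2^d+1)^d$ is tiny, and I expect it can be absorbed, possibly by weakening the bound to a constant multiple. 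Alternatively, we can redo Lemma~\ref{lem:densify} directly on the multiset: repeatedly take $|A'|/(2\mu)$ unmarked vectors with distinct remainders, each step gaining about $|A'|/(\mu\,\mathrm{polylog})$ in $|\mathcal{S}_\Gamma|$. Then about $\mu\ell\,\mathrm{polylog}/|A'|$ steps reach a total of $\ell+m$. In either version we also use $\log s\le\log|A'|\le\log(\mu\ell)+O(1)$, which follows from $|A'|^2\le$ roughly $\mu\ell\cdot\mathrm{poly}$; and if $|A'|$ is much larger, we simply use fewer vectors.

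The outcome is disjoint $A_1,\dots,A_m\subseteq A''\subseteq A'$ with total size at most $\frac{\ell}{2s}(\log s)^{3d}\le\frac{\mu\ell}{2|A'|}(\log s)^{3d}$ and $\sum_i|\mathcal{S}_\Gamma(A_i)|\ge\ell+m$. We set $R_1:=A_1\cup\dots\cup A_m$. Since $R_1\subseteq Z$, we have $|\mathcal{S}_\Gamma(R_1)|\le|Z\bmod\Gamma|=\ell$. Hence
\[
\sum_i|\mathcal{S}_\Gamma(A_i)|\ge|\mathcal{S}_\Gamma(R_1)|+m ,
\]
and Corollary~\ref{coro:kneser-2} yields $\vec h\notin\Gamma$ with $\vec h+\mathcal{S}(R_1)\equiv\mathcal{S}(R_1)\pmod\Gamma$. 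The size bound $\frac{16\mu\ell}{|A'|}(\log(\mu\ell))^{3d}$ follows after converting $\log s$ to $\log(\mu\ell)$ with the slack constant $16$.

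The main obstacle is the parameter check for Lemma~\ref{lem:densify}: reconciling the $\mu^2$ versus $\mu$ factor, and handling the regime $\ell<s$, where one should just take fewer sets.
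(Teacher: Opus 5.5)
Your main line has a genuine gap exactly where you flagged it, and the patch you lean on does not close it. A single incongruent set $A''$ of size $s\approx|A'|/\mu$ can be densified by Lemma~\ref{lem:densify} only up to total mass about $s^2/(\log s)^{3d}\approx|A'|^2/(\mu^2(\log s)^{3d})$. By contrast, \eqref{eq:gen-rem-dense-0} allows $\ell$ as large as $|A'|^2/(32\mu(\log(\mu\ell))^{3d})$, which is larger by a factor of order $\mu$ up to logarithmic factors. So once $\mu$ is large, there are admissible $\ell$ for which the precondition $\ell\le s^2/(\log s)^{3d}$ fails, and your argument produces no $\vec h$.

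Weakening the size bound by a constant does not help, because the obstruction sits in the hypothesis of Lemma~\ref{lem:densify}, not in the conclusion. Nor can you use $\mu\le(2^d+1)^d$. That bound holds only at the top-level call from Lemma~\ref{lem:all-lattice-pts}. The claim lives inside the induction proving Lemma~\ref{lem:gen-rem-dense}, where the inductive hypothesis, and hence the claim, is applied with $\mu'=\mu(A',\Gamma')$. That $\mu'$ can be as large as $\mu\ell/\ell'$; the only restriction is $c\mu'\le|A'|$. No fixed constant in \eqref{eq:gen-rem-dense-0} can absorb an unbounded factor $\mu$.

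The missing idea is to densify several disjoint incongruent groups and feed all the resulting pieces into a single application of Corollary~\ref{coro:kneser-2}. Since every remainder has multiplicity at most $\mu$, you can greedily extract disjoint $A_1,\dots,A_\mu\subseteq A'$, each consisting of $\lceil|A'|/(4\mu)\rceil$ pairwise incongruent vectors. At every round at least $|A'|/2$ vectors are unmarked, so at least $|A'|/(2\mu)$ distinct remainders are available. Each group densifies to mass at least $g:=|A'|^2/(16\mu^2(\log(\mu\ell))^{3d})$; if $\ell\le g$, one group already reaches $\ell$. Hypothesis \eqref{eq:gen-rem-dense-0} is exactly what guarantees $\lceil\ell/g\rceil\le\mu$. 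Hence $\lceil\ell/g\rceil$ groups give total mass at least $\ell+\sum_i m_i\ge|\mathcal{S}_\Gamma(R_1)|+\sum_i m_i$, and $|R_1|\le\lceil\ell/g\rceil\cdot\lceil|A'|/(4\mu)\rceil\le\frac{16\mu\ell}{|A'|}(\log(\mu\ell))^{3d}$. This is the paper's argument.

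Your ``alternative'' of rerunning the densification on the multiset, each time drawing $|A'|/(2\mu)$ unmarked incongruent vectors, is a legitimate variant of the same fix. The point to verify is that the roughly $\frac{\mu\ell}{|A'|}\,\mathrm{polylog}$ consumed vectors stay below $|A'|/2$. That is precisely where a single factor $\mu$ in \eqref{eq:gen-rem-dense-0} suffices. But you only sketched it, and your write-up falls back on the single-set version.

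A minor point: $|A'|\le\mu\,|A'\bmod\Gamma|\le\mu\ell$ always, so $\ell\ge|A'|/\mu$. Hence the regime $\ell<s$ never arises, and $\log|A'|\le\log(\mu\ell)$ is immediate. Your justification via an upper bound on $|A'|^2$ points the wrong way.
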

    \begin{proof}
        Let $n : = |A'|$.  We construct $\mu$ disjoint subsets $A_1, \ldots, A_{\mu}$ of $A'$ such that $|A_i| = |A_i \bmod \Gamma| = \lceil \frac{n}{4\mu} \rceil$ for each $i$ as follows. Initially, all elements in $A'$ are unmarked. For $i = 1, \ldots, \mu$, find $\lceil \frac{n}{4\mu}\rceil$ unmarked elements in $A'$ that are not congruent to each other, mark them and put them into $A_i$. We show that the construction always succeeds. Recall that $c\mu\leq n$. Each $A_i$ contains $\lceil \frac{n}{4\mu}\rceil \leq \frac{n}{2\mu}$ elements. Therefore, at the beginning of each round, we have at least $\frac{n}{2}$ unmarked elements. By the definition of $\mu$, there are at least $\frac{n}{2\mu} \geq \lceil \frac{n}{4\mu}\rceil$ unmarked elements not congruent to each other. Note that each $A_i$ has cardinality $\lceil \frac{n}{4\mu}\rceil \geq \frac{n}{4\mu} \geq \frac{c}{4}$. Thus, $|A_i|$ is sufficiently large, and Lemma~\ref{lem:densify} is applicable to $A_i$ (in fact, $A_i \bmod \Gamma$). 
        
        Define 
        \[
            g: = \frac{n^2}{16\mu^2(\log (\mu\ell))^{3d}}.
        \]
        We consider the first case where $\ell \leq g$. By definition of $\mu$, we have $\ell \geq n/\mu > |A_1|$.  One can also verify that in this case 
        \(
            \ell \leq \frac{|A_1|^2}{(\log |A_1|)^{3d}}.
        \)
        Lemma~\ref{lem:densify} is applicable to $A_1$ and $\ell$, and therefore, $A_1$ contains disjoint subsets $A_{1,1}, \ldots, A_{1,m_1}$ of at most $\frac{\ell}{2|A_1|}(\log |A_1|)^{3d}$ elements in total such that  
        \[
             \sum_{j=1}^{m_1} |\mathcal{S}_{\Gamma}(A_{1,j})| \geq \ell + m_1.
        \]
        Note that $|\mathcal{S}_{\Gamma}(R_1)| \leq \ell$. By Corollary~\ref{coro:kneser-2}, there exists a vector $\vec{h} \notin \Gamma$ such that $\vec{h} + \mathcal{S}(R_1) \equiv \mathcal{S}(R_1) \pmod \Gamma$. Moreover, 
        \[
            |R_1| \leq \sum_{j=1}^{m_1}|A_{1,j}| \leq \frac{\ell}{2|A_1|}(\log |A_1|)^{3d} \leq \frac{2\mu\ell}{n}(\log n)^{3d}.
        \]

        Now we consider the second case where $\ell > g$.
        By Lemma~\ref{lem:densify}, each $A_i$ contains disjoint subsets $A_{i,1}, \ldots, A_{i, m_i}$ such that 
        \[
            \sum_{j=1}^{m_i} |\mathcal{S}_{\Gamma}(A_{i,j})| \geq \frac{|A_i|^2}{(\log |A_i|)^{3d}} + m_i \geq g + m_i.
        \]
        In view of~\eqref{eq:gen-rem-dense-0}, we have 
        \[
            \left\lceil\frac{\ell}{g}\right\rceil \leq  \frac{32\ell\mu^2(\log (\mu\ell))^{3d}}{n^2} \leq \mu.
        \]
        Summing the above inequality over all $i$ with $1 \leq i \leq \lceil\frac{\ell}{g}\rceil$, we have
        \[
            \sum_{i=1}^{\lceil\frac{\ell}{g}\rceil}\sum_{j=1}^{m_i} |\mathcal{S}_{\Gamma}(A_{i,j})| \geq \ell + \sum_{i=1}^{\lceil\frac{\ell}{g}\rceil}m_i.
        \]
        Let $R_1 = \bigcup_{i=1}^{\lceil\frac{\ell}{g}\rceil}\bigcup_{j=1}^{m_i} A_{i,j}$. Note that $\mathcal{S}_{\Gamma}(R_1) \leq \ell$. By Corollary~\ref{coro:kneser-2}, there exists a vector $\vec{h} \notin \Gamma$ such that $\vec{h} + \mathcal{S}(R_1) \equiv \mathcal{S}(R_1) \pmod \Gamma$. Moreover, 
        \[
            |R_1| \leq \sum_{i=1}^{\lceil\frac{\ell}{g}\rceil}|A_{i}| \leq \frac{2\ell}{g} \cdot \frac{n}{2\mu} = \frac{16\mu\ell}{n} (\log (\mu\ell))^{3d}. \qedhere
        \]
    \end{proof}

    Let $\Gamma' = \Gamma + \mathcal{L}(\vec{h})$ and $\ell' := \det(\Gamma')/\det(Z)$ and $\mu' := \mu(A', \Gamma')$. We show that the inductive hypothesis is applicable to $\Gamma'$. Since $\vec{0} \in \mathcal{S}(R_1)$, Equation~\eqref{eq:gen-rem-dense-1} implies that $\vec{h} \in \mathcal{S}(R_1) + \Gamma$. Since $R_1 \subseteq Z$ and $\Gamma \subseteq Z$, we have $\vec{h} \in Z$, and therefore, $\Gamma' \subseteq Z$. Since $\vec{h} \notin \Gamma$, we have $\Gamma \subsetneq \Gamma'$. So $\ell' \leq \ell/2 \leq k$. We also have that
    \[
        \frac{\mu'}{\mu} \leq \frac{\det(\Gamma)}{\det(\Gamma')} = \frac{\ell}{\ell'},
    \]
    which implies $\ell'\mu'\leq \ell \mu \leq n^2/(\log n)^{3d}$. For any $\Lambda$ with $\Gamma' \subseteq \Lambda \subsetneq Z$, we have $\Gamma \subseteq \Lambda$, and therefore,
    \[
        |A \setminus \Lambda| \geq c\mu\ell/n \geq c\mu'\ell'/n.
    \]
    By the inductive hypothesis, $A$ contains a subset $R_2$ of at most $\frac{c\mu'\ell'}{n} (\log (\mu'\ell'))^{3d} \log \ell'$ vectors such that 
    \begin{equation}\label{eq:gen-rem-dense-2}
        \mathcal{S}_{\Gamma'}(R_2) = Z \bmod \Gamma'.
    \end{equation}
    Let $R = R_1 \cup R_2$. We have
    \begin{align*}
        |R| \leq |R_1| + |R_2| &= \frac{16\mu\ell}{n} (\log (\mu\ell))^{3d} + \frac{c\mu'\ell'}{n} (\log (\mu'\ell'))^{3d} \log \ell' \\
            & \leq \frac{c\mu\ell}{n} (\log (\mu\ell))^{3d} (1 + \log \ell') \leq \frac{c\mu\ell}{n} (\log (\mu\ell))^{3d}\log \ell.
    \end{align*}
    Equation~\eqref{eq:gen-rem-dense-1} implies that 
    \(
        \vec{h} + \mathcal{S}(R) \equiv \mathcal{S}(R) \pmod {\Gamma},
    \)
    and Equation~\eqref{eq:gen-rem-dense-2} implies that $\mathcal{S}_{\Gamma'}(R) = Z \bmod \Gamma'$. By Lemma~\ref{lem:reduce-by-stablizer}, we have $\mathcal{S}_{\Gamma}(R) = Z \bmod \Gamma$.
\end{proof}

\section{Computing Modular Sumsets in Multi-dimension}\label{sec:modular-sumset}
We shall prove Lemma~\ref{lem:faster-modular-ss}. We solve a more general problem of computing $(S_1 + \cdots + S_m) \bmod \Gamma$ for some full-dimensional lattice $\Gamma$.

The modular sumset of two sets can be computed efficiently using fast Fourier transform.

\begin{lemma}\label{lem:modulo-sumset}
    Let $\Gamma \subseteq \mathbb{Z}^d$ be a full-dimensional lattice with $\ell: = \det(\Gamma)$. Let $S_1, S_2 \subseteq \mathbb{Z}_{\Gamma}^d$ be two sets of vectors.  We can compute 
    \[
        S:= (S_1 + S_2) \bmod \Gamma
    \]  
    in $O(2^{d}|S| \cdot \mathrm{polylog}(2^d \ell))$ time.
\end{lemma}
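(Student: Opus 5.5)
We will reduce the computation to an ordinary (non-modular) sumset computation on a box via FFT, and then use output-sensitive sparse convolution to get a running time proportional to $|S|$ rather than $\ell$. Let $b_{11},\dots,b_{dd}$ be the diagonal entries of the Hermite normal form basis of $\Gamma$, so that $\mathbb{Z}^d_\Gamma=[b_{11}]\times\cdots\times[b_{dd}]$ and $\prod_i b_{ii}=\ell$.

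The first step is to compute the plain sumset $S_1+S_2\subseteq\mathbb{Z}^d$. Every vector of it lies in the box $[2b_{11}-1]\times\cdots\times[2b_{dd}-1]$, which has at most $2^d\ell$ integer points. We flatten this box into one dimension with the mixed-radix map $\phi(\vec{x})=\sum_i x_i\prod_{k<i}(2b_{kk}-1)$. Because no coordinate of a sum overflows its radix, $\phi$ is injective on the box and additive on sums of points from $\mathbb{Z}^d_\Gamma$. So $S_1+S_2$ corresponds exactly to the one-dimensional sumset $\phi(S_1)+\phi(S_2)\subseteq[2^d\ell]$.

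The second step is to make the cost depend on the output size. Dense FFT would cost $O(2^d\ell\log(2^d\ell))$, which is too slow when $|S|\ll\ell$. Instead we invoke a known output-sensitive sparse sumset algorithm, for instance the randomized or deterministic sparse convolution of Cole–Hariharan, Bringmann–Fischer–Nakos, or Nakos. Such an algorithm computes $X+Y\subseteq[U]$ in $O(|X+Y|\,\mathrm{polylog}\,U)$ time. Here $U=2^d\ell$, and the output size is $|S_1+S_2|$.

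The third step is reduction modulo $\Gamma$. We reduce each element of $S_1+S_2$ using Lemma~\ref{lem:compute-rem}(i) and remove duplicates by sorting, at cost $O(d^2)$ per element plus sorting.

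The main obstacle is that $|S_1+S_2|$ may exceed $|S|$. We handle this by bounding the multiplicity. Each sum has $x_i\in[0,2b_{ii}-2]$, so in every coordinate there are at most two candidates $x_i$ and $x_i+b_{ii}$ with the same remainder. The remainder is computed by the triangular Hermite normal form reduction, taking coordinates from the last to the first. This shows that each class modulo $\Gamma$ contains at most $2^d$ points of the box: at each coordinate, after the higher coordinates are fixed, the class determines $x_i$ modulo $b_{ii}$. Hence $|S_1+S_2|\le 2^d|S|$, and the total time is $O(2^d|S|\,\mathrm{polylog}(2^d\ell))$, with the $d^2$ reduction factors absorbed (or assumed subsumed by the stated bound).

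A secondary concern is whether a deterministic or a Las Vegas sparse convolution is intended. Either one achieves this bound up to polylogarithmic factors, so we will cite whichever matches the paper's error model.
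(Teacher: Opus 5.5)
Your proposal follows the paper's proof essentially step for step. You bound $S_1+S_2$ inside a box of at most $2^d\ell$ points and flatten it by a mixed-radix map into $[2^d\ell]$. You compute the sumset by output-sensitive sparse convolution (the paper cites~\cite{BFN22}) and reduce via Lemma~\ref{lem:compute-rem}(i). Finally, you use that each class modulo $\Gamma$ meets the box in at most $2^d$ points to get $|S_1+S_2|\le 2^d|S|$.

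The one thing to fix is the orientation in your sketch of that multiplicity bound, which the paper only asserts. With the paper's lower-triangular Hermite normal form, a class determines $x_1 \bmod b_{11}$ outright. Then, among its points with $x_1,\dots,x_{i-1}$ fixed, all $x_i$ agree modulo $b_{ii}$. So the coordinates must be fixed from first to last, not from last to first. Also, the $d^2$ reduction cost is genuinely absorbed, since $d\le\log(2^d\ell)$.
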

\begin{proof}
    Recall that 
    \(
       S_1, S_2 \subseteq [b_{11}] \times \cdots \times [b_{dd}]
    \),
    where $b_{11}, \ldots, b_{dd}$ are the diagonal elements of the basis of $\Gamma$ in Hermite normal form. Note that
    \(
        \prod_{i=1}^d b_{ii} = \ell.
    \)
    It follows that 
    \[
        S_1+S_2\subseteq [2b_{11}] \times \cdots \times [2b_{dd}].
    \]
    Each residue class modulo $\Gamma$ contains at most $2^d$ vectors in $[2b_{11}] \times \cdots \times [2b_{dd}]$. Therefore, 
    \[
        |S_1+S_2|\leq 2^d |S|.
    \]

    Each vector in $[2b_{11}] \times \cdots \times [2b_{dd}]$ can be mapped to a unique integer in $\mathbb{Z}$ in the range $[\prod_{i=1}^{d} 2b_{ii}] = [2^d\ell]$.  
    Then we can compute $S_1 + S_2$ using algorithms for sparse convolution~\cite{BFN22} in time
    \[
        O(|S_1 + S_2| \cdot \mathrm{polylog}(2^d \ell)) = O(2^d |S| \cdot \mathrm{polylog}(2^d \ell)).
    \]
    Given $S_1 + S_2$, we can obtain
    \(
        S
    \)
    in $O(|S_1 + S_2| d^2)=O(2^d |S|\cdot d^2)$ time via Lemma~\ref{lem:compute-rem}(i). One can verify that the total time cost is bounded by
    \(
        O(2^d |S| \cdot \mathrm{polylog}(2^d\ell)).
    \)
\end{proof}

Recall Definition~\ref{def:symset}. Next we will present an efficient algorithm to compute the symmetry set. Below we give a closed form of the symmetry set.

\begin{lemma}\label{lem:symset-formula}
    Let $S \subseteq \mathbb{Z}^d$ be a set of vectors. Let $\Gamma$ be a full-dimensional lattice. Then we have
    \[
        \mathrm{Sym}_{\Gamma}(S) \equiv (S - S) \setminus (((S - S) + S )\setminus S - S) \pmod \Gamma.
    \]
\end{lemma}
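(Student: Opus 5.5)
The plan is to work entirely in the finite group $G := \mathbb{Z}^d_{\Gamma} \cong \mathbb{Z}^d/\Gamma$. Write $\bar S := S \bmod \Gamma$. All sumsets and differences are then read modulo $\Gamma$, and set differences are taken among residue classes. Under this convention the claimed identity becomes
\[
    \mathrm{Sym}_{\Gamma}(S) = (\bar S - \bar S) \setminus \big( ((\bar S - \bar S) + \bar S) \setminus \bar S \big) - \bar S .
\]
Here the outer expression is parsed as $(\bar S - \bar S) \setminus (T - \bar S)$ with $T := ((\bar S - \bar S) + \bar S) \setminus \bar S$. The first step is to record that $\vec h \in \mathrm{Sym}_{\Gamma}(S)$ iff $\vec h + \bar S = \bar S$ in $G$. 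Since $G$ is finite and translation is a bijection, this is equivalent to $\vec h + \bar S \subseteq \bar S$.

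For the inclusion $\subseteq$, take $\vec h \in \mathrm{Sym}_\Gamma(S)$. If $\bar S \neq \emptyset$, pick $\vec s \in \bar S$. Then $\vec h + \vec s \in \bar S$, so $\vec h \in \bar S - \bar S$. Next, suppose for contradiction that $\vec h \in T - \bar S$, i.e. $\vec h = \vec t - \vec s'$ with $\vec t \in T$ and $\vec s' \in \bar S$. Then $\vec t = \vec h + \vec s' \in \vec h + \bar S = \bar S$, which contradicts $\vec t \notin \bar S$. Hence $\vec h$ lies in the right-hand side.

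For the inclusion $\supseteq$, take $\vec h \in \bar S - \bar S$ with $\vec h \notin T - \bar S$, and an arbitrary $\vec s \in \bar S$. The element $\vec h + \vec s$ lies in $(\bar S - \bar S) + \bar S$, because $\vec h \in \bar S - \bar S$. If $\vec h + \vec s \notin \bar S$, then $\vec h + \vec s \in T$, and so $\vec h \in T - \bar S$, a contradiction. Therefore $\vec h + \bar S \subseteq \bar S$, and by finiteness $\vec h \in \mathrm{Sym}_\Gamma(S)$.

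The main care needed is bookkeeping rather than any real obstacle. First, every operation must be interpreted modulo $\Gamma$ (in $\mathbb{Z}^d$ itself the inclusion $\vec h + S \subseteq S$ does not force equality). Second, the parenthesization of the formula must be read correctly. Third, the degenerate case $S = \emptyset$ (where both sides are, respectively, all of $G$ and empty) should either be excluded or noted; in the intended application $S = \mathcal{S}(A) \ni \vec 0$, so it is nonempty. Making the role of the restriction "$\vec h \in \bar S - \bar S$" explicit is what lets the later algorithm evaluate the formula with a constant number of modular sumset computations via Lemma~\ref{lem:modulo-sumset}.
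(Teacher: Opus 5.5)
Your proof is correct. It rests on the same two ingredients as the paper's proof. The first is finiteness of $\mathbb{Z}^d_{\Gamma}$, which reduces $\vec{h}+S\equiv S \pmod\Gamma$ to the one-sided condition that $\vec{h}+\vec{s}\bmod\Gamma\in S\bmod\Gamma$ for every $\vec{s}\in S$. The second is nonemptiness of $S$, which gives $\mathrm{Sym}_{\Gamma}(S)\subseteq S-S$.

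The organization differs. The paper introduces the auxiliary set $X:=(S-S)\setminus\mathrm{Sym}_{\Gamma}(S)$ and shows that $\vec{v}\in S-S$ lies in $X$ iff $\vec{v}\in((X+S)\setminus S)-S$. It then needs the separate identity $(X+S)\setminus S=((S-S)+S)\setminus S$, derived from $(S-S)+S=(X+S)\cup S$, to eliminate $X$. Your direct double inclusion with $T:=((S-S)+S)\setminus S$ shows this detour is unnecessary. For $\vec{h}\in S-S$, membership in $T-S$ is exactly the existence of some $\vec{s}\in S$ with $\vec{h}+\vec{s}\notin S$, since $\vec{h}+\vec{s}$ automatically lies in $(S-S)+S$.

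You also state the hypothesis $S\neq\emptyset$ explicitly. For $S=\emptyset$ the left side is all of $\mathbb{Z}^d_{\Gamma}$ and the right side is empty. The paper uses this hypothesis tacitly when it picks $\vec{s}_1,\vec{s}_2\in S$ with $\vec{h}+\vec{s}_1=\vec{s}_2$. This is harmless in the paper's applications, where $S$ is nonempty (e.g.\ $\vec{0}\in\mathcal{S}(A)$).
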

\begin{proof}
     Let $H:=\mathrm{Sym}_{\Gamma}(S)$. For simplicity of notation, in this proof, all the operations $+$ and $-$ are modulo $\Gamma$. That is, they are group operations of the additive group $\mathbb{Z}^d_{\Gamma}$. We first show that $H \subseteq S - S$. For any $\vec{h} \in H$, by definition, we have
        \(
            \vec{h} + S = S. 
        \)
        Therefore, there exists $\vec{s}_1,\vec{s}_2 \in S$ such that 
        \(
            \vec{h} + \vec{s}_1 = \vec{s}_2.
        \) 
        This implies
        \(
            \vec{h} \in S - S.
        \)
    Therefore, we have $H \subseteq S - S$. 

    Let $X := (S - S) \setminus H$.
    \begin{claim}\label{clm:compute-stablizer-2}
        Let $\vec{v} \in S - S$ be a vector. $\vec{v} \in X$ if and only if
        \[
           \vec{v} \in ((X + S) \setminus S) - S.
        \]
    \end{claim}
    \begin{proof}
        We first prove the only-if part. Assume that $\vec{v} \in X$. Since $\vec{v}\notin H$, we have
        \[
            \vec{v} + S \neq S.
        \]
        This implies that $(\{\vec{v}\} + S) \setminus S \neq \emptyset$ and therefore
        \[
            \vec{v} \in ((\{\vec{v}\} + S) \setminus S) - S.
        \] 
        Note that $((\{\vec{v}\} + S) \setminus S) - S \subseteq ((X + S) \setminus S) - S$. This completes the proof of the only-if part.

        Next we prove the if part. For $\vec{v} \in H$, we have
        \(
            \{\vec{v}\} + S = S.
        \)
        Therefore,
        \[
            (\{\vec{v}\} + S) \cap ((X + S) \setminus S) = \emptyset.
        \]
        This implies that 
        \[
            \vec{v} \notin ((X + S) \setminus S) - S. \qedhere
        \]
    \end{proof}

    \begin{claim}\label{clm:compute-stablizer-3}
        $(X + S) \setminus S = ((S - S) + S) \setminus S$.
    \end{claim}
    \begin{proof}
        By the definition of $H$ and $X$, we have 
        \[
            (S - S) + S = (X \cup H) +  S = (X + S) \cup (H  +  S) = (X + S) \cup S.
        \]
        Removing the elements of $S$ from both sides of the above equation, we obtain the stated equation in the claim.
    \end{proof}

    Note that $H = (S - S)\setminus X$. Claims~\ref{clm:compute-stablizer-2} and~\ref{clm:compute-stablizer-3} imply that
    \[
        (S - S)\setminus X \equiv (S - S)\setminus ((X + S) \setminus S) - S) = (S - S) \setminus (((S - S) + S) \setminus S - S). \qedhere
    \]
\end{proof}

The following lemma follows directly from Lemmas~\ref{lem:modulo-sumset} and~\ref{lem:symset-formula}.
\begin{lemma}\label{lem:compute-stablizer}
    Let $\Gamma \subseteq \mathbb{Z}^d$ be a full-dimensional lattice and $\ell :=\det(\Gamma)$. Let $S \subseteq \mathbb{Z}^d_\Gamma$ be a set of vectors. We can compute $\mathrm{Sym}_{\Gamma}(S)$ in $O(2^d \ell\cdot \mathrm{polylog}(2^d \ell))$ time.
\end{lemma}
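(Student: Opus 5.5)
The plan is to compute $\mathrm{Sym}_{\Gamma}(S)$ by evaluating the closed form of Lemma~\ref{lem:symset-formula} with a constant number of calls to the modular sumset routine of Lemma~\ref{lem:modulo-sumset}, interleaved with elementwise set differences.

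The group $\mathbb{Z}^d_\Gamma = [b_{11}]\times\cdots\times[b_{dd}]$ has exactly $\ell$ elements. I will represent subsets of it as boolean arrays indexed by the mixed-radix encoding of $[b_{11}]\times\cdots\times[b_{dd}]$ into $[\ell]$. With this representation, a set difference costs $O(\ell)$ time and a membership test costs $O(1)$.

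The computation then proceeds in the following order, with every intermediate set lying in $\mathbb{Z}^d_\Gamma$.
\begin{enumerate}
\item Compute $-S \bmod \Gamma$. Each element is negated and reduced by Lemma~\ref{lem:compute-rem}(i), for $O(d^2\ell)$ time in total.
\item Compute $D := (S + (-S)) \bmod \Gamma$ using Lemma~\ref{lem:modulo-sumset}.
\item Compute $T := (D + S) \bmod \Gamma$, again using Lemma~\ref{lem:modulo-sumset}.
\item Compute $U := T \setminus S$ by a linear scan.
\item Compute $W := (U + (-S)) \bmod \Gamma$ by a third call to Lemma~\ref{lem:modulo-sumset}.
\item Output $D \setminus W$.
\end{enumerate}
By Lemma~\ref{lem:symset-formula}, the output equals $\mathrm{Sym}_{\Gamma}(S)$ as a set of remainders. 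In that lemma all operations are taken modulo $\Gamma$, and this matches exactly what the procedure computes.

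Each call to Lemma~\ref{lem:modulo-sumset} produces an output set of size at most $\ell$. The call therefore costs $O(2^d\ell\cdot\mathrm{polylog}(2^d\ell))$. The remaining steps cost $O(d^2\ell)$, and this is absorbed into the same bound because $d\le\log(2^d\ell)$. Hence the total running time is $O(2^d\ell\cdot\mathrm{polylog}(2^d\ell))$.

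The only point requiring care is that Lemma~\ref{lem:modulo-sumset} takes inputs in $\mathbb{Z}^d_\Gamma$. For this reason I reduce $-S$ modulo $\Gamma$ before using it, and I check that each intermediate result is produced already reduced, which holds because the lemma outputs sets mod $\Gamma$. The correctness of the closed form itself is supplied by Lemma~\ref{lem:symset-formula}, so no real obstacle remains.
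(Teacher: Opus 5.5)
Your proposal is correct and follows the paper's own route. The paper derives this lemma directly from the closed form of Lemma~\ref{lem:symset-formula}, evaluated with a constant number of modular sumset computations via Lemma~\ref{lem:modulo-sumset}, each costing $O(2^d\ell\cdot\mathrm{polylog}(2^d\ell))$ since every output lies in $\mathbb{Z}^d_\Gamma$. One corner case goes unaddressed, in your proposal and equally in the paper: for $S=\emptyset$ the formula returns $\emptyset$ rather than all of $\mathbb{Z}^d_\Gamma$, which is fixed by checking for it separately.
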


\begin{algorithm}
    \caption{$\mathtt{PARTIALSUM}(S_1, \ldots, S_m, \Gamma)$} 
    \label{alg:partialsum}
    \begin{algorithmic}[1]
            \Statex Assume that the operation $+$ is modulo $\Gamma$ and that $m$ is a power of $2$. Recall that $\ell := \det(\Gamma)$ \;
            \If{$m == 1$}
                \State \Return $S_1$\;
            \EndIf
        \State $t \leftarrow 0$ and $j \leftarrow 1$\;
        \While{$j \leq m/2$ and $t < \ell + m/2$}
            \State $S'_j \leftarrow (S_{2j-1} + S_{2j}) \bmod \Gamma$ via Lemma~\ref{lem:modulo-sumset}\;
            \State $t \leftarrow t + |S'_j|$\;
            \State $j \leftarrow j + 1$\;
        \EndWhile
        \State \Return $\mathtt{PARTIALSUM}(S'_1, \ldots, S'_{j-1}, \Gamma)$\;
    \end{algorithmic}
\end{algorithm}

Using Lemmas~\ref{lem:modulo-sumset} and~\ref{lem:compute-stablizer}, we can compute the $m$-fold modular sumset using an idea similar to that of Bringmann and Nakos~\cite{BN21}. Basically, we can either compute $(S_1 + \ldots + S_m) \bmod \Gamma$ or a non-zero vector $\vec{h}\in\mathrm{Sym}_{\Gamma}(S_1 + \ldots + S_m)$. In the latter case, we can use $\vec{h}$ to reduce the problem to a smaller space $Z_{\Gamma'}^d$, where $\Gamma' \supsetneq \Gamma$.

\begin{lemma}\label{lem:sumset-or-stablizer}
    Let $\Gamma \subseteq \mathbb{Z}^d$ be a full-dimensional lattice and $\ell:=\det(\Gamma)$. Let $S_1, \ldots, S_m \subseteq \mathbb{Z}^d_{\Gamma}$ be $m$ sets of vectors with total cardinality $n$.  In $O(n + 2^d \ell\cdot \mathrm{polylog}(2^d \ell))$ time, we can obtain either $(S_1 + \cdots + S_m) \bmod \Gamma$ or a non-zero vector $\vec{h}\in\mathrm{Sym}_{\Gamma}(S_1 + \cdots + S_m)$.
\end{lemma}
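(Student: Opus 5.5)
The plan is to run Algorithm~\ref{alg:partialsum}, pad $m$ to a power of two with copies of $\{\vec{0}\}$, and analyse it level by level. At each recursion level the sets are paired, and each pairwise modular sumset $S'_j=(S_{2j-1}+S_{2j})\bmod\Gamma$ is computed via Lemma~\ref{lem:modulo-sumset} in $O(2^d|S'_j|\cdot\mathrm{polylog}(2^d\ell))$ time. The loop stops early once the accumulated output size $t$ reaches $\ell+m/2$. The analysis then splits into two cases: either the recursion reaches $m=1$ without any truncation, in which case $S_1$ is the full sumset, or some level is truncated.

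First I would bound the running time. At a given level, the while-loop halts as soon as $t\ge \ell+m/2$. Since each $|S'_j|\le \ell$, the total output size at that level is at most $2\ell+m/2$, so the cost of one level is $O(2^d(\ell+m)\,\mathrm{polylog}(2^d\ell))$.
\begin{itemize}
\item At the first level the cost is really charged to $|S_{2j-1}|+|S_{2j}|$ together with the output. The input size contributes $O(n)$.
\item After every level the number of sets halves, so the $m$-terms form a geometric series.
\item There are at most $\log m$ levels. To avoid a $\log m$ factor multiplying $\ell$, I would observe that once $m\le \ell$ each level costs $O(2^d\ell\,\mathrm{polylog})$, and $\log m\le \log n$ is absorbed into the polylog. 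This matches the claimed bound up to the $\mathrm{polylog}$ convention; I would be loose here, since $\log m=O(\log n)$ and $\widetilde{O}$ is intended.
\end{itemize}

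Second, I would handle truncation, which is the main obstacle. Suppose that at some level the loop stops with $j-1<m/2$ because $t\ge \ell+m/2$. Then the computed sets $S'_1,\dots,S'_{j-1}$ satisfy
\[
\sum_{i<j}|S'_i\bmod\Gamma|\ \ge\ \ell+(j-1)\ \ge\ |(S'_1+\dots+S'_{j-1})\bmod\Gamma|+(j-1).
\]
The second inequality holds because every set modulo $\Gamma$ has size at most $\ell$. Corollary~\ref{coro:kneser-1} then yields a nonzero period of $T:=S'_1+\dots+S'_{j-1}$. To find it explicitly, I would finish computing $T \bmod \Gamma$ by continuing pairwise sumsets, at cost $O(2^d\ell\,\mathrm{polylog})$ per step since every intermediate set has size at most $\ell$. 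The difficulty is that the number of such steps must stay bounded; repeated halving gives $O(\log m)$ rounds, each with $O(j)$ sumsets. To keep this within $O(2^d\ell\,\mathrm{polylog})$ I would again use the trick from the proof of Corollary~\ref{coro:kneser-1}: scanning the prefix sums $T_k=S'_1+\dots+S'_k$, some $k$ must satisfy $|T_k|+|S'_{k+1}|\ge|T_{k+1}|+2$, and that is witnessed by Kneser directly. Since $|T_k|\le\ell$, computing the prefix chain costs $O(2^d\ell\cdot j\cdot\mathrm{polylog})$, which is too much when $j$ is large. I would fix this by stopping the prefix scan as soon as $|T_k|$ saturates near $\ell$, because a saturated $T_k=\mathbb{Z}^d_\Gamma$ is fully symmetric.

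Then I would apply Lemma~\ref{lem:compute-stablizer} to $T_k$, or to the stabilised $T$, to extract a nonzero $\vec{h}\in\mathrm{Sym}_\Gamma(T)$ in $O(2^d\ell\,\mathrm{polylog}(2^d\ell))$ time. Finally, I would note that $\vec{h}$ is also in $\mathrm{Sym}_\Gamma(S_1+\dots+S_m)$, by the observation after Definition~\ref{def:symset}: adding further summands preserves periods, and $T$ is a partial sum of the original sets.
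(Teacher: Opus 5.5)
\textbf{Gap 1: the running time.} Your per-level analysis of Algorithm~\ref{alg:partialsum} is correct, but it only gives $O\bigl(n+2^d(m+\ell\log m)\,\mathrm{polylog}(2^d\ell)\bigr)$, which is not the claimed bound. The polylog in the statement is in $2^d\ell$ alone, so the $\log m$ factor on $\ell$ (possibly $\Theta(\log n)$) is not absorbed. The geometric series of $m$-terms also leaves a $2^d m\,\mathrm{polylog}(2^d\ell)$ term. That term is not $O(n+2^d\ell\,\mathrm{polylog}(2^d\ell))$ when $m\gg\ell$, e.g.\ $S_i=\{\vec{0},\vec{a}_i\}$ with $m=n$, which is exactly how the lemma is used for Lemma~\ref{lem:faster-modular-ss}. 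Appealing to $\widetilde{O}$ does not close this.

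The missing idea is to reduce to at most $\ell$ sets:
\begin{itemize}
\item Sets with $|S_i|=1$ only translate the sumset. Their sum is a single vector, which you can compute directly and add at the end in $O(d^2\ell)$ time.
\item Suppose at least $\ell$ sets have $|S_i|\ge 2$. Then the first $\ell$ of them satisfy $\sum_{i\le\ell}|S_i|\ge 2\ell\ge|(S_1+\cdots+S_\ell)\bmod\Gamma|+\ell$. By Corollary~\ref{coro:kneser-1}, their sum has a nonzero period, which is also a period of the full sum, so it suffices to run the algorithm on these $\ell$ sets.
\item Otherwise there are fewer than $\ell$ non-singleton sets.
\end{itemize}
Either way the effective $m$ is $O(\ell)$, and your bound becomes $O(n+2^d\ell\,\mathrm{polylog}(2^d\ell))$.

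\textbf{Gap 2: the truncation step.} This part is flawed, and also unnecessary. The prefix scan $T_k=S'_1+\cdots+S'_k$ can take time quadratic in $\ell$. For example, take $d=1$, $\Gamma=\ell\mathbb{Z}$ and all $S'_i=\{0,1\}$. Then $|T_k|=k+1$, each step costs about $|T_{k+1}|$, and the Kneser test $|T_k|+|S'_{k+1}|\ge|T_{k+1}|+2$ first fires only at saturation. So stopping at saturation saves nothing here. Stopping when $|T_k|$ is merely ``near $\ell$'' yields no period, since only $T_k=\mathbb{Z}^d_\Gamma$ is guaranteed to be fully symmetric.

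No extra work is needed:
\begin{itemize}
\item The recursive call on $S'_1,\dots,S'_{j-1}$ obeys the same truncation rule, so its cost is already covered by your own per-level bound $O\bigl(2^d(2\ell+m/2^i)\,\mathrm{polylog}(2^d\ell)\bigr)$.
\item The set it finally returns is a prefix sum $S_1+\cdots+S_k$. Whenever any truncation occurred, this set has a nonzero period, by Corollary~\ref{coro:kneser-1} applied at the last truncation.
\item One call to Lemma~\ref{lem:compute-stablizer} on the returned set then either yields a nonzero $\vec{h}$, which is a period of $S_1+\cdots+S_m$, or certifies that no truncation happened. In the latter case the returned set is the full sumset.
\end{itemize}
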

\begin{proof}
    For simplicity of notation, in this proof, all the operations $+$ and $-$ are modulo $\Gamma$. That is, they are group operations of the additive group $\mathbb{Z}^d_{\Gamma}$. We also assume that $m$ is a power of $2$. This can be done by adding dummy sets of the form $\{\vec{0}\}$.

    We shall invoke Algorithm~\ref{alg:partialsum}. Basically, Algorithm~\ref{alg:partialsum} computes $S_1 + \cdots + S_m$ recursively as follows.  It computes $S'_j = S_{2j-1} + S_{2j}$ via Lemma~\ref{lem:modulo-sumset} for $j \in \{1, \ldots, m/2\}$, and then recursively computes $S'_1 + \cdots + S'_{m/2}$. The trick is that when $|S'_1| + \cdots + |S'_j| \geq \ell + m/2$ for some $j$, it stops computing the remaining sets (because $\mathrm{Sym}_\Gamma(S'_1 + \cdots + S'_j)$ already contains a non-zero vector), and immediately starts the recursion on $S'_1, \ldots, S'_j$.

    \begin{claim}\label{clm:sumset-or-stablizer}
        Algorithm~\ref{alg:partialsum} returns $S_1 + \cdots + S_j$ for some $j \leq m$. Moreover, if $j < m$, then $\mathrm{Sym}_{\Gamma}(S_1 + \cdots + S_j)$ contains a non-zero vector.
    \end{claim}
    \begin{proof}
        If the criterion $t < \ell + m/2$ in line 4 is never violated, then Algorithm~\ref{alg:partialsum} will return $S_1 + \cdots + S_m$. If this criterion is violated in some recursion, then Algorithm~\ref{alg:partialsum} returns $S_1 + \cdots + S_j$ for some $j < m$. Since $S_1, \ldots, S_m \subseteq \mathbb Z^d_{\Gamma}$, we have $|S_1+\dots+S_j|\leq \ell$ for any $j\leq m$. It follows that $|S_1| + \cdots + |S_j| \geq \ell + m/2 \geq |S_1+\dots+S_j| + j$. By Corollary~\ref{coro:kneser-1}, there exists a non-zero vector $\vec{h} \in \mathrm{Sym}_{\Gamma}(S_1 + \cdots + S_j)$.
    \end{proof} 

    Let $S$ be the result of Algorithm~\ref{alg:partialsum}. Note that $S\subseteq S_1 + \cdots + S_m$. We can compute $\mathrm{Sym}_{\Gamma}(S)$ in $O(2^d \ell\cdot \log(2^d\ell))$ time via Lemma~\ref{lem:compute-stablizer}. Suppose $\mathrm{Sym}_{\Gamma}(S)$ contains a non-zero vector $\vec{h}$.
    Recall that if $\vec{h}\in \mathrm{Sym}_{\Gamma}(S)$, then $\vec{h}\in \mathrm{Sym}_{\Gamma}(S_1 + \cdots + S_m)$. In this case, we are done. Otherwise, if we obtain no such vector, then by Claim~\ref{clm:sumset-or-stablizer}, $S = S_1 + \cdots + S_m$.

    The recursion tree of Algorithm~\ref{alg:partialsum} has at most $\log m$ levels. Note that $|S'_j| \leq \mathbb |Z^d_{\Gamma}|= \ell$. In each level $i$, when the criterion $t < \ell + m/2^i$ is violated,
    the total cardinality (that is, $t$) is bounded by $\ell + m/2^i + \ell$. Therefore, the total time cost of applying Algorithm~\ref{alg:partialsum} and Lemma~\ref{lem:compute-stablizer} is 
    \begin{align}
        &O(n) + O(2^d\ell\log(2^d\ell)) + \sum_{i = 1}^{\log m} O\left( 2^d (2\ell + \frac{m}{2^i}) \cdot \mathrm{polylog}(2^d \ell)\right)\nonumber\\
        = &O\left( n+ 2^d(m + \ell\log m)\cdot \mathrm{polylog}(2^d \ell)\right).\label{eq:sumset-or-stablizer-1} 
    \end{align}

    Next, we show that this running time can be reduced to $O(n + 2^d \ell\cdot \mathrm{polylog}(2^d \ell))$ by dealing with the sets of cardinality $1$ separately. Suppose that $|S_i| \geq 2$ for all $i \leq m'$ and that $|S_i| = 1$ for all $i > m'$. 

    Suppose that $m' \leq \ell$. We can apply Algorithm~\ref{alg:partialsum} to $S_1, \ldots, S_{m'}$ and then Lemma~\ref{lem:compute-stablizer} to the result of Algorithm~\ref{alg:partialsum}. The total time cost of Algorithm~\ref{alg:partialsum} and Lemma~\ref{lem:compute-stablizer} is bounded by~(\ref{eq:sumset-or-stablizer-1}) with $m$ replaced by $\ell$, which is
    \begin{equation}\label{eq:sumset-or-stablizer-2}
        O\left(n + 2^d  \ell\cdot \mathrm{polylog}(2^d\ell)\right). 
    \end{equation}
    If we obtain a non-zero vector $\vec{h}\in \mathrm{Sym}_{\Gamma}(S_1+\dots+S_m)$, then the time cost is bounded by~(\ref{eq:sumset-or-stablizer-2}). Otherwise, we obtain $S = S_1 + \cdots + S_{m'}$. Since $|S_i| = 1$ for all $i > m'$, we can compute $S' = S_{m'+1} + \cdots  + S_{m}$ trivially in $O(n + d^2)$ time by Lemma~\ref{lem:compute-rem}. Then we compute $S + S'$, which takes $O(d^2 \ell)$ time since $|S|\leq \ell$ and $|S'| = 1$.

    Now consider the case where $m' \geq \ell$. We have that 
    \[
        \sum_{i=1}^{\ell} |S_i| \geq 2\ell\ge |S_1+\dots+S_{\ell}|+\ell.
    \]
    By Corollary~\ref{coro:kneser-1}, there exists a non-zero vector $\vec{h}\in\mathrm{Sym}_{\Gamma}(S_1 + \cdots + S_{\ell})$. Therefore, applying Algorithm~\ref{alg:partialsum} to $S_1, \ldots, S_{\ell}$ and then Lemma~\ref{lem:compute-stablizer} to the result of Algorithm~\ref{alg:partialsum} will return a non-zero vector $\vec{h}$, which is also in the symmetry set of $S_1 + \cdots + S_m$. The total time cost is bounded by 
    \(
        O\left(2^d \ell\cdot \mathrm{polylog}(2^d \ell)\right). 
    \)    
\end{proof}

\begin{algorithm}
    \caption{$\mathtt{MODULARSUMSET}(S_1, \ldots, S_m, \Gamma)$} 
    \label{alg:modular-sumset}
    \begin{algorithmic}[1]
        \If{$\det(\Gamma) == 1$}
            \State \Return $\{\vec{0}\}$\;
        \EndIf
        \State Invoke Lemma~\ref{lem:sumset-or-stablizer} on $S_1, \ldots, S_m, \Gamma$
        \If{Lemma~\ref{lem:sumset-or-stablizer} returns $(S_1 + \cdots + S_m) \bmod \Gamma$}
            \State \Return $(S_1 + \cdots + S_m) \bmod \Gamma$\;
        \Else
            \State Let $\vec{h}\in \mathrm{Sym}_{\Gamma}(S_1+\dots+S_m)$ be the vector given by Lemma~\ref{lem:sumset-or-stablizer}\;
            \State $\Gamma' \leftarrow \Gamma + \mathcal{L}(\vec{h})$\; \Comment{Compute the Hermite normal form via Lemma~\ref{lem:compute-hnf}}
            \For{$i = 1$ to $m$}
                \State $S'_i \leftarrow S_i \bmod \Gamma'$\;
            \EndFor
            \State $R \leftarrow \mathtt{MODULARSUMSET}(S'_1, \ldots, S'_m, \Gamma')$\;
            \State \Return $(R + (\Gamma' \bmod \Gamma)) \bmod \Gamma$\; \Comment{$\Gamma' \bmod \Gamma$ can be computed via Lemma~\ref{lem:compute-rem}(ii)}
        \EndIf
    \end{algorithmic}
\end{algorithm}

If Algorithm~\ref{alg:partialsum} returns $(S_1 + \cdots + S_m) \bmod \Gamma$, then we are done. Otherwise, it returns a non-zero vector $\vec{h}\in\mathrm{Sym}_{\Gamma}(S_1 + \cdots + S_m)$. Then by Lemma~\ref{lem:reduce-by-stablizer}, we can use $\vec{h}$ to reduce the computation to a lattice $\Gamma'$ with $\det(\Gamma') \leq \det(\Gamma)/2$.

\begin{theorem}\label{thm:modular-m-fold-sumset}
    Let $S_1, \ldots, S_m \subseteq \mathbb{Z}^d_{\Gamma}$ be $m$ sets of vectors with total cardinality $n$. Let $\Gamma \subseteq \mathbb{Z}^d$ be a full-dimensional lattice with $\ell := \det(\Gamma)$. We can compute $(S_1 + \cdots + S_m)\bmod \Gamma$ in time
    \[
        O(d^2n\log \ell + 2^d\ell\cdot \mathrm{polylog}(2^d \ell)).
    \] 
\end{theorem}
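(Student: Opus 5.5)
We will prove the theorem by analyzing Algorithm~\ref{alg:modular-sumset}, arguing by induction on $\ell=\det(\Gamma)$.

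\textbf{Correctness.} If $\det(\Gamma)=1$, then $\mathbb{Z}^d_\Gamma=\{\vec 0\}$, so the algorithm's output is correct. Otherwise we invoke Lemma~\ref{lem:sumset-or-stablizer}. If it returns the sumset, we are done. Otherwise it returns a non-zero $\vec h\in\mathrm{Sym}_\Gamma(S_1+\cdots+S_m)$. Let $S:=S_1+\cdots+S_m$. The argument of Lemma~\ref{lem:reduce-by-stablizer} uses only $\vec h+S\equiv S\pmod\Gamma$, not the specific form $\mathcal S(A)$. Hence it gives
\[
S\bmod\Gamma=\bigl((S\bmod\Gamma')+\Gamma'\bigr)\bmod\Gamma,
\qquad \Gamma':=\Gamma+\mathcal L(\vec h).
\]
Moreover $S\bmod\Gamma'=(S'_1+\cdots+S'_m)\bmod\Gamma'$ with $S'_i=S_i\bmod\Gamma'$. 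By induction, since $\det(\Gamma')\le\ell/2$ by Fact~\ref{fact:suplatt}, the recursive call returns $R=S\bmod\Gamma'\subseteq\mathbb{Z}^d_{\Gamma'}$. Then $(R+(\Gamma'\bmod\Gamma))\bmod\Gamma$ is exactly $((S\bmod\Gamma')+\Gamma')\bmod\Gamma$, because $\Gamma'\bmod\Gamma$ is a set of representatives of $\Gamma'$ modulo $\Gamma$.

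\textbf{Running time.} Write $\ell_0=\ell>\ell_1>\cdots$ for the determinants along the recursion, so that $\ell_{k+1}\le\ell_k/2$. There are therefore at most $\log\ell$ levels. At level $k$ the work has four parts.
\begin{enumerate}
\item The call to Lemma~\ref{lem:sumset-or-stablizer} costs $O(n+2^d\ell_k\,\mathrm{polylog}(2^d\ell_k))$.
\item Computing $\Gamma'$ costs $O(d^2\log\ell_k)$ by Lemma~\ref{lem:compute-rem}(iii).
\item Reducing all $n$ input vectors modulo $\Gamma'$ costs $O(d^2n)$ by Lemma~\ref{lem:compute-rem}(i).
\item Computing $\Gamma'\bmod\Gamma$ costs $O(d^2\ell_k/\ell_{k+1})$ by Lemma~\ref{lem:compute-rem}(ii). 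Forming $R+(\Gamma'\bmod\Gamma)$ costs $O(d^2|R|\cdot\ell_k/\ell_{k+1})\le O(d^2\ell_k)$, since $|R|\le\ell_{k+1}$ and every sum is a distinct residue modulo $\Gamma$.
\end{enumerate}
Summing over levels, the $\ell_k$-terms form a geometric series bounded by $O(2^d\ell\,\mathrm{polylog}(2^d\ell))$.

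The $n$-terms, however, are paid at each of the $\log\ell$ levels. This is where I expect the main obstacle, namely getting $O(d^2n\log\ell)$ instead of a worse bound. The $O(n)$ term inside Lemma~\ref{lem:sumset-or-stablizer} is only reading the input, and the modular reductions cost $O(d^2n)$ per level, so the $n$-dependence totals $O(d^2n\log\ell)$, matching the claim.

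One further point needs care. The polylog factors in Lemma~\ref{lem:sumset-or-stablizer} must be charged only to the $\ell$-part, not to $n$. I would handle this by checking that the proof of Lemma~\ref{lem:sumset-or-stablizer} touches each of the $n$ input elements only a constant number of times outside the sparse-convolution steps, whose cost is bounded in terms of $\ell$.

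If that accounting fails, the fallback is to deduplicate at each level. After reducing modulo $\Gamma'$, every set $S'_i$ lies in $\mathbb Z^d_{\Gamma'}$. Singleton sets can be merged into a single translate in linear time, as in the proof of Lemma~\ref{lem:sumset-or-stablizer}. The sets of size at least $2$ number at most $\ell_{k}$ before Kneser's theorem forces a stabilizer. This keeps the per-level work within $O(d^2n)$ plus the $\ell_k$-terms.
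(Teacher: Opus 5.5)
Your proposal matches the paper's proof: the same Algorithm~\ref{alg:modular-sumset}, correctness via Lemma~\ref{lem:reduce-by-stablizer} (your remark that its argument only uses $\vec h + S \equiv S \pmod \Gamma$ is apt), and the recurrence $T(\ell)\le T(\ell/2)+O(d^2n+2^d\ell\,\mathrm{polylog}(2^d\ell))$, which gives the stated bound. Your extra worry about polylog factors being charged to $n$, and the deduplication fallback, are unnecessary: Lemma~\ref{lem:sumset-or-stablizer} already states its cost as $O(n+2^d\ell\,\mathrm{polylog}(2^d\ell))$, with no polylog factor on $n$.
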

\begin{proof}
    We use Algorithm~\ref{alg:modular-sumset}. The correctness follows from Lemma~\ref{lem:reduce-by-stablizer}.

     Let $T(\ell)$ be the running time of the algorithm.

    Line~3 takes $O(n + 2^d \ell\cdot \mathrm{polylog}(2^d \ell))$ time. Line~8 takes $O(d^2 + d\log \ell)$ time by Lemma~\ref{lem:compute-rem}(iii). Lines~9-10 take $O(d^2n)$ time by Lemma~\ref{lem:compute-rem}(i).  
    Since $\vec{h}\in \mathrm{Sym}_{\Gamma}(S_1 + \cdots + S_m)$ is non-zero, we have $\Gamma' \supsetneq \Gamma$, and therefore $\det(\Gamma') \leq \det(\Gamma)/2$. By recursion, line~11 takes at most $T(\ell/2)$ time.  Computing $(\Gamma' \bmod \Gamma)$ takes $O(\ell d^3)$ time by Lemma~\ref{lem:compute-rem}(ii). So line~12 takes $O(2^d \ell\cdot \mathrm{polylog} (2^d \ell))$ time. In summary, we have $T(1) = O(1)$ and
    \[
        T(\ell) \leq T(\frac{\ell}{2}) + O(d^2n + 2^d \ell\cdot \mathrm{polylog}(2^d \ell)).
    \]
    Solving the recurrence, we have $T(\ell) = O(d^2n \log \ell + 2^d \ell \cdot \mathrm{polylog}(2^d \ell))$.
\end{proof}

Now we are ready to prove Lemma~\ref{lem:faster-modular-ss}.

\lemfastermodularss*
\begin{proof}
    Let $B = (\vec{b}_1, \ldots, \vec{b}_d)$ be the basis of $Z$ in Hermite normal form. Note that $B^{-1}B = I$, where $I$ is the identity matrix. Computing $B^{-1}$ takes only $\mathrm{poly}(d)$ time. $B^{-1}$ can be viewed as a linear transformation that maps $Z$ to $\mathbb{Z}^d$. Let $A' = B^{-1}A$. Let $\Gamma' = B^{-1}\Gamma$. It takes $O(d^2n)$ time to compute $A'$, and $\mathrm{poly}(d)$ time to compute (the basis of) $\Gamma'$. We have that $\det(\Gamma') = \det(\Gamma)/\det(Z) = \ell$. 

    We can compute $\mathcal{S}_{\Gamma'}(A')$ via Theorem~\ref{thm:modular-m-fold-sumset} in $O(d^2 n\log \ell + 2^d \ell\cdot \mathrm{polylog}(2^d \ell))$. Given $\mathcal{S}_{\Gamma'}(A')$, we can compute $\mathrm{Sym}_{\Gamma'}(\mathcal{S}(A'))$ via Lemma~\ref{lem:compute-stablizer} in $O(d^2 n\log \ell + 2^d \ell\cdot \mathrm{polylog}(2^d \ell))$.

    We can obtain $\mathcal{S}_{\Gamma}(A)$ and $\mathrm{Sym}_{\Gamma}(\mathcal{S}(A))$ from $\mathcal{S}_{\Gamma'}(A')$ and $\mathrm{Sym}_{\Gamma'}(\mathcal{S}(A'))$ by left-multiplying them by $B$, which takes $O(d^2\ell)$ time.

    One can verify that the total time cost is $\widetilde{O}((n + 2^d\ell) \cdot \mathrm{poly}(d))$.
\end{proof}

\appendix
\section{Sharpness of the Density Threshold}\label{apx:lb}

Let $A\subseteq [N_1]\times [N_2]\times\cdots\times [N_d]$. Let $\Phi=\prod_{i=1}^d N_i$. We show that if $|A|\leq (1-o(1)) \sqrt{\Phi}$, then $\mathcal{S}(A)$ may contain many ``holes'' even if $|A \setminus \Gamma| \geq |A|/2$ for any full-dimensional lattice $\Gamma \subsetneq \mathbb Z^d$. 

\begin{theorem}\label{thm:sharpness}
    There exists a set $A\subseteq [N_1]\times [N_2]\times\cdots\times [N_d]$ with
    \[
        |A| \leq (1-o(1))\sqrt{\Phi},
    \]
    where $\Phi = \prod_{j=1}^d N_j$, such that $|A \setminus \Gamma| \geq |A|/4$ for any full-dimensional lattice $\Gamma \subsetneq \mathbb Z^d$ and that there is no convex body $K$ with $K\cap \mathbb{Z}^d\subseteq \mathcal{S}(A)$ and $|K\cap \mathbb{Z}^d|=(1-o(1))|\mathcal{S}(A)|$.
\end{theorem}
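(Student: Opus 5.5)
The plan is to build an explicit lower-bound construction in the spirit of the one-dimensional example of Szemer\'edi and Vu: a set $A$ of size just below $\sqrt{\Phi}$ whose subset sums are confined to a union of a few well-separated "slabs". Because of the gaps between slabs, no convex body can capture a $1-o(1)$ fraction of $\mathcal{S}(A)$. At the same time $A$ is spread with respect to every proper full-dimensional sublattice. I would first reduce to the cube case $N_1=\cdots=N_d=N$ via the diagonal stretching map $D$ of Section~\ref{sec:extension}; in the final write-up I would work directly with general $N_j$ by scaling coordinates.

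\textbf{Construction.} Let $q\approx (1-o(1))\sqrt{\Phi}/2$ be a prime. Put $A=A_1\cup A_2$. The set $A_1$ consists of about $|A|/2$ vectors whose first coordinates are multiples of a large integer $M$ (say $M\approx N_1/\sqrt{\Phi}^{1/d}$). Their remaining coordinates are arbitrary, chosen so that $A_1$ is nondegenerate and generates a lattice of index $M$. The set $A_2$ consists of about $|A|/2$ vectors, each with first coordinate in a short interval, e.g.\ equal to $1$, and other coordinates small, so that the first coordinates of elements of $\mathcal{S}(A_2)$ range only over $[0,|A_2|]$.

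Then the first coordinate of any $\vec{s}\in\mathcal{S}(A)$ lies in $M\mathbb{Z}+[0,|A_2|]$. If we arrange $|A_2|\le (1-o(1))M$, this is a union of residue windows with gaps of proportional length in the $\vec{e}_1$ direction. Since the sums from $A_1$ make the slabs appear at many multiples of $M$, $\mathcal{S}(A)$ has a positive fraction of its points on each of many slabs, with empty gaps of comparable width in between. The parameters must balance two constraints: $|A|\le(1-o(1))\sqrt{\Phi}$, and the number of points on each slab being large. I would tune $M$ against $|A_2|$ so that the density heuristic $|A|^2\approx\Phi$ is exactly violated.

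\textbf{Spreadness.} For a full-dimensional proper sublattice $\Gamma\subsetneq\mathbb{Z}^d$, we need to show that it misses at least $|A|/4$ elements. Any such $\Gamma$ lies in a maximal one, namely the kernel of a nonzero homomorphism $\mathbb{Z}^d\to\mathbb{Z}_p$ for some prime $p$. So it suffices to show that for each prime $p$ and nonzero linear form $\phi$ mod $p$, at least $|A|/4$ elements satisfy $\phi(\vec{a})\not\equiv 0$. I would choose the "free" coordinates of $A_1$ and $A_2$ to run over full intervals or arithmetic-progression-free sets, e.g.\ consecutive integers in some coordinate. Then each half has at most a $1/2$ fraction in any hyperplane mod $p$ for $p\ge 2$. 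The exception is forms like $\phi=x_1$ with $p\mid M$, which kill $A_1$; these are handled by $A_2$, whose first coordinate $1$ is a unit. This case analysis is routine.

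\textbf{Convex body argument (the main obstacle).} Suppose $K$ is convex with $K\cap\mathbb{Z}^d\subseteq\mathcal{S}(A)$. Then $K\cap\mathbb{Z}^d$ avoids every gap slab $\{x_1\in M k+(|A_2|,M)\}$. Since $K$ is convex, and since each gap slab is wide enough to contain lattice points along any line crossing it, $K$ must be contained in a single slab. I expect this step to be hardest, because a thin convex body could slip between lattice points. To handle it, I would make the gaps of width $\ge 2$ and argue that $K$'s projection to $x_1$ is an interval. If that interval crosses a gap containing an integer, convexity together with $K$ having lattice points on both sides gives, after a short argument, a lattice point of $K$ in the gap. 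Hence $|K\cap\mathbb{Z}^d|$ is at most the largest single-slab count, which by the construction is an $o(1)$, or at least bounded-away-from-$1$, fraction of $|\mathcal{S}(A)|$. This contradicts the requirement $|K\cap\mathbb{Z}^d|=(1-o(1))|\mathcal{S}(A)|$.
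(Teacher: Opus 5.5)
\textbf{Gap: the construction cannot reach $|A|$ near $\sqrt{\Phi}$.} Your slab mechanism forces three conditions at once.
\begin{enumerate}
\item Some element of $A_1$ must have nonzero first coordinate, so $M\le N_1-1$. Otherwise all of $\mathcal{S}(A)$ lies in a single slab.
\item The first coordinates of $\mathcal{S}(A_2)$ must fit in a window of length less than $M$.
\item $A$ must be spread with respect to $\Lambda_p:=\{\vec{x}\in\mathbb{Z}^d : p\mid x_1\}$ for a prime $p\mid M$.
\end{enumerate}
Every element outside $\Lambda_p$ lies in $A_2$ and has positive first coordinate. Hence $|A\setminus\Lambda_p|$ is at most the largest first coordinate of $\mathcal{S}(A_2)$, which is less than $M$. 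Spreadness then gives $|A|\le 4|A\setminus\Lambda_p|<4M<4N_1$, however you balance $A_1$ against $A_2$.

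For the cube this is $O(\Phi^{1/d})=o(\sqrt{\Phi})$ once $d\ge 3$. For $d=2$ you could recover order $\sqrt{\Phi}$ only by taking $M$ close to $N_1$. With your suggested $M\approx N_1\Phi^{-1/(2d)}$, i.e.\ $\sqrt{N}$ for the cube, the bound is $O(\sqrt{N})$, far below $\sqrt{\Phi}$ for every $d\ge 2$. So the step ``tune $M$ against $|A_2|$ so that $|A|^2\approx\Phi$'' cannot be carried out. The prime $q\approx\sqrt{\Phi}/2$ you introduce is never used, and a modulus of that size cannot live in one coordinate of length $N_1<\sqrt{\Phi}$.

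Literally, the statement only bounds $|A|$ from above, so a much smaller example would technically qualify. But the result is a sharpness statement about sets of size just below $\sqrt{\Phi}$, and that is also what your proposal claims to produce.

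\textbf{The missing idea.} You need to obstruct modulo a lattice of determinant about $\sqrt{\Phi}$ built from all coordinates at once. The paper takes distinct primes $p_i\approx\sqrt{N_i}$ and sets $\Gamma:=\mathcal{L}(p_1\vec{e}_1,\ldots,p_d\vec{e}_d)$, so $\det(\Gamma)=\prod_i p_i\approx\sqrt{\Phi}$. It then takes $A:=\{(x_1p_1+1,\ldots,x_dp_d+1)\}$.
\begin{itemize}
\item Since $A\subseteq\vec{1}+\Gamma$, we get $\mathcal{S}_{\Gamma}(A)\subseteq\{k\vec{1}\bmod\Gamma : 0\le k\le |A|\}$, which occupies at most $(1-o(1))\det(\Gamma)$ classes.
\item Spreadness survives because the $p_i$ are distinct, so $\vec{1}$ and $\Gamma$ generate $\mathbb{Z}^d$, and each $x_i$ ranges over an interval.
\item The holes are missing residue classes rather than one-directional slabs. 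The convex-body step becomes: a convex body capturing most of $\mathcal{S}(A)$ contains a translate of $[0,p_1-1]\times\cdots\times[0,p_d-1]$, which meets every class modulo $\Gamma$.
\end{itemize}

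\textbf{Two smaller problems.}
\begin{itemize}
\item Your convex-body sketch is wrong as written. Lattice points of $K$ on both sides of a gap do not force a lattice point in the gap; take $K$ to be a segment. You need a fatness argument, e.g.\ if the sections of $K$ at two integer heights flanking the gap each contain a translate of $[0,1]^{d-1}$, then by convexity so does every intermediate section.
\item The map $D$ of Section~\ref{sec:extension} does not transfer a cube construction to a general box. $D^{-1}$ destroys integrality, and $DA$ generates $D\mathbb{Z}^d\neq\mathbb{Z}^d$.
\end{itemize}
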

\begin{proof} 
    Let $p_1, \ldots, p_d$ be distinct primes in 
    \(
        [\sqrt{N_i}, \sqrt{N_i} + N_i^{0.3}],
    \)
    which exist due to~\cite{BHP01}. Define
    \[
        A = \bigl\{(x_1 p_1 + 1, x_2 p_2 + 1, \dots, x_d p_d + 1) : 0 \leq x_i \leq \left\lfloor \frac{N_i}{p_i} \right\rfloor, \, x_i \in \mathbb{Z} \bigr\}.
    \] 
    One can verify that $|A \setminus \Gamma| \geq |A|/4$ for any lattice $\Gamma \subsetneq \mathbb Z^d$ and that
    \(
        |A| \leq (1-o(1))\sqrt{\Phi}.
    \)  

    Let $\Gamma \subseteq \mathbb{Z}^d$ be the lattice generated by $\{p_i \vec{e}_i : 1 \leq i \leq d\}$. Note that $\det(\Gamma) = \prod_{i=1}^{d} p_i$. We also have that 
    \(
        A \pmod{\Gamma} = \{\vec{1}\}.
    \)
    Hence,
    \(
        \mathcal{S}_\Gamma(A)
    \)
    contains at most 
    \[
        |A| \leq \frac{\prod_{i=1}^{d} N_i}{\prod_{i=1}^{d} p_i}\leq (1-o(1)) \prod_{i=1}^{d} p_i = (1-o(1))\det(\Gamma)
    \]
    distinct residue classes of $\mathbb{Z}^d_\Gamma$.  

    Suppose, for contradiction, that there exists a convex body $K$ such that 
    \[
        K \cap \mathbb{Z}^d \subseteq \mathcal{S}(A) \quad \text{and} \quad |K \cap \mathbb{Z}^d| = (1- o(1)) |\mathcal{S}(A)|.
    \] 
    One can verify that the convex hull of $\mathcal{S}(A)$ contains a large hyperrectangle of side lengths $N_1, \ldots, N_d$. Therefore, $K$ must also contain a large hyperrectangle. More precisely, there exists a vector $\vec{a} \in \mathbb Z^d$ such that $\vec{a}+[0,p_1-1]\times \cdots \times [0,p_d-1]\subseteq K$. This contradicts the fact that $\mathcal{S}_\Gamma(A)$ contains at most $(1-o(1))\det(\Gamma)$ distinct residue classes of $\mathbb{Z}^d_\Gamma$.
\end{proof}

\section{Testing Nondegeneracy}\label{app:decide-nondegen}
\begin{lemma}\label{lem:degenerate-or-not}
    Let $A \subseteq [N]^d$ be a set of $n$ vectors, where $Z: =\mathcal{L}(A)$ is full-dimensional. Let $\tau := \sqrt{N^d/\det(Z)}$. Given a real $\delta > 0$, in $\widetilde{O}((n + d^d\tau)\cdot \mathrm{poly}(d))$ time, with error probability at most $N^{-\Omega(d)}$, we can 
    \begin{itemize}
        \item either assert that $A$ is $\delta$-nondegenerate;

        \item or assert that $A$ is not $\delta\cdot (d\log N)^{4d}$-nondegenerate.
    \end{itemize}
\end{lemma}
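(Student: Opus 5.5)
Set $k := \delta\tau$. The test will mirror the greedy construction in the proof of Lemma~\ref{lem:popular-sum}, with the sizes of the removed pieces used as a certificate.

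We build unit vectors $\vec{o}_1, \ldots, \vec{o}_d$ and residual sets $R_1 := A \supseteq R_2 \supseteq \cdots$ iteratively. In iteration $j$, let $\vec{o}_j$ be a unit vector orthogonal to the vectors already chosen (for instance $\vec{o}_1 := \vec{e}_1$). We remove from $R_j$ the $\lceil k/(2d)\rceil$ elements with the largest $|\vec{a}\cdot\vec{o}_j|$; this is a selection step taking $O(dn)$ time. Let $R^*_j$ be the removed set.

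\begin{itemize}
\item If fewer than $\lceil k/(2d)\rceil$ elements of $R_j$ have nonzero projection on $\vec{o}_j$, we assert \emph{not $\delta(d\log N)^{4d}$-nondegenerate}. The witness is the hyperplane $\vec{o}_j^{\perp}$: the vectors outside it are among the at most $k/2$ removed in earlier rounds plus fewer than $k/(2d)$ in $R_j$, so fewer than $k \le \delta(d\log N)^{4d}\tau$ lie outside.
\item Otherwise, we pick one element $\vec{c}_j \in R^*_j$ with $\vec{c}_j\cdot\vec{o}_j\neq 0$. We then take $\vec{o}_{j+1}$ orthogonal to $\vec{c}_1, \ldots, \vec{c}_j$, which is $\mathrm{poly}(d)$ linear algebra.
\end{itemize}

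Exactly as in Lemma~\ref{lem:popular-sum}, the $\vec{c}_j$ are linearly independent. By Claim~\ref{clm:hd-ap-2} and the induction in that proof (with $n_j = 1$), every $\vec{a} \in R_{d+1}$ satisfies $\vec{a} = \sum_j x_j\vec{c}_j$ with $|x_j| \le 2^{d-j}$. Moreover, $|A\setminus R_{d+1}| \le k/2 + d$.

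If all $d$ iterations succeed, we want to assert \emph{$\delta$-nondegenerate}. Fix a subspace $\mathcal{V}$ and suppose $|A\setminus\mathcal{V}| < k$; we need a contradiction. Some $\vec{c}_j$ lies outside $\mathcal{V}$, but that alone only gives $|A\setminus\mathcal{V}|\ge 1$.

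This is the main obstacle: the greedy certificate only guarantees about $k/(2d)$ vectors of large projection per direction, and these need not avoid $\mathcal{V}$. I would fix it by reversing the roles. Run the greedy test with threshold $K := \delta(d\log N)^{4d}\tau$ in place of $k$. Then rejection certifies that $A$ is not $\delta(d\log N)^{4d}$-nondegenerate. When the test accepts, argue as follows.

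Let $\mathcal{V}$ be a hyperplane with normal $\vec{w}$ and $|A \setminus \mathcal{V}| < k$. Write $\vec{w}$ in the dual basis of $\vec{c}_1, \ldots, \vec{c}_d$ and let $j$ be the first index with $\vec{w}\cdot\vec{c}_j \neq 0$.
\begin{itemize}
\item The $K/(2d)$ vectors of $R^*_j$ have projection onto $\vec{o}_j$ at least $|\vec{c}_j\cdot\vec{o}_j|/\mathrm{poly}$. With their bounded coordinates in the $\vec{c}$-basis, as in property~(ii), this should show that the nonzero values of $\vec{w}\cdot\vec{a}$ over $\vec{a}\in R^*_j$ cannot all vanish except on fewer than $k$ elements.
\item The precise way to make this work is a counting argument. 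Pigeonhole the vectors of $R^*_j$ into $(d\log N)^{O(d)}$ classes by their rounded $\vec{c}$-coordinates, and apply Lemma~\ref{lem:hd-es-petals}-style equal-sum arguments to lose only a polylog factor.
\end{itemize}
This is where the $(d\log N)^{4d}$ gap in the statement is spent.

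The sorting and linear algebra cost $\widetilde{O}(n\cdot\mathrm{poly}(d))$, which is within the stated time. Any randomness, such as sampling to estimate counts, is repeated $O(d\log N)$ times to reach error probability $N^{-\Omega(d)}$.
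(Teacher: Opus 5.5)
\textbf{Gap in the acceptance branch.} Your rejection branch is sound. The acceptance branch is not, and the proposed repair cannot fix it. The greedy procedure only ever tests the $d$ hyperplanes $\vec{o}_j^{\perp}$, and these are determined by your choice of the $\vec{c}_j$. Nondegeneracy, however, is a statement about \emph{every} lower-dimensional subspace.

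Here is a concrete counterexample. Take $d=2$ and $A=\{(i,i):1\le i\le N-2\}\cup\{(1,0),(N-1,0)\}$, so $Z=\mathbb{Z}^2$ and $\tau=N$. Take $\delta:=1/(2(2\log N)^{8})$, so that $K=\delta(2\log N)^{8}\tau=N/2$ while $\delta\tau>2$. The line $\mathcal{V}=\mathbb{R}(1,1)$ misses only two vectors of $A$, so $A$ is not $\delta$-nondegenerate.

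Now run your test on this $A$:
\begin{itemize}
\item In round~1 ($\vec{o}_1=\vec{e}_1$), every vector has nonzero projection, and $(N-1,0)$ has the strictly largest one.
\item Even with the canonical choice $\vec{c}_1=(N-1,0)$, you get $\vec{o}_2=\pm\vec{e}_2$.
\item $R_2$ still contains about $N-K/4\ge K/4$ diagonal vectors with nonzero second coordinate, so round~2 also succeeds.
\end{itemize}
So the test accepts, wrongly, whether you use threshold $k$ or $K$.

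The same example refutes your sketched repair. For $\vec{w}=(1,-1)$, the first index with $\vec{w}\cdot\vec{c}_j\neq 0$ is $j=1$, yet every element of $R^*_1$ other than $(N-1,0)$ lies in $\mathcal{V}$. Pigeonholing by rounded coordinates, or Lemma~\ref{lem:hd-es-petals}-type equal-sum subsets, carry no information about how many vectors avoid a given subspace. The claim that this is ``where the $(d\log N)^{4d}$ gap is spent'' has no argument behind it.

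\textbf{What is missing.} You need a certificate that quantifies over all subspaces. The paper gets one from spreadness. It runs the peeling loop of Lemma~\ref{lem:compute-spread-subset} with $k=\delta\tau$: it repeatedly calls Lemma~\ref{lem:almost-common-lattice} and replaces $A_i$ by $A_i\cap\Gamma_i$. The determinant at least doubles each round, so there are $O(d\log N)$ rounds and at most $k(d\log N)^{4d}$ vectors are discarded. This is where the gap factor, the $d^d\tau$ term and the error probability come from. The loop ends in one of two ways:
\begin{itemize}
\item If $\mathcal{L}(A_r)$ becomes lower-dimensional, the discarded vectors witness that $A$ is not $\delta(d\log N)^{4d}$-nondegenerate.
\item Otherwise $|A_r\setminus\Gamma|\ge k$ for every proper sublattice $\Gamma\subsetneq Z_r$. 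Applying this to $\Gamma=\mathcal{V}\cap Z_r$, for any lower-dimensional $\mathcal{V}$, gives $\delta$-nondegeneracy of $A$.
\end{itemize}
A direct sampling test would also work. Sample vectors at rate about $1/(\delta\tau)$ and check whether the sample spans $\mathbb{R}^d$. Use a union bound over the at most $N^{d^2}$ subspaces spanned by vectors of $A$, as in Claim~\ref{clm:almost-common-lattice-1}. Your deterministic greedy, by contrast, has no such global certificate.
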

\begin{proof}
    By Lemma~\ref{lem:compute-hnf}, in $\widetilde{O}(d^3n\log N)$ time, we can obtain $Z := \mathcal{L}(A)$. If $Z$ is not full-dimensional, then we can immediately conclude that $A$ is not $\delta\cdot (d\log N)^{4d}$-nondegenerate.  Assume that $Z$ is full-dimensional. 

    Let $k := \delta \cdot \sqrt{N^d/\det(Z)}$.  We shall construct a sequence $A_0 \supsetneq A_1 \supsetneq \cdots \supsetneq A_r$ until $Z_r := \mathcal{L}(A_r)$ is not full-dimensional or $|A_r \setminus \Gamma| \geq k$ for any $\Gamma \subsetneq Z_r$. Initially, $i := 0$ and $A_0 := A$. Given $A_i$, we can check whether $Z_i := \mathcal{L}(A_i)$ is full-dimensional or not in $\widetilde{O}(d^3n\log N)$ time via Lemma~\ref{lem:compute-hnf}. If not, then we stop and set $r := i$. Suppose that $Z_i$ is full-dimensional. Then we invoke Lemma~\ref{lem:almost-common-lattice} on $A_i$ and $k$, which takes $\widetilde{O}((n + \sqrt{N})\cdot \mathrm{poly}(d))$ time. If it asserts that $|A_i \setminus \Gamma| \geq k$ for any $\Gamma \subsetneq Z_i$, then we stop and set $r := i$. Otherwise, it returns a $\Gamma_i \subsetneq Z$ such that 
    \[
        |A_i \setminus \Gamma_i| \leq k\cdot (d\log N)^{2d}.
    \]
    We set $A_{i+1} := A_i \cap \Gamma_i$, and then proceed with $i := i+1$.

    We first show that $r \leq 1 +  d\log N$. Suppose not. Note that $Z_{r-1}$ must be full-dimensional. Observing that $Z_{i} \subsetneq Z_{i-1}$ for all $i$, we have 
    \[
        \det(Z_{r-1}) \geq 2^{r-1} \cdot \det(Z_0) > N^d.
    \] 
    But this is impossible since $Z_{r-1} = \mathcal{L}(A_{r-1})$ and $A_{r-1} \subseteq [N]^d$.

    Note that $|A_{i}| - |A_{i+1}| \leq k\cdot (d\log N)^{2d}$. We have
    \[
        |A_r| \geq |A| - rk\cdot (d\log N)^{2d} \geq |A| - k\cdot (d\log N)^{4d}.
    \]
    Therefore, if the construction ends with $Z_r$ not full-dimensional, then we can assert that $A$ is not $\delta\cdot (d\log N)^{4d}$-nondegenerate. Otherwise, it ends with $|A_r \setminus \Gamma| \geq k$ for any $\Gamma \subsetneq Z_r$. This implies that $A_r$ is $\delta$-nondegenerate, and hence so is $A$. 

    It is easy to verify that the total running time is $\widetilde{O}((n + \sqrt{N})\cdot \mathrm{poly}(d))$ time and the total error probability is at most $N^{-\Omega(d)}$.
\end{proof}

\section{Deciding Zonotope Membership}\label{apx:zonotope}

We shall prove the following lemma.
\begin{restatable}{lemma}{lemellipsoid}\label{lem:ellipsoid}
    Let $A \subseteq [N]^d$ be a set of vectors, and let $\vec{t} \in \mathbb{R}^d$. We can decide whether $\vec{t}\in \mathcal{P}(A)$ in $O(d^3n\log N)$ time.
\end{restatable}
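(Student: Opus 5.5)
We decide membership by solving a linear feasibility problem with a fast reduction. By definition, $\vec{t}\in\mathcal{P}(A)$ if and only if the system
\[
    x_1\vec{a}_1+\cdots+x_n\vec{a}_n=\vec{t},\qquad -\tfrac12\le x_j\le \tfrac12,
\]
is feasible. This system has only $d$ equality constraints, so a generic LP solver is unnecessary. Instead we use the support-function characterization of Lemma~\ref{lem:roc72}: $\vec{t}\in\mathcal{P}(A)$ if and only if $\vec{t}\cdot\vec{v}\le f_{\mathcal{P}(A)}(\vec{v})=\frac12\sum_{j}|\vec{a}_j\cdot\vec{v}|$ for every facet normal $\vec{v}$. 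The trouble is that $\mathcal{P}(A)$ may have up to $n^{d-1}$ facets, so enumerating them is too slow. The plan is therefore to reduce to a low-dimensional problem that can be handled by recursion on $d$ or by Megiddo-style prune-and-search.

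First we normalize the problem. We compute $\mathcal{L}(A)$ and its rank via Lemma~\ref{lem:compute-hnf} in $\widetilde{O}(d^3 n\log N)$ time, and restrict to the linear span of $A$. If $\vec{t}$ lies outside the span, we reject immediately. Otherwise, we may assume $A$ spans $\mathbb{R}^d$.

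Next we reformulate the question as optimization over the dual variable. Membership fails if and only if there exists $\vec{v}$ with $g(\vec{v}):=\vec{t}\cdot\vec{v}-\frac12\sum_j|\vec{a}_j\cdot\vec{v}|>0$. Since $g$ is positively homogeneous and concave, this is equivalent to $\max_{\vec{v}\in[-1,1]^d} g(\vec{v})>0$. This is a $d$-dimensional LP whose constraints are $n$ pairs (after introducing $s_j\ge\pm\vec{a}_j\cdot\vec{v}$), or equivalently a concave piecewise-linear maximization in fixed dimension $d$.

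For this LP we apply Megiddo's or Dyer's linear-time algorithm for fixed-dimensional linear programming, or Clarkson's randomized variant, whose running time is $O(c_d n)$ with $c_d$ depending only on $d$. The minor technicality is that the objective is a sum of absolute values rather than a max of linear functions. We handle this with prune-and-search on the $n$ hyperplanes $\vec{a}_j\cdot\vec{v}=0$: in each round we locate $\vec{v}^*$ relative to a constant fraction of these hyperplanes by multidimensional search, which fixes the signs of those terms and lets us merge them into a single linear term. Each round costs linear time and removes a constant fraction of the hyperplanes, so the total is linear in $n$. All arithmetic is on $O(d\log N)$-bit numbers.

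The main obstacle is getting the dependence $d^3$ rather than the $2^{2^d}$ typical of Megiddo's method. To achieve this, we would try the alternative route of a vertex argument as in Lemma~\ref{lem:proximity}. There we greedily move $\vec{x}$, starting from $\vec{x}=\vec{0}$, toward satisfying $\sum x_j\vec{a}_j=\vec{t}$, maintaining at most $d$ fractional coordinates via Gaussian elimination on a $d\times(d+1)$ system per step (costing $O(d^3)$). Each step pushes one variable to a bound. If this greedy process gets stuck with residual $\vec{r}\ne\vec{0}$, the certificate $\vec{v}$ for infeasibility comes from the final basis. Verifying that this terminates after $O(n)$ steps, each costing $O(d^3\log N)$, and that a stuck state truly certifies infeasibility, is the step I expect to need the most care.
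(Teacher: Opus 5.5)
Neither of your two routes establishes the claimed $O(d^3n\log N)$ bound, and the step you defer at the end is the entire difficulty.

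The prune-and-search route gives at best $O(c_d\,n)$ with $c_d$ growing at least exponentially in $d$, which is not the statement. The reduction is also misstated. After adding $s_j\ge\pm\vec{a}_j\cdot\vec{v}$ the program has $n+d$ variables. In the original $d$ variables, the term $\sum_j|\vec{a}_j\cdot\vec{v}|$ unfolds into $2^n$ sign-pattern constraints. So Megiddo/Dyer/Clarkson do not apply as stated.

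The greedy vertex route is an underspecified bounded-variable phase-I simplex. Already at $\vec{x}=\vec{0}$ all $n$ coordinates are strictly inside $(-\tfrac12,\tfrac12)$, so your invariant of at most $d$ non-tight coordinates fails at the outset. Nothing prevents a coordinate pushed to a bound from having to leave it later, so there is no potential bounding the number of steps by $O(n)$. A ``stuck'' greedy state also need not certify infeasibility unless the pivoting is a genuine simplex, and then you cannot control the step count. The Carath\'eodory-type argument behind Lemma~\ref{lem:proximity} only moves an already feasible point to a vertex; it does not find one.

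The missing idea is that the $2^n$ constraints never have to be listed. You already have the right dual, and the ellipsoid method needs only a separation oracle. Its iteration count is polynomial in $d$ and the bit length but independent of the number of constraints.

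The paper solves $\max\ \vec{t}\cdot\vec{y}$ subject to $\sum_i|\vec{a}_i\cdot\vec{y}|\le 1$. By Lemma~\ref{lem:roc72} and Lemma~\ref{lem:zero-test-1}, $\vec{t}$ lies outside the zonotope iff the optimum exceeds $1$. Given a query $\vec{y}_0$, the oracle computes $\sum_i|\vec{a}_i\cdot\vec{y}_0|$ in $O(dn)$ time. If this exceeds $1$, it returns the violated constraint $\sum_i\alpha_i\,\vec{a}_i\cdot\vec{y}\le 1$ with $\alpha_i=\mathrm{sign}(\vec{a}_i\cdot\vec{y}_0)$. With $O(d^2\log N)$ oracle calls, as the paper counts them, the total is $O(d^3n\log N)$.

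In your notation this is simply the observation that $\vec{t}-\tfrac12\sum_j\mathrm{sign}(\vec{a}_j\cdot\vec{v})\,\vec{a}_j$ is a supergradient of $g$ at $\vec{v}$, computable in $O(dn)$ time. That, rather than prune-and-search over the hyperplanes $\vec{a}_j\cdot\vec{v}=0$, is how the sum of absolute values should be handled.
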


Let $P :=\{\sum_{i=1}^n \vec{a}_ix_i: -1\leq x_i\leq 1\}\subseteq \mathbb{R}^d$, and let $\vec{t}\in \mathbb{R}^d$ be an arbitrary point. The goal is to determine whether $\vec{t}\in P$. Without loss of generality, we assume that $P$ is not contained in any lower-dimensional subspace of $\mathbb{R}^d$, since otherwise, we can reduce the problem to a lower-dimensional subspace as follows: if $\vec{t}$ does not lie on that subspace, the decision problem is trivially ``no''; otherwise, we simply project all $\vec{a}_i$'s and $\vec{t}$ to the subspace.

We shall reformulate the above problem into another linear program with $O(d)$ variables and many (possibly an exponential number of) constraints, and then show that the resulting linear program can be solved in linear time using the ellipsoid method.

By Lemma~\ref{lem:roc72}, $\vec{t}\in P$ if and only if 
\begin{eqnarray}\label{eq:zono-test11}
 \vec{t}\cdot\vec{y} \le \max_{\vec{p}\in P} \vec{p}\cdot\vec{y}, \quad \forall \vec{y}\in\mathbb{R}^d.
\end{eqnarray}
Since $P$ is centrally symmetric, it is easy to see that $\max_{\vec{p}\in P} \vec{v}\cdot\vec{p}\ge 0$ for all $\vec{v}\in\mathbb{R}^d$. Thus, by scaling,~\eqref{eq:zono-test11} is equivalent to the following:
\begin{eqnarray}\label{eq:zono-test2}
    \vec{t}\cdot\vec{y}\leq \max_{\vec{p}\in P} \vec{p}\cdot\vec{y}, \quad \forall \vec{y}\in\mathbb{R}^d \text{ such that } \max_{\vec{p}\in P} \vec{p}\cdot \vec{y} = 1
\end{eqnarray}
Given that $P$ is a zonotope, we further have the following.
\begin{lemma}\label{lem:zero-test-1}
    Given $\vec{y}\in\mathbb{R}^d$, $\vec{p}\cdot \vec{y}\le 1$ for all $\vec{p}\in P$ if and only if $\sum_i|\vec{a}_i\cdot\vec{y}|\le 1$.
\end{lemma}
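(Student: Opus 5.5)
The plan is to compute the support function of $P$ in closed form and then read off the equivalence directly. Here $P=\{\sum_i x_i\vec{a}_i : -1\le x_i\le 1\}$, so for fixed $\vec{y}$ we have $\vec{p}\cdot\vec{y}=\sum_i x_i(\vec{a}_i\cdot\vec{y})$. This is a linear function of the box variables $(x_1,\ldots,x_n)\in[-1,1]^n$, and the box is a product of independent intervals. The maximization therefore separates coordinatewise: each term $x_i(\vec{a}_i\cdot\vec{y})$ is maximized independently by taking $x_i=\operatorname{sign}(\vec{a}_i\cdot\vec{y})$, with $x_i$ arbitrary, say $0$, when the product vanishes. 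This gives
\[
\max_{\vec{p}\in P}\vec{p}\cdot\vec{y}=\sum_{i=1}^n|\vec{a}_i\cdot\vec{y}|.
\]

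Both directions then follow at once.

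For the ``if'' direction, suppose $\sum_i|\vec{a}_i\cdot\vec{y}|\le 1$. For any $\vec{p}=\sum_i x_i\vec{a}_i\in P$, the triangle inequality together with $|x_i|\le 1$ gives
\[
\vec{p}\cdot\vec{y}\le\sum_i|x_i||\vec{a}_i\cdot\vec{y}|\le\sum_i|\vec{a}_i\cdot\vec{y}|\le 1 .
\]

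For the ``only if'' direction, suppose $\vec{p}\cdot\vec{y}\le 1$ for all $\vec{p}\in P$. Take the specific point $\vec{p}^*:=\sum_i\operatorname{sign}(\vec{a}_i\cdot\vec{y})\,\vec{a}_i$. Its coefficients lie in $\{-1,0,1\}$, so $\vec{p}^*\in P$. Moreover $\vec{p}^*\cdot\vec{y}=\sum_i|\vec{a}_i\cdot\vec{y}|$, and the hypothesis bounds this by $1$.

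No step here is a real obstacle. The only point needing care is the convention for $\operatorname{sign}(0)$, and the coefficient bound $|x_i|\le1$ in the definition of $P$ used in this appendix, as opposed to the $|x_i|\le 1/2$ in $\mathcal{P}(A)$. I will state explicitly that the lemma is about $P$, so that no factor of $2$ appears.
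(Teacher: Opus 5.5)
Your proof is correct and is essentially the paper's argument. The ``if'' direction uses $|x_i|\le 1$ and the triangle inequality, and the ``only if'' direction evaluates at the sign point $\vec{p}^*=\sum_i \operatorname{sign}(\vec{a}_i\cdot\vec{y})\,\vec{a}_i$. The only differences are cosmetic: the paper phrases the second direction as a contradiction and sets the coefficient to $1$ when $\vec{a}_i\cdot\vec{y}=0$, and your preliminary closed-form support-function computation is not needed.
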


\begin{proof}
    Suppose $\sum_i | \vec{a}_i\cdot\vec{y}|\le 1$. Then for arbitrary $\vec{p}\in P$, $\vec{p}\cdot\vec{y}=\vec{y}\cdot \sum_{i}\vec{a}_ix_i=\sum_i (\vec{y}\cdot\vec{a}_i)x_i\le \sum_i |\vec{y}\cdot \vec{a}_i|\le 1$.
    
    Suppose  $\vec{p}\cdot\vec{y}\le 1$ for all $\vec{p}\in P$. Suppose, to the contrary, that $\sum_i |\vec{y}\cdot \vec{a}_i|> 1$. Let $x_i^*=1$ if $\vec{y}\cdot \vec{a}_i\ge 0$, and $x_i^*=-1$ otherwise. Then it is clear that $\vec{p}^*=\sum_i\vec{a}_ix_i^*\in P$, and $\vec{y}\cdot\vec{p}^*=\sum_i |\vec{y}\cdot \vec{a}_i|>1$, which is a contradiction.  
\end{proof}

Thus, $\vec{t}\not\in P$ if and only if there exists some $\vec{y}\in \mathbb{R}^d$ such that $\sum_i|\vec{y}\cdot \vec{a}_i|\le 1$, and $\vec{y}\cdot\vec{t}>1$.

To determine whether $\vec{t}\in P$, it suffices to solve the following linear program.
\begin{align}
    \max \quad & \vec{t}\cdot\vec{y} \label{eq:lp}\\
     s.t.\quad   &\sum_i|\vec{a}_i\cdot \vec{y}|\leq 1 \nonumber\\
        &\vec{y}\in \mathbb{R}^d \nonumber
\end{align}
    
First, we observe that the constraint $\sum_i|\vec{a}_i\cdot \vec{y}|\leq 1$ can be expressed as $2^n$ constraints, each of the form $\sum_i \alpha_i (\vec{a}_i\cdot \vec{y})\le 1$ where $\alpha_i\in\{1,-1\}$. Thus,~\eqref{eq:lp} is indeed a linear program. 

Second, we observe that if the optimal objective value of~\eqref{eq:lp} is larger than 1, then its optimal solution $\vec{y}^*$ guarantees that $\sum_i|\vec{y}^*\cdot \vec{a}_i|\le 1$, and $\vec{y}^*\cdot\vec{t}>1$, implying that $\vec{t}\not\in P$. On the other hand, if the optimal objective value of ~\eqref{eq:lp} is smaller than or equal to 1, then there does not exist any $\vec{y}$ such that $\sum_i|\vec{y}\cdot \vec{a}_i|\le 1$, and $\vec{y}\cdot\vec{t}>1$. By Lemma~\ref{lem:zero-test-1}, this means $\vec{t}\in P$.

The linear program~\eqref{eq:lp} has only $d$ variables, but $2^n$ constraints, so we employ the ellipsoid method. It suffices to design an algorithm for the feasibility test problem, namely, whether $\mathcal{F}=\{\vec{y}: \vec{t}\cdot\vec{y}\ge \beta,  \sum_i|\vec{a}_i\cdot \vec{y}|\le 1\}$ is nonempty for an arbitrary $\beta\in \mathbb{R}$. To solve the feasibility test problem, it suffices to design a separation oracle that does the following: given an arbitrary point $\vec{y}_0\in \mathbb{R}^d$, either assert that $\vec{y}_0\in\mathcal{F}$, or assert that $\vec{y}_0\not\in\mathcal{F}$ and return a constraint that $\vec{y}_0$ violates.

We design a separation oracle as follows. Given $\vec{y}_0$, we first check whether $\vec{t}\cdot\vec{y}_0\ge \beta$. If this constraint is violated, then we assert $\vec{y}_0\not\in\mathcal{F}$ and return this constraint; otherwise, we compute $\sum_i |\vec{a}_i\cdot \vec{y}_0|$. If it is at most $1$, we assert $\vec{y}_0\in\mathcal{F}$. Otherwise, we assert $\vec{y}_0\not\in\mathcal{F}$, and return the violating constraint $\sum_i \alpha_i\vec{a}_i\cdot \vec{y} \leq 1$, where $\alpha_i = 1$ if $\vec{a}_i \cdot \vec{y}_0 \geq 0$ and $\alpha_i = -1$ otherwise.

It is easy to verify that each call of the separation oracle takes $O(dn)$ time. The number of calls to the separation oracle is $O(d^2\log N)$. Thus,  solving linear program~\eqref{eq:lp} takes $O(d^3n\log N)$ time.

\section{Computing Lattices and Remainders}\label{apx:comp-latt}
\begin{fact}\label{fact:suplatt-details}
    Let $\Gamma, \Gamma' \subseteq \mathbb{Z}^d$ be two full-dimensional lattices with $b_{11}, \ldots, b_{dd}$ and $b'_{11}, \ldots, b'_{dd}$ being the diagonal entries of their basis in Hermite normal form. Suppose that $\Gamma' \supsetneq \Gamma$. Then $b'_{ii}$ is a divisor of $b_{ii}$ for all $i \in \{1, \ldots, d\}$, and at least one $b'_{ii}$ is a proper divisor.
\end{fact}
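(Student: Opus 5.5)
We work with the Hermite normal form bases. Let $B$ be the HNF basis of $\Gamma$ and $B'$ that of $\Gamma'$, both lower triangular in the sense of Definition~\ref{def:hnf}: since the lattices are full-dimensional we have $r=d$ and $i_j=j$. So column $j$ has zeros above row $j$, and its diagonal entry $b_{jj}>0$. The plan is to identify each diagonal entry with an intrinsic quantity of the lattice, namely the index of a projection. From that, divisibility will follow from an inclusion of subgroups of $\mathbb{Z}$.

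\textbf{Step 1 (intrinsic description).} For $i\in\{1,\ldots,d\}$, let $\Gamma^{(i)}:=\{\vec{v}\in\Gamma : v_1=\cdots=v_{i-1}=0\}$. Let $\pi_i(\Gamma^{(i)})\subseteq\mathbb{Z}$ be the set of $i$-th coordinates of the vectors in $\Gamma^{(i)}$.

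We claim $\pi_i(\Gamma^{(i)})=b_{ii}\mathbb{Z}$. Write $\vec{v}\in\Gamma$ as $\sum_j z_j\vec{b}_j$ with $z_j\in\mathbb{Z}$. Using triangularity and $b_{11},\ldots,b_{i-1,i-1}\neq 0$, the condition $v_1=\cdots=v_{i-1}=0$ forces $z_1=\cdots=z_{i-1}=0$, by induction on the row index. Then $v_i=z_ib_{ii}$. Conversely $\vec{b}_i\in\Gamma^{(i)}$, so $b_{ii}$ is attained. The same argument gives $\pi_i(\Gamma'^{(i)})=b'_{ii}\mathbb{Z}$.

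\textbf{Step 2 (divisibility).} Since $\Gamma\subseteq\Gamma'$, we have $\Gamma^{(i)}\subseteq\Gamma'^{(i)}$, hence $b_{ii}\mathbb{Z}\subseteq b'_{ii}\mathbb{Z}$. Therefore $b'_{ii}$ divides $b_{ii}$ for every $i$.

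\textbf{Step 3 (some proper divisor).} Suppose, for contradiction, that $b'_{ii}=b_{ii}$ for all $i$. Then $\det(\Gamma')=\prod_i b'_{ii}=\prod_i b_{ii}=\det(\Gamma)$.

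Alternatively, avoiding determinants, we show $\Gamma'\subseteq\Gamma$ directly. Take $\vec{v}\in\Gamma'$. Since $b_{11}=b'_{11}$ divides $v_1$, we can subtract a multiple of $\vec{b}_1\in\Gamma$ to kill the first coordinate; the result remains in $\Gamma'$ because $\Gamma\subseteq\Gamma'$. Now the first $i-1$ coordinates vanish at stage $i$, so the $i$-th coordinate is in $b'_{ii}\mathbb{Z}=b_{ii}\mathbb{Z}$, and we subtract a multiple of $\vec{b}_i$. After $d$ steps we reach $\vec{0}$. Hence $\vec{v}\in\Gamma$, contradicting $\Gamma'\supsetneq\Gamma$.

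No step is a real obstacle. The only point needing care is Step 1: the triangularity argument must correctly show that vanishing of the leading coordinates forces the leading coefficients to vanish.
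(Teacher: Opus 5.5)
Your proof is correct and complete. The paper gives no proof of this statement to compare against: it is recorded as a Fact in Appendix~\ref{apx:comp-latt}, just like Fact~\ref{fact:suplatt}, and used without justification. Your characterization of $b_{ii}$ as the positive generator of the $i$-th coordinates of $\{\vec{v}\in\Gamma : v_1=\cdots=v_{i-1}=0\}$ is the standard way to fill this in. It uses only lower-triangularity with nonzero diagonal, not the reduction condition (iii) of Definition~\ref{def:hnf}, so it shows these diagonal entries are lattice invariants for any such triangular basis.

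Two small remarks.

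\textbf{Step~3, first sentence.} Concluding $\det(\Gamma')=\det(\Gamma)$ gives no contradiction by itself; you would need the fact that a sublattice of equal determinant is the whole lattice. Within the paper you can close this branch directly with Fact~\ref{fact:suplatt}. That fact says $\det(\Gamma')=\prod_i b'_{ii}$ is a proper divisor of $\det(\Gamma)=\prod_i b_{ii}$, and together with your Step~2 this forces some $b'_{ii}<b_{ii}$. Your coordinate-by-coordinate reduction is self-contained and already finishes the step, so either drop that sentence or complete it by citing Fact~\ref{fact:suplatt}.

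\textbf{Step~1.} The equality $\pi_i(\Gamma^{(i)})=b_{ii}\mathbb{Z}$ needs every multiple of $b_{ii}$ to be attained, not just $b_{ii}$. This is immediate from $z\vec{b}_i\in\Gamma^{(i)}$. Steps~2 and~3 in fact use only $b_{ii}\in\pi_i(\Gamma^{(i)})$ and the inclusion $\pi_i(\Gamma'^{(i)})\subseteq b'_{ii}\mathbb{Z}$.
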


\lemcomputesinglerem*
\begin{proof}
    We first prove (i).
    Let $B$ be the basis of $\Gamma$ in Hermite normal form. By Definition~\ref{def:hnf}, $B$ is a lower-triangular matrix. Hence, $\vec{v} \bmod \Gamma$ can be computed as follows. For each $j \in \{1, 2, \dots, d\}$, compute 
    \(
        x_j = \left\lfloor \frac{v_j }{b_{jj}} \right\rfloor,
    \)
    and update
    \(
        \vec{v} := \vec{v} - x_j \vec{b}_j.
    \)
    The total time cost is $O(d^2)$.

    Next, we prove (ii). Let $B$ and $B'$ be bases of $\Gamma$ and $\Gamma'$ in Hermite normal form, respectively. Let $s_j : = b_{jj}/b'_{jj}$ for $j \in [d]$. By Fact~\ref{fact:suplatt-details}, $s_j$ is an integer. To compute the representatives of $\Gamma' \pmod \Gamma$, it suffices to consider the remainders (modulo $\Gamma$) of the vectors of the form
    \[
        \vec{v} = \sum_{j=1}^d z_j \vec{b}'_j,
    \]
    where $0 \leq z_j < s_j, z_j \in \mathbb{Z}$, 
    since the number of such vectors is exactly $\prod_{i=1}^d s_i = \frac{\det(\Gamma)}{\det(\Gamma')} = |\Gamma' \bmod \Gamma|$, and one can also verify that no two such vectors are congruent modulo $\Gamma$.  By (i), it takes $\mathcal{O}(d^2)$ time to compute $\vec{v} \bmod \Gamma$ for each $\vec{v}$. Therefore, the total time cost is  $O(\frac{|\det(\Gamma)|}{|\det(\Gamma')|}\cdot d^2)$.

    Next we prove (iii).  Let $B\in \mathbb Z^{d\times d}$ be the basis of $\Gamma$ in Hermite normal form.
    We can first update $\vec{h} := \vec{h} \bmod \Gamma$ in $O(d^2)$ time by (i).
    Let $h_1,\dots,h_d$ be the entries of vector $\vec h$.
    For $i \in \{1, \ldots, d\}$, we iteratively update $\vec{b}_i$ and $\vec{h}$ to make $h_i=0$ while keeping the lattice generated by $\vec b_1,\ldots,\vec b_d,\vec h$ unchanged.

    We use the extended Euclidean algorithm to compute $g_i=\gcd(b_{ii},h_i)$ and integers $x,y$ such that
    \[
        xb_{ii}+yh_i=g_i.
    \]
    This step takes $O(\log(\max\{b_{ii},h_i\}))=O(\log \ell)$ time. We then update $\vec{b}_i$ and $\vec{h}$ as follows:
    \[
        \vec b_i'=x\vec b_i+y\vec{h}, \qquad
        \vec h'=-\frac{h_i}{g_i}\vec b_i+\frac{b_{ii}}{g_i}\vec h.
    \]
    By updating $\vec{b}_i := \vec{b}_i'$ and $\vec{h} := \vec{h}'$, we have $b_{ii}=g_i$ and $h_i=0$, and the lattice generated by $\vec b_i$ and $\vec h$ is unchanged. Each such update uses $O(d)$ time.

    Repeating this for $i=1,\dots,d$ yields a basis in Hermite normal form for $\Gamma'$, and the total time cost is $O(d(d+\log \ell))$.
\end{proof}

\bibliographystyle{alphaurl}
\bibliography{main}
\end{document}